\documentclass{article}

\usepackage{amsfonts}
\usepackage{amsmath}
\usepackage{graphicx}
\usepackage{epstopdf}
\usepackage{braket}
\usepackage{Preamble}

\usepackage{enumitem}
\usepackage{hyperref}

\usepackage{mathrsfs}

\usepackage[margin=1in]{geometry}
\usepackage[sorting=none]{biblatex}
\bibliography{biblio}

\newcommand{\End}{\mathrm{End}}

\renewcommand{\supp}{\mathrm{supp}\,}

\newcommand{\Span}{\mathrm{span}}

\newcommand{\Hom}{\mathrm{Hom}}

\newcommand{\tr}{\mathrm{tr}}

\newcommand{\diam}{\mathrm{diam}}

\newcommand{\vN}{\mathrm{vN}}
\newcommand{\id}{\mathrm{id}}
\newcommand{\hull}{\mathrm{hull}}

\newcommand{\sctr}{\mathrm{sctr}}

\newcommand{\nsub}[1]{\stackrel{#1}{\subset}}
\newcommand{\nin}[1]{\stackrel{#1}{\in}}

\newcommand{\ind}{\mathrm{ind}}

\newcommand{\bnu}{\boldsymbol \nu}

\newcommand{\boe}{\boldsymbol e}
\newcommand{\reg}{\mathrm{reg}}
\newcommand{\rhoreg}{\rho_{\reg}}
\newcommand{\fer}{\mathrm{fer}}
\newcommand{\Maj}{\mathrm{Maj}}

\newcommand{\ssubset}{\subset\joinrel\subset}

\newcommand{\loc}{\mathrm{loc}}

\newcommand{\gotimes}{\hat \otimes}
\newcommand{\biggotimes}{\widehat\bigotimes}

\renewcommand{\note}[1]{\null}
\renewcommand{\margin}[1]{\null}

\title{Classification of equivariant quasi-local automorphisms on quantum chains}
\author{Alex Bols \\
\normalsize
QMATH, Department of Mathematical Sciences, University of Copenhagen \\
\normalsize
Universitetsparken 5, 2100 Copenhagen, Denmark\\
\normalsize
e-mail : \texttt{alex-b@math.ku.dk}\\
}

\date{\today}

\begin{document}

\maketitle

\abstract{We classify automorphisms on quantum chains, allowing both spin and fermionic degrees of freedom, that are moreover equivariant with respect to a local symmetry action of a finite symmetry group $G$. The classification is up to equivalence through strongly continuous deformation and stacking with decoupled auxiliary automorphisms. We find that the equivalence classes are uniquely labeled by an index taking values in $\Q \cup \sqrt{2} \Q \times \Hom(G, \Z_2) \times H^2(G, U(1))$. We discuss te relation of this index to the index of one-dimensional symmetry protected topological phases on spin chains, which takes values in $H^2(G, U(1))$.}

\tableofcontents

\section{Introduction}

Dynamics in many-body quantum mechanics is usually local. The most common scenario is where the dynamics is generated by a suitably local Hamiltonian, then the time evolution satisfies Lieb-Robinson bounds which provide an approximate upper bound on the velocity at which signals can propagate through the system. There are also natural local dynamics that are not generated by a local Hamiltonian. Consider for example the edge of a chiral Floquet insulator that is many-body localized in the bulk \cite{FMPPV2016}. In this setup the stroboscopic evolution can be separated into a bulk and an edge component, and the edge component is a one-dimensional dynamics that moves excitations predominantly in one rather than the other direction along the edge, \ie it is chiral.

The chiral behaviour of the one-dimensional edge dynamics indicates that the two-dimensional system under consideration is topologically non-trivial in some sense. The chirality of the effective edge dynamics should therefore distinguish it from a `trivial' non-chiral dynamics such as a Hamiltonian evolution. The first important step in distinguishing chiral from non-chiral one-dimensional dynamics was taken by \cite{GNVW2012} where an index is associated to any strictly local automorphism of a one-dimensional spin chain. (See also \cite{CPSV2017} which achieves the same result using the language of matrix product operators). Such strictly local automorphisms are called quantum cellular automata (QCA) and the index of \cite{GNVW2012} is now called the GNVW index after the authors. A non-trivial value of the GNVW index indicates chiral behaviour of the QCA. Moreover, the GNVW index is stable under strongly continuous paths of QCAs and therefore distinguishes chiral from non-chiral in a topological sense. It is moreover shown in \cite{GNVW2012} that the GNVW index is a complete invariant in the sense that two QCAs are connected by a stongly continuous path if and only if they have the same GNVW index. The classification of \cite{GNVW2012} applies only to spin systems, the extension to quantum chains allowing both spin and fermionic degrees of freedom was obtained in \cite{FPPV2019}.

We see that the classification of local dynamics is related to the classification of Floquet topological phases. More generally, understanding the topological classes of local dynamics yields insight in the classification of topological phases of many-body systems \cite{Hastings2013}

As discussed above, generic dynamics considered in many-body physics are only approximately local and so the theory of \cite{GNVW2012, FPPV2019} does not strictly speaking apply. In the recent work \cite{RWW2020} all the results of \cite{GNVW2012} are appropriately generalized to apply to quasi-local automorphisms of spin chains. They show that the GNVW index extends to such automorphisms and is still complete in the sense that two quasi-local automorphisms can be connected by a time evolution generated by a local Hamiltonian if and only if they have the same index.

Often physical systems enjoy an invariance under a local action of some symmetry group $G$. Imposing such a symmetry can enrich the topologically distinct edge dynamics in the examples above and therefore enrich for example the possible topological phases of two-dimensional Floquet insulators \cite{ElseNayak2016}. The symmetry of the Floquet insulator is inherited by the stroboscopic edge dynamics, and the classification of symmetric local dynamics is finer than that of local dynamics without any imposed symmetry. We will call such symmetric dynamics `equivariant'.

In the case of equivariant QCAs on spin chains an index taking values in the second cohomology group $H^2(G, U(1))$ was identified in \cite{GSSC2020} (using the language of matrix-product unitaries). This index represents another obstruction to connect two equivariant QCAs through a path of equivariant QCAs in addition to the GNVW index. As shown in \cite{GSSC2020}, the cohomology index and GNVW index together form complete invariants of equivariant QCAs provided one is allowed to `stack' auxiliary one-dimensional systems in the construction of paths of QCAs connecting two equivariant QCAs with the same indices.

As with the GNVW index, it is a physically relevant problem to extend the classification of equivariant QCAs to equivariant quasi-local automorphisms of quantum chains allowing both spin and fermionic degrees of freedom. This is achieved in the present article.

In Section \ref{sec:setup and main results} we describe what is meant by a quantum chain with an on-site symmetry action and by even equivariant quasi-local automorphisms acting on such quantum chains. We define the notions of stable $G$-equivalence of such automorphisms. At the end of the section we state the main classification result, namely that the equivalence classes are uniquely labelled by an index taking values in $\Q \cup \sqrt{2} \Q \times \Hom(G, \Z_2) \times H^2(G, U(1))$. After the brief Section \ref{sec:operator algebraic preliminaries} dealing with operator algebraic preliminaries, we move on to Section \ref{sec:classification of QCAs} which is devoted to the classification of even equivariant QCAs. The final Section \ref{sec:index theory for quasi-local} is devoted to extending the classification result for even equivariant QCAs to general quasi-local automorphisms. This section is a mostly trivial extension of the analogous work in \cite{RWW2020} to the graded equivariant setting. Finally, the first appendix is concerned with graded von Neumann algebras. These are play an important role in the second appendix, which is devoted to proving a graded equivariant analogue of \cite{RWW2020}'s Theorem 2.6 on near inclusions of subalgebras. This is the key tool to extend the classification of even equivariant QCAs to quasi-local automorphisms.

\textbf{Acknowledgements}

The author was supported by VILLIUM FONDEN through the QMATH Centre of Excellence (grant no. 10059).

\section{Setup and main results} \label{sec:setup and main results}

\subsection{Quantum chains}

To each site $n \in \Z_n$ we associate a matrix algebra $\caA_n \simeq \caM_{d_n \times d_n}$ equipped with a grade involution $\theta_n$ making each $\caA_n$ into a superalgebra. The local dimensions $d_n$ are assumed to be uniformly bounded in $n$. For any finite $\Lambda \subset \Z$ we define $\caA_{\Lambda} = \biggotimes_{n \in \Lambda} \caA_n$ where $\gotimes$ is the graded tensor product. If we have an inclusion of finte subsets $\Lambda \subset \Lambda' \ssubset \Z$ we have a natural injection of $\caA_{\Lambda}$ into $\caA_{\Lambda'}$. This gives a directed system of superalgebras over the directed set of finite subsets of $\Z$. The direct limit is the algebra of local observables
\begin{equation}
	\caA_{\loc} := \varinjlim \caA_{\Lambda}.
\end{equation}
The quasi-local algebra is the norm-closure of the algebra of local observables
\begin{equation}
	\caA := \overline{\caA_{\loc}}^{\norm{\cdot}},
\end{equation}
which is a $C^*$-algebra. We call $\caA$ the quantum chain given by $( \{ \caA_n \}_{n \in \Z}, \{ \theta_n \}_{n \in \Z} )$ with local dimensions $\{  d_n \}_{n \in \Z}$.

For any infinite $\Gamma \subset \Z$ we let $\caA_{\Gamma}$ be the norm-closure of $\cup_{\Lambda \ssubset \Gamma} \caA_{\Lambda}$, in particular $\caA = \caA_{\Z}$. If $x \in \caA_{\Gamma}$ we say that $x$ is supported on $\Gamma$. The support of an operator $x \in \caA$ is the smallest set $\Gamma \subset \Z$ such that $x \in \caA_{\Gamma}$, it is denoted by $\supp(x)$.

We denote by $\theta$ the unique involutive automorphism of $\caA$ such that $\theta(x_n) = \theta_n(x_n)$. We call $\theta$ the grade involution of the quantum chain, it makes $\caA$ into a super $C^*$-algebra.

A *-automorphism $\al$ of $\caA$ is called even if $\al \circ \theta = \theta \circ \al$.

\subsection{Local symmetry}

We fix a finite group $G$ and equip any given quantum chain $\caA$ with even automorphisms $\{ \rho^g \}_{g \in G}$ such that $\rho^g(\caA_n) = \caA_n$ for all $n \in \Z$ and such that the restriction to each local algebra $\caA_n$ is given by $\rho^g|_{\caA_n} = \Ad_{u_n(g)}$ for some projective representations $g \mapsto u_n(g) \in U(\caA_n)$ of $G$. We call the map $g \mapsto \rho^g$ the symmetry of the quantum chain $\caA$, and denote it by $\rho$. If $\rho^g = \id$ for all $g$ we call $\rho$ \emph{trivial}. It will be assumed throughout that all quantum chains we consider come with a symmetry (possibly trivial) and we will not state this explicitly each time a quantum chain is introduced.

\subsection{Stacking, coarse graining and equivalence of quantum chains}

Given quantum chains $\caA^{(1)}$, $\caA^{(2)}$ equipped with automorphic $G$-actions $\rho_1$ and $\rho_2$ respectively, we can construct a new quantum chain $\caA$ by stacking. As an algebra, $\caA = \caA^{(1)} \gotimes \caA^{(2)}$, and the automorophic $G$-action on $\caA$ is given by $\rho = \rho_1 \gotimes \rho_2$. The on-site algebras of the quantum chain $\caA$ are $\caA_n = \caA_n^{(1)} \gotimes \caA_n^{(2)}$ for all $n \in \Z$ so $\rho|_{\caA_n} = \rho_1|_{\caA_n^{(1)}} \gotimes \rho_2|_{\caA_n^{(2)}}$.

From a given quantum chain $\caA$ with on-site algebras $\caA_n$ we can produce a new chain by coarse graining. A coarse graining is obtained by grouping sites together into new sites. For a partition $\{I_i\}_{i \in \Z}$ of $\Z$ into intervals $I_i$ of uniformly bounded length we get a caorse grained quantum chain $\widetilde \caA$ which as an algebra is identical to $\caA$, but has on-site algebras $\widetilde \caA_i = \gotimes_{n \in I_i} \caA_n$ for all $i \in \Z$.

We say two quantum chains  $\caA^{(1)}$, $\caA^{(2)}$ equipped with automorphic $G$-actions $\rho_1$ and $\rho_2$ respectively are equivalent if there are coarse grainings $\widetilde \caA^{(1)}$ and $\widetilde \caA^{(2)}$ such that for each $i \in \Z$ there is an isomorphism of graded algebras $\io_i : \widetilde \caA^{(1)}_{i} \rightarrow \caA^{(2)}_i$ with 
\begin{equation}
	\io_i \circ \rho_1|_{\widetilde \caA^{(1)}_i} = \rho_2|_{\widetilde \caA^{(2)}_i} \circ \io_i.
\end{equation}

In the language of Section \ref{sec:CS G-systems}, this says that after coarse graining, the on-site algebras of the chains are isomorphic as central simple $G$-systems.

\subsection{Quantum Cellular Automata and quasi-local automorphisms}

We will be interested in automorphisms that respect the symmetry $G$ and the grading:
\begin{definition}
	An automorphism $\al$ of $\caA$ is said to be even if $\al \circ \theta = \theta \circ \al$. It is said to be ($G$-)equivariant if $\rho^g \circ \al = \al \circ \rho^g$ for all $g \in G$.
\end{definition}

We now define some notions of locality for automorphisms.

For $X \subset \Gamma$ we denote by
\begin{equation}
	X^{(r)} := \{ n \in \Z \; : \; \dist(n, X) \leq r  \}
\end{equation}
the `$r$-fattening' of $X \subset \Z$.

\begin{definition} \label{def:QCA}
	A \emph{quantum cellular automaton} (QCA) of range $R$ on a quantum chain $\caA$ is an even automorphism $\al$ of $\caA$ such that for any operator $x$ supported in a subset $X \subset \Z$ we have that $\al(x)$ is supported in $X^{(R)}$.

	We say the QCA is \emph{equivariant} if $\al \circ \rho^{g} = \rho^{g} \circ \al$ for all $g \in G$.
\end{definition}

A QCA of range 1 is said to be a nearest-neighbour QCA. Any QCA can be brought in nearest-neighbour form by coarse graining the quantum chain, \ie grouping blocks of neighbouring sites into new coarse-grained sites.

A crisper way of saying that an automorphism is a QCA of range $r$ is that $\al \big( \caA_X \big) \subset \caA_{X^{(r)}}$ for all $X \subset \Z$. A convenient way of generalizing from the strictly local QCAs to a more robust notion `quasi-locality' is to allow the inclusion of $\al \big(  \caA_X \big)$ in $\caA_{X^{(r)}}$ to be only approximate, but to improve as $r$ gets bigger. To this end we use the notion of near inclusion:
\begin{definition} \label{def:near inclusion}
	Let $\caA$ and $\caB$ be two $C^*$-subalgebras of a $C^*$-algebra $\caC$. We say $x \in \caC$ is $\ep$-nearly in $\caA$, and write $x \nin{\ep} \caA$, if there is an element $a \in \caA$ such that $\norm{x - a} \leq \ep \norm{x}$.

	We say $\caB$ is $\ep$-nearly included in $\caA$, and write $\caB \nsub{\ep} \caA$ if $b \nin{\ep} \caA$ for all $b \in \caB$.
\end{definition}

\begin{definition}
	An automorphism $\al$ of $\caA$ is said to be quasi-local with $f$-tails if for each (possibly infinite) interval $I \subset \Z$ we have $\al \big( \caA_I \big) \nsub{f(r)} \caA_{I^{(r)}}$ for all $r = 1, 2, 3, \cdots$ and where $f : \N \rightarrow \R^+$ is a non-increasing function such that $\lim_{r \uparrow \infty} f(r) = 0$.
\end{definition}

Note that a QCA of range $R$ is a quasi-local automorphism with $f$-tails where $f(r) = 0$ for $r \geq R$.

\subsection{stable \texorpdfstring{$G$}{G}-equivalence}

We want to classify even equivariant QCAs and quasi-local automorphisms up to strongly continuous paths of even equivariant QCAs/quasi-local automorphisms, and allowing `stacking' with ancillary symmetric quantum chains.

We'll see below that, under certain conditions, we can connect two even equivariant quasi-local automorphisms with $f$-tails by a path of even equivariant quasi-local automorphisms with $\caO(f(Cr))$-tails for some universal constant $C$. It is therefore interesting to consider collections of quasi-local automorphisms that are closed under the stretching of $f(r)$-tails to $\caO(f(Cr))$-tails.

\begin{definition} \label{def:robust class}
	A condition $Q$ on tail bounds $f$ of quasi-local automorphisms is called a robust tail property if, whenever $f$ satisfies $Q$, then so does any $g(r) = \caO(f(Cr))$ with $C$ the universal constant introduced in Proposition \ref{prop:decoupling trivial quasi-local automorphisms}.

	We denote by $\scrA_Q$ the collection of all even equivariant quasi-local automorphisms with $f$-tails such that $f$ satisfies property $Q$. Such a collection $\scrA_Q$ is called a robust collection of even equivariant quasi-local automorphisms determined by $Q$.
\end{definition}

Some examples are in order.
\begin{examples} \label{example:robust collections}
	\begin{itemize}
		\item If $Q$ says nothing more than what is required by the definition of a quasi-local automorphism, \ie that $\lim_{\uparrow \infty} f(r) = 0$, then $Q$ is robust and $\scrA_Q =: \scrA$ is the collection of all even equivariant quasi-local automorphisms.

		\item If $Q$ says that $f(r)$ must be eventually equal to zero, then $Q$ is robust and $\scrA_Q = \scrA_{QCA}$ is precisely the collection of all even equivaraint QCAs.
		\item If $Q$ demands that $f(r) = \caO(r^{-a})$, \ie that there exists some $c$ such that for $r$ sufficiently large, $f(r) \leq c r^{-a}$, then $Q$ is robust and we call $\scrA_Q$ the collection of even equivariant quasi-local automorphisms with $\caO(r^{-a})$ tails.
		\item If $Q$ demands that $f(r) = \caO(r^{-\infty})$, \ie that for each $n \in \N$ there exists some $C_n < \infty$ tuch that $f(r) \leq C_n r^{-n}$ for $r$ large enough, then $Q$ is robust and we call $\scrA_Q$ the collection of even equivariant quasi-local automorphisms with superpolynomial tails.
		\item If $Q$ demands that $f(r) = \caO(e^{-\gamma r})$ for some $\gamma > 0$ then $Q$ is robust and we call $\scrA_Q$ the colleciton of even equivariant quasi-local automorphisms with exponential tails.
	\end{itemize}
\end{examples}

Let now $Q$ be a robust tail property. We define equivalence relations on the corresponding collection $scrA_Q$.

\begin{definition} \label{def:G-equivalence}
	Two even equivariant qusai-local automorphisms $\al, \beta \in \scrA_P$ on the same quantum chain $\caA$ are called $G$-equivalent in $\scrA_Q$ if there is a strongly continuous path of even equivariant quasi-local automorphisms in $\scrA_Q$ interpolating between them.
\end{definition}

This notion of equivalence does not allow us to compare automorphisms defined on different quantum chains. It turns out however that any two quantum chains can be made isomorphic to each other by `stacking' them with auxiliary quantum chains. We can then compare two given automorphisms on the original chains by first extending them with a `trivial' action on the auxiliary chains.

The most obvious choice for a trivial action is perhaps $\id$, but this still too restrictive to get a useful classification result. Indeed, consider
\begin{definition} 
	An even equivariant QCA $\al$ on a quantum chain $\caA$ is decoupled in of blocks of size $R$ if there is a partition of $\Z$ in intervals $\{I_i\}_{i \in \Z}$ such that $\al$ leaves each $\caA_{I_i}$ invariant, and $\max_{i} \abs{I_i} = R$. The restricitons $\al|_{\caA_{I_i}}$ are all even equivariant automorphisms.
\end{definition}
For any such decoupled QCA $\al$, we get infinitely many automorhisms $\al|_{\caA_{I_i}}$ on local algebras. In section \ref{sec:0-dimensional} a non-trivial index taking values in the first cohomology group of $\Z_2 \times G$ will be defined. This yields a `landscape' of essentially 0-dimensional obstructions to deforming the given decoupled QCA to the identity. Since we are interested in truly one-dimensional obstructions, we will take the point of view that decoupled QCAs are trivial, and define the following notion of stable $G$-equivalence:

\begin{definition} \label{def:stable G-equivalence}
	Two even equivariant quasi-local automorphisms $\al$, $\beta \in \scrA_Q$ on quantum chains $\caA$ and $\caA'$ respectively are called stably $G$-equivalent in $\scrA_Q$ if there exist quantum chains $\widetilde \caA$ and $\widetilde \caA'$ such that $\caA \otimes \widetilde \caA \simeq \caA' \otimes \widetilde \caA'$ as quantum chains and there are decoupled even equivariant automorphisms $\tilde \al$ and $\tilde \beta$ on $\widetilde \caA$ and $\widetilde \caA'$ respectively such that $\al \otimes \tilde \al$ and $\beta \otimes \tilde \beta$ are $G$-equivalent in $\scrA_{Q}$.
\end{definition}

\subsection{Classification result}

Our main result is that for any robust class of even equivariant quasi-local automorphisms, the $G$-stable equivalence classes are uniquely labelled by an index taking values in $\Q \cup \sqrt{2} \Q \times \Hom(G, \Z_2) \times H^2(G, U(1))$.

\begin{theorem} \label{thm:Main Theorem}
	There is an index
	\begin{equation}
		\ind : \scrA \rightarrow \Q \cup \sqrt{2} \Q \times \Hom(G, \Z_2) \times H^2(G, U(1)) 
	\end{equation}
	such that for each $(d, \zeta, \bnu) \in \Q \cup \sqrt{2} \Q \times \Hom(G, \Z_2) \times H^2(G, U(1))$ there is an even equivariant QCA $\al$ with $\ind(\al) = (d, \zeta, \bnu)$.

	If $\al$ and $\beta$ are two even equivariant quasi-local automorphisms belonging to the same robust collection $\scrA_Q$ then the following are equivalent:
	\begin{enumerate}[label=(\roman*)]
		\item $\ind(\al) = \ind(\beta)$.
		
		\item $\al$ and $\beta$ are stably $G$-equivalent in $\scrA_Q$.
	\end{enumerate}

	Moreover,
	\begin{equation}
		\ind(\al \gotimes \beta) = \ind(\al) \cdot \ind(\beta)
	\end{equation}
	and if $\al$ and $\beta$ are defined on the same quantum chain, then
	\begin{equation}
		\ind(\al \circ \beta) = \ind(\al) \cdot \ind(\beta)
	\end{equation}
	where the binary operation $\cdot$ is given by
	\begin{equation}
		\big( d_1, \zeta_1, \bnu_1 \big) \cdot \big( d_2, \zeta_2, \bnu_2 \big) = \big( d_1 d_2, \, \zeta_1 + \zeta_2, \,  \bnu_1 \cdot \bnu_2 \cdot \bnu(\zeta_1, \zeta_2) \big)
	\end{equation}
	for $(d_i, \zeta_i, \bnu_i) \in \Q \cup \sqrt{2} \Q \times \Hom(G, \Z_2) \times H^2(G, U(1))$ and $\bnu(\zeta_1, \zeta_2)$ is the element of $H^2(G, U(1))$ determined by the 2-cocycle
	\begin{equation}
		\nu(\zeta_1, \zeta_2) = (-1)^{\zeta_1(g) \zeta_2(h)}.
	\end{equation}
	This binary operation turns $\Q \cup \sqrt{2} \Q \times \Hom(G, \Z_2) \times H^2(G, U(1))$ into an abelian group. 
\end{theorem}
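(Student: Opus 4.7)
The plan is to treat even equivariant QCAs first (as developed in Section~\ref{sec:classification of QCAs}) and then bootstrap to quasi-local automorphisms in $\scrA_Q$ using the near-inclusion machinery of Section~\ref{sec:index theory for quasi-local}. For QCAs, a coarse graining reduces the problem to the nearest-neighbour case, so $\al(\caA_n) \subset \caA_{n-1} \gotimes \caA_n \gotimes \caA_{n+1}$. I would then define $\ind(\al) = (d(\al), \zeta(\al), \bnu(\al))$ as follows: $d(\al) \in \Q \cup \sqrt{2}\Q$ is the graded GNVW index of \cite{FPPV2019}, computed from the graded dimensions of the overlap algebras between $\al(\caA_n)$ and the on-site algebras it meets; $\bnu(\al) \in H^2(G, U(1))$ is the cohomology class of the projective representation of $G$ carried by a suitable overlap superalgebra, obtained from local implementers of $\rho^g$ restricted to $\al(\caA_I)$ for a half-infinite interval $I$; and $\zeta(\al) \in \Hom(G, \Z_2)$ records the grading parity of these local implementers as homogeneous elements of the overlap superalgebra. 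Well-definedness amounts to checking independence of the choices of implementers and interval, which boils down to graded simplicity of the local superalgebras (the issue flagged in the commented-out TODO at the top of the paper).

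Invariance of $\ind$ under strongly continuous paths of equivariant QCAs follows from continuity of the construction combined with the discreteness of the target. Triviality on decoupled equivariant automorphisms is immediate from their block-diagonal form. Surjectivity is achieved by exhibiting explicit equivariant QCAs: graded shifts realize every $d$, and shifts on chains built from regular or twisted projective representations of $G$ realize arbitrary $(\zeta, \bnu)$. The main obstacle is completeness. The standard strategy is to reduce, by multiplicativity and inversion, to showing that an even equivariant QCA $\al$ with $\ind(\al) = (1, 0, 0)$ is stably $G$-equivalent to $\id$. Triviality of $d(\al)$ allows one, after stacking with an auxiliary decoupled QCA and coarse graining, to write $\al$ as a product of local even unitaries; triviality of $\zeta(\al)$ and $\bnu(\al)$ allows these unitaries to be chosen even and equivariant (since the projective cocycle becomes a coboundary); the resulting finite-depth equivariant circuit then deforms to $\id$ by scaling its local generators along a path staying within even equivariant QCAs.

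For the extension to quasi-local $\al \in \scrA_Q$, I would appeal to the graded equivariant near-inclusion theorem of Appendix B to construct an even equivariant QCA $\al_0$ close to $\al$ (in a suitable tail-dependent sense) with $\ind(\al_0) = \ind(\al)$, and use the robustness of the tail property $Q$ to promote QCA-level interpolations to paths in $\scrA_Q$. Multiplicativity under $\gotimes$ is immediate for all three components, since overlap dimensions, projective representations and parities tensor correctly in the graded sense. Multiplicativity under composition is where the twist in the group law emerges: the local implementer of $g$ for $\al \circ \beta$ is the product $u_\al(g) u_\beta(g)$ of implementers with parities $\zeta_1(g)$ and $\zeta_2(g)$, and computing its 2-cocycle requires commuting homogeneous elements past each other, producing precisely the Koszul sign $(-1)^{\zeta_1(g)\zeta_2(h)}$ that defines $\bnu(\zeta_1, \zeta_2)$. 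The abelian group structure of the target is then a direct verification of associativity and the existence of inverses, with the twist encoding the $\Z_2$-graded extension of $H^2(G, U(1))$ by $\Hom(G, \Z_2)$.
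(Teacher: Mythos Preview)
Your outline follows the paper's architecture closely and is essentially correct in its large-scale structure. There is, however, one place where your account diverges from the paper in a way that matters, and one place where you underestimate the work involved.

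\textbf{The twist comes from stacking, not composition.} You write that multiplicativity under $\gotimes$ is ``immediate'' while the twist $\bnu(\zeta_1,\zeta_2)$ ``emerges'' from composition. In the paper it is the other way around. The twist already appears in the stacking law for central simple $G$-systems (Proposition~\ref{prop:stacking for absolute ind}), and its derivation is not immediate: one must represent the graded tensor product via $\pi(b_1 \gotimes b_2) = b_1 \Theta_1^{\tau(b_2)} \otimes b_2$ (Lemma~\ref{lem:action of graded tensor product}) and then carry out a genuine case analysis according to whether the factors are rational ($\simeq \caM^{p|q}$) or radical ($\simeq \caM^n \otimes K$). The Koszul sign $(-1)^{\zeta_1(g)\zeta_2(h)}$ appears precisely because the implementers $V_i(g)$ have parity $\zeta_i(g)$ and one must pass a $\Theta$ through them when multiplying $V(g)V(h)$ in the $\pi$-representation.

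\textbf{Composition is handled indirectly.} Your proposed direct argument for composition --- that the implementer for $\al\circ\beta$ is $u_\al(g)u_\beta(g)$ --- has a gap: the overlap algebras $\caL_n, \caR_n$ of $\al\circ\beta$ bear no simple relation to those of $\al$ and $\beta$ separately, so there is no obvious product formula at the level of the finite-dimensional implementers that define $\ind$. The paper avoids this entirely (Proposition~\ref{prop:QCA composition and stacking}): it constructs a strongly continuous path of even equivariant QCAs from $\al\gotimes\beta$ to $(\al\circ\beta)\gotimes\id$ using the on-site swap $\Phi_{sw,n}$, which has trivial first cohomology index and hence deforms to $\id$ by Lemma~\ref{lem:0-dimensional deform to identity}. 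Then $\ind(\al\circ\beta) = \ind((\al\circ\beta)\gotimes\id) = \ind(\al\gotimes\beta) = \ind(\al)\cdot\ind(\beta)$ follows from stability plus the already-established stacking law. This swap trick is the key missing idea in your proposal.

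The remainder of your outline --- reduction to trivial index, writing a trivial-index QCA as a depth-2 circuit after stacking with the conjugate, regular and fermion chains (Proposition~\ref{prop:decoupling trivial QCA}), decoupling the circuit (Proposition~\ref{prop:decoupling circuits}), and the quasi-local extension via QCA approximation (Propositions~\ref{prop:QCA approximation} and~\ref{prop:quasi-local is G-equivalent to QCA}) --- matches the paper.
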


This theorem is proven at the end of Section \ref{sec:quasi-local completeness}.

\section{Operator algebraic preliminaries} \label{sec:operator algebraic preliminaries}

\subsection{Tracial state, GNS representation, and von Neumann algebras}

There is a state on the quantum chain $\caA$ called the \emph{tracial state}. It is given on local observables $a \in \caA_I$, $I \ssubset \Z$ by
\begin{equation}
	\tr (x) = \frac{1}{d_I} \Tr_I(x)
\end{equation}
where on the left hand side we consider $x$ to be an element of the $d_I \times d_I$ matrix algebra $\caA_I$ (where $d_I = \prod_{n \in I} d_n$) and $\Tr_I$ is the trace on that matrix algebra.

Let $(\caH, \pi, |\I\rangle )$ be the GNS triple associated to the tracial state, then the representation $\pi : \caA \rightarrow \caH$ is faithful. For any $I \subset \Z$ we denote by
\begin{equation}
	\caA_I^{\vN} := \big( \pi(\caA_I) \big)'' \subset \caB(\caH)
\end{equation}
the von Neumann algebra generated by $\pi(\caA_I)$. We denote also $\caA^{\vN} := \caA_{\Z}^{\vN}$. If $I$ is finite then $\caA_{I}^{\vN}$ is a finite full matrix algebra. If $I$ is infinite then $\caA_I^{\vN}$ is (isomorphic to) the hypefinite type $II_1$ factor.

The vectors in $\caH$ are equivalence classes of operators in $\caA$ under the equivalence relation $a \sim b$ if $\tr \big( (a-b)^*(a-b) \big) = 0$. But this is the case if and only if $a = b$, so as a vector space, $\caH$ is the same as $\caA$. We denote $| a \rangle \in \caH$ for each $a \in \caA$. The inner product on $\caH$ is given by $\langle a, b \rangle = \tr(a^* b)$.

\subsection{\texorpdfstring{$\Z_2$}{Z2}-grading and symmetry}

We note first that the grade automorphism $\theta$ and the local symmetry automorphisms $\rho^{g}$, $g \in G$, extend uniquely to automorphisms of the von Neumann Algebra $\caA^{\vN}$. Indeed, the GNS triple $(\caH, \pi, | \I \rangle)$ of $\caA$ w.r.t. to the tracial state $\tr$ is unique up to unitary equivalence. But since the tracial state is invariant under \emph{any} automorphism $\al$ of $\caA$, \ie $\tr \circ \al = \tr$, we find that $(\caH, \pi \circ \al, | \I \rangle)$ is also a GNS triple, so there exists a unitary $u_{\al} \in \caB(\caH)$ such that $\pi \circ \al = \Ad_{u_{\al}} \circ \pi$. Moreover, since the respresentation $\pi$ is continuous w.r.t. the weak operator topology, the von Neumann algebra $\caA^{\vN}$ is left invariant under $\Ad_{u_{\al}}$. Similarly, if $\al$ leaves $\caA_I$ invariant for some $I \subset \Z$, then $\Ad_{u_{\al}}$ leaves $\caA_I^{\vN}$ invariant. Note that the grade automorpism and the local symmetry automorphisms leave $\caA_I$ invariant for all $I \subset \Z$.

In the particular case of the grade automorphism $\theta$, we can take the unitary action as $\Theta \in \caB(\caH)$ with $\Theta | a \rangle = | \theta(a) \rangle$. We will denote both $\Ad_{\Theta}$ and $\Ad_{\Theta}|_{\caA_I^{\vN}}$, $I \subset \Z$ by $\theta$.

Similarly, for each $g \in G$ we can take the unitary $u_g \in \caB(\caH)$ implementing te local symmetry automorphism $\rho^{g}$ to act as $u_g | a \rangle = | \rho^g(a) \rangle$. We denote both $\Ad_{u_g}$ and $\Ad_{u_g}|_{\caA_I^{\vN}}$, $I \subset \Z$ also by $\rho^g$.

The following concept will be used extensively in the appendices:

\begin{definition}
	Let $g \in G$ or $g = \theta$. We say that a von Neumann algebra $\caB \subset \caH(\caH)$ is $g$-hyperfinite if it is the weak closure of an increasing family of $g$-invariant full matrix algebras all containing $\I$.

	If $\caB$ is $g$-hyperfinite for all $g$ in a group $G$, then we say that $\caB$ is $G$-hyperfinite
\end{definition}

Obviously $\caA_{I}^{\vN}$ is $\theta$, $G$-hyperfinite for any $I \subset \Z$.

\section{Classification of cellular automata on quantum chains} \label{sec:classification of QCAs}

We extend the results of \cite{FPPV2019} to include local symmetries, using techniques from \cite{RWW2020} adapted to the $\Z_2$-graded case.

\subsection{Overlap algebras}

Let $\al$ be a even QCA on a quantum chain $\caA$. Define subalgebras
\begin{align}
	\caB_n &:= \caA_{2n} \gotimes \caA_{2n+1} \\
	\caC_{n} &:= \caA_{2n-1} \gotimes \caA_{2n}
\end{align}
and let
\begin{align}
	\caL_n &:= \caC_n \cap \al(\caB_n) \label{eq:left overlap} \\
	\caR_n &:= \caC_{n+1} \cap \al(\caB_{n}), \label{eq:right overlap}
\end{align}
be the overlap algebras.

Following \cite{Vara2004} we define
\begin{definition}
	A finite-dimensional super *-algebra that is semisimple as an ungraded algebra and has trivial supercenter is called a central simple superalgebra.
\end{definition}
Theorem 6.2.5  of \cite{Vara2004} shows that each central simple superalgebra is isomorphic to $\caM^{p|q}$ or to $\caM^{p | q} \otimes K$ for some $p, q \in \N$, where $K := \C[\ep]$ for an odd self-adjoint element $\ep$ such that $\ep^2 = \I$.

We have
\begin{theorem}[\cite{FPPV2019}] \label{thm:overlap factorization}
	Suppose $\al$ is an even automrphism of $\caA$.

	If there is $n \in \Z$ such that
	\begin{align*}
		\al\big( \caB_n \big) &\subset \caC_n \gotimes \caC_{n+1}, \\
		\al \big( \caB_{n-1} \big) &\subset \caC_{n-1} \gotimes \caC_n, \\
		\al^{-1} \big( \caC_n \big) &\subset \caB_{n-1} \gotimes \caB_n 
	\end{align*}
	then
	\begin{equation}\label{eq:factorization of C} 
		\caC_n = \caR_{n-1} \gotimes \caL_n
	\end{equation}
	and $\caR_{n-1}$ and $\caL_n$ are central simple superalgebras.

	If there is $n \in \Z$ such that
	\begin{align*}
		\al^{-1}\big( \caC_n \big) &\subset \caB_{n-1} \gotimes \caB_{n}, \\
		\al^{-1}\big( \caC_{n+1} \big) &\subset \caB_n \gotimes \caB_{n+1}, \\
		\al \big( \caB_n \big) &\subset \caC_n \gotimes \caC_{n+1}
	\end{align*}
	then
	\begin{equation} \label{eq:factorization of B}
		\caB_n = \al^{-1}(\caL_n) \gotimes \al^{-1}(\caR_n)
	\end{equation}
	and $\caL_n$ and $\caR_n$ are central simple superalgebras.

	In particular, if $\al$ is an even QCA, then the above hold for all $n \in \Z$.
\end{theorem}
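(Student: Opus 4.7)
The plan is to prove the first factorization directly and obtain the second by duality: replacing $\al$ by $\al^{-1}$ and swapping $\caB_n \leftrightarrow \caC_n$ interchanges the three hypotheses and sends the overlap algebras $\caL_n, \caR_n$ to $\al^{-1}(\caL_n), \al^{-1}(\caR_n)$. The ``in particular'' claim is immediate: a nearest-neighbour QCA satisfies all three inclusions at every $n$ by the range-one support condition, and any QCA is brought to nearest-neighbour form by coarse graining.

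First I would set up the graded commutation. Since $\caB_{n-1}$ and $\caB_n$ have disjoint supports in $\Z$, they super-commute in $\caA$, and because $\al$ is an even automorphism of the super $C^*$-algebra $\caA$, the images $\al(\caB_{n-1})$ and $\al(\caB_n)$ super-commute as well. Intersecting with $\caC_n$ shows that $\caR_{n-1}$ and $\caL_n$ super-commute inside $\caC_n$, yielding a canonical graded $*$-homomorphism
\begin{equation*}
\mu : \caR_{n-1} \gotimes \caL_n \longrightarrow \caC_n.
\end{equation*}

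The main step, which I expect to be the main obstacle, is to show that $\caR_{n-1}$ and $\caL_n$ are themselves central simple superalgebras. Following the strategy of \cite{FPPV2019}, I would apply a graded double commutant theorem to the central simple super-subalgebra $\al^{-1}(\caC_n)$ of $\caB_{n-1} \gotimes \caB_n$, giving a factorization $\caB_{n-1} \gotimes \caB_n = \al^{-1}(\caC_n) \gotimes \caN$ for some central simple $\caN$, and then analyse how $\caB_n$ intersects both tensor factors. Using the first two inclusions to pin down the compatible pieces, one extracts $\al^{-1}(\caL_n) = \caB_n \cap \al^{-1}(\caC_n)$ as a central simple super-subalgebra of $\caB_n$; the analogous argument yields central simplicity of $\al^{-1}(\caR_{n-1}) \subset \caB_{n-1}$. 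Applying the even automorphism $\al$ preserves central simplicity, so both $\caR_{n-1}$ and $\caL_n$ are central simple superalgebras, and by the classification in Theorem 6.2.5 of \cite{Vara2004} their graded tensor product $\caR_{n-1} \gotimes \caL_n$ is also central simple.

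It remains to verify that $\mu$ is surjective. Given a homogeneous $x \in \caC_n$, the third inclusion gives $\al^{-1}(x) = \sum_i u_i v_i$ with $u_i \in \caB_{n-1}$ and $v_i \in \caB_n$ homogeneous, hence $x = \sum_i \al(u_i) \al(v_i)$ with $\al(u_i) \in \caC_{n-1} \gotimes \caC_n$ and $\al(v_i) \in \caC_n \gotimes \caC_{n+1}$ by the first two inclusions. Applying the graded conditional expectations $\caC_{n-1} \gotimes \caC_n \to \caC_n$ and $\caC_n \gotimes \caC_{n+1} \to \caC_n$ coming from the tracial state to the two factors leaves $x$ fixed while mapping each $\al(u_i)$ into $\caR_{n-1}$ and each $\al(v_i)$ into $\caL_n$, realising $x$ as a value of $\mu$. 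Since both source and target of $\mu$ are now central simple superalgebras, the surjection must be an isomorphism, giving $\caC_n = \caR_{n-1} \gotimes \caL_n$ as graded algebras.
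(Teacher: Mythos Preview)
Your overall shape is right, and the surjectivity step via conditional expectations onto $\caC_n$ is indeed what the paper does. But there is a genuine gap in that step: you assert that applying $E_{\caC_n}$ to $\al(v_i)\in\al(\caB_n)$ produces an element of $\caL_n=\al(\caB_n)\cap\caC_n$, and similarly that $E_{\caC_n}(\al(u_i))\in\caR_{n-1}$. The conditional expectation certainly lands in $\caC_n$, but why should $E_{\caC_n}(\al(v_i))$ remain in $\al(\caB_n)$? This is exactly the non-trivial point. The paper proves it by showing that for homogeneous $x\in\caB_n$ the element $E_{\caC_n}(\al(x))$ super-commutes with all of $\al(\caB_{n-1})$ (using $\al(\caB_{n-1})\subset\caC_{n-1}\gotimes\caC_n$ so that such elements can be pulled through $E_{\caC_n}=E_{\caC_{n-1}\gotimes\caC_n}\circ E_{\caC_n\gotimes\caC_{n+1}}$); then, since the third hypothesis gives $\caC_n\subset\al(\caB_{n-1})\gotimes\al(\caB_n)$, lying in $\caC_n$ and super-commuting with $\al(\caB_{n-1})$ forces membership in $\al(\caB_n)$. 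Without this argument your surjectivity claim is unsupported.

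Your route to central simplicity is also more involved than necessary and rather vague as written (the appeal to a ``graded double commutant theorem'' and the claimed extraction of $\al^{-1}(\caL_n)$ as a tensor factor are not justified). The paper reverses your order: it first establishes the factorization $\caC_n=\caR_{n-1}\gotimes\caL_n$ directly using the conditional-expectation identity above, and only then deduces central simplicity of the factors from the fact that a nontrivial supercentral element in $\caR_{n-1}$ or $\caL_n$ would give one in $\caC_n$, hence in the quantum chain $\caA$, contradicting its simplicity. This avoids any structural analysis of $\caB_{n-1}\gotimes\caB_n$ prior to knowing the factorization.
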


In order to prove the theorem we will make use of a family of conditional expectations on arbitary subalgebras $\caA_X$, $X \subset \Z$ of the quantum chain, introduced in \cite{Araki2004}. (see Definition \ref{def:conditional expectation} for a definition of conditional expectation.)

\begin{proposition}[Theorem 3.1. of \cite{Araki2004}] \label{prop:Araki's conditional expectations}
	Let $\tau$ be the tracial state on the quantum chain $\caA$. For each $X \subset \Z$ there is a conditional expectation $E_X : \caA \rightarrow \caA_X$ such that
	\begin{enumerate}[label=(\roman*)]
		\item \begin{equation} \tau(b_1 a b_2) = \tau( b_1 E_X(a) b_2 ) \end{equation}
			for all $b_1, b_2 \in \caA_X$ and all $a \in \caA$.
		\item The conditional expectations respect the grading, \ie
			\begin{equation}
				\theta \circ E_X = E_X \circ \theta
			\end{equation}
			for all $X \subset \Z$.
		\item For any $X, Y \subset \Z$,
			\begin{equation}
				E_X E_Y = E_Y E_X = E_{X \cap Y}.
			\end{equation}
	\end{enumerate}
\end{proposition}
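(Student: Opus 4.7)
My approach is to construct $E_X$ via the tracial GNS representation and then deduce all three properties from a uniqueness principle. Let $(\caH, \pi, | \I \rangle)$ be the GNS triple of $\tau$ and, for each $X \subset \Z$, let $P_X$ be the orthogonal projection onto $\caH_X := \overline{\pi(\caA_X) | \I \rangle}$. Since $| \I \rangle$ is a cyclic trace vector for $\caA^{\vN}$, the standard Jones-projection construction for finite von Neumann algebras produces a normal $\tau$-preserving conditional expectation $\tilde E_X : \caA^{\vN} \to \caA_X^{\vN}$ characterised by $P_X \pi(a) | \I \rangle = \pi(\tilde E_X(a)) | \I \rangle$. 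To obtain $E_X : \caA \to \caA_X$ I would restrict $\tilde E_X$ to $\caA$ and check that its image lies in $\caA_X$: for each finite $F \ssubset \Z$ the graded tensor decomposition $\caA_F = \caA_{F \cap X} \gotimes \caA_{F \setminus X}$ lets one verify that $E_X|_{\caA_F}$ coincides with $\id \gotimes \tau|_{\caA_{F \setminus X}}$, which manifestly lands in $\caA_X$. Norm continuity ($\norm{E_X} = 1$) together with the density of $\caA_{\loc}$ in $\caA$ then gives $E_X(\caA) \subset \caA_X$, and an analogous approximation argument handles infinite $X$.

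Property (i) is the defining property of the construction. Property (ii) follows from a uniqueness argument: $\theta$ preserves $\tau$ and each $\caA_X$, so $\theta \circ E_X \circ \theta$ is again a trace-preserving conditional expectation onto $\caA_X$, and faithfulness of $\tau$ forces it to equal $E_X$. For property (iii) I would reduce the claim to showing $E_X(\caA_Y) \subset \caA_{X \cap Y}$ for all $X, Y \subset \Z$. Given this inclusion, $E_X \circ E_Y$ is a trace-preserving projection whose image sits inside $\caA_{X \cap Y}$ and which fixes $\caA_{X \cap Y}$ pointwise; by uniqueness it equals $E_{X \cap Y}$, and the symmetric argument gives $E_Y \circ E_X = E_{X \cap Y}$. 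The inclusion itself follows for finite $X, Y$ from the decomposition $\caA_Y = \caA_{X \cap Y} \gotimes \caA_{Y \setminus X}$ together with the factorization of $\tau|_{\caA_Y}$ across this graded tensor product; after these reductions one finds $E_X(a \gotimes b) = \tau(b) \, a$ for homogeneous $a \in \caA_{X \cap Y}$, $b \in \caA_{Y \setminus X}$. Linearity, norm continuity, and a finite-subset approximation extend the inclusion to arbitrary $X, Y$.

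The main obstacle I anticipate is the careful handling of signs in the graded tensor product when verifying that the slice map $\id \gotimes \tau$ really equals $E_X$ on $\caA_{F \cap X} \gotimes \caA_{F \setminus X}$. The issue is that in characterising $E_X$ via $\tau(b_1^* E_X(c) b_2) = \tau(b_1^* c b_2)$ with $b_1, b_2 \in \caA_X$ and $c = a \gotimes b$, rearranging the product $b_1^* (a \gotimes b) b_2$ into an element of $\caA_{F \cap X} \gotimes \caA_{F \setminus X}$ produces sign factors depending on the grades of $b_2$ and $b$. These signs are not all trivially absent, but the factorization is rescued by the observation that $\tau \circ \theta = \tau$ forces $\tau$ to vanish on odd elements, so only the even part of $b$ contributes to $E_X(a \gotimes b)$, and for even $b$ all the offending sign factors equal $+1$. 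Once this sign issue is dealt with, everything else reduces to standard finite-dimensional linear algebra together with a continuity and approximation argument for the infinite case.
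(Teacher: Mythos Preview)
The paper does not prove this proposition; it is stated with attribution to Theorem~3.1 of \cite{Araki2004} and then used as a black box in the proof of Theorem~\ref{thm:overlap factorization}. There is therefore no argument in the paper to compare against.

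Your reconstruction is sound and follows the standard route. The Jones-projection construction of the trace-preserving conditional expectation onto a von Neumann subalgebra, combined with the uniqueness of such an expectation when the trace is faithful, is exactly the right mechanism; your uniqueness arguments for (ii) and (iii) are correct. The graded sign issue you flag is real and your resolution is the right one: since $\tau \circ \theta = \tau$ forces $\tau$ to vanish on odd elements, only even $b$ contribute to $E_X(a \gotimes b)$, and for those the sign $(-1)^{\tau(b)\tau(c_2)}$ arising when $b$ is pushed past $c_2 \in \caA_X$ is $+1$. One small point worth making explicit in the step for (iii): to invoke uniqueness you need $E_X E_Y$ to be a conditional expectation onto $\caA_{X \cap Y}$, not merely a trace-preserving linear projection. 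This follows either from Tomiyama's theorem (a norm-one projection onto a $C^*$-subalgebra is automatically a conditional expectation) or directly from the bimodule property $E_X(b_1 a b_2) = b_1 E_X(a) b_2$ for $b_1, b_2 \in \caA_{X \cap Y} \subset \caA_X$, which is immediate from (i).
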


Throught this paper we will denote $E_{B_n} := E_{\{2n, 2n+1\}}$, $E_{\caB_n \gotimes \caB_{n+1}} := E_{\{2n, 2n+1, 2n+2, 2n+3\}}$, $E_{\caC_n} = E_{\{2n-1, 2n\}}$, and $E_{\caC_n \gotimes \caC_{n+1}} = E_{\{2n-1, 2n, 2n+1, 2n+2\}}$ for all $n \in \Z$.

\begin{proofof}[Theorem \ref{thm:overlap factorization}]
	Following \cite{RWW2020}, to show that $\caC_n = \caR_{n-1} \gotimes \caL_n$ we first show that
	\begin{align}
		\caL_n &= E_{\caC_n} \big(\al(\caB_n) \big) \label{eq:cond expectation 1}\\
		\caR_n &= E_{\caC_{n+1}} \big(\al(\caB_n)\big) \label{eq:cond expectation 2} \\	
	\end{align}

	To show te first of these, note first that by definition, $\caL_n = \al(\caB_n) \cap \caC_n \subset \caC_n$, so $\caL_n = E_{\caC_n}(\caL_n) = E_{\caC_n}(\al(\caB_n)) \cap \caC_n = E_{\caC_n}(\al(\caB_n))$. We now show the reverse inclusion $E_{\caC_n}(\al(\caB_n)) \subset \caL_n$. Let $x \in \caB_n$ be a homogeneous element, then $E_{\caC_n}(\al(x))$ is also homogeneous and of the same degree as $x$. Moreover, since we assume $\al \big(\caB_n \big) \subset \caC_n \gotimes \caC_{n+1}$ we have $\al(x) \in \caC_n \gotimes \caC_{n+1}$ so
	\begin{equation}
		E_{\caC_{n-1} \gotimes \caC_{n}}(\al(x)) = E_{\caC_{n-1} \gotimes \caC_n} \big( E_{\caC_n \otimes \caC_{n+1}}(\al(x)) \big) = E_{\caC_n}(\al(x))
	\end{equation}
	where we used property (iii) of \ref{prop:Araki's conditional expectations}. Let now $x' \in \caB_{n-1}$ be homogeneous, then 
	\begin{align*}
		E_{\caC_{n}} \big(\al(x) \big) \al(x') &= E_{\caC_{n-1} \gotimes \caC_n} \big(\al(x) \al(x') \big) = (-1)^{\tau(x) \tau(x')} E_{\caC_{n-1} \gotimes \caC_n} \big(\al(x') \al(x) \big) \\
						       &= (-1)^{\tau(x) \tau(x')} \al(x') E_{\caC_n} \big(\al(x) \big).
	\end{align*}
	where we used the assumtion that $\al \big( \caB_{n-1} \big) \subset \caC_{n-1} \gotimes \caC_n$. It follows that $\al(\caB_{n-1})$ and $E_{\caC_n} \big(\al(\caB_n)\big)$ graded commute.

	From the assumption that $\al^{-1}(\caC_n) \subset \caB_{n-1} \gotimes \caB_n$ we get $\caC_n \subset \al(\caB_{n-1}) \gotimes \al(\caB_n)$, hence $E_{\caC_n} \big(\al(\caB_n) \big) \subset \caC_n \subset \al(\caB_{n-1}) \gotimes \al(\caB_{n})$. But $E_{\caC_n} \big(\al(\caB_n)\big)$ is also in the supercommutant of $\al(\caB_{n-1})$. It follows that $E_{\caC_n} \big(\al(\caB_n) \big) \subset \al(\caB_n)$ hence $E_{\caC_n} \big(\al(\caB_n) \big) \subset \al(\caB_n) \subset \al(\caB_n) \cap \caC_n = \caL_n$.

We conclude that $\caL_n = E_{\caC_n} \big(\al(\caB_n) \big)$. The other equality, Eq. \eqref{eq:cond expectation 2}, follows from a similar argument. \eqref{eq:cond expectation 3} and \eqref{eq:cond expectation 4} follow from a similar argument.

	We now show that $\caC_n \subset \caR_{n-1} \gotimes \caL_n$. If $c \in \caC_n$ then $\al^{-1}(c) \in \caB_{n-1} \gotimes \caB_n$ because $\al^{-1}$ is nearest-neighbour. As such, we have an expansion $\al^{-1}(c) = \sum_k b_k b'_k$ with $b_k \in \caB_{n-1}$ and $b'_k \in \caB_{n}$ for each $k$. We then find
	\begin{equation}
		c = E_{\caC_n}(c) = \sum_k E_{\caC_n} \big( \al(b_k) \big) E_{\caC_n} \big( \al(b'_k) \big) \in \caR_{n-1} \gotimes \caL_n
	\end{equation}
	where we used \eqref{eq:cond expectation 1} and \eqref{eq:cond expectation 2}. Since we obviously have the reverse inclusion, we conclude that $\caC_n = \caR_{n-1} \gotimes \caL_n$ as required.

	It follows that the overlap algebras $\caR_{n-1}$ and $\caL_n$ are subalgebras of the finite matrix algebra $\caC_n$. They are therefore semisimple. They are moreover invariant under the grade automorphism $\theta$, giving them the structure of superalgebras. Suppose any one of them had non-trivial supercenter, then also $\caC_n$ has non-trivial supercenter, and since $\caA$ is the direct limit of the graded tensor product of all the $\caC_m$'s, it would follow that the quantum chain $\caA$ itself has non-trivial supercenter. This would contradict the simplicity of $\caA$ (\cite{BratteliRobinsonVol1}, Corollary 2.6.19.) so the overlap algebras $\caR_{n-1}$ and $\caL_n$ are semisimple and have trivial supercenter, \ie they are central simple superalgebras.

	Finally, $\caB_n = \al^{-1}(\caL_n) \gotimes \al^{-1}(\caR_n)$ follows from a similar argument using
	\begin{align}
		\al^{-1}(\caL_n) &= E_{\caB_n} \big( \al^{-1}(\caC_n) \big) \label{eq:cond expectation 3} \\
		\al^{-1}(\caR_n) &= E_{\caB_n} \big( \al^{-1}(\caC_{n+1}) \big), \label{eq:cond expectation 4}
	\end{align}
	which are shown in the same way as Eqs. \eqref{eq:cond expectation 1} and \eqref{eq:cond expectation 2}, using the assumptions $\al^{-1}\big( \caC_n \big) \subset \caB_{n-1} \gotimes \caB_{n}$, $\al^{-1}\big( \caC_{n+1} \big) \subset \caB_n \gotimes \caB_{n+1}$ and $\al \big( \caB_n \big) \subset \caC_n \gotimes \caC_{n+1}$.

	It follows in the same way as above that $\caL_n$ and $\caR_n$ are central simple superalgebras.
\end{proofof}

\subsection{Central Simple \texorpdfstring{$G$}{G}-systems} \label{sec:CS G-systems}

In this section we develop an index theory for central simple $G$-systems. There is some similarity with \cite{BO2021} and we use some ideas from that work.

Let $\caB$ be a central simple superalgebra with grading automorphism $\theta$ and let $\rho$ be an even automorphic $G$-action on $\caB$. We call the pair $(\caB, \rho)$ a \emph{central simple $G$-system}. To any such central simple $G$-system we associate an index
\begin{align}
	\ind(\caB, \rho) &= (\ind_1(\caB), \ind_{2}(\caB, \rho), \ind_3(\caB, \rho)) \\
			 &\in \caS \times \Hom(G, \Z_2) \times H^2(G, U(1))
\end{align}
where
\begin{equation}
	\caS = \N \cup \sqrt{2} \N.  
\end{equation}
(Here $\N$ does not contain zero.)

\subsubsection{Definition of an index}

The first component of $\ind(\caB, \rho)$ is simply the square root of the dimension (as a complex vector space) of $\caB$,
\begin{equation}
	\ind_1(\caB) := \sqrt{ \dim(\caB) }.
\end{equation}
We saw above that for a central simple $\caB$, either $\caB \simeq \caM^{p|q}$ or $\caB \simeq \caM^n \otimes K$, so the dimension is either the square of a whole number, or twice the square of a whole number (since $\dim(K) = 2$), so $\ind_1 \in \caS$. We call $\caB$ rational of $\ind_1(\caB) \in \N$ and we call $\caB$ radical if $\ind_1(\caB) \in \sqrt{2} \N$

The definition of the other components of the index depends on whether $\caB$ is rational or radical.

\textbf{Case I : $\caB \simeq \caM^{p|q}$ is rational.} Then, since $\caM^{p|q}$ seen as an ungraded algebra, is simply a full matrix algebra, we have unitaries $\Theta$ and $V(g)$ for all $g \in G$ such that $\theta = \Ad_{\Theta}$ and $\rho^g = \Ad_{V(g)}$. These unitaries are uniquely determined up to phase. Moreover, since $\theta^2 = \id$ we have $\Theta^2 = \mu \I$ for some phase $\mu$, and by choosing a different phase for $\Theta$ we can without loss of generality assume $\Theta^2 = \I$. The unitaries $V(g)$ form a projective representation of $G$. \ie there are phases $\nu(g, h)$ such that
\begin{equation}
	V(g) V(h) = \nu(g, h) V(gh)
\end{equation}
for all $g, h \in G$. The function $\nu : G \times G \rightarrow U(1)$ is a 2-cocycle and picks out an equivalence class
\begin{equation}
	\ind_3(\caB, \rho) := [\nu] \in H^2(G, U(1))
\end{equation}
in the second cohomology group of $G$. This equivalence class is independent of the choice of phases of the $V(g)$, so $\ind_3$ is well defined.

The index $\ind_2(\caB, \rho)$ characterizes the interplay of the group action and the $\Z_2$-grading. Since $\rho^g \circ \theta = \theta \circ \rho^g$, we have $\rho^g(\Theta) = V(g) \Theta V(g)^* = \xi(g) \Theta$ for some phases $\xi(g) \in U(1)$. Squaring both sides we get $\I = \xi(g)^2 \I$, hence $\xi(g) \in \{ \pm 1 \}$ and we can write $\rho^g(\Theta) = (-1)^{\zeta(g)} \Theta$ for $\zeta(g) \in \{0, 1\}$. Moreover, regarding $\zeta(g)$ as elements of the additive group $\Z_2$, for any $g, h \in G$ we find
\begin{equation}
	(-1)^{\zeta(g) + \zeta(h)} \Theta = \rho^g(\rho^h(\Theta)) = \rho^{gh}(\Theta) = (-1)^{\zeta(gh)} (\Theta)
\end{equation}
so $\zeta(g) + \zeta(h) = \zeta(g h)$ and, taking $g = h = e$, we find $\zeta(e) = 0$. We see that $\zeta$ is a homomorphism from $G$ to $\Z_2$. Thus we put
\begin{equation}
	\ind_2(\caB, \rho) := \zeta \in \Hom(G, \Z_2).
\end{equation}

\textbf{Case II : $\caB \simeq \caM^n \otimes K$ is radical.} Since $\rho^g \circ \theta = \theta \circ \rho^g$, the automorphisms $\rho^g$ leave the even subalgebra of $\caB$ invariant. The even subalgebra consists of elements $a \otimes \I$ and is a full matrix algebra isomorphic to $\caM^n$. It follows that there are unitaries $V^{(0)}(g)$ forming a projective representation of $G$ such that $\rho^g(a \otimes \I) = \Ad_{V^{(0)}(g)}(a) \otimes \I$ for all $a \in \caM^n$. The index $\ind_3(\caB, \rho)$ is defined to be the second cohomology class associated to this projective representation.

To define $\ind_2(\caB, \rho)$, note that $\pm \I \otimes \ep$ are the only odd self-adjoint unitaries in the center of $\caB$. Since for each $g \in G$, the element $\rho^g(\ep)$ is an odd self-adjoint unitary in the center of $\caB$, we must have $\rho^g(\ep) = (-1)^{\zeta(g)} \ep$ for some $\zeta(g) \in \{0, 1\}$. Regarding the $\zeta(g)$ as elements of the additive group $\Z_2$, we have moreover
\begin{equation}
	(-1)^{\zeta(g) + \zeta(h)} \ep = \rho^g(\rho^h(\ep)) = \rho^{gh}(\ep) = (-1)^{\zeta(gh)} \ep
\end{equation}
so $\zeta(g) + \zeta(h) = \zeta(g h)$ and, taking $g = h = e$, we find $\zeta(e) = 0$. We see that $\zeta$ is a homomorphism from $G$ to $\Z_2$. Thus we put
\begin{equation}
	\ind_2(\caB, \rho) := \zeta \in \Hom(G, \Z_2).
\end{equation}

This concludes the definition of the index in all cases.

\subsubsection{Representation of the graded tensor product}

Let $\caB_1 \subset \caM^{m_1}$ and $\caB_2 \subset \caM^{m_2}$ be superalgebras with grading operators $\Theta_1 \in \caM^{m_1}$ and $\Theta_2 \in \caM^{m_2}$ respectively. Note that the grading operators are not necessarily in the algebras $\caB_1, \caB_2$ themselves. The graded tensor product $\caB_1 \gotimes \caB_2$ has a faithful representation $\pi : \caB_1 \gotimes \caB_2 \rightarrow \caM^{m_1} \otimes \caM^{m_2}$ and is defined for homogeneous $b_1 \in \caB_1$ and $b_2 \in \caB_2$ by
\begin{equation} \label{eq:definition of pi}
	\pi \big( b_1 \gotimes b_2 \big) = b_1 \Theta_1^{\tau(b_2)} \otimes b_2
\end{equation}
where $\tau(x) \in \{0, 1\}$ is such that $\theta(x) = (-1)^{\tau(x)} x$, \ie $\tau$ is the \emph{parity} of $x$ (see Appendix \ref{app:super vN} for notation and basic properties of super algebras.)

The grading automorphism $\theta$ of $\caB_1 \gotimes \caB_2$ is given by
\begin{equation}
	\pi \circ \theta = \Ad_{\Theta_1 \otimes \Theta_2} \circ \pi.
\end{equation}

We have

\begin{lemma} \label{lem:action of graded tensor product}
	Let $\al_1$ and $\al_2$ be even automorphisms acting on $\caB_1$ and $\caB_2$. Suppose there are unitaries $U_i \in \caM^{m_i}$ such that
	\begin{equation}
		\al_i(b) = \Ad_{U_i}(b), \quad \text{for all} \,\,\, b \in \caB_i
	\end{equation}
	and $\theta_i(U_i) = (-1)^{\xi_i} U_i$ for some $\xi_i \in \{0, 1\}$, then
	\begin{equation}
		\pi \circ (\al_1 \gotimes \al_2) = \Ad_{(U_1 \otimes U_2 \Theta_2^{\xi_1})} \circ \pi = \pi \circ \Ad_{(U_1 \Theta_1^{\xi_2} \gotimes U_2 \Theta_2^{\xi_1})}.
	\end{equation}
\end{lemma}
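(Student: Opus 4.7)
The plan is to establish the two equalities separately. For the first equality $\pi \circ (\al_1 \gotimes \al_2) = \Ad_{U_1 \otimes U_2 \Theta_2^{\xi_1}} \circ \pi$, I would evaluate both sides on a simple tensor $b_1 \gotimes b_2$ with $b_1, b_2$ homogeneous, and then invoke linearity. Computing the left-hand side using $\eqref{eq:definition of pi}$ and the fact that $\al_i$ even implies $\tau(\al_i(b_i)) = \tau(b_i)$ gives $U_1 b_1 U_1^* \Theta_1^{\tau(b_2)} \otimes U_2 b_2 U_2^*$. The right-hand side is the ordinary (ungraded) conjugation of $\pi(b_1 \gotimes b_2) = b_1 \Theta_1^{\tau(b_2)} \otimes b_2$ by $U_1 \otimes U_2 \Theta_2^{\xi_1}$, which yields $U_1 b_1 \Theta_1^{\tau(b_2)} U_1^* \otimes U_2 \Theta_2^{\xi_1} b_2 \Theta_2^{\xi_1} U_2^*$. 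Comparing factor by factor: the hypothesis $\theta_1(U_1) = (-1)^{\xi_1} U_1$ yields $U_1^* \Theta_1^{\tau(b_2)} = (-1)^{\xi_1 \tau(b_2)} \Theta_1^{\tau(b_2)} U_1^*$ in the first factor, while the grade involution relation $\Theta_2 b_2 \Theta_2 = (-1)^{\tau(b_2)} b_2$ gives $\Theta_2^{\xi_1} b_2 \Theta_2^{\xi_1} = (-1)^{\xi_1 \tau(b_2)} b_2$ in the second. The two signs $(-1)^{\xi_1 \tau(b_2)}$ appear on opposite sides of the equality and cancel, establishing the first identity.

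For the second equality, the cleanest route is to observe that $\pi$ is a unital $*$-algebra homomorphism: a direct check on products of homogeneous simple tensors, using the Koszul sign rule $(u_1 \gotimes u_2)(v_1 \gotimes v_2) = (-1)^{\tau(u_2)\tau(v_1)} u_1 v_1 \gotimes u_2 v_2$, shows that the factor $\Theta_1^{\tau(u_2)}$ in the definition of $\pi$ absorbs precisely this Koszul sign. Consequently $\pi \circ \Ad_u = \Ad_{\pi(u)} \circ \pi$ for any invertible $u \in \caB_1 \gotimes \caB_2$. It then remains to compute $\pi\big(U_1 \Theta_1^{\xi_2} \gotimes U_2 \Theta_2^{\xi_1}\big)$. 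Since $\Theta_i$ is even and $U_i$ has parity $\xi_i$, the element $U_2 \Theta_2^{\xi_1}$ has parity $\xi_2$, so the definition of $\pi$ gives
\begin{equation}
    \pi\big(U_1 \Theta_1^{\xi_2} \gotimes U_2 \Theta_2^{\xi_1}\big) = U_1 \Theta_1^{\xi_2} \Theta_1^{\xi_2} \otimes U_2 \Theta_2^{\xi_1} = U_1 \otimes U_2 \Theta_2^{\xi_1},
\end{equation}
where the last step uses $\Theta_1^{2\xi_2} = \I$. Combined with the homomorphism property, this yields the second equality.

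The only real obstacle is bookkeeping of the various Koszul signs, which is routine once one keeps careful track of the parities of $U_i$, $\Theta_i$, and $b_i$; no deeper conceptual difficulty arises, and the lemma is essentially a translation of the abstract graded formula into the concrete ungraded representation $\pi$.
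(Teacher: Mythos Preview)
Your proof is correct and, for the first equality, follows exactly the same route as the paper: a direct evaluation on homogeneous simple tensors, tracking the sign $(-1)^{\xi_1 \tau(b_2)}$ that appears once when commuting $U_1^*$ past $\Theta_1^{\tau(b_2)}$ and once when pushing $\Theta_2^{\xi_1}$ through $b_2$, and noting the two cancel.

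For the second equality the paper gives no separate argument (it just writes ``which proves the claim'' after the first computation), whereas you supply one via the homomorphism property $\pi \circ \Ad_u = \Ad_{\pi(u)} \circ \pi$ together with the identity $\pi(U_1 \Theta_1^{\xi_2} \gotimes U_2 \Theta_2^{\xi_1}) = U_1 \otimes U_2 \Theta_2^{\xi_1}$. This is a clean and conceptually transparent way to see the second equality. One small remark: since the lemma only assumes $U_i, \Theta_i \in \caM^{m_i}$ (and the paper explicitly notes that $\Theta_i$ need not lie in $\caB_i$), the element $U_1 \Theta_1^{\xi_2} \gotimes U_2 \Theta_2^{\xi_1}$ may not belong to $\caB_1 \gotimes \caB_2$. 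Your argument therefore implicitly uses that the defining formula for $\pi$ extends verbatim to a $*$-isomorphism $\caM^{m_1} \gotimes \caM^{m_2} \to \caM^{m_1} \otimes \caM^{m_2}$; this is immediate, but worth stating so that applying $\pi$ to this element is justified.
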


\begin{proof}
	For homogeneous elements $b_i \in \caB_i$, $i = 1, 2$,
	\begin{align*}
		\pi \big( (\al_1 \gotimes \al_2)(b_1 \gotimes b_2) \big) &= \al_1(b_1) \Theta_1^{\xi_2} \otimes \al_2(b_2) \\
									 &= \Ad_{U_1}(b_1) \Theta_1^{\xi_2} \otimes \Ad_{U_2}(b_2) \\
									 &= (-1)^{\xi_1 \xi_2} \Ad_{U_1} \big( b_1 \Theta_1^{\xi_2} \big) \otimes \Ad_{U_2}(b_2) \\
									 &= \Ad_{(U_1 \otimes U_2 \Theta_2^{\xi_1})} \big( b_1 \Theta_1^{\xi_2} \otimes b_2 \big) \\
									 &= \Ad_{(U_1 \otimes U_2 \Theta_2^{\xi_1})} \big( \pi(b_1 \gotimes b_2) \big),
	\end{align*}
	which proves the claim.
\end{proof}

\subsubsection{stacking}

\begin{proposition} \label{prop:stacking for absolute ind}
	Let $\caB_1$ and $\caB_2$ be central simple superalgebras equipped with even automorphic $G$-actions $\rho_1$ and $\rho_2$ respectively. Then $\caB_1 \gotimes \caB_2$ is a central simple superalgebra equipped with even automorphic $G$-action $\rho_1 \gotimes \rho_2$. If $\ind(\caB_i, \rho_i) = (d_i, \zeta_i, \bnu_i)$ for $i = 1, 2$, then
	\begin{equation}
		\ind(\caB_1 \gotimes \caB_2, \rho_1 \gotimes \rho_2) = \big( d_1 d_2, \, \zeta_1 + \zeta_2, \,  \bnu_1 \cdot \bnu_2 \cdot \bnu(\zeta_1, \zeta_2) \big)
	\end{equation}
	where $\bnu(\zeta_1, \zeta_2)$ is the element of $H^2(G, U(1))$ determined by the 2-cocycle
	\begin{equation}
		\nu(\zeta_1, \zeta_2) = (-1)^{\zeta_1(g) \zeta_2(h)}.
	\end{equation}
\end{proposition}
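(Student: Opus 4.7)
I verify the formula component by component. For the dimension, $\dim(\caB_1 \gotimes \caB_2) = \dim \caB_1 \cdot \dim \caB_2$ gives $\ind_1 = d_1 d_2 \in \caS$, with central simplicity of the product checked via the classification (rational $\gotimes$ rational is rational, rational $\gotimes$ radical is radical, radical $\gotimes$ radical is rational via $K \gotimes K \cong \caM^{1|1}$). For the grading component, in each of these three cases I identify the natural test element: when both factors are rational, the grading operator of the product is $\Theta_1 \gotimes \Theta_2$; when one factor is rational and the other radical, the odd central element of the product is $\Theta_1 \gotimes \ep_2$ (resp.\ $\ep_1 \gotimes \Theta_2$); when both factors are radical, the rational product has grading operator proportional to $\ep_1 \gotimes \ep_2 \in K_1 \gotimes K_2$. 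Since $\rho_{12}^g = \rho_1^g \gotimes \rho_2^g$ acts on each test element as a product of the individual sign factors $(-1)^{\zeta_i(g)}$, a direct computation yields $\zeta_{12}(g) = \zeta_1(g) + \zeta_2(g)$ in every case.

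For the cohomology component, the main tool is Lemma \ref{lem:action of graded tensor product}: under the faithful representation $\pi$, the combined action $\rho_1^g \gotimes \rho_2^g$ on $\pi(\caB_1 \gotimes \caB_2) \subset \caM^{m_1} \otimes \caM^{m_2}$ is implemented by $\tilde V(g) := V_1(g) \otimes V_2(g) \Theta_2^{\xi_1(g)}$, where $V_i(g) \in \caM^{m_i}$ implements $\rho_i^g$ and $\xi_i(g) \in \{0,1\}$ is its parity with respect to $\Theta_i$. Pushing $\Theta_2^{\xi_1(g)}$ past $V_2(h)$ using $\Theta_2 V_2(h) = (-1)^{\xi_2(h)} V_2(h) \Theta_2$ yields
\[
\tilde V(g) \tilde V(h) = (-1)^{\xi_1(g)\xi_2(h)} \, \nu_1(g,h) \, \nu_2(g,h) \, \tilde V(gh).
\]
In the R+R case one has $\xi_i = \zeta_i$ (since $V_i(g) \in \caB_i$ and the relation $\rho_i^g(\Theta_i) = (-1)^{\zeta_i(g)}\Theta_i$ forces the parity), and $\pi$ surjects onto $\caM^{m_1 m_2}$, so this cocycle directly represents $\ind_3$ of the product and the claimed formula follows.

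\textbf{The main obstacle} is the remaining cases, where $\ind_3$ of the product is defined either on the even subalgebra (in the radical output cases) or intrinsically in the rational product (in the r+r case), while the matrix-embedding unitaries $V_i(g)$ may lie outside $\caB_i$ and can be chosen even, making $\xi_i = 0$ so that the $\tilde V$-calculation above no longer yields the right cocycle. In each such case I would instead construct an implementing unitary directly in the algebra where $\ind_3$ is computed — for instance in the r+r case taking $V_{12}(g) = V_1^{(0)}(g) \otimes V_2^{(0)}(g) \otimes (\ep_1 \gotimes \I)^{\zeta_1(g)} (\I \gotimes \ep_2)^{\zeta_2(g)} \in \caB_1 \gotimes \caB_2$. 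Using the anticommutation $(\I \gotimes \ep_2)(\ep_1 \gotimes \I) = -(\ep_1 \gotimes \I)(\I \gotimes \ep_2)$ in $K_1 \gotimes K_2$, the resulting 2-cocycle comes out to $\nu_1^{(0)}(g,h)\, \nu_2^{(0)}(g,h)\, (-1)^{\zeta_2(g)\zeta_1(h)}$. Although $(-1)^{\zeta_2(g)\zeta_1(h)}$ differs from the asserted $(-1)^{\zeta_1(g)\zeta_2(h)}$, the two 2-cocycles are cohomologous via the 1-coboundary $a(g) = (-1)^{\zeta_1(g)\zeta_2(g)}$ (a direct bilinear expansion of $\zeta_i(gh) = \zeta_i(g)+\zeta_i(h)$ gives $\delta a(g,h) = (-1)^{\zeta_1(g)\zeta_2(h)+\zeta_1(h)\zeta_2(g)}$), so they define the same class $\bnu(\zeta_1,\zeta_2) \in H^2(G, U(1))$, completing the proof.
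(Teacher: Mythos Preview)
Your treatment of $\ind_1$ and $\ind_2$ is correct and cleaner than the paper's explicit matrix manipulations; the rational--rational case for $\ind_3$ is also correct and essentially matches the paper's Case~I.

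However, there is a genuine error in the radical--radical case. Your proposed $V_{12}(g)$ does \emph{not} implement $\rho_1^g \gotimes \rho_2^g$. Since $\ep_1$ is central in $\caB_1$, conjugation by $\ep_1 \gotimes \I$ acts trivially on $\caB_1 \gotimes \I$, while the sign rule of the graded product gives $\Ad_{\ep_1 \gotimes \I}|_{\I \gotimes \caB_2} = \id \gotimes \theta_2$ (and symmetrically $\Ad_{\I \gotimes \ep_2} = \theta_1 \gotimes \id$). Hence your $V_{12}(g)$ sends $\ep_1 \mapsto (-1)^{\zeta_2(g)}\ep_1$ rather than the required $(-1)^{\zeta_1(g)}\ep_1$, so it implements the wrong automorphism whenever $\zeta_1 \neq \zeta_2$. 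The cocycle you extract from it is therefore not $\ind_3(\caB_1 \gotimes \caB_2)$, and the subsequent coboundary argument is beside the point. The fix is to swap the exponents: with $(\ep_1 \gotimes \I)^{\zeta_2(g)}(\I \gotimes \ep_2)^{\zeta_1(g)}$ the unitary does implement the action, and the cocycle comes out directly as $(-1)^{\zeta_1(g)\zeta_2(h)}$ with no coboundary correction needed.

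You also leave the rational--radical case for $\ind_3$ entirely unaddressed. Here the product is radical, so $\ind_3$ is defined via the restriction to the even subalgebra, which is a genuinely different computation from either case you sketch. The paper's Case~II handles this by embedding $\caB_2 = \caM^n \otimes K$ into $\caM^n \otimes \caM^2$, extending $\rho_2^g$ to the ambient algebra so that Lemma~\ref{lem:action of graded tensor product} applies, and then conjugating by an explicit isomorphism onto $\caM^{n(p+q)} \otimes K$ to read off the action on the even part. Your intrinsic approach could presumably be adapted to this case as well, but as written the argument is incomplete.
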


\begin{proof}
	\textbf{Case I : $\caB_1 \simeq \caM^{p_1 | q_1}$ and $\caB_2 \simeq \caM^{p_2 | q_2}$ are both rational.} We have $\caM^{p_1 | q_1} \gotimes \caM^{p_2 | q_2} \simeq \caM^{p_1 p_2 + q_1 q_2 | p_1 q_2 + q_1 p_2}$ so we get again a rational central simple superalgebra. The dimension of the graded tensor product is the product of the dimensions of the individual algebras, thus verifying the claim about $\ind_1$.
	
	Let $V_1, V_2$ be the projective representations of $G$ such that $\rho_1^g = \Ad_{V_1(g)}$ and $\rho_2^g = \Ad_{V_2(g)}$, and let $\nu_1$ and $\nu_2$ be the 2-cocylcles associated to these projective representations. The group $G$ acts on $\caB = \caB_1 \gotimes \caB_2$ by $\rho^g = \rho_1^g \gotimes \rho_2^g$. In the representation $\pi : \caB_1 \gotimes \caB_2 \rightarrow \caM^{p_1 + q_1} \otimes \caM^{p_2 + q_2}$ defined in Eq. \eqref{eq:definition of pi}, we have by Lemma \ref{lem:action of graded tensor product} that $\rho^g = \Ad_{V(g)}$ with $V(g) := V_1(g) \Theta_1^{\zeta_2(g)} \gotimes V_2(g) \Theta_2^{\zeta_1(g)}$ forming a projective representation on $G$,
	\begin{equation}
		\pi \big( V(g) V(h) \big) = V_1(g) V_1(h) \otimes V_2(g) \Theta_2^{\zeta_1(g)} V_2(h) \Theta_2^{\zeta_1(h)} = \nu_1(g, h) \nu_2(g, h) \nu(\zeta_1, \zeta_2)(g, h) \pi(V(gh)).
	\end{equation}
	Since $\pi$ is a faithful representation, we have
	\begin{equation}
		\ind_3(\caB_1 \gotimes \caB_2, \rho_1 \gotimes \rho_2) = [ \nu_1 \cdot \nu_2 \cdot \nu(\zeta_1, \zeta_2) ].
	\end{equation}

	Finally, remebering that in the representation $\pi$ the grading operator $\Theta$ of $\caB_1 \gotimes \caB_2$ is given by $\Theta_1 \otimes \Theta_2$, we find
	\begin{equation}
		\pi \big( \rho^g(\Theta) \big) = \Ad_{V_1(g) \otimes V_2(g) \Theta_2^{\zeta_1(g)}} \big( \Theta_1 \otimes \Theta_2 \big) = (-1)^{\zeta_1(g) + \zeta_2(g)} \pi \big( \Theta \big)
	\end{equation}
	so
	\begin{equation}
		\ind_2(\caB_1 \gotimes \caB_2, \rho_1 \gotimes \rho_2) = \zeta_1 + \zeta_2,
	\end{equation}
	as required.

	\textbf{Case II : $\caB_1 \simeq \caM^{p | q}$ and $\caB_2 \simeq \caM^n \otimes K$.} Denote by $\ep$ the odd self-adjoint unitary generator of $K$. We represent $\caB_2$ as a subalgebra of $\caM^n \otimes \caM^2$ through the embedding $\iota : \caM^n \otimes K \rightarrow \caM^n \otimes \caM^2$ such that $\iota(a \otimes \I) = a \otimes \I$ and $\iota(\I \otimes \ep) = \I \otimes \sigma_X$. This determines the injective *-homomorphism $\iota$ uniquely. We will henceforth identify $\caB_2$ with its image $\iota(\caB_2) \subset \caM^n \otimes \caM^2$. The grading automorphism $\theta_2$ of $\caB_2$ is given by conjugation with the grading operator $\Theta_2 = \I \otimes \sigma_Z$. Note that $\Theta_2$ is not an element of $\caB_2$.

	The graded tensor product $\caB = \caB_1 \gotimes \caB_2$ is isomorphic to the image of the representation
	\begin{equation}
		\pi : \caB_1 \gotimes \caB_2 \rightarrow \caM^{p + q} \otimes ( \caM^n \otimes \caM_2 )
	\end{equation}
	defined in Eq. \eqref{eq:definition of pi} where the grading operator in the image is $\Theta_1 \otimes (\I \otimes \sigma_Z)$.

	We show that $\pi(\caB_1 \gotimes \caB_2)$ (and hence $\caB_1 \gotimes \caB_2$ itself) is isomorphic as a superalgebra to $\caM^{n(p+q)} \otimes K$. We have $\pi(a \gotimes (b \otimes \ep^{\kappa})) = a \Theta_1^{\kappa} \otimes (b \otimes \sigma_X^{\kappa})$, $\kappa = 0, 1$ so the image of $\pi$ is spanned by elements of the form $a \otimes b \otimes \sigma_X^{\kappa}$, $\kappa = 0, 1$, whose parity is $\tau(a) + \kappa$. Denote by $p_1^{\pm}$ the porojections on the $\pm 1$ eigenspaces of the grading operator $\Theta_1$, so $\Theta_1 = p_1^+ - p_1^-$. We define a *-isomorphism $f : \caM^{p+q} \otimes (\caM^n \otimes \caM^2) \rightarrow \caM^{n(p+q)} \otimes \caM^2$ by conjugation with the unitary
	\begin{equation}
		p_{1}^+ \otimes ( \I \otimes  \I) \,\, + \,\, p_{1}^- \otimes ( \I \otimes \sigma_X).
	\end{equation}
	This *-automorphism satisfies
	\begin{align}
		f(a \otimes (b \otimes \sigma_X^{\tau(a)}) ) &= (a \otimes b) \otimes \I \nonumber \\
		f(\I \otimes (\I \otimes \sigma_X) ) &= \I \otimes \sigma_X \label{eq:properties of f case II} \\
		f(\Theta_1 \otimes (\I \otimes \sigma_Z)) &= \I \otimes \sigma_Z. \nonumber
	\end{align}
	The image of the even subalgebra $(f \circ \pi)(\caB^{(0)})$ is spanned by elements $a \otimes \I$ for arbitrary $a \in \caM^{n(p+q)}$, and the image of the odd elements $(f \circ \pi)(\caB^{(1)})$ is spanned by elements $a \otimes \sigma_X$ for arbitrary $a \in \caM^{n(p+q)}$. Hence, $(f \circ \pi)|_{\caB_1 \gotimes \caB_2}$ is a graded *-isomorphism from $\caB_1 \gotimes \caB_2$ to $\caM^{n(p + q)} \otimes K$, where $K$ is now identified as the subalgebra of $\caM^2$ generated by $\sigma_X$.

	The dimension of $\caB_1 \gotimes \caB_2 \simeq \caM^{n(p+1)} \otimes K$ is the product of the dimensions of $\caB_1 \simeq \caM^{p|q}$ and $\caB_2 \simeq \caM^n \otimes K$, thus veryfying the claim about $\ind_1$.

	Still identifying $\caB_2$ with its image $\iota(\caB_2) \subset \caM^n \otimes \caM^2$ we can extend the *-automorphisms $\rho_2^g$ to *-automorphisms $\tilde \rho_2^g$ acting on the whole of $\caM^n \otimes \caM_2$ by setting $\tilde \rho_2^g(\I \otimes \sigma_Z) = (-1)^{\zeta_2(g)} \I \otimes \sigma_Z$. The grading automorphism $\theta_2$ also extends to a $\tilde \theta_2$, simply letting it act by conjugation with the grading operator $\I \otimes \sigma_Z$. One easily verifies that the $\tilde \rho_2^g$ thus extended still form an even automorphic $G$ representation w.r.t. the grading defined by $\tilde\theta_2$ on the whole of $\caM^n \otimes \caM^2$. Since $\caM^n \otimes \caM^2$ is a full matrix algebra there are unitaries $V_2(g)$ such that $\tilde \rho_2^g = \Ad_{V_2(g)}$, and these unitaries form a projective representation of $G$.

	Recall that the $\rho_2^g$, and therefore the $\tilde \rho_2^g$, leave the even subalgebra $\caB_2^{(0)} = \caM^n \otimes \I$ invariant, so that there are unitaries $V_2^{(0)}(g) \in \caM^n$ such that
	$\tilde \rho_2^{g}(a \otimes \I) = \rho_2^g(a \otimes \I) = \Ad_{V_2^{(0)}(g)}(a) \otimes \I$ for all $a \in \caM^n$. The unitaries $V_2(g)^* ( V_2^{(0)}(g) \otimes \sigma_Y^{\zeta_2(g)} )$ commute with the even subalgebra $\caM^n \otimes \I$ and with the elements $\I \otimes \sigma_X$ and $\I \otimes \sigma_Z$ (recall that, by definition, $\rho_2^g(\I \otimes \sigma_X) = (-1)^{\zeta_2(g)} (\I \otimes \sigma_X)$). These elements together generate the whole of $\caM^n \otimes \caM^n$. It follows that, after possibly redefining the phases of the $V_2^{(0)}(g)$, we have
	\begin{equation}
		V_2(g) = V_2^{(0)}(g) \otimes \sigma_Y^{\zeta_2(g)}.
	\end{equation}
	The second cohomology index of $(\caB_2, \rho_2)$ is by definition $[\nu_2]$ where $\nu_2$ is the 2-cocycle associated to the projective representation $V_2^{(0)}$.

	By Lemma \ref{lem:action of graded tensor product}, the automorphisms $\tilde \rho^g = \rho_1^g \gotimes \tilde \rho_2^g$ act on $\caM^{p+q} \otimes (\caM^n \otimes \caM^2) \supset \pi \big( \caB_1 \gotimes \caB_2 \big)$ through conjugation by
	\begin{align*}
		V_1(g) \otimes V_2(g) (\I \otimes \sigma_Z)^{\zeta_1(g)} &= V_1(g) \otimes (V_2^{(0)}(g) \otimes \sigma_Z^{\zeta_1(g)} \sigma_Y^{\zeta_2(g)}) \\
									 &= \iu^{\zeta_2(g)} V_1(g) \otimes (V_2^{(0)}(g) \otimes \sigma_Z^{\zeta_1(g)} (\sigma_X \sigma_Z)^{\zeta_2(g)}).
	\end{align*}
	It follows that the even automorphic $G$-action on $(f \circ \pi)(\caB_1 \gotimes \caB_2) \subset \caM^{n(p+q)} \otimes \caM^2$ acts by conjugation with unitaries
	\begin{equation}
		f \big( V_1(g) \otimes (V_2^{(0)}(g) \otimes \sigma_Z^{\zeta_1(g)} (\sigma_X \sigma_Z)^{\zeta_2(g)}) \big) = V_1(g) \Theta_1^{\zeta_1(g) + \zeta_2(g)} \otimes V_2^{(0)}(g) \otimes (- \iu \sigma_Y)^{\zeta_1(g) + \zeta_2(g)},
	\end{equation}
	which follows after a short computation using the properties of $f$ listed in Eq. \eqref{eq:properties of f case II}. We see that the even subalgebra $\caM^{n(p+q)} \otimes \I$ is invariant as required, and it is acted upon through conjugation by unitaries $V^{(0)}(g) = V_1(g) \Theta_1^{\zeta_1(g) + \zeta_2(g)} \otimes V_2^{(0)}(g)$. These unitaries form a projective representation
	\begin{equation}
		V^{(0)}(g) V^{(0)}(h) = \nu_1(g, h) \nu_2(g, h) \nu(\zeta_1, \zeta_2)(g, h) V^{(0)}(gh)
	\end{equation}
	so
	\begin{equation}
		\ind_3(\caB_1 \gotimes \caB_2, \rho_1 \gotimes \rho_2) = [ \nu_1 \cdot \nu_2 \cdot \nu(\zeta_1, \zeta_2) ].
	\end{equation}
	Finally, the unique (up to sign) odd self-adjoint unitary in the center of $(f \circ \pi)(\caB_1 \gotimes \caB_2)$ is $\I \otimes \I \otimes \sigma_X$ and
	\begin{equation}
		\Ad_{V_1(g) \Theta_1^{\zeta_1(g) + \zeta_2(g)} \otimes V_2^{(0)}(g) \otimes (- \iu \sigma_Y)^{\zeta_1(g) + \zeta_2(g)}}(\I \otimes \I \otimes \sigma_X) = (-1)^{\zeta_1(g) + \zeta_2(g)} (\I \otimes \I \otimes \sigma_X),
	\end{equation}
	so that
	\begin{equation}
		\ind_2(\caB_1 \gotimes \caB_2, \rho_1 \gotimes \rho_2) = \zeta_1 + \zeta_2,
	\end{equation}
	as required.

	\textbf{Case III : $\caB_1 \simeq \caM^m \otimes K$ and $\caB_2 \simeq \caM^n \otimes K'$ are both radical.} As in case II we can represent $\caB_1 = \caM^m \otimes K$ and $ \caB_2 = \caM^n \otimes K'$ as subalgebras of the full matrix algebras $\caM^m \otimes \caM^2$ and $\caM^n \otimes \caM^2$ where the odd elements $\I \otimes \ep$, $\I \otimes \ep'$ are identified with $\I \otimes \sigma_X$. The grading operators are then $\I \otimes \sigma_Z$. The graded tensor product $\caB = \caB_1 \gotimes \caB_2$ is isomorphic to the image of the representation
	\begin{equation}
		\pi : \caB_1 \gotimes \caB_2 \rightarrow \left( \caM^m \otimes \caM^2 \right) \otimes \left( \caM^n \otimes \caM^2 \right).
	\end{equation}
	The grading operator in the image is $( \I \otimes \sigma_Z) \otimes (\I \otimes \sigma_Z)$.

	We show that $\pi(\caB_1 \gotimes \caB_2)$ (and hence $\caB_1 \gotimes \caB_2$ itself) is isomorphic as a superalgebra to $\caM^{mn | mn}$. Define a *-isomorphism $f : \left( \caM^m \otimes \caM^2 \right) \otimes \left( \caM^n \otimes \caM^2 \right) \rightarrow \left( \caM^m \otimes \caM^2 \right) \otimes \left( \caM^n \otimes \caM^2 \right)$ by conjugation with
	\begin{equation}
		(\I \otimes p_+) \otimes (\I \otimes \I) + (\I \otimes p_-) \otimes (\I \otimes \sigma_Y)
	\end{equation}
	where $p_{\pm}$ is the projector on the $\pm 1$ eigenspace of $\sigma_Z$.

	Then
	\begin{align}
		f\big( (a \otimes \I) \otimes ( b \otimes \I) \big) &= (a \otimes \I) \otimes (b \otimes \I) \nonumber \\
		f\big( (a \otimes \sigma_X) \otimes ( b \otimes \I) \big) &= (a \otimes \sigma_X) \otimes (b \otimes \sigma_Y) \nonumber \\
		f\big( (a \otimes \I) \otimes ( b \otimes \sigma_X) \big) &= (a \otimes \sigma_Z) \otimes (b \otimes \sigma_X) \nonumber \\
		f\big( (a \otimes \sigma_X) \otimes ( b \otimes \sigma_X) \big) &= - (a \otimes \sigma_Y) \otimes (b \otimes \sigma_Z) \label{eq:properties of f case III} \\
		f\big( (a \otimes \sigma_Z) \otimes ( b \otimes \I) \big) &= (a \otimes \sigma_Z) \otimes (b \otimes \I) \nonumber \\
		f\big( (a \otimes \I) \otimes ( b \otimes \sigma_Z) \big) &= (a \otimes \sigma_Z) \otimes (b \otimes \sigma_Z) \nonumber 
	\end{align}
	and the grading operator $(\I \otimes \sigma_Z) \otimes (\I \otimes \sigma_Z)$ is mapped to
	\begin{equation*}
		f\big( (\I \otimes \sigma_Z) \otimes ( \I \otimes \sigma_Z) \big) = (\I \otimes \I) \otimes (\I \otimes \sigma_Z).
	\end{equation*}

	Next, we consider the *-homomorphism $g : \left( \caM^m \otimes \caM^2 \right) \otimes \left( \caM^n \otimes \caM^2 \right) \rightarrow \caM^{mn} \otimes \caM^2$ which forgets the second factor, i.e. $g$ is defined by
	\begin{equation}
		g(a \otimes b \otimes c \otimes d) = (a \otimes c) \otimes d.
	\end{equation}

	Elements of the form $(a \otimes \I) \otimes ( b \otimes \I)$ and $(a \otimes \sigma_X) \otimes (b \otimes \sigma_X)$ form a basis for the even subalgebra of $\pi(\caB_1 \gotimes \caB_2)$. They are mapped by the *-homomorphism $g \circ f$ to 
	\begin{align*}
		(g \circ f) \big( (a \otimes \I) \otimes (b \otimes \I) \big) &= (a \otimes b) \otimes \I, \\
		(g \circ f) \big( (a \otimes \sigma_X) \otimes (b \otimes \sigma_X) \big) &= (a \otimes b) \otimes (-\sigma_Z).
	\end{align*}

	The odd subspace of $\pi(\caB_1 \gotimes \caB_2)$ is spanned by elements of the form $(a \otimes \sigma_X) \otimes ( b \otimes \I)$ and $(a \otimes \I) \otimes (b \otimes \sigma_X)$. They are mapped by the *-homormorphism $g \circ f$ to
	\begin{align*}
		(g \circ f) \big( (a \otimes \sigma_X) \otimes (b \otimes \I) \big) &= (a \otimes b) \otimes \sigma_Y, \\
		(g \circ f) \big( (a \otimes \sigma_X) \otimes (b \otimes \sigma_X) \big) &= (a \otimes b) \otimes (\sigma_X).
	\end{align*}
	Finally, the grading operator is mapped to
	\begin{equation*}
		(g \circ f) \big( (\I \otimes \sigma_Z) \otimes (\I \otimes \sigma_Z) \big) = \I \otimes \sigma_Z.
	\end{equation*}

	We see then that the image of $\caB_1 \gotimes \caB_2$ under $g \circ f \circ \pi$ is the graded algebra $\caM^{mn | mn}$. In fact, $g \circ f \circ \pi$ provides an isomorphism between $\caB_1 \gotimes \caB_2$ and $\caM^{mn | mn}$ because the dimensions of these algebras are equal.

	We now verify the composition law for $\ind$. The fact that $\ind_1(\caB_1 \gotimes \caB_2) = \ind_1(\caB_1) \ind_1(\caB_2)$ is just multiplicativity of dimension under the graded tensor product.

	The even automorphic $G$-actions $\rho^g_1$, $\rho^g_2$ on $\caB_1$, $\caB_2$ can be extended to even automorphic $G$-actions $\tilde \rho^g_1$, $\tilde \rho^g_2$ acting on the whole of $\caM^m \otimes \caM^2 \supset \pi(\caB_1)$ and $\caM^n \otimes \caM^2 \supset \pi(\caB_2)$ respectively, by setting $\tilde \rho^g_i(\I \otimes \sigma_Z) = (-1)^{\zeta_i(g)} (\I \otimes \sigma_Z)$ for $i = 1, 2$. The automorphisms $\tilde \rho^g_i$ then act on full matrix algebras, so there are unitaries $V_1(g) \in \caM^m \otimes \caM^2$ and $V_2(g) \in \caM^n \otimes \caM^2$ such that $\tilde \rho^g_i = \Ad_{V_i(g)}$ for $i = 1, 2$.
	
	The automorphisms $\rho^g_i$, and therefore also their extensions $\tilde \rho^g_i$, leave the even subalgebras $\caM^m \otimes \I$, $\caM^n \otimes \I$ invariant, so there are unitaries $V_1^{(0)}(g) \in \caM^m$ and $V_2^{(0)}(g) \in \caM^n$ such that $\tilde \rho_i^g(a \otimes \I) = \Ad_{V_i^{(0)}}(a) \otimes \I$. The unitaries $V_i(g)^* (V_i^{(0)}(g) \otimes \sigma_Y^{\zeta_i})$ commute with the even subalgebras, $\caM^m \otimes \I$ for $i = 1$ and $\caM^n \otimes \I$ for $i = 2$, and with the elements $\I \otimes \sigma_X$ and $\I \otimes \sigma_Z$ (recall that, by definition, $\rho_i^g(\I \otimes \sigma_X) = (-1)^{\zeta_i(g)} (\I \otimes \sigma_X)$). These together generate the whole of $\caM^m \otimes \caM^2$, $\caM^n \otimes \caM^2$ so, after possibily redefining the phases of the $V_i^{(0)}(g)$, we have
	\begin{equation}
		V_i(g) = V_i^{(0)}(g) \otimes \sigma_Y^{\zeta_i(g)}
	\end{equation}
	for $i = 1, 2$.

	By Lemma \ref{lem:action of graded tensor product}, the automorphisms $\tilde \rho^g = \tilde \rho^g_1 \gotimes \tilde \rho^g_2$ act on $(\caM^m \otimes \caM^2) \otimes (\caM^n \otimes \caM^2) \supset \pi(\caB_1 \gotimes \caB_2)$ through conjugation by
	\begin{align*}
		V_1(g) \otimes V_2(g)(\I \otimes \sigma_Z)^{\zeta_1(g)} &= (V_1^{(0)}(g) \otimes \sigma_Y^{\zeta_1(g)}) \otimes (V_2^{(0)}(g) \otimes \sigma_Y^{\zeta_2(g)} \sigma_Z^{\zeta_1(g)}) \\
									&= (\iu)^{\zeta_1(g) + \zeta_2(g)}  (V_1^{(0)}(g) \otimes (\sigma_X \sigma_Z)^{\zeta_1(g)}) \otimes (V_2^{(0)}(g) \otimes \sigma_X^{\zeta_2(g)} \sigma_Z^{\zeta_1(g) + \zeta_2(g)})
	\end{align*}
	and so $f \circ \tilde \rho^g \circ f^{-1}$ acts through conjugation with
	\begin{equation}
		V_1^{(0)}(g) \otimes \sigma_X^{\zeta_1(g)} \otimes V_2^{(0)}(g) \otimes ( \sigma_X^{\zeta_1(g)} \sigma_Y^{\zeta_2(g)} )
	\end{equation}
	and $g \circ f \circ \tilde \rho^g \circ f^{-1} \circ g^{-1}$ acts on $\caM^{mn | mn}$ through conjugation by
	\begin{equation}
		V(g) = \big( V_1^{(0)}(g) \otimes V_2^{(0)}(g) \big) \otimes (\sigma_X^{\zeta_1(g)} \sigma_Y^{\zeta_2(g)}).
	\end{equation}
	These unitaries form a projective representation whose associated second cohomology class is by definition $\ind_3(\caB_1 \gotimes \caB_2, \rho^g_1 \gotimes \rho^g_2)$. We have
	\begin{align*}
		V(g) V(h) &= \big( V_1^{(0)}(g) V_1^{(0)}(h) \otimes V_2^{(0)}(g) V_2^{(0)}(h) \big) \otimes (\sigma_X^{\zeta_1(g)} \sigma_Y^{\zeta_2(g)} \sigma_X^{\zeta_1(h)} \sigma_Y^{\zeta_2(h)}) \\
			  &= \nu_1(g, h) \nu_2(g, h)  \big( V_1^{(0)}(gh) \otimes V_2^{(0)}(gh) \big) \otimes ( (-1)^{\zeta_1(g) \zeta_2(h)} \sigma_X^{\zeta_1(g) + \zeta_1(h)} \sigma_Y^{\zeta_2(g) + \zeta_2(h)}) \\
			  &= \nu_1(g, h) \nu_2(g, h) \nu(\zeta_1, \zeta_2)(g, h) V(gh),
	\end{align*}
	so
	\begin{equation}
		\ind_3(\caB_1 \gotimes \caB_2, \rho_1 \gotimes \rho_2) = [\nu_1 \cdot \nu_2 \cdot \nu(\zeta_1, \zeta_2)],
	\end{equation}
	as required.

	Finally, the action of $\Ad_{V(g)}$ on the grading operator $\I \otimes \sigma_Z$ is
	\begin{equation}
		\Ad_{V(g)}(\I \otimes \sigma_Z) = (-1)^{\zeta_1(g) + \zeta_2(g)} (\I \otimes \sigma_Z)
	\end{equation}
	so
	\begin{equation}
		\ind_2(\caB_1 \gotimes \caB_2, \rho_1 \gotimes \rho_2) = \zeta_1 + \zeta_2.
	\end{equation}
\end{proof}

\subsubsection{Isomorphism of central simple \texorpdfstring{$G$}{G}-systems}

\begin{definition} \label{def:isomorphism of central simple G-systems}
	Two central simple $G$-systems $(\caB_1, \rho_1)$ and $(\caB_2, \rho_2)$ are isomorphic if there is a super *-algebra isomorphism $\Phi : \caB_1 \rightarrow \caB_2$ such that $\Phi \circ \rho_1 = \rho_2 \circ \Phi$.
\end{definition}

\begin{lemma} \label{lem:isomorphic invariance of ind}
	If $(\caB_1, \rho_1)$ and $(\caB_2, \rho_2)$ are isomorphic central simple G-systems then
	\begin{equation}
		\ind(\caB_1, \rho_1) = \ind(\caB_2, \rho_2).
	\end{equation}
\end{lemma}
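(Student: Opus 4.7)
The plan is to verify invariance of each of the three components of $\ind$ separately under the super-*-algebra isomorphism $\Phi$ that intertwines the $G$-actions. For $\ind_1$, since $\Phi$ is in particular a linear bijection, we immediately get $\dim \caB_1 = \dim \caB_2$, hence $\ind_1(\caB_1) = \ind_1(\caB_2)$. Because $\Phi$ also preserves the super structure, it maps the odd part to the odd part and the supercenter to the supercenter, so it preserves whether the grading is inner (the rational case) or outer (the radical case). I would then treat the two cases separately.

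In the rational case, I would choose grading operators $\Theta_i \in \caB_i$ with $\Theta_i^2 = \I$ and unitaries $V_i(g) \in \caB_i$ implementing $\rho_i^g$. Since $\caB_i$ has trivial center as an ungraded full matrix algebra, the implementer of an inner automorphism is unique up to a scalar phase. Applying this uniqueness to the relations $\Phi \circ \theta_1 \circ \Phi^{-1} = \theta_2$ and $\Phi \circ \rho_1^g \circ \Phi^{-1} = \rho_2^g$ gives $\Phi(\Theta_1) = \pm \Theta_2$ (the sign fixed by $\Theta_i^2 = \I$) and $\Phi(V_1(g)) = \lambda(g) V_2(g)$ for some phases $\lambda(g) \in U(1)$. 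Applying $\Phi$ to $\rho_1^g(\Theta_1) = (-1)^{\zeta_1(g)} \Theta_1$ then yields $\zeta_1 = \zeta_2$, and substituting $\Phi(V_1(g)) = \lambda(g) V_2(g)$ into the cocycle relation $V_i(g) V_i(h) = \nu_i(g,h) V_i(gh)$ shows that $\nu_1$ and $\nu_2$ differ by the coboundary of $\lambda$, hence $[\nu_1] = [\nu_2]$ in $H^2(G, U(1))$.

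In the radical case, the only odd self-adjoint unitaries in the center of $\caB_i$ are $\pm (\I \otimes \ep_i)$, so $\Phi(\I \otimes \ep_1) = \pm (\I \otimes \ep_2)$; applying $\Phi$ to $\rho_1^g(\ep_1) = (-1)^{\zeta_1(g)} \ep_1$ then forces $\zeta_1 = \zeta_2$. For $\ind_3$, the restriction $\Phi|_{\caB_1^{(0)}}$ is a *-isomorphism between the even subalgebras $\caB_i^{(0)} \simeq \caM^{n_i}$, so $n_1 = n_2$ and, after this identification, $\Phi|_{\caB_1^{(0)}}$ is an inner automorphism $\Ad_W$ for some unitary $W$. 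Intertwining the even restrictions $\rho_i^g|_{\caB_i^{(0)}} = \Ad_{V_i^{(0)}(g)}$ through $\Phi$ then forces $V_2^{(0)}(g) = \mu(g) W V_1^{(0)}(g) W^*$ for some phases $\mu(g)$, so the 2-cocycles of $V_1^{(0)}$ and $V_2^{(0)}$ differ by the coboundary of $\mu$ and define the same cohomology class.

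The argument is essentially bookkeeping of phases. The only mild obstacle, and hence the main point that needs attention, is verifying case by case that $\Phi$ intertwines the canonical objects (grading operator, odd central unitary, implementers of $\rho^g$) up to the appropriate scalars; this rests in each case on the uniqueness of such implementers up to central phases.
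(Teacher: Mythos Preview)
Your proposal is correct and follows essentially the same approach as the paper's proof: both split into the rational and radical cases, transport the implementing unitaries (grading operator or central odd unitary, and the projective representation on the full matrix algebra or its even subalgebra) through $\Phi$, and use uniqueness up to phase to conclude that the $\zeta$'s agree and the cocycles differ by a coboundary. The only cosmetic difference is that in the radical case you factor $\Phi|_{\caB_1^{(0)}}$ as $\Ad_W$ after identifying the even subalgebras, whereas the paper works directly with $\Phi(V_1^{(0)}(g))$ as an implementer of $\rho_2^g|_{\caB_2^+}$; these are equivalent.
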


\begin{proof}
	Let $\ind(\caB_i, \rho_i) = (d_i, \zeta_i, \bnu_i)$ for $i = 1, 2$. The equality $d_1 = d_2$ is simply the fact that isomorpic algebras have the same dimension. In particular, either $\caB_1$ and $\caB_2$ are both rational, or they are both radical.

	Let $\Phi : \caB_1 \rightarrow \caB_2$ implement the isomorphism. To show that $\zeta_1 = \zeta_2$ and $\bnu_1 = \bnu_2$ we work case by case:

	\textbf{Case I : $\caB_1$ and $\caB_2$ are rational.} Then there are projective representations $g \mapsto V_i(g) \in \caB_i$ such that $\rho_i^g = \Ad_{V_i(g)}$ and $\bnu_i = [V_i]$. Since $\Phi \circ \rho_1 = \rho_2 \circ \Phi$ we have that $\rho_2 = \Ad_{\Phi(V_1(g))}$. It follows that $V_2(g) = \mu(g) \Phi(V_1(g))$ for some phases $\mu(g) \in U(1)$ so if
	\begin{equation}
		V_1(g) V_1(h) = \nu_1(g, h) V_1(gh)
	\end{equation}
	then
	\begin{equation}
		V_2(g) V_2(h) = \frac{\mu(g) \mu(h)}{\mu(gh)} \nu_1(g, h) V_2(gh)
	\end{equation}
	so the projective representations $V_1$ and $V_2$ have equivalent 2-cocycles. Thus
	\begin{equation}
		\bnu_1 = [V_1] = [V_2] = \bnu_2. 
	\end{equation}

	There are grading operators $\Theta_i \in \caB_i$ with $\Theta_i^2 = \I$. Then $\rho_i^g(\Theta_i) = (-1)_{\zeta_i(g)} \Theta_i$. Using the fact that $\Phi$ respects the grading we can take $\Theta_2 = \Phi(\Theta_1)$. Then using that $\Phi$ intertwines $\rho_1$ and $\rho_2$ we find
	\begin{equation}
		(-1)^{\zeta_2(g)} \Theta_2 = \rho_2^g(\Theta_2) = \rho_2^g(\Phi(\Theta_1)) = \Phi( \rho_1^g(\Theta_1) ) = (-1)^{\zeta_1(g)} \Theta_2
	\end{equation}
	hence $\zeta_1 = \zeta_2$ as required.

	\textbf{Case II : $\caB_1$ and $\caB_2$ are radical.} Then the even subalgebras $\caB^+_i$ are full matrix algebras, invariant under the action of $\rho_i$. As such, there are projective representations $g \mapsto V^{(0)}_i(g) \in \caB^+_i$ such that $\rho_i^g|_{\caB^+_i} = \Ad_{V^{(0)}_i(g)}$ and $\bnu_i = [V^{(0)}_i]$. Since $\Phi \circ \rho_1 = \rho_2 \circ \Phi$ we have that $\rho_2|_{\caB^+_2} = \Ad_{\Phi(V^{(0)}_1(g))}$. It follows that $V^{(0)}_2(g) = \mu(g) \Phi(V^{(0)}_1(g))$ for some phases $\mu(g) \in U(1)$ so if
	\begin{equation}
		V^{(0)}_1(g) V^{(0)}_1(h) = \nu_1(g, h) V^{(0)}_1(gh)
	\end{equation}
	then
	\begin{equation}
		V^{(0)}_2(g) V^{(0)}_2(h) = \frac{\mu(g) \mu(h)}{\mu(gh)} \nu_1(g, h) V^{(0)}_2(gh)
	\end{equation}
	so the projective representations $V^{(0)}_1$ and $V^{(0)}_2$ have equivalent 2-cocycles. Thus
	\begin{equation}
		\bnu_1 = [V^{(0)}_1] = [V^{(0)}_2] = \bnu_2. 
	\end{equation}

	If $\ep_1$ is an odd self-adjoint unitary of $\caB_1$ then $\ep_2 = \Phi(\ep_1)$ is an odd self-adjoint unitary of $\caB_2$ and
	\begin{equation}
		(-1)^{\zeta_2(g)} \ep_2 = \rho_2^g(\ep_2) = \Phi( \rho_2^g(\ep_1)) = (-1)^{\zeta_1(g)} \ep_2
	\end{equation}
	so $\zeta_1 = \zeta_2$.
\end{proof}

\subsubsection{Relative index and group structure}

The index of central simple $G$-systems takes values in $(\N \cup \sqrt{2} \N) \times \Hom(G, \Z_2) \times H^2(G, U(1)$. This set is equipped with a binary operation given by the stacking rule of Proposition \ref{prop:stacking for absolute ind} as
\begin{equation} \label{eq:binary operation}
	(d_1, \zeta_1, \bnu_1) \cdot (d_2, \zeta_2, \bnu_2) = \big( d_1 d_2, \zeta_1 + \zeta_2, \bnu_1 \cdot \bnu_2 \cdot \bnu(\zeta_1, \zeta_2) \big)
\end{equation}
where $\bnu(\zeta_1, \zeta_2) \in H^2(G, U(1))$ is the second cohomology class determined by the 2-cocycle
\begin{equation}
	\nu(\zeta_1, \zeta_2)(g, h) = (-1)^{\zeta_1(g) \zeta_2(g)}.
\end{equation}
This binary operation does not have inverses because of the first component. If we extend the set to $(\Q \cup \sqrt{2} \Q) \times \Hom(G, \Z_2), H^2(G, U(1))$ with the same binary operation, then we do have inverses and we get a group.

\begin{lemma} \label{lem:abelian group}
	The set $(\Q \cup \sqrt{2} \Q) \times \Hom(G, \Z_2, H^2(G, U(1))$ equipped with the binary operation \eqref{eq:binary operation} is an abelian group. The inverse of $(d, \zeta, \bnu)$ is $(d^{-1}, \zeta, \bnu^{-1})$ and the neutral element is $(1, 0, \boe)$, where $\boe$ is the neutral element of $H^2(G, U(1))$.
\end{lemma}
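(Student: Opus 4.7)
The proof is a sequence of algebraic verifications, so I would organize it by checking the group axioms in order, treating the three components somewhat independently.

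First I would check closure. The first component lives in $\Q \cup \sqrt 2 \Q$ which is closed under multiplication since $\sqrt 2 \Q \cdot \sqrt 2 \Q = 2 \Q \subset \Q$. The second component is in $\Hom(G, \Z_2)$, which is abelian under pointwise addition. For the third component I need to confirm that $\nu(\zeta_1,\zeta_2)(g,h) = (-1)^{\zeta_1(g)\zeta_2(h)}$ really is a 2-cocycle with trivial action. A direct computation using $\zeta_i(gh)=\zeta_i(g)+\zeta_i(h) \bmod 2$ gives
\begin{equation}
	\nu(\zeta_1,\zeta_2)(g,h)\,\nu(\zeta_1,\zeta_2)(gh,k) = (-1)^{\zeta_1(g)\zeta_2(h) + \zeta_1(g)\zeta_2(k) + \zeta_1(h)\zeta_2(k)} = \nu(\zeta_1,\zeta_2)(g,hk)\,\nu(\zeta_1,\zeta_2)(h,k),
\end{equation}
so the third component is well defined in $H^2(G,U(1))$.

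Next I would verify associativity. The first two slots are obviously associative, so everything reduces to the identity
\begin{equation}
	\bnu(\zeta_1,\zeta_2)\cdot \bnu(\zeta_1+\zeta_2,\zeta_3) = \bnu(\zeta_2,\zeta_3)\cdot \bnu(\zeta_1,\zeta_2+\zeta_3)
\end{equation}
in $H^2(G,U(1))$, which in fact holds at the level of 2-cocycles: both sides equal $(-1)^{\zeta_1(g)\zeta_2(h)+\zeta_1(g)\zeta_3(h)+\zeta_2(g)\zeta_3(h)}$ after expanding $(\zeta_i+\zeta_j)$ using $\Z_2$-linearity. The identity $(1,0,\boe)$ is immediate since $\nu(0,\zeta)\equiv 1$.

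The two nontrivial axioms are the existence of inverses and commutativity, and these form the main content of the lemma. For the inverse, I need $\bnu(\zeta,\zeta)=\boe$, which is the only place where the claim $(d^{-1},\zeta,\bnu^{-1})\cdot(d,\zeta,\bnu)=(1,0,\boe)$ could fail. I would exhibit a coboundary trivializing $\nu(\zeta,\zeta)$: take $\mu(g) := \iu^{\zeta(g)}$, where $\zeta(g)\in\{0,1\}$ is lifted to an integer. Since $\zeta(g)+\zeta(h)-\zeta(gh)=2\zeta(g)\zeta(h)$, one computes
\begin{equation}
	\frac{\mu(g)\mu(h)}{\mu(gh)} = \iu^{2\zeta(g)\zeta(h)} = (-1)^{\zeta(g)\zeta(h)} = \nu(\zeta,\zeta)(g,h).
\end{equation}
For commutativity the only content is that $\bnu(\zeta_1,\zeta_2)=\bnu(\zeta_2,\zeta_1)$, and I would trivialize the ratio $\nu(\zeta_1,\zeta_2)\nu(\zeta_2,\zeta_1)^{-1}(g,h)=(-1)^{\zeta_1(g)\zeta_2(h)+\zeta_2(g)\zeta_1(h)}$ using $\mu(g):=(-1)^{\zeta_1(g)\zeta_2(g)}$. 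Expanding $\zeta_i(gh)=\zeta_i(g)+\zeta_i(h)\bmod 2$ gives $\zeta_1(gh)\zeta_2(gh)\equiv \zeta_1(g)\zeta_2(g)+\zeta_1(h)\zeta_2(h)+\zeta_1(g)\zeta_2(h)+\zeta_1(h)\zeta_2(g)\pmod 2$, so $\mu(g)\mu(h)/\mu(gh)$ equals precisely the desired sign. I expect these two coboundary constructions to be the only delicate step; once they are in place, the rest of the verification is immediate and the abelian group structure follows.
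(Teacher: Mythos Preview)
Your proof is correct and follows the same core approach as the paper: the neutral element is verified via $\nu(0,\zeta)\equiv 1$, and the inverse via the coboundary $\mu(g)=\iu^{\zeta(g)}$ trivializing $\nu(\zeta,\zeta)$ (this is exactly the content of the paper's Lemma~\ref{lem:triviality of nu zeta zeta}).

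Your argument is in fact more complete than the paper's. The paper declares commutativity ``obvious'' on the grounds that the component groups are abelian, glossing over the fact that one still needs $\bnu(\zeta_1,\zeta_2)=\bnu(\zeta_2,\zeta_1)$ in $H^2(G,U(1))$; you correctly isolate this and supply the coboundary $\mu(g)=(-1)^{\zeta_1(g)\zeta_2(g)}$. Similarly, the paper does not check associativity or the cocycle condition for $\nu(\zeta_1,\zeta_2)$, both of which you verify. So you recover the paper's argument and fill in the steps it leaves implicit.
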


\begin{proof}
	The fact that the binary operation is abelian is obvious, multiplication is abelian, and $\Hom(G, \Z_2)$ and $H^2(G, U(1))$ are both abelian groups. To see that $(1, 0, \boe)$ is the neutral element, compute
	\begin{equation}
		(1, 0, \boe) \cdot (d, \zeta, \bnu) = \big(d, \zeta, \boe \cdot \bnu \cdot \bnu(0, \zeta) \big) = (d, \zeta, \bnu)
	\end{equation}
	where we used the fact that
	\begin{equation}
		\nu(0, \zeta)(g, h) = (-1)^{0 \times \zeta(h)} = 1
	\end{equation}
	is a trivial cocycle, so $\bnu(0, \zeta) = \boe$.

	To see that the inverse of $(d, \zeta, \bnu)$ is $(d^{-1}, \zeta, \bnu^{-1})$, compute
	\begin{equation}
		(d, \zeta, \bnu) \cdot (d^{-1}, \zeta, \bnu^{-1}) = \big( 1, \zeta + \zeta, \bnu \cdot \bnu^{-1} \cdot \bnu(\zeta, \zeta) \big) = (1, 0, \boe)
	\end{equation}
	where we used $\zeta + \zeta = 0$, because addition is modulo 2, and the fact that $\bnu(\zeta, \zeta) = \boe$ for any $\zeta \in \Hom(G, \Z_2)$, see Lemma \ref{lem:triviality of nu zeta zeta}.
\end{proof}

\begin{lemma} \label{lem:triviality of nu zeta zeta}
	For any $\zeta \in \Hom(G, U(1))$, the 2-cocycle $\nu(\zeta, \zeta)$ is cohomologous to the trivial cocycle and so $\bnu(\zeta, \zeta) = \boe$.
\end{lemma}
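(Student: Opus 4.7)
The plan is to exhibit an explicit 1-cochain $\mu : G \rightarrow U(1)$ whose coboundary $(\de \mu)(g,h) = \mu(g)\mu(h)\mu(gh)^{-1}$ equals $\nu(\zeta,\zeta)(g,h) = (-1)^{\zeta(g)\zeta(h)}$. The guess is motivated by a simple parity count: since $\zeta : G \to \Z_2$ is a homomorphism, the integer $\zeta(g) + \zeta(h) - \zeta(gh)$, computed without reducing modulo $2$, is $0$ when at most one of $\zeta(g), \zeta(h)$ is $1$ and is $2$ precisely when both are $1$. In other words, when $\zeta(g), \zeta(h) \in \{0,1\}$ are viewed as integers,
\begin{equation}
	\zeta(g) + \zeta(h) - \zeta(gh) = 2\, \zeta(g)\zeta(h).
\end{equation}
This identity is the key algebraic observation and is the only step that really uses that $\zeta$ is a homomorphism.

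Given the above identity, I would set $\mu(g) := \iu^{\zeta(g)}$, which is a well-defined function from $G$ to $U(1)$. Then
\begin{equation}
	\mu(g)\mu(h)\mu(gh)^{-1} = \iu^{\zeta(g) + \zeta(h) - \zeta(gh)} = \iu^{2 \zeta(g) \zeta(h)} = (-1)^{\zeta(g)\zeta(h)} = \nu(\zeta, \zeta)(g,h).
\end{equation}
This exhibits $\nu(\zeta,\zeta)$ as a coboundary, hence $\bnu(\zeta,\zeta) = \boe$ in $H^2(G, U(1))$.

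There really is no hard part here; the only thing to be careful about is that one must treat $\zeta(g) \in \{0,1\} \subset \Z$ as an integer (not a $\Z_2$-element) at the moment of writing $\iu^{\zeta(g)}$, so that $\mu$ is unambiguously defined as a function from $G$ into the fourth roots of unity. With this convention the identity $\zeta(g)+\zeta(h)-\zeta(gh) = 2\zeta(g)\zeta(h)$ is a straightforward check on the four cases $(\zeta(g),\zeta(h)) \in \{0,1\}^2$, and the rest follows by direct substitution.
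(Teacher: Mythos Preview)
Your proof is correct and is exactly the approach taken in the paper: the paper also sets $\mu(g) = \iu^{\zeta(g)}$ and notes that $\nu(\zeta,\zeta)(g,h) = \mu(g)\mu(h)/\mu(gh)$. You simply spell out the verification via the integer identity $\zeta(g)+\zeta(h)-\zeta(gh)=2\zeta(g)\zeta(h)$, which the paper leaves as ``one easily checks.''
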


\begin{proof}
	Let $\mu(g) = \iu^{\zeta(g)}$. Then one easily checks that
	\begin{equation}
		\nu(\zeta, \zeta)(g, h) = \frac{\mu(g) \mu(h)}{\mu(gh)}.
	\end{equation}
\end{proof}

\begin{definition} \label{def:relative ind}
	Let $(\caB_1, \rho_1)$ and $(\caB_2, \rho_2)$ be  a pair of central simple $G$-systems. The relative index of $(\caB_1, \rho_1)$ and $(\caB_2, \rho_2)$ is defined to be
	\begin{equation}
		\ind(\caB_1, \rho_1, \caB_2, \rho_2) := \ind(\caB_1, \rho_1) \cdot \ind(\caB_2, \rho_2)^{-1}.
	\end{equation}
\end{definition}

Like the index, the relative index also behaves nicely under graded tensor products,
\begin{proposition} \label{prop:stacking for relative ind}
	Let $\caA_1, \caA_2, \caB_1, \caB_2$ be central simple superalgebras equipped with even automorphic $G$-actions $\rho_{\caA_1}, \rho_{\caA_2}, \rho_{\caB_1}$ and $\rho_{\caB_2}$ respectively. We have
	\begin{equation}
		\ind(\caA_1 \gotimes \caB_1, \rho_{\caA_1} \gotimes \rho_{\caB_1}, \caA_2 \gotimes \caB_2, \rho_{\caA_2} \gotimes \rho_{\caB_2}) = \ind(\caA_1, \rho|_{\caA_1}, \caA_2, \rho|_{\caA_2}) \cdot \ind(\caB_1, \rho|_{\caB_1}, \caB_2, \rho|_{\caB_2})
	\end{equation}
\end{proposition}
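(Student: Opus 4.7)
The plan is to reduce this statement to a direct algebraic consequence of the already-established multiplicativity of the absolute index (Proposition \ref{prop:stacking for absolute ind}) combined with the abelian group structure on the codomain (Lemma \ref{lem:abelian group}). Unfolding Definition \ref{def:relative ind}, the left-hand side is
\begin{equation*}
\ind(\caA_1 \gotimes \caB_1, \rho_{\caA_1} \gotimes \rho_{\caB_1}) \cdot \ind(\caA_2 \gotimes \caB_2, \rho_{\caA_2} \gotimes \rho_{\caB_2})^{-1},
\end{equation*}
while the right-hand side is
\begin{equation*}
\ind(\caA_1, \rho_{\caA_1}) \cdot \ind(\caA_2, \rho_{\caA_2})^{-1} \cdot \ind(\caB_1, \rho_{\caB_1}) \cdot \ind(\caB_2, \rho_{\caB_2})^{-1}.
\end{equation*}

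First, I would apply Proposition \ref{prop:stacking for absolute ind} to each of the two factors appearing in the left-hand side, giving $\ind(\caA_i \gotimes \caB_i, \rho_{\caA_i} \gotimes \rho_{\caB_i}) = \ind(\caA_i, \rho_{\caA_i}) \cdot \ind(\caB_i, \rho_{\caB_i})$ for $i = 1, 2$. Second, I would invert the $i = 2$ equation inside the group established in Lemma \ref{lem:abelian group}, using that in any abelian group $(xy)^{-1} = x^{-1} y^{-1}$, to obtain $\ind(\caA_2 \gotimes \caB_2, \rho_{\caA_2} \gotimes \rho_{\caB_2})^{-1} = \ind(\caA_2, \rho_{\caA_2})^{-1} \cdot \ind(\caB_2, \rho_{\caB_2})^{-1}$. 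Finally, I would substitute both identities into the left-hand side and reorder the four factors using commutativity.

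There is essentially no obstacle here: both multiplicativity (the genuinely nontrivial input, proven via the three-case analysis of Proposition \ref{prop:stacking for absolute ind}) and the abelian group structure are already in hand, so the proof of the relative-index version is a one-line manipulation in an abelian group. The only thing worth flagging is that one must work in the full group on $\Q \cup \sqrt{2}\Q$ rather than the monoid on $\N \cup \sqrt{2}\N$ in order for the inverses to make sense, which is exactly the reason the codomain was extended in the discussion preceding Lemma \ref{lem:abelian group}.
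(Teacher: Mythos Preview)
Your proposal is correct and matches the paper's approach exactly: the paper's proof consists of the single sentence ``This follows immediately from Proposition \ref{prop:stacking for absolute ind},'' and what you have written is precisely the unpacking of that sentence, spelling out how the abelian group structure from Lemma \ref{lem:abelian group} lets one pass from the absolute-index multiplicativity to the relative-index multiplicativity.
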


\begin{proof}
	This follows immediately from proposition \ref{prop:stacking for absolute ind}.
\end{proof}

\subsection{Index for QCAs}

Fix a quantum chain $\caA$ with symmetry $\rho$. For any central simple super subalgebras $\caB_1, \caB_2 \subset \caA$ that are invariant under the symmetry, the restriction $\rho|_{\caB_i}$, $i = 1, 2$ is an even automorphic $G$-action on $\caB_i$ so the pair $(\caB_i, \rho|_{\caB_i})$ is a central simple $G$-system. Since the $G$-action is always the symmetry actio of the quantum chain, we will usually drop it from the notation, so that $\caB_i$ refers to the central simple $G$-system $(\caB_i, \rho|_{\caB_i})$. Consequently, we also write
\begin{equation}
\ind(\caB_1) := \ind(\caB_1, \rho_{\caB_1}), \quad \text{and} \quad \ind(\caB_1, \rho|_{\caB_1}, \caB_2, \rho|_{\caB_2})
\end{equation}
for the index and relative index.

\begin{proposition} \label{prop:index for QCA}
	Let $\caA$ be a quantum chain with symmetry $\rho$, and $\al$ an equivariant QCA on $\caA$. Consider any coarse-graining of the quantum chain putting $\al$ in nearest neighbour form, and denote the corresponding on-site algebras by  $\caA_n$ and the support algebras by $\caL_n$, $\caR_n$. All these algebras are central simple superalgebras that are invariant under the symmetry $\rho$. The relative indices $\ind(\caR_n, \caA_{2n+1})$ and $\ind(\caA_{2n}, \caL_n)$ are equal to each other, independent of $n$, and independent of the coarse graining.
\end{proposition}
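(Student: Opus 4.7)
The proof has four parts.

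\textbf{Central simplicity and symmetry invariance.} Each on-site algebra $\caA_n$ of the coarse-grained chain is a full matrix superalgebra, hence central simple. The overlap algebras $\caL_n, \caR_n$ are central simple superalgebras by Theorem \ref{thm:overlap factorization}. Since $\al$ is equivariant it commutes with $\rho^g$, and since $\rho^g$ leaves each on-site algebra $\caA_n$ invariant by definition of the symmetry, it leaves $\caB_n$, $\caC_n$, and $\al(\caB_n)$ invariant; consequently the intersections $\caL_n = \caC_n \cap \al(\caB_n)$ and $\caR_n = \caC_{n+1} \cap \al(\caB_n)$ are invariant too. Equipped with the restriction of $\rho$, all these algebras become central simple $G$-systems.

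\textbf{Equality of relative indices at fixed $n$.} Theorem \ref{thm:overlap factorization} gives the two factorizations
\[
\caA_{2n} \gotimes \caA_{2n+1} = \caB_n = \al^{-1}(\caL_n) \gotimes \al^{-1}(\caR_n).
\]
Because $\al$ is even and equivariant, $\al^{-1}$ restricts to isomorphisms $\caL_n \to \al^{-1}(\caL_n)$ and $\caR_n \to \al^{-1}(\caR_n)$ of central simple $G$-systems, so Lemma \ref{lem:isomorphic invariance of ind} gives $\ind(\al^{-1}(\caL_n)) = \ind(\caL_n)$ and similarly for $\caR_n$. Taking the index of both decompositions of $\caB_n$ and applying the stacking rule (Proposition \ref{prop:stacking for absolute ind}) yields
\[
\ind(\caA_{2n}) \cdot \ind(\caA_{2n+1}) = \ind(\caL_n) \cdot \ind(\caR_n).
\]
Since $(\Q \cup \sqrt{2}\Q) \times \Hom(G, \Z_2) \times H^2(G, U(1))$ is an abelian group (Lemma \ref{lem:abelian group}), one rearranges to obtain $\ind(\caR_n) \cdot \ind(\caA_{2n+1})^{-1} = \ind(\caA_{2n}) \cdot \ind(\caL_n)^{-1}$, which by Definition \ref{def:relative ind} is precisely $\ind(\caR_n, \caA_{2n+1}) = \ind(\caA_{2n}, \caL_n)$.

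\textbf{Independence of $n$.} The other factorization in Theorem \ref{thm:overlap factorization} gives $\caA_{2n+1} \gotimes \caA_{2n+2} = \caC_{n+1} = \caR_n \gotimes \caL_{n+1}$. Applying Proposition \ref{prop:stacking for absolute ind} to both sides and rearranging in the abelian group gives $\ind(\caR_n, \caA_{2n+1}) = \ind(\caA_{2n+2}, \caL_{n+1})$. Combined with the previous step applied at index $n+1$, which says $\ind(\caR_{n+1}, \caA_{2n+3}) = \ind(\caA_{2n+2}, \caL_{n+1})$, this yields $\ind(\caR_n, \caA_{2n+1}) = \ind(\caR_{n+1}, \caA_{2n+3})$, so the common value of the relative index is the same for all $n$.

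\textbf{Independence of the coarse graining.} This is the main obstacle. The strategy is to reduce the comparison of two valid coarse grainings to a single elementary move, namely pairwise grouping of adjacent blocks. Any two partitions of $\Z$ into uniformly bounded intervals making $\al$ nearest-neighbour admit a common coarser partition that also makes $\al$ nearest-neighbour, so it suffices to verify that pairwise grouping of blocks preserves the index. Under such a grouping, the new $\caB$'s and $\caC$'s are graded tensor products of old ones, and one shows that the new overlap algebras decompose as graded tensor products of old overlap algebras together with appropriate old on-site algebras sitting between them. Repeated application of Proposition \ref{prop:stacking for absolute ind}, together with the abelian group structure, then collapses the new relative index onto the old one. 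This bookkeeping is the graded equivariant analogue of the corresponding step in \cite{FPPV2019, RWW2020} in the ungraded case; the only new ingredient is the multiplicativity of $\ind$ under graded stacking already established in Proposition \ref{prop:stacking for absolute ind}.
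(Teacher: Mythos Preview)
Your first three parts are correct and match the paper's argument essentially verbatim: the paper uses the two factorizations $\caC_n = \caR_{n-1}\gotimes\caL_n$ and $\caB_n = \al^{-1}(\caL_n)\gotimes\al^{-1}(\caR_n)$ together with Proposition~\ref{prop:stacking for absolute ind} and Lemma~\ref{lem:isomorphic invariance of ind}, rearranges in the abelian group, and chains the resulting equalities along $n$.

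Your fourth part contains a genuine gap. The claim that any two coarse grainings admit a common \emph{coarser} partition into uniformly bounded intervals is false: take the partitions $\{[2k,2k+1]\}_{k\in\Z}$ and $\{[2k+1,2k+2]\}_{k\in\Z}$; their coarsest common coarsening into intervals is all of $\Z$. So the reduction to pairwise grouping, as you phrase it, does not go through. (A common \emph{refinement} always exists, but that points the wrong way for your grouping argument.)

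The paper avoids this entirely by a locality trick that exploits the independence of $n$ you have already established. Observe that $\widetilde\caR_n = \al(\widetilde\caA_{2n}\gotimes\widetilde\caA_{2n+1})\cap(\widetilde\caA_{2n+1}\gotimes\widetilde\caA_{2n+2})$ depends only on the three blocks $\widetilde\caA_{2n},\widetilde\caA_{2n+1},\widetilde\caA_{2n+2}$. Hence one can form a \emph{hybrid} coarse graining that uses the new blocks $\widetilde\caA_{2n},\widetilde\caA_{2n+1},\widetilde\caA_{2n+2}$ at this one location and the original blocks $\caA_m$ everywhere else. In this hybrid chain the relative index computed at the modified site equals $\ind(\widetilde\caR_n,\widetilde\caA_{2n+1})$, while the relative index computed far away equals $\ind(\caR_m,\caA_{2m+1})$ for the original coarse graining. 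By the $n$-independence already proved, these coincide. This one-line argument replaces all of the bookkeeping you allude to and requires no comparison of global partitions.
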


This allows us to define
\begin{definition} \label{def:index for QCA}
	Let $\caA$ be a quantum chain with symmetry $\rho$, and $\al$ an equivariant QCA on $\caA$. Denote by $\caL_n$, $\caR_n$ the support algebras defined w.r.t. any coarse graining of the quantum chain that brings $\al$ in nearest neighbour form and put
	\begin{equation}
		\ind(\al) := \ind(\caR_n, \caA_{2n+1}).
	\end{equation}
\end{definition}

\begin{proofof}[Proposition \ref{prop:index for QCA}]
	Fix a coarse graining such that $\al$ is nearest neighbour, leading to on-site algebras $\caA_n$ and support algebras $\caL_n, \caR_n$.

	From Theorem \ref{thm:overlap factorization} we have
	\begin{align}
		\caC_n &= \caA_{2n-1} \gotimes \caA_{2n} = \caR_{n-1} \gotimes \caL_n \\
		\caB_n &= \caA_{2n} \gotimes \caA_{2n+1} = \al^{-1}(\caL_n) \gotimes \al^{-1}(\caR_n),
	\end{align}
	so it follows from the stacking law, Proposition \ref{prop:stacking for absolute ind}, and Lemma \ref{lem:isomorphic invariance of ind} that
	\begin{equation}
		\ind(\caA_{2n-1}) \cdot \ind(\caA_{2n}) = \ind(\caR_{n-1}) \cdot \ind(\caL_n)
	\end{equation}
	and
	\begin{equation}
		\ind(\caA_{2n}) \cdot \ind(\caA_{2n + 1}) = \ind(\caL_n) \cdot \ind(\caR_n)
	\end{equation}
	for all $n \in \Z$.

	Rearranging these equations yields
	\begin{equation}
		\ind(\caR_{n-1}, \caA_{2n-1}) = \ind(\caR_{n-1}) \cdot \ind(\caA_{2n - 1})^{-1} = \ind(\caA_{2n}) \cdot \ind(\caL_n) = \ind(\caA_{2n}, \caL_n)
	\end{equation}
	and
	\begin{equation}
		\ind(\caA_{2n}, \caL_n) = \ind(\caA_{2n}) \cdot \ind(\caL_n)^{-1} = \ind(\caR_n) \cdot \ind(\caA_{2n+1}) = \ind(\caR_n, \caA_{2n+1})
	\end{equation}
	from which the required equality of relative indices, and their independence of $n$, follows.

	To see independence of coarse graining, consider coarse grained sites $\widetilde \caA_{n} = \widehat \bigotimes_{m \in I_n} \caA_{m}$ where the $I_n = \{ a_n +1, \cdots, a_{n + 1} \}$ are non-empty intervals that partition $\Z$. Clearly the support algebra $\widetilde \caR_n$ depends only on the coarse grained algebras $\widetilde \caA_{2n}$, $\widetilde \caA_{2n+1}$ and $\caA_{2n+2}$ so $\ind(\widetilde \caR_n, \widetilde \caA_{2+1})$ can be constructed by coarse graining only these three blocks and leaving the rest of the lattice as is was. We can then compute $\ind(\caR_m, \caA_{2m+1})$ anywhere else and since we've just shown that this quantity doesn't depend on where along the chain it is computed we have $\ind(\caR_m, \caA_{2m+1}) = \ind(\widetilde \caR_n, \widetilde \caA_{2n+1})$ as required.
\end{proofof}

\subsection{Basic properties of the index for QCAs}

\subsubsection{Stability}

The following lemma generalizes Proposition 4.10. of \cite{RWW2020}.
\begin{lemma} \label{lem:QCA stability}
	Let $\al_1$ and $\al_2$ be two even equivariant automorphisms on the same quantum chain $\caA$ such that
	\begin{align*}
		\al_i\big( \caB_n \big) &\subset \caC_n \gotimes \caC_{n+1}, \\
		\al_i \big( \caB_{n-1} \big) &\subset \caC_{n-1} \gotimes \caC_n, \\
		\al_i^{-1} \big( \caC_n \big) &\subset \caB_{n-1} \gotimes \caB_n 
	\end{align*}
	for $i = 1, 2$ and some $n \in \Z$. This holds in particular if $\al_1$ and $\al_2$ are even equivariant nearest-neighbour QCAs. 

	If $\norm{(\al_1 - \al_2)|_{\caB_n}} \leq \ep < 1/24$, then the central simple $G$-systems $(\caL_n^{(1)}, \rho|_{\caL_n^{(1)}})$ and $(\caL_n^{(2)}, \rho|_{\caL_n^{(2)}})$ are isomorphic. The isomorphism is implemented by conjugation with an even $G$-invariant unitary $u \in \caC_n$ with $\norm{u - \I} \leq 36 \ep$.
	It follows that $\ind(\al_1) = \ind(\al_2)$.
\end{lemma}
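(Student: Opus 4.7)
The plan is to transfer the closeness of $\al_1$ and $\al_2$ on $\caB_n$ into a mutual near-inclusion of the overlap algebras $\caL_n^{(1)}, \caL_n^{(2)}$ inside $\caC_n$, and then apply the graded $G$-equivariant near-inclusion theorem from the appendix (the graded equivariant analogue of Theorem 2.6 of \cite{RWW2020}) to extract an even, $G$-invariant intertwining unitary. The index equality will then follow from Lemma \ref{lem:isomorphic invariance of ind} together with Definition \ref{def:index for QCA} and Proposition \ref{prop:index for QCA}.

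Concretely, Theorem \ref{thm:overlap factorization} applies to both $\al_1$ and $\al_2$ under the standing hypotheses, yielding factorizations $\al_i(\caB_n) = \caL_n^{(i)} \gotimes \caR_n^{(i)}$ with $\caL_n^{(i)} \subset \caC_n$, together with the conditional-expectation characterization $\caL_n^{(i)} = E_{\caC_n}(\al_i(\caB_n))$ from \eqref{eq:cond expectation 1}. Given $l \in \caL_n^{(1)}$, the element $l \gotimes \I$ lies in $\al_1(\caB_n)$, so let $b := \al_1^{-1}(l \gotimes \I) \in \caB_n$, which has $\norm{b} = \norm{l}$. The hypothesis gives $\norm{l \gotimes \I - \al_2(b)} \leq \ep \norm{l}$, and applying the contractive conditional expectation $E_{\caC_n}$ produces $E_{\caC_n}(\al_2(b)) \in \caL_n^{(2)}$ with $\norm{l - E_{\caC_n}(\al_2(b))} \leq \ep \norm{l}$. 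The symmetric argument gives the reverse inclusion, so $\caL_n^{(1)}$ and $\caL_n^{(2)}$ are mutually $\ep$-nearly included in $\caC_n$. Both are central simple super-subalgebras of the finite matrix superalgebra $\caC_n$ and are invariant under the even $G$-action $\rho|_{\caC_n}$, since the $\al_i$ are even and equivariant.

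Applying the appendix's equivariant near-inclusion theorem to this pair --- with the smallness assumption $\ep < 1/24$ --- produces an even, $G$-invariant unitary $u \in \caC_n$ with $\norm{u - \I} \leq 36 \ep$ and $\Ad_u(\caL_n^{(1)}) = \caL_n^{(2)}$. Because $u$ is even and $G$-invariant, $\Ad_u$ is automatically a graded $*$-isomorphism commuting with every $\rho^g$, so it realizes an isomorphism of central simple $G$-systems $(\caL_n^{(1)}, \rho|_{\caL_n^{(1)}}) \simeq (\caL_n^{(2)}, \rho|_{\caL_n^{(2)}})$. Lemma \ref{lem:isomorphic invariance of ind} then gives $\ind(\caL_n^{(1)}) = \ind(\caL_n^{(2)})$, and since the on-site algebra $\caA_{2n}$ is common to both automorphisms, Definition \ref{def:index for QCA} and Proposition \ref{prop:index for QCA} yield $\ind(\al_1) = \ind(\caA_{2n}, \caL_n^{(1)}) = \ind(\caA_{2n}, \caL_n^{(2)}) = \ind(\al_2)$.

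The substantive technical input is the equivariant near-inclusion theorem itself, which must simultaneously produce a unitary close to $\I$, respect the $\Z_2$-grading, and commute with the full $G$-action; the difficulty is precisely that naively averaging a raw intertwiner over $\Z_2 \times G$ tends to degrade both unitarity and the norm bound, and the appendix's theorem is where both constraints are reconciled. Granted that tool, the reduction sketched above --- propagating the closeness through the conditional expectation and reading off the index equality from the abstract classification of central simple $G$-systems --- is essentially bookkeeping.
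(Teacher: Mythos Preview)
Your proof is correct and follows essentially the same approach as the paper: transfer the closeness of $\al_1, \al_2$ on $\caB_n$ to a mutual near-inclusion $\caL_n^{(1)} \overset{\ep}{\subset} \caL_n^{(2)}$ and conversely via the conditional-expectation description $\caL_n^{(i)} = E_{\caC_n}(\al_i(\caB_n))$, then invoke Theorem~\ref{thm:near inclusions} to produce the even $G$-invariant intertwining unitary in $(\caL_n^{(1)} \cup \caL_n^{(2)})'' \subset \caC_n$ and conclude via Lemma~\ref{lem:isomorphic invariance of ind}. Your intermediate estimate is in fact sharper than the paper's ($\ep$ rather than $3\ep$), because you apply $E_{\caC_n}$ directly to $l - \al_2(b)$ instead of first bounding $\norm{z-y}$ and then using the triangle inequality; the stated bound $\norm{u-\I}\leq 36\ep$ and threshold $\ep<1/24$ are thus comfortably met.
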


\begin{proof}
	Let $\caL_n^{(i)}, \caR_{n-1}^{(i)}$, $i = 1, 2$ be the overlap algebras. They are central simple superalgebras by Theorem \ref{thm:overlap factorization} and they are invariant under the symmetry $\rho$ of the quantum chain so they define central simple $G$-systems.

	For $x \in \caL^{(1)}_n$, we construct an element $z \in \caL_n^{(2)}$ that is close to $x$. Define $y = (\al_2 \circ \al_1^{-1})(x) \in \al_2(\caB_n)$. Then
	\begin{equation}
		\norm{x - y} = \norm{(\al_1 - \al_2) \big( \al_1^{-1}(x) \big)} \leq \norm{(\al_1 - \al_2)|_{\caB_n}} \leq \ep.
	\end{equation}

	Now put $z = E_{\caC_n}(y) \in \caL_n^{(2)}$ (recall the conditional expectations from Proposition \ref{prop:Araki's conditional expectations}). We have
	\begin{equation}
		\norm{z - y} = \norm{ E_{\caC_n}(y) - y} = \norm{ E_{\caC_n}(y - x) + x - y} \leq 2 \norm{x - y} \leq 2 \ep
	\end{equation}
	where we used $x \in \caL_n^{(1)} \subset \caC_n$. Therefore $\norm{x - z} \leq 3\ep$ and $\caL_n^{(1)} \nsub{3 \ep} \caL_n^{(2)}$. In the same way we find $\caL_n^{(2)} \nsub{3 \ep} \caL_n^{(1)}$.

	Since $3\ep < 1/8$ we can apply Theorem \ref{thm:near inclusions} to conclude that there is an even $G$-invariant unitary $u \in (\caL_n^{(1)} \cup \caL_n^{(2)})'' \subset \caC_n$ such that $\Ad_u : \caL_n^{(1)} \rightarrow \caL_n^{(2)}$ is an isomorphism of *-algebras. Since $u$ is even, $\Ad_u$ is in fact an isomorphism of super *-algebras, and since $u$ is $G$-invariant, $\Ad_u$ is an isomorphism of central simple $G$-systems, as required.

	By Lemma \ref{lem:isomorphic invariance of ind}, we have $\ind(\caL_n^{(1)}) = \ind(\caL_n^{(2)})$ so
	\begin{equation}
		\ind(\al_1) = \ind(\caA_{2n}) \cdot \ind(\caL_n^{(1)})^{-1} = \ind(\caA_{2n}) \cdot \ind(\caL_n^{(2)})^{-1} = \ind(\al_2).
	\end{equation}
	This concludes the proof.
\end{proof}

\begin{proposition} \label{prop:QCA stability}
	Let $I$ be a closed interval and $I \ni t \mapsto \al_t$ a strongly continuous family of even equivariant QCAs of uniformly bounded range. Then $t \mapsto \ind(\al_t)$ is constant along the path.
\end{proposition}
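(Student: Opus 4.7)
The plan is to use the stability lemma (Lemma \ref{lem:QCA stability}) to show that $t \mapsto \ind(\al_t)$ is locally constant on $I$, and then conclude by connectedness of $I$.

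First, I would exploit the uniform range bound to coarse grain the quantum chain once (independently of $t$) so that every $\al_t$ in the family is simultaneously a nearest-neighbour QCA. By the independence of $\ind$ from the coarse graining (Proposition \ref{prop:index for QCA}), this does not affect the value of the index at any $t$. After this coarse graining, each $\al_t$ satisfies the three inclusion hypotheses of Lemma \ref{lem:QCA stability} at every $n \in \Z$, and in particular at the fixed $n$ we will use below.

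Next, fix $t_0 \in I$ and fix some $n \in \Z$. The algebra $\caB_n = \caA_{2n} \gotimes \caA_{2n+1}$ is finite-dimensional, so the linear maps $\al_t|_{\caB_n}$ live in a finite-dimensional Banach space on which the strong operator topology and the norm topology coincide. Because the path $t \mapsto \al_t$ is strongly continuous, for each $x \in \caB_n$ the map $t \mapsto \al_t(x)$ is norm-continuous; restricting to a finite basis of $\caB_n$ then yields that $t \mapsto \al_t|_{\caB_n}$ is continuous in operator norm. Consequently there exists $\delta > 0$ such that
\begin{equation}
    \bigl\| (\al_t - \al_{t_0})|_{\caB_n} \bigr\| < \frac{1}{24}
\end{equation}
for every $t \in I$ with $|t - t_0| < \delta$. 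Applying Lemma \ref{lem:QCA stability} to the pair $\al_t, \al_{t_0}$ for such $t$ gives $\ind(\al_t) = \ind(\al_{t_0})$.

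This shows that $t \mapsto \ind(\al_t)$ is locally constant on the connected set $I$, hence constant. The only mild subtlety, and the step I would be most careful about, is the passage from strong continuity of $t \mapsto \al_t$ on the whole of $\caA$ to norm continuity on the finite-dimensional subalgebra $\caB_n$; this is what makes the $\epsilon < 1/24$ hypothesis of Lemma \ref{lem:QCA stability} accessible. Everything else is a direct application of results already in hand.
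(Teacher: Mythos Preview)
Your proof is correct and follows essentially the same approach as the paper: coarse grain once using the uniform range bound, then invoke Lemma \ref{lem:QCA stability} via a continuity argument. Your version is in fact slightly more careful than the paper's, which asserts a single uniform $\ep$ working for all $x \in \caA$ and all $s,t$; you instead argue local constancy on the finite-dimensional $\caB_n$ and conclude via connectedness, which is exactly what is needed and avoids any implicit appeal to compactness or uniform norm continuity.
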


\begin{proof}
	Since the range of the QCAs is uniformly bounded, we can coarse grain the lattice such that $\al_t$ is in nearest-neighbour form for all $t \in I$. By the strong continuity and the fact that $I$ is closed we can find $\ep > 0$ such that $\norm{\al_s(x) - \al_t(x)} < \norm{x} / 24$ for all $x \in \caA$ whenever $\abs{s - t} < \ep$. It then follows from Lemma \ref{lem:QCA stability} that $\ind(\al_s) = \ind(\al_t)$ if $\abs{s - t} < \ep$. It follows that $\ind(\al_t)$ takes the same value for all $t \in I$.
\end{proof}

\subsubsection{composition and stacking}

\begin{proposition} \label{prop:QCA composition and stacking}
	Suppose $\al$ and $\beta$ are even equivariant QCAs defined on quantum chains $\caA$ and $\caA'$ respectively. Then $\al \gotimes \beta$ acting on the stacked quantum chain $\caA \gotimes \caA'$ is again an even equivariant QCA and
	\begin{equation}
		\ind(\al \gotimes \beta) = \ind(\al) \cdot \ind(\beta).
	\end{equation}
	If $\al$ and $\beta$ are defined on the same quantum chain $\caA$ then $\al \circ \beta$ is also an even equivariant QCA and
	\begin{equation}
		\ind(\al \circ \beta) = \ind(\al) \cdot \ind(\beta).
	\end{equation}

	Moreover, $\ind(\id) = (1, 0, \boe)$ is always trivial and $\ind(\al^{-1}) = \ind(\al)^{-1}$.
\end{proposition}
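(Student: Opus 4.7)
The plan is to establish the four claims in the order: first $\ind(\id) = (1, 0, \boe)$ by direct computation, then the stacking formula, then the composition formula, and finally the inverse formula as a corollary. That $\al \gotimes \beta$ and $\al \circ \beta$ remain even equivariant QCAs is a routine check: evenness, equivariance, and boundedness of range are each preserved, with the range of $\al \circ \beta$ bounded by the sum of the ranges of $\al$ and $\beta$ and that of $\al \gotimes \beta$ by their maximum.

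For $\ind(\id)$, I would compute the overlap algebras directly. Since $\id(\caB_n) = \caB_n$, and because $\caC_n = \caA_{2n-1} \gotimes \caA_{2n}$ and $\caB_n = \caA_{2n} \gotimes \caA_{2n+1}$ overlap only at site $2n$, one has $\caL_n = \caA_{2n}$ and $\caR_n = \caA_{2n+1}$. Thus $\ind(\id) = \ind(\caA_{2n+1}, \caA_{2n+1})$, which is $(1, 0, \boe)$ by Definition \ref{def:relative ind} and Lemma \ref{lem:abelian group}.

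For the stacking formula, the plan is to coarse-grain both chains simultaneously so that $\al$ and $\beta$ are nearest-neighbour. On the stacked chain $\caA \gotimes \caA'$ with on-site algebras $\caA_n \gotimes \caA'_n$, the automorphism $\al \gotimes \beta$ is again nearest-neighbour. Because $\al$ and $\beta$ act on graded-commuting tensor factors, the overlap algebras factorize as $\caL_n^{\al \gotimes \beta} = \caL_n^\al \gotimes \caL_n^\beta$ and $\caR_n^{\al \gotimes \beta} = \caR_n^\al \gotimes \caR_n^\beta$. This follows from the characterization $\caL_n = E_{\caC_n}(\al(\caB_n))$ used in the proof of Theorem \ref{thm:overlap factorization}, together with the fact that the conditional expectation $E_{\caC_n}$ for the stacked chain is the graded tensor product of the conditional expectations on the individual chains. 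Applying Proposition \ref{prop:stacking for relative ind} to $\ind(\caR_n^\al \gotimes \caR_n^\beta, \caA_{2n+1} \gotimes \caA'_{2n+1})$ then yields $\ind(\al \gotimes \beta) = \ind(\al) \cdot \ind(\beta)$.

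The main technical obstacle is the composition formula. The plan is to coarse-grain so both $\al$ and $\beta$ are nearest-neighbour, so that $\al \circ \beta$ has range at most $2$. A further doubling of the coarse-graining, grouping the pairs $\caB_n = \caA_{2n} \gotimes \caA_{2n+1}$ into new on-site algebras $\widetilde \caA_n$, brings $\al \circ \beta$ into nearest-neighbour form. I would then compute the overlap algebras of $\al \circ \beta$ with respect to this new coarse-graining. Expanding $\beta(\widetilde \caB_n) = \beta(\caB_{2n}) \gotimes \beta(\caB_{2n+1}) = \caL_{2n}^\beta \gotimes \caR_{2n}^\beta \gotimes \caL_{2n+1}^\beta \gotimes \caR_{2n+1}^\beta$ using Theorem \ref{thm:overlap factorization}, then applying $\al$ and using the factorizations $\caC_m = \caR_{m-1}^\al \gotimes \caL_m^\al$ to track how each factor is carried into $\caB_{m-1} \gotimes \caB_m$ by $\al$, the new overlap algebras should separate cleanly into graded tensor products of pieces from $\al$ and $\beta$. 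Combining with Proposition \ref{prop:stacking for absolute ind} and the coarse-graining independence of Proposition \ref{prop:index for QCA} should then yield multiplicativity. The hard part is the combinatorial bookkeeping in identifying which overlap algebras sit in which intersection after the double coarse-graining. Once composition is established, the inverse formula is immediate: $\ind(\al^{-1}) \cdot \ind(\al) = \ind(\al^{-1} \circ \al) = \ind(\id) = (1, 0, \boe)$, hence $\ind(\al^{-1}) = \ind(\al)^{-1}$ by Lemma \ref{lem:abelian group}.
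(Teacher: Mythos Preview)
Your treatment of $\ind(\id)$, the stacking formula, and the inverse formula matches the paper's proof essentially line for line. The substantive difference is in the composition formula.

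You propose a direct combinatorial computation of the overlap algebras of $\al \circ \beta$ after a further coarse-graining. The paper takes a different route: it reduces composition to stacking via a homotopy. Having already established $\ind(\al \gotimes \beta) = \ind(\al) \cdot \ind(\beta)$, the paper constructs a strongly continuous path of even equivariant QCAs of uniformly bounded range from $\al \gotimes \beta$ to $(\al \circ \beta) \gotimes \id$, using the on-site swap automorphisms $\Phi_{sw,n}$ on $\caA_n \gotimes \caA_n$. Each swap has trivial first-cohomology index (Definition \ref{def:first cohomology index}), so by Lemma \ref{lem:0-dimensional deform to identity} it deforms to $\id$ through even equivariant automorphisms; the resulting path is $\Psi_t = (\al \gotimes \id) \circ \Phi_{sw}(t) \circ (\id \gotimes \beta) \circ \Phi_{sw}(t)^{-1}$. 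Stability (Proposition \ref{prop:QCA stability}) together with the stacking rule and triviality of $\ind(\id)$ then gives $\ind(\al \gotimes \beta) = \ind\big((\al \circ \beta) \gotimes \id\big) = \ind(\al \circ \beta)$.

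The paper's approach sidesteps entirely the bookkeeping you flag as hard. Your direct route is the natural extension of the original GNVW argument and can be made to work, but note a slip in your sketch: the factorization $\caC_m = \caR_{m-1}^\al \gotimes \caL_m^\al$ tells you how $\al^{-1}$ carries $\caC_m$ into $\caB_{m-1} \gotimes \caB_m$ (since $\caR_{m-1}^\al \subset \al(\caB_{m-1})$ and $\caL_m^\al \subset \al(\caB_m)$), not how $\al$ acts on $\caC_m$; you would need to reorganize the computation accordingly. Even after that correction, showing that all three index components multiply with the correct cocycle twist $\bnu(\zeta_1, \zeta_2)$ requires exhibiting the composite overlap algebra as a genuine graded tensor product of central simple $G$-systems so that Proposition \ref{prop:stacking for absolute ind} applies, which is more delicate than the pure dimension count in the ungraded case. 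The swap-homotopy trick buys you exactly this: it pushes the algebraic identity through Proposition \ref{prop:QCA stability} rather than through an explicit factorization.
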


\begin{proof}

	Let us first show that $\ind(\id)$ is trivial. $\id$ is always in earest-neighbour for, its left overlap algebras are $\caL_n = \id(\caB_n) \cap \caC_n = \caA_{2n}$ so
	\begin{equation}
		\ind(\id) = \ind(\caA_{2n}, \caA_{2n}) = \ind(\caA_{2n}) \cdot \ind(\caA_{2n})^{-1} = (1, 0, \boe).
	\end{equation}

	Let us now show the claim about $\al \gotimes \beta$. Coarse grain both quantum chains such that $\al$ and $\beta$ are in nearest-neighbour form and denote by $\caL_n$ and $\caL'_n$ the overlap algebras corresponding to $\al$ and $\beta$ respectively. The stacking $\al \gotimes \beta$ is also in nearest-neighbour form and the overlap algeras of $\al \gotimes \beta$ are $\caL_n \gotimes \caL'_n$. Then, using Proposition \ref{prop:stacking for relative ind} we find
	\begin{equation}
		\ind(\al \gotimes \beta) = \ind(\caA_{2n} \gotimes \caA'_{2n}, \caL_n \gotimes \caL'_n) = \ind(\caA_{2n}, \caL_n) \cdot \ind(\caA'_{2n}, \caL'_n) = \ind(\al) \cdot \ind(\beta)
	\end{equation}
	as required.

	Now suppose $\al$ and $\beta$ are defined on the same quantum chain $\caA$. By the stacking rule we have $\ind(\al) \cdot \ind(\beta) = \ind(\al \gotimes \beta)$. We will construct a strongly continuous path of even equivariant QCAs of uniformly bounded range interpolating between $\al \gotimes \beta$ and $(\al \circ \beta) \gotimes \id$. With such a path in hand it follows from Proposition \ref{prop:QCA stability}, the stacking rule, and triviality of $\ind(\id)$ that $\ind(\al \gotimes \beta) = \ind \big( (\al \circ \beta) \gotimes \id \big) = \ind( \al \circ \beta)$. Putting these equalities together yields the required $\ind(\al) \cdot \ind(\beta) = \ind(\al \circ \beta)$. To construct the required interpolating path, consider on each site $n$ the swap operation $\Phi_{sw, n}$ acting on the local algebra $\caA_n \gotimes \caA_n$ of the stacked chain as $\Phi_{sw, n} (a \gotimes b) = b \gotimes a$. The swap operations are even equivariant *-automorphisms acting on rational central simple $G$-systems $\caA_n \gotimes \caA_n$. One easily checks that the $\Phi_{sw, n}$ have trivial first cohomology index (Definition \ref{def:first cohomology index}) and so, by Lemma \ref{lem:0-dimensional deform to identity} they can be continuously deformed to the identity along a path $[0, 1] \ni t \mapsto \Phi_{sw, n}(t)$ of even equivariant automorphisms of $\caA_n \gotimes \caA_n$. The simultaneous action of all the $\Phi_{sw, n}(t)$ defines an even equivariant QCA of range 0 which we denote by $\Phi_{sw}(t)$. The family $[0, 1] \ni t \mapsto \Phi_{sw}(t)$ is strongly continuous. Now consider the strongly continuous path of even equivariant nearest neighbour QCAs
	\begin{equation}
		\Psi_t = (\al \gotimes \id) \circ \Phi_{sw}(t) \circ (\id \gotimes \beta) \circ \Phi_{sw}(t)^{-1}.
	\end{equation}
	It interpolates between $\Psi_0 = \al \gotimes \beta$ and $\Psi_1 = (\al \circ \beta) \gotimes \id$, as required. We conclude that $\ind(\al \circ \beta) = \ind(\al) \cdot \id(\beta)$.

	Finally, so see that $\ind(\al^{-1}) = \ind(\al)^{-1}$ we simply note that
	\begin{equation}
		\ind (\id) = \ind(\al \circ \al^{-1}) = \ind(\al) \cdot \ind(\al^{-1})
	\end{equation}
	is trivial, so $\ind(\al^{-1}) = \ind(\al)^{-1}$ follows.
\end{proof}

\subsection{Completeness}

We will prove that the index is a complete invariant up to stable G-equivalence, \ie that two even equivariant QCAs are stably $G$-equivalent if and only if they have the same index. We will require the notion of \emph{stable} equivalence to prove this result, in our construction of interpolating paths between QCAs with the same index we will need to stack several kinds of auxiliary quantum chains. We begin by defining these auxiliary quantum chains and proving some of their properties.

\subsubsection{Auxiliary chains}

\begin{definition}
	Given a quantum chain $\caA$ with symmetry $\rho$ that acts as $\rho|_{\caA_n} = \Ad_{u_n(g)}$ for some projective representations $u_n$, we define its conjugate $\bar \caA$ to be the quantum chain whose local algebras are the same (as superalgebras) as those of $\caA$, and whose symmetry $\bar \rho$ acts as $\bar \rho|_{\bar \caA_n} = \Ad_{\bar u_n(g)}$ where $\bar u_n$ is the representation conjugate to $u_n$.
\end{definition}

\begin{definition} \label{def:trivial chain}
	A quantum chain $\caA$ is called trivial if $\ind(\caA_n) = (d_n, 0, \boe)$ has trivial second and third components for all $n \in \Z$.
\end{definition}

\begin{lemma} \label{lem:conjugate stacking}
	If $\caA$ is a quantum chain and $\bar \caA$ is its conjugate, then $\caA \gotimes \bar \caA$ is a trivial quantum chain.
\end{lemma}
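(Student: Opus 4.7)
The plan is to reduce the claim to a direct computation of $\ind(\caA_n \gotimes \bar\caA_n)$ using the stacking rule in Proposition \ref{prop:stacking for absolute ind}, and then invoke Lemma \ref{lem:triviality of nu zeta zeta} to kill the residual cocycle.

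First I would write $\ind(\caA_n) = (d_n, \zeta_n, \bnu_n)$ and show that $\ind(\bar\caA_n) = (d_n, \zeta_n, \bnu_n^{-1})$. The first equality is immediate because $\bar\caA_n$ and $\caA_n$ are identical as superalgebras, so the dimensions and the classification into rational/radical agree. For the third component, I would fix a realization so that the projective representation $u_n : G \to U(\caA_n)$ satisfies $u_n(g) u_n(h) = \nu_n(g,h) u_n(gh)$; then the conjugate representation $\bar u_n$ satisfies $\bar u_n(g) \bar u_n(h) = \overline{\nu_n(g,h)}\, \bar u_n(gh)$, so its 2-cocycle is $\overline{\nu_n}$, which represents $[\nu_n]^{-1} = \bnu_n^{-1}$ in $H^2(G, U(1))$. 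In the radical case one repeats the same argument for the projective representation $V_n^{(0)}$ on the even subalgebra.

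Next I would verify that $\zeta$ is unchanged under conjugation. In Case I, choose a basis in which the grading operator $\Theta_n$ is a real diagonal matrix (with $\pm 1$ entries), which is always possible since $\Theta_n^2 = \I$ and $\Theta_n = \Theta_n^*$. Then $\overline{\Theta_n} = \Theta_n$, and since $(-1)^{\zeta_n(g)}$ is real,
\begin{equation}
\Ad_{\bar u_n(g)}(\Theta_n) = \overline{\Ad_{u_n(g)}(\Theta_n)} = \overline{(-1)^{\zeta_n(g)} \Theta_n} = (-1)^{\zeta_n(g)} \Theta_n,
\end{equation}
so $\bar\caA_n$ has the same $\zeta_n$. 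The same real-basis argument applied to the odd self-adjoint unitary generator of $K$ handles Case II.

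Finally, apply Proposition \ref{prop:stacking for absolute ind} to the on-site algebra of $\caA \gotimes \bar\caA$:
\begin{equation}
\ind(\caA_n \gotimes \bar\caA_n) = \big( d_n^2,\, \zeta_n + \zeta_n,\, \bnu_n \cdot \bnu_n^{-1} \cdot \bnu(\zeta_n, \zeta_n) \big) = \big( d_n^2,\, 0,\, \bnu(\zeta_n, \zeta_n) \big),
\end{equation}
where $\zeta_n + \zeta_n = 0$ because addition is in $\Z_2$. By Lemma \ref{lem:triviality of nu zeta zeta} the residual class $\bnu(\zeta_n, \zeta_n)$ is trivial, so $\ind(\caA_n \gotimes \bar\caA_n) = (d_n^2, 0, \boe)$, matching Definition \ref{def:trivial chain}. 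The main delicate point is the $\zeta$-invariance step; making the real-basis choice explicitly in both the rational and radical cases is what one needs to be careful about, but once that is in place the rest is pure bookkeeping against the stacking rule.
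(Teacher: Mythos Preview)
Your proposal is correct and follows essentially the same route as the paper: compute $\ind(\bar\caA_n) = (d_n,\zeta_n,\bnu_n^{-1})$, apply the stacking rule of Proposition~\ref{prop:stacking for absolute ind}, and finish with Lemma~\ref{lem:triviality of nu zeta zeta}. The only cosmetic difference is in the $\zeta$-invariance step: the paper observes directly that $\zeta_n(g)$ equals the parity $\tau(u_n(g))$ of the implementing unitary, which is manifestly unchanged under complex conjugation, whereas you reach the same conclusion via an explicit real-basis choice for $\Theta_n$ (and analogously for $\ep$ in the radical case).
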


\begin{proof}
	Let $\ind(\caA_n) = (d, \zeta, \bnu = [u_n])$, then $\ind(\bar \caA_n) = (d, \zeta, \bnu^{-1} = [\bar u_n])$. Indeed, the first and second component are obvious, and $\zeta(g)$ is simply the parity of $u_n(g)$, which is the same as the parity of $\bar u_n(g)$. Using Proposition \ref{prop:stacking for absolute ind} we then get
	\begin{equation}
		\ind(\caA_n \gotimes \bar \caA_n) = \big(d^2, \zeta + \zeta, \bnu \cdot \bnu^{-1} \cdot \bnu(\zeta, \zeta) \big) = (d^2, 0, \boe)
	\end{equation}
	where we used that $\bnu(\zeta, \zeta) = \boe$, \cf Lemma \ref{lem:triviality of nu zeta zeta}.
\end{proof}

The \emph{regular representation} of $G$ is a $\abs{G}$-dimensional unitary representation that acts on the finite dimensional Hilbert space $\caH^{\reg}$ spanned by orthonormal vectors $\{ | g \rangle \}_{g \in G}$. The representation is given by $g \mapsto u^{\reg}(g) \in \caU(\caH^{\reg})$ which act as $u^{\reg}(g) | h \rangle = | gh \rangle$ for all $g, h \in G$.

\begin{definition}
	The regular chain is the quantum chain $\caA^{\reg}$ with on-site algebras $\caA^{reg}_n \simeq \End(\caH^{\reg})$ for all $n \in \Z$, with trivial grading, and symmetry $\rho^{\reg}$ acting on the on-site algebras by conjugation with the regular representation, $(\rho^{\reg})^g|_{\caA_n^{\reg}} = \Ad_{u^{\reg}(g)}$.
\end{definition}

Stacking with the regular chain will be used to turn two unitary $G$-representations of the same dimension into isomorphic representations:
\begin{lemma} \label{lem:make reps isomorphic}
	Let $v_1 : G \rightarrow \caB_1$ and $v_2 : G \rightarrow \caB_2$ be unitary representations of $G$ of the same dimension $d$, where $\caB_i$ is isomorphic to the full matrix algebra $\caM^d$ for $i = 1, 2$. Then the representations $v_1 \otimes u^{\reg} : G \rightarrow \caB_1 \otimes \End(\caH^{\reg})$ and $v_2 \otimes u^{reg} : G \rightarrow \caB_2 \otimes \End(\caH^{\reg})$ are isomorphic, \ie there is a *-isomorphism $\Phi : \caB_1 \gotimes \End(\caH^{\reg}) \rightarrow \caB_2 \otimes \End(\caH^{\reg})$ such that $\Phi(v_1) = v_2$.
\end{lemma}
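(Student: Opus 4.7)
The proof rests on Fell's absorption trick for the regular representation: for any finite-dimensional unitary representation $v$ of $G$, the tensor product $v \otimes u^{\reg}$ is unitarily equivalent to $\dim(v)$ copies of $u^{\reg}$, independently of $v$. Since both $v_1$ and $v_2$ have the same dimension $d$, the two representations $v_1 \otimes u^{\reg}$ and $v_2 \otimes u^{\reg}$ will therefore be unitarily equivalent to each other, and conjugation by the intertwining unitary will give the required $*$-isomorphism.

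\medskip

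First I would identify $\caB_i \otimes \End(\caH^{\reg})$ with $\End(V_i \otimes \caH^{\reg})$ for some $d$-dimensional Hilbert space $V_i$ on which $v_i$ acts, so that the question reduces to one about representations on vector spaces. Then I would define, for a general unitary representation $v : G \to \caU(V)$ of dimension $d$, the unitary
\begin{equation}
	T_v : V \otimes \caH^{\reg} \to V \otimes \caH^{\reg}, \qquad T_v \big( \xi \otimes |h\rangle \big) = v(h^{-1}) \xi \otimes |h\rangle,
\end{equation}
and check directly that $T_v \circ (v(g) \otimes u^{\reg}(g)) = (\I_V \otimes u^{\reg}(g)) \circ T_v$ for all $g \in G$. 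This is the computation
\begin{equation}
	T_v \big( v(g) \xi \otimes |gh\rangle \big) = v((gh)^{-1}) v(g) \xi \otimes |gh\rangle = v(h^{-1}) \xi \otimes |gh\rangle = (\I_V \otimes u^{\reg}(g)) T_v(\xi \otimes |h\rangle).
\end{equation}

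\medskip

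Applying this to $v = v_i$ for $i = 1, 2$, I obtain unitaries $T_{v_i}: V_i \otimes \caH^{\reg} \to V_i \otimes \caH^{\reg}$ intertwining $v_i \otimes u^{\reg}$ with the representation $\I_{V_i} \otimes u^{\reg}$. Since $\dim V_1 = \dim V_2 = d$, I can pick any unitary $S : V_1 \to V_2$ and obtain a unitary $W := T_{v_2}^* (S \otimes \I) T_{v_1} : V_1 \otimes \caH^{\reg} \to V_2 \otimes \caH^{\reg}$ satisfying
\begin{equation}
	W (v_1(g) \otimes u^{\reg}(g)) W^* = v_2(g) \otimes u^{\reg}(g) \quad \text{for all } g \in G.
\end{equation}
Then $\Phi := \Ad_W$ defines a $*$-isomorphism $\caB_1 \otimes \End(\caH^{\reg}) \to \caB_2 \otimes \End(\caH^{\reg})$ sending the representation $v_1 \otimes u^{\reg}$ to $v_2 \otimes u^{\reg}$, which is what is meant by $\Phi(v_1) = v_2$ in the statement.

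\medskip

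There is no real obstacle here; the only thing to get right is the convention in the Fell trick so that $T_v$ genuinely intertwines the two actions, and this is just a direct computation on basis vectors.
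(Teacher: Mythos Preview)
Your proof is correct, but it takes a genuinely different route from the paper's. The paper argues via characters: since $\Tr\{u^{\reg}(g)\} = |G|\,\delta_{g,e}$, one has $\chi_i(g) = \Tr\{v_i(g)\}\Tr\{u^{\reg}(g)\} = d|G|\,\delta_{g,e}$ for both $i=1,2$, so the two representations have the same character and are therefore isomorphic (citing the standard fact for finite groups). Your argument instead uses Fell absorption: you construct an explicit intertwiner $T_v$ showing $v\otimes u^{\reg}\cong \I_V\otimes u^{\reg}$, and then compose with any unitary $S:V_1\to V_2$. The paper's version is shorter and makes the point with one line of character computation, but it is non-constructive and invokes an outside result. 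Your version is self-contained and produces an explicit intertwining unitary $W$, which is arguably more in keeping with the spirit of the construction-heavy arguments elsewhere in the paper; it also makes transparent that the isomorphism class of $v\otimes u^{\reg}$ depends only on $\dim v$.
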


\begin{proof}
	Two unitary representations of finite groups are isomorpic if and only if their characters are the same, cf. Corollary 2, chapter 2 of \cite{Serre1977}. We have
	\begin{equation}
		\chi_i(g) = \Tr \lbrace v_i(g) \otimes u^{\reg}(g) \rbrace = \Tr \lbrace v_i(g) \rbrace \Tr \lbrace u^{\reg}(g) \rbrace = \delta_{g, e} d \abs{G}
	\end{equation}
	where we used
	\begin{equation}
		\Tr \lbrace u^{\reg}(g) \rbrace = \sum_{h \in G} \langle h, u^{\reg}(g) \, h \rangle = \sum_{h \in G} \langle h, gh \rangle = \delta_{g, e} \abs{G}.
	\end{equation}
	We see that the characters $\chi_1$ and $\chi_2$ are the same, so the representations $v_1 \otimes u^{\reg}$ and $v_2 \otimes u^{\reg}$ are isomorphic.
\end{proof}

\begin{definition} \label{def:fermion chain}
	The fermion chain $\caA^{\fer}$ is the quantum chain with on-site algebras $\caA_n \simeq \caM^{1 | 1}$ and trivial symmetry.
\end{definition}
Stacking with the fermion chain will be used to turn two rational central simple superalgebras of the same dimension into isomorphic rational central simple superalgebras.

Using auxiliary degrees of freedom from the conjugate, regular and fermion chains, we will produce pairs of isomorphic rational central simple $G$-systems the we will call \emph{balanced}.

\begin{definition} \label{def:balanced central simple G-system}
	A central simple $G$-system $(\caB, \rho)$ is called balanced if $\caB \simeq \caM^{d | d}$ and $\rho^g = \Ad_{V(g)}$ where $V(g) = E_+ V(g) E_+ + E_- V(g) E_-$ is a unitary representation of $G$ that is diagonal with respect to the grading operator $\Theta = E_+ - E_-$ and is such that $g \mapsto E_{\pm} V(g) E_{\pm}$ are isomorpic representations. We then denote $V(g) = v(g) \oplus v(g)$ where the direct sum decomposition is the one induced by the grading $\Theta = \I \oplus -\I$, and the isiomorphism of the representations allows us to choose a basis in which the diagonal blocks are identical.
\end{definition}

\begin{lemma} \label{lem:fermion stacking I}
	Let $(\caB \simeq \caM^{p | q}, \rho)$ be a central simple $G$-system such that $\rho^g = \Ad_{V(g)}$ with $V(g)$ even for all $g \in G$. Then $\big( \caB \gotimes \caM^{1 | 1} \simeq \caM^{p+q | p+q}, \rho \gotimes \id \big)$ is a balanced central simple $G$-system whose $G$-action is given (with respect to a suitable basis) by conjugation with $\big( V(g) \oplus V(g) \big)$.
\end{lemma}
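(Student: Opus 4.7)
The plan is to first exhibit $\caB \gotimes \caM^{1|1}$ explicitly as a full matrix superalgebra isomorphic to $\caM^{p+q|p+q}$, and then track how the unitary implementing $\rho^g \gotimes \id$ transforms under this identification.

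First I would use the faithful representation $\pi$ from Eq. \eqref{eq:definition of pi} to embed $\caB \gotimes \caM^{1|1}$ into $\caM^{p+q} \otimes \caM^2$, where the grading operator in the image is $\Theta \otimes \sigma_Z$ with $\Theta = \I_p \oplus (-\I_q)$ the grading operator of $\caB$. Since the domain and codomain have the same ungraded dimension $4(p+q)^2$ and $\pi$ is injective, the embedding is a $*$-isomorphism onto $\caM^{p+q} \otimes \caM^2$. To identify this graded algebra with $\caM^{p+q|p+q}$, choose an orthonormal basis of $\C^{p+q} \otimes \C^2$ that first lists the $(p+q)$-dimensional $(+1)$-eigenspace of $\Theta \otimes \sigma_Z$ and then its $(p+q)$-dimensional $(-1)$-eigenspace; in this basis the grading operator takes the required block-diagonal form $\I_{p+q} \oplus (-\I_{p+q})$.

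Next I would apply Lemma \ref{lem:action of graded tensor product} with $\xi_1 = \xi_2 = 0$: the first vanishes by the hypothesis that $V(g)$ is even, the second because $\id$ is implemented by $\I$. The lemma then gives
\begin{equation*}
\pi \circ (\rho^g \gotimes \id) = \Ad_{V(g) \otimes \I} \circ \pi.
\end{equation*}
Because $V(g)$ is even it decomposes as $V(g) = v_p(g) \oplus v_q(g)$ with respect to $\C^{p+q} = \C^p \oplus \C^q$. The $(+1)$-eigenspace of $\Theta \otimes \sigma_Z$ is $(\C^p \otimes |0\rangle) \oplus (\C^q \otimes |1\rangle)$, on which $V(g) \otimes \I$ restricts to $v_p(g) \oplus v_q(g) = V(g)$. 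The $(-1)$-eigenspace is $(\C^p \otimes |1\rangle) \oplus (\C^q \otimes |0\rangle)$, on which the restriction is again $v_p(g) \oplus v_q(g) = V(g)$. With the ordering of each eigenspace basis chosen compatibly, the implementing unitary for $\rho^g \gotimes \id$ in the new basis is precisely $V(g) \oplus V(g)$. This simultaneously shows that the implementing representation is diagonal with respect to the grading and that the two diagonal blocks are identical, which is exactly the balanced condition of Definition \ref{def:balanced central simple G-system}.

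The only nontrivial bookkeeping is ordering the bases of the two eigenspaces compatibly so that the diagonal blocks are literally equal rather than related by a permutation; this succeeds because $V(g) \otimes \I$ acts identically on $\C^{p+q} \otimes |0\rangle$ and on $\C^{p+q} \otimes |1\rangle$, and the two halves of the $\pm 1$ eigenspaces of $\Theta \otimes \sigma_Z$ are simply $\C^p$ and $\C^q$ paired in opposite orders with $|0\rangle$ and $|1\rangle$. No new tools beyond Eq. \eqref{eq:definition of pi} and Lemma \ref{lem:action of graded tensor product} are needed.
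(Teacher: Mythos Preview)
Your proposal is correct and follows essentially the same approach as the paper: both use the representation $\pi$ of Eq.~\eqref{eq:definition of pi}, compute the implementing unitary $V(g)\otimes\I$ via Lemma~\ref{lem:action of graded tensor product} (with $\xi_1=\xi_2=0$), and then diagonalize the grading operator $\Theta\otimes\sigma_Z$ to exhibit the $V(g)\oplus V(g)$ block structure. The paper writes out the $4\times4$ block matrices explicitly while you phrase it in terms of the $\pm1$-eigenspaces, but the content is identical.
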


\begin{proof}
	Write $V(g) = v_p(g) \oplus v_q(g)$ where the direct sum decomposition is induced by the grading operator $\Theta_{\caB} = \I_p - \I_q$. The $G$-action $\rho^g \gotimes \id$ is given by conjugation with $\big( v_p(g) \oplus v_q(g) \big) \gotimes \I_2$. In the representation of the graded tensor product $\pi (a \gotimes b) = a \Theta_{\caB}^{\tau(b)} \otimes b$ we get
	\begin{equation}
		\pi \big( \big( v_p(g) \oplus v_q(g) \big) \gotimes \I_2 \big) = \begin{bmatrix} v_p &  &  &  \\   &  v_q &  &  \\ &  &  v_p  &  \\ &  &  &  v_q \end{bmatrix}
	\end{equation}
	while the grading operator becomes
	\begin{equation}
		\pi(\Theta_{\caB} \gotimes \sigma_Z) = \begin{bmatrix}  \I_p &  &  &  \\ &  -\I_q &  &  \\ &  & -\I_p &  \\ &  &  &  \I_q   \end{bmatrix}
	\end{equation}
	Writing this grading operator as a difference of its spectral projections, $\pi(\Theta_{\caB} \gotimes \sigma_Z) = E_+ - E_1$ we see that
	\begin{equation}
		E_+ \pi \big( \big( v_p(g) \oplus v_q(g) \big) \gotimes \I_2 \big) E_+ = \begin{bmatrix} v_p &  &  &  \\   &  0 &  &  \\ &  &  0  &  \\ &  &  &  v_q \end{bmatrix}
	\end{equation}
	and
	\begin{equation}
		E_- \pi \big( \big( v_p(g) \oplus v_q(g) \big) \gotimes \I_2 \big) E_- = \begin{bmatrix} 0 &  &  &  \\   &  v_q &  &  \\ &  &  v_p  &  \\ &  &  &  0 \end{bmatrix}
	\end{equation}
	which, after an appropriate change of basis, can be put in the desired form.

\end{proof}

\begin{lemma} \label{lem:fermion stacking II}
	Let $(\caB_1, \rho_1)$ and $(\caB_2, \rho_2)$ be balanced central simple $G$-systems with $\caB_1 \simeq \caB_2 \simeq \caM^{d | d}$ and $\rho_{i}^g = \Ad_{v_i(g) \oplus v_i(g)}$ with $v_1, v_2$ isomorphic unitary representations. Then $(\caB_1, \rho_1)$ and $(\caB_2, \rho_2)$ are isomorphic central simple $G$-systems (definition \ref{def:isomorphism of central simple G-systems}). Moreover, the isomorphism $\Phi : \caB_1 \rightarrow \caB_2$ satisfies $\Phi(v_1(g) \oplus v_1(g)) = v_2(g) \oplus v_2(g)$ for all $g \in G$.
\end{lemma}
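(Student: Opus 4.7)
The plan is straightforward: use the isomorphism of representations $v_1 \cong v_2$ to construct an intertwiner, and then take the block-diagonal lift to obtain the desired isomorphism of superalgebras.

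First, I would use the balanced hypothesis to normalize both systems. By Definition \ref{def:balanced central simple G-system}, there exist graded $*$-isomorphisms $\Psi_i : \caB_i \to \caM^{d|d}$ for $i=1,2$ such that under $\Psi_i$ the grading operator of $\caB_i$ becomes the standard $\Theta = \I_d \oplus (-\I_d)$ and the implementing unitary $\rho_i^g = \Ad_{V_i(g)}$ is represented by the block-diagonal unitary $v_i(g) \oplus v_i(g)$, where $v_1,v_2$ are the chosen isomorphic $d$-dimensional unitary representations living in the upper-left $\caM^d$ block.

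Next, since $v_1$ and $v_2$ are isomorphic representations of the finite group $G$ on the same $d$-dimensional space, there is a unitary $u \in \caM^d$ with $u v_1(g) u^* = v_2(g)$ for all $g \in G$. I would then define $\Phi : \caB_1 \to \caB_2$ by
\begin{equation}
  \Phi := \Psi_2^{-1} \circ \Ad_{u \oplus u} \circ \Psi_1.
\end{equation}
This is a $*$-isomorphism because $\Ad_{u \oplus u}$ is. It is even (hence a graded $*$-isomorphism) because $u \oplus u$ is block-diagonal with respect to $\Theta = \I_d \oplus (-\I_d)$ and therefore commutes with $\Theta$, so $\Ad_{u\oplus u}$ preserves the decomposition into even and odd parts. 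Equivariance is then immediate: a short computation gives
\begin{equation}
  \Phi(V_1(g)) = \Psi_2^{-1}\bigl( (u v_1(g) u^*) \oplus (u v_1(g) u^*) \bigr) = \Psi_2^{-1}(v_2(g) \oplus v_2(g)) = V_2(g),
\end{equation}
and consequently $\Phi \circ \rho_1^g = \Phi \circ \Ad_{V_1(g)} = \Ad_{V_2(g)} \circ \Phi = \rho_2^g \circ \Phi$ for all $g \in G$. This shows $\Phi$ is an isomorphism of central simple $G$-systems in the sense of Definition \ref{def:isomorphism of central simple G-systems}, and by construction it sends $v_1(g) \oplus v_1(g)$ to $v_2(g) \oplus v_2(g)$.

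There is no real obstacle here — the content of the lemma is entirely in the balanced hypothesis, which provides the common block-diagonal normal form that allows the intertwiner $u$ for the $d$-dimensional pieces to be promoted to an intertwiner for the full $2d$-dimensional implementing unitaries while preserving the grading. The only care needed is in bookkeeping the two identifications $\Psi_i$ so that the phrase "in a basis where the diagonal blocks are identical" from Definition \ref{def:balanced central simple G-system} is exploited consistently on both sides; once this is done, no further estimates or cohomological arguments are required.
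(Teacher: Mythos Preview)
Your argument is correct. The approach differs from the paper's, though both are short. The paper packages the grading together with the $G$-action into a genuine (not projective) unitary representation of $\Z_2 \times G$ on each side, by setting $V_i(\theta) = \Theta_i$ and $V_i(g) = v_i(g) \oplus v_i(g)$; it then computes characters to show $V_1 \cong V_2$ as $\Z_2 \times G$-representations (using $\Tr \Theta_i = 0$ for the $\theta g$ elements and $v_1 \cong v_2$ for the $g$ elements), and the resulting intertwiner furnishes the desired graded equivariant $*$-isomorphism in one stroke. Your route is more hands-on: you first pass to the normal form guaranteed by the balanced hypothesis, then take an intertwiner $u$ for the $d$-dimensional pieces and promote it to $u \oplus u$, which is automatically grading-preserving. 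This is more elementary (no character computation is needed) and makes the intertwiner completely explicit; the paper's approach has the minor conceptual advantage of treating parity and symmetry on an equal footing, but the content is the same --- your $u \oplus u$ is precisely an intertwiner for the paper's $\Z_2 \times G$ representations.
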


\begin{proof}
	Choose grading operators $\Theta_1$ and $\Theta_2$ of $\caB_1$ and $\caB_2$ respectively (they are unique up to a sign) such that $\Tr \Theta_i = 0$. Each $\Theta_i$ generates a unitary representation of $\Z_2 = \{e, \theta\}$ (because $\Theta_i^2 = \I$). By definition of a balanced central symple $G$-system, we have $[v_i(g)\oplus v_i(g), \Theta_i] = 0$ for $i = 1, 2$, $g \in G$. Extend the representations $v_i \oplus v_i$ of $G$ to representations $V_i$ of $\Z_2 \times G$ by setting $V_i(\theta) = \Theta_i$ and $V_i(g) = v_i(g) \oplus v_i(g)$ for all $g \in G$. It follows that $V_i(\theta g) = V_i(g \theta) = \Theta_i \big( v_i(g) \oplus v_i(g) \big)$. Since all elements of $\Z_2 \times G$ are either in $G$ or can be written as $\theta g$ for some $g \in G$, this determines all values of the representations $V_i$. We now show that the representations $V_1$ and $V_2$ are isomorphic. Since they have the same dimension, it is sufficient to show that they have the same characters $X_1 = X_2$. The representations $v_1$ and $v_2$ are isomorphic by assumption and thus have the same character, which we denote by $\chi$. We have
	\begin{align*}
		X_i(g) &= \Tr V_i(g) = \Tr v_i(g) \oplus v_i(g) = 2 \chi(g) \\
		X_i(\theta g) &= \Tr \lbrace \Theta_i \big(v_i(g) \oplus v_i(g) \big) \rbrace = 0
	\end{align*}
	which does not depend on $i = 1, 2$, \ie the characters $X_1$ and $X_2$ are equal. This implies that there is a *-isomorphism $\Phi : \caB_1 \rightarrow \caB_2$ such that $\Phi(V_1) = V_2$. In particular $\Phi \big( v_1(g) \oplus v_1(g) \big) = v_2(g) \oplus v_2(g)$, so $\Phi \circ \rho_1 = \rho_2 \circ \Phi$. Moreover, $\Phi(\Theta_1) = \Theta_2$ so $\Phi$ is in fact a super *-isomorphism. This shows that the central simple $G$-systems $(\caB_1, \rho_1)$ and $(\caB_2, \rho_2)$ are isomorphic in the sense of Definition \ref{def:isomorphism of central simple G-systems}.
\end{proof}

\subsubsection{0-dimensional systems} \label{sec:0-dimensional}

In our analysis of even equivariant QCA we will want to deform even equivariant automorphisms of local matrix algebras to the identity. This can not always be done in an even, equivariant way. The obstruction to doing so is an index taking values in $H^1(\Z^2 \times G, U(1))$ where the $\Z_2$ comes from the grading, which for the problem treated here is most conveniently considered as a symmetry.

Let $(\caA, \rho)$ be a rational central simple $G$-system and $\al$ an even equivariant automorphism of $\caA$. Since $\caA$ considered as an ungraded algebra is isomorphic to a full matrix algebra, the symmetry acts as $\rho^g = \Ad_{v(g)}$ for a projective representation $v$ of $G$. Denote further by $\Theta$ the grading operator. Let $G' = \Z_2 \times G$ and denote by $\theta \in G'$ the generator of the $\Z_2$ factor, corresponding to the grading. All elements of $G'$ are of the form $g$ or $\theta g$ for $g \in G$. Define $V : G' \rightarrow \caA$ by $V(g) = v(g)$ and $V(\theta g) = \Theta v(g)$ for all $g \in G$. Then $V$ is a projective representation of $G'$. Denote its 2-cocycle by $\nu$. Since $\al$ is even and equivariant there are phases $\mu(g')$ such that $\al \big(V(g') \big) = \mu(g') V(g')$ for all $g' \in G'$. Moreover,
\begin{align*}
	\mu(g') \mu(h') V(g'h') &= \nu(g', h')^{-1} \mu(g') \mu(h') V(g') V(h') \\ 
				&= \nu(g', h')^{-1} \al \big( V(g') \big) \al \big( V(h') \big) = \al \big( V(g' h') \big) = \mu(g'h') V(g' h')
\end{align*}
so $\mu \in H^1(G', U(1))$.

\begin{definition} \label{def:first cohomology index}
	The first cohomology index of $\al$ is $\mu$.
\end{definition}

\begin{lemma} \label{lem:multiplicativity of the first cohomology index under composition}
	The first cohomology index is multiplicative under composition.
\end{lemma}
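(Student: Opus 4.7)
The plan is essentially to unwind the definition of the first cohomology index and observe that pulling the defining phase relation through a composition produces a product of phases. Let $\al_1$ and $\al_2$ be two even equivariant automorphisms of the rational central simple $G$-system $(\caA, \rho)$, and let $V : G' \to \caA$ be the projective representation of $G' = \Z_2 \times G$ constructed just before the definition. Write $\mu_i$ for the first cohomology index of $\al_i$, so that $\al_i(V(g')) = \mu_i(g') V(g')$ for every $g' \in G'$.

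The key step is the following direct computation: for each $g' \in G'$,
\begin{equation}
	(\al_1 \circ \al_2)(V(g')) = \al_1\big( \mu_2(g') V(g') \big) = \mu_2(g') \, \al_1(V(g')) = \mu_1(g') \mu_2(g') V(g'),
\end{equation}
where the second equality uses linearity and the fact that $\mu_2(g')$ is a scalar. Since $\al_1 \circ \al_2$ is again even and equivariant (both conditions being preserved under composition), this identifies its first cohomology index with the pointwise product $\mu_1 \cdot \mu_2$.

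I would then briefly note that the product $\mu_1 \cdot \mu_2$ is again a $U(1)$-valued 1-cocycle on $G'$: each $\mu_i$ satisfies the cocycle condition $\mu_i(g') \mu_i(h') = \mu_i(g'h')$ by the calculation in the paragraph defining the index, so the product satisfies it as well. Since the index is well defined independently of phase choices for $V$ (any rescaling $V(g') \mapsto \lambda(g') V(g')$ leaves $\mu_i$ unchanged), the equality $\mu_{\al_1 \circ \al_2} = \mu_{\al_1} \cdot \mu_{\al_2}$ makes sense as an equality in $H^1(G', U(1))$. There is no real obstacle here; the statement is essentially a bookkeeping consequence of the definition, and the only thing to be careful about is to use the \emph{same} projective representation $V$ of $G'$ when extracting the two indices before composing.
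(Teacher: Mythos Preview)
Your proof is correct and follows exactly the same approach as the paper: both compute $(\al_1 \circ \al_2)(V(g')) = \mu_1(g')\mu_2(g')V(g')$ directly from the defining relations. The paper's proof is a one-line version of your argument; your additional remarks (that the composition is again even and equivariant, that the product is again a 1-cocycle, and that the index is phase-independent) are all valid but not strictly needed, since these facts are already established in the paragraph defining the first cohomology index.
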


\begin{proof}
	Let $\al$ have first chomology index $\mu$ and let $\beta$ have first cohomology index $\nu$. Then $(\al \circ \beta)(V(g)) = \mu(g) \nu(g) V(g)$, so the first cohomology index of $\al \circ \beta$ is $\mu \cdot \nu$, the product of the first cohomology indices of $\al$ and $\beta$.
\end{proof}

\begin{lemma} \label{lem:first cohomology index under stacking}
	If $(\caA_1, \rho_1)$ and $(\caA_2, \rho_2)$ are rational central simple $G$-systems acted upon by even equivariant automorphisms $\al_1$ and $\al_2$ respectively, and $\mu_1, \mu_2$ are their first cohomology indices, then the first cohomology index $\mu$ of $\al_1 \gotimes \al_2$ acting on the rational central simple $G$-system $(\caA_1 \gotimes \caA_2, \rho_1 \gotimes \rho_2)$ is given by
	\begin{align*}
		\mu(g) = &\mu_1(g) \mu_2(g) \mu_1(\theta)^{\zeta_2(g)} \mu_2(\theta)^{\zeta_1(g)} \quad \text{for all} \,\,\, g \in G \\
		\mu(\theta) &= \mu_1(\theta) \mu_2(\theta),
	\end{align*}
	where $\zeta_i \in \Hom(G, \Z_2)$ is determined by $\rho_i^g(\Theta_i) = (-1)^{\zeta_i(g)} \Theta_i$ with $\Theta_i$ the grading operator of $\caA_i$.
\end{lemma}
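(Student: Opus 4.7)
The plan is to compute the first cohomology index directly from its definition applied to $\al_1 \gotimes \al_2$ on the rational central simple $G$-system $(\caA_1 \gotimes \caA_2, \rho_1 \gotimes \rho_2)$. This requires explicitly writing down the projective representation of $G' = \Z_2 \times G$ on $\caA_1 \gotimes \caA_2$ and then evaluating $\al_1 \gotimes \al_2$ on it.

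First I identify the projective representation. Let $V_i$ denote the projective representation of $G$ on $\caA_i$ implementing $\rho_i$, and let $\Theta_i \in \caA_i$ be the grading operator (with $\Theta_i^2 = \I$). Note that $\Theta_i$ is itself even, because $\theta_i(\Theta_i) = \Theta_i \Theta_i \Theta_i^{-1} = \Theta_i$, while $V_i(g)$ has parity $\zeta_i(g)$ since by the definition of $\zeta_i$ one has $\theta_i(V_i(g)) = (-1)^{\zeta_i(g)} V_i(g)$. Applying Lemma \ref{lem:action of graded tensor product} with $U_i = V_i(g)$ and $\xi_i = \zeta_i(g)$, we see that $\rho_1^g \gotimes \rho_2^g = \Ad_{V(g)}$ on $\caA_1 \gotimes \caA_2$, where
\[
V(g) := V_1(g) \Theta_1^{\zeta_2(g)} \gotimes V_2(g) \Theta_2^{\zeta_1(g)}.
\]
The grading operator of $\caA_1 \gotimes \caA_2$ is $\Theta_1 \gotimes \Theta_2$, so setting $V(\theta) := \Theta_1 \gotimes \Theta_2$ and $V(\theta g) := V(\theta) V(g)$ extends $V$ to a projective representation of $G'$ of the kind used in Definition \ref{def:first cohomology index}.

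Next I compute $(\al_1 \gotimes \al_2)(V(g'))$ for $g' = g$ and $g' = \theta$. Since the graded tensor product of automorphisms acts factorwise on elementary tensors, and each $\al_i$ is multiplicative, using $\al_i(V_i(g)) = \mu_i(g) V_i(g)$ together with $\al_i(\Theta_i) = \mu_i(\theta) \Theta_i$ (the latter following from applying the definition of $\mu_i$ to $V_i(\theta) = \Theta_i$), a direct computation gives
\begin{align*}
(\al_1 \gotimes \al_2)(V(g)) &= \mu_1(g) \mu_2(g) \mu_1(\theta)^{\zeta_2(g)} \mu_2(\theta)^{\zeta_1(g)} V(g), \\
(\al_1 \gotimes \al_2)(V(\theta)) &= \mu_1(\theta) \mu_2(\theta) V(\theta).
\end{align*}
Reading off the scalar factors from the definition of the first cohomology index yields the claimed formulas for $\mu(g)$ and $\mu(\theta)$.

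There is no serious obstacle beyond careful parity bookkeeping. The only potential pitfall is to correctly track that $\Theta_i$ is even, so that inserting factors $\Theta_i^{\zeta_j(g)}$ does not introduce extra signs when permuted through the graded tensor product, and to use the expression for $V(g)$ coming from Lemma \ref{lem:action of graded tensor product} rather than a naive tensor factorization $V_1(g) \gotimes V_2(g)$ (which would fail to implement $\rho_1^g \gotimes \rho_2^g$ when $\zeta_1$ or $\zeta_2$ is non-trivial).
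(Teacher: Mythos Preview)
Your proof is correct and follows essentially the same approach as the paper: both use Lemma \ref{lem:action of graded tensor product} to identify $V(g) = V_1(g)\Theta_1^{\zeta_2(g)} \gotimes V_2(g)\Theta_2^{\zeta_1(g)}$ and $V(\theta) = \Theta_1 \gotimes \Theta_2$, and then compute the action of $\al_1 \gotimes \al_2$ on these. The only cosmetic difference is that the paper first invokes Lemma \ref{lem:action of graded tensor product} a second time to write $\al_1 \gotimes \al_2 = \Ad_U$ for an explicit $U$ and then conjugates, whereas you apply $\al_1 \gotimes \al_2$ directly factorwise using $\al_i(V_i(g)) = \mu_i(g)V_i(g)$ and $\al_i(\Theta_i) = \mu_i(\theta)\Theta_i$; the computations are equivalent.
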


\begin{proof}
	Suppose $\al_i$ is given by cojugation with $U_i \in \caU(\caA_i)$, and denote by $V_i$ the projective representations of $G'$ associated to $(\caA_i, \rho_i)$ for $i = 1, 2$. By Lemma \ref{lem:action of graded tensor product}, $\al_1 \gotimes \al_2$ is given by conjugation with $U = U_1 \Theta_1^{\mu_2(\theta)} \gotimes U_2 \Theta_2^{\mu_1(\theta)}$. Similarly, the projective representation $V$ of $G'$ associated to $(\caA_1 \gotimes \caA_2, \rho_1 \gotimes \rho_2)$ is given by
	\begin{align*}
		V(g) &= V_1(g) \Theta_1^{\zeta_2(g)} \gotimes V_2(g) \Theta_2^{\zeta_1(g)},  \\
		V(\theta g) &= \Theta_1 V_1(g) \Theta_1^{\zeta_2(g)} \gotimes \Theta_2 V_2(g) \Theta_2^{\zeta_1(g)}
	\end{align*}
	for all $g \in G$. The value of $\mu(g)$ then follows from
	\begin{equation}
		\Ad_U \big( V(g) \big) = \mu_1(g) \mu_2(g) \mu_1(\theta)^{\zeta_2(g)} \mu_2(\theta)^{\zeta_1(g)} \, V(g)
	\end{equation}
	and the value of $\mu(\theta)$ follows from
	\begin{equation}
		\Ad_U \big( V(\theta) \big) = \mu_1(\theta) \mu_2(\theta) \, V(\theta).
	\end{equation}
\end{proof}

\begin{lemma} \label{lem:stacking with inverse yields trivial first cohomology}
	Let $\al$ be an even equivariant automorphism of a rational central simple $G$-system $(\caA, \rho)$. Then the even equivariant automorphism $\al \gotimes \al^{-1}$ of the rational central simple $G$-systems $(\caA \gotimes \caA, \rho \gotimes \rho)$ has trivial first cohomology index.	
\end{lemma}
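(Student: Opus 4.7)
The plan is to combine the two previous lemmas on the behavior of the first cohomology index under composition and under stacking.

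First, observe that the identity automorphism $\id$ on any rational central simple $G$-system has trivial first cohomology index: indeed $\id(V(g')) = V(g') = 1 \cdot V(g')$ for all $g' \in G'$, so its index is the constant function $1$, which represents the trivial class in $H^1(G', U(1))$. Applying Lemma \ref{lem:multiplicativity of the first cohomology index under composition} to $\id = \al \circ \al^{-1}$, and letting $\mu$ denote the first cohomology index of $\al$, this forces the first cohomology index of $\al^{-1}$ to be $\mu^{-1}$.

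Next, I would apply Lemma \ref{lem:first cohomology index under stacking} with $(\caA_1, \rho_1) = (\caA_2, \rho_2) = (\caA, \rho)$, so that $\zeta_1 = \zeta_2 = \zeta$ (the homomorphism $\zeta \in \Hom(G, \Z_2)$ determined by the $G$-action on the grading operator of $\caA$), $\mu_1 = \mu$, and $\mu_2 = \mu^{-1}$. Writing $\tilde \mu$ for the first cohomology index of $\al \gotimes \al^{-1}$, the formula in Lemma \ref{lem:first cohomology index under stacking} gives
\begin{align*}
\tilde\mu(g) &= \mu(g)\,\mu(g)^{-1}\,\mu(\theta)^{\zeta(g)}\,\bigl(\mu(\theta)^{-1}\bigr)^{\zeta(g)} = 1 \quad\text{for all } g \in G, \\
\tilde\mu(\theta) &= \mu(\theta)\,\mu(\theta)^{-1} = 1.
\end{align*}
Thus $\tilde\mu \equiv 1$, which is the trivial element of $H^1(G', U(1))$, so $\al \gotimes \al^{-1}$ has trivial first cohomology index.

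There is no real obstacle here: the statement is a direct algebraic consequence of the multiplicativity under composition and the explicit stacking formula. The only subtlety worth double-checking is that the $\zeta_i$ appearing in the stacking formula really are equal in this setting (both copies of the system are the same $(\caA, \rho)$, so they are), which is what makes the $\mu(\theta)^{\zeta(g)}$ factors cancel against $\mu(\theta)^{-\zeta(g)}$.
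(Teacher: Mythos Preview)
Your proof is correct and matches the paper's argument almost exactly: both first establish that the index of $\al^{-1}$ is $\mu^{-1}$ and then apply the stacking formula with $\zeta_1=\zeta_2$ to see all factors cancel. The only cosmetic difference is that the paper reads off the index of $\al^{-1}$ directly from $\al^{-1}(V(g'))=\mu(g')^{-1}V(g')$, whereas you deduce it from multiplicativity under composition and the triviality of the index of $\id$.
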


\begin{proof}
	If $\al$ as first chomology index $\mu$ and $V$ is the projective representation of $G' = \Z_2 \times G$ associated to $(\caA, \rho)$ then $\al^{-1} \big( V(g') \big) = \mu(g')^{-1} V(g')$ for all $g' \in G'$. We see that the first cohomology index of $\al^{-1}$ is $g' \mapsto \mu(g')^{-1}$, the inverse of $\mu$.

	Using Lemma \ref{lem:first cohomology index under stacking} we then find that the first cohomology index of $\al \gotimes \al^{-1}$ is given by
	\begin{align*}
		g &\mapsto \mu(g) \mu(g)^{-1} \mu(\theta)^{\zeta(g)} \mu(\theta)^{-\zeta(g)} = 1, \quad \text{for all} \,\,\, g \in G, \\
		\theta &\mapsto \mu(\theta)\mu(\theta)^{-1} = 1
	\end{align*}
	where $\rho^g(\Theta) = (-1)^{\zeta(g)} \Theta$ with $\Theta$ the grading operator of $\caA$. We find that the first cohomology index of $\al \gotimes \al^{-1}$ is trivial as required.
\end{proof}

\begin{lemma} \label{lem:0-dimensional deform to identity}
	If $\al$ has trivial first cohomology index, then $\al$ can be continuously deformed to the identity along a path of even equivariant automorphisms.
\end{lemma}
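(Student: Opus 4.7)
The plan is to exploit the fact that $\caA$ is a full matrix algebra, so that $\al$ is inner, and then to realize the implementing unitary inside the commutant of the projective representation $V$. From there a standard exponential interpolation delivers the path.

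First I would write $\al = \Ad_U$ for some unitary $U \in \caA$, using that $\caA$ (rational, hence isomorphic to a full matrix algebra as an ungraded $*$-algebra) is a factor. Such a $U$ is unique up to an overall phase. By hypothesis the first cohomology index $\mu$ is trivial in $H^1(G', U(1)) = \Hom(G', U(1))$, meaning $\mu(g') = 1$ for every $g' \in G'$. Unpacking the definition of the first cohomology index, this says $\Ad_U(V(g')) = V(g')$ for every $g' \in G'$, which is equivalent to $U V(g') = V(g') U$ for all $g' \in G'$. Note that this conclusion is phase-independent, so it is genuinely $U$ itself (and not merely its image modulo phase) that lies in the commutant.

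Next I would introduce $\caN := \{V(g') : g' \in G'\}' \cap \caA$. This is a finite-dimensional unital $C^*$-subalgebra of $\caA$, hence isomorphic to a finite direct sum $\bigoplus_i \caM^{n_i}$ of full matrix algebras. Its unitary group is therefore $\prod_i U(n_i)$, which is path-connected. In particular $U \in \caU(\caN)$ can be written as $U = e^{iH}$ for some self-adjoint $H \in \caN$.

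Finally I would define $U_t := e^{itH} \in \caN$ for $t \in [0, 1]$ and set $\al_t := \Ad_{U_t}$. Since each $U_t$ lies in $\caN$, each $U_t$ commutes with $V(g')$ for every $g' \in G'$, so each $\al_t$ fixes $V(g')$ pointwise. Specializing to $g' = \theta$ and to $g' = g \in G$ shows that every $\al_t$ is even and $G$-equivariant. The map $t \mapsto U_t$ is norm-continuous, so $t \mapsto \al_t$ is a norm-continuous path of even equivariant automorphisms of $\caA$ with $\al_0 = \id$ and $\al_1 = \al$, as required. There is no real obstacle here; the only thing to be a little careful about is arguing that $U$ itself (not just its class modulo phase) lies in the commutant, which is handled in the first paragraph.
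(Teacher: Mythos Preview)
Your proof is correct and follows essentially the same approach as the paper: both observe that triviality of $\mu$ forces the implementing unitary $U$ to commute with every $V(g')$ (equivalently, to be even and $G$-invariant), then take a self-adjoint logarithm with the same invariance and interpolate via $\Ad_{e^{itH}}$. Your phrasing in terms of the commutant $\caN$ and the connectedness of $\caU(\caN)$ is a slightly more explicit justification of why $H$ can be chosen even and $G$-invariant, but the underlying argument is the same.
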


\begin{proof}
	Let $\al = \Ad_U$. The triviality of the first cohomology index means that $U = v(g)^* U v(g)$ for all $g \in G'$, \ie $U$ is $G$-invariant and even. It follows that $U = \ed^{\iu H}$ for a $G$-invariant even self-adjoint $H$. The continuous path $t \mapsto U_t = \ed^{\iu t H}$ of $G$-invariant even unitaries interpolates between $\I$ and $U$, so the path $t \mapsto \al_t = \Ad_{U_t}$ continuously interpolates between $\id$ and $\al$ through equivariant even automorphisms.
\end{proof}

\begin{proposition} \label{prop:classification of 0-dimensional systems}
	Two even equivariant automorphisms $\al$ and $\beta$ on the same rational central simple $G$-system $(\caA, \rho)$ can be continuously deformed into each other along a path of even equivariant automorphisms if and only if they have the same first cohomology index.
\end{proposition}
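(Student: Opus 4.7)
The proof should split into the two implications, both leveraging the lemmas just proved.

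For the ``only if'' direction, suppose $[0,1] \ni t \mapsto \al_t$ is a continuous path of even equivariant automorphisms, and let $\mu_t$ denote the first cohomology index of $\al_t$. Fix any $g' \in G'$. Then $\mu_t(g')$ is defined by $\al_t(V(g')) = \mu_t(g') V(g')$, and since $V(g')$ is a fixed unitary element and $\al_t$ depends continuously on $t$ in the pointwise norm topology, the map $t \mapsto \mu_t(g')$ is continuous. However, each $\mu_t$ is an element of $\Hom(G', U(1))$, and for the finite group $G'$ this character group is a finite discrete set (every character takes values in roots of unity of order dividing $|G'|$). A continuous map from $[0,1]$ to a discrete set is constant, hence $\mu_0 = \mu_1$ and the first cohomology indices of $\al$ and $\beta$ agree.

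For the ``if'' direction, assume $\al$ and $\beta$ share the same first cohomology index $\mu$. From the calculation in the proof of Lemma \ref{lem:stacking with inverse yields trivial first cohomology}, the first cohomology index of $\beta^{-1}$ is $\mu^{-1}$. By the multiplicativity statement of Lemma \ref{lem:multiplicativity of the first cohomology index under composition}, the first cohomology index of $\al \circ \beta^{-1}$ is therefore $\mu \cdot \mu^{-1} = 1$, the trivial index. Lemma \ref{lem:0-dimensional deform to identity} then provides a continuous path $t \mapsto \gamma_t$ of even equivariant automorphisms interpolating between $\gamma_0 = \id$ and $\gamma_1 = \al \circ \beta^{-1}$. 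The composed path
\begin{equation}
	t \mapsto \al_t := \gamma_t \circ \beta
\end{equation}
then consists of even equivariant automorphisms (composition of such automorphisms being even and equivariant), and satisfies $\al_0 = \beta$ and $\al_1 = \al \circ \beta^{-1} \circ \beta = \al$, as required.

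The only subtle point is the ``only if'' direction, where one might worry that $\Hom(G', U(1))$ carries a nontrivial topology; but because $G'$ is finite this group is discrete, so continuity forces $\mu_t$ to be constant along the path. Everything else is a direct assembly of the preceding lemmas.
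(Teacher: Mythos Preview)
Your proof is correct and follows essentially the same approach as the paper's own proof: both directions use the same ingredients (continuity of $t\mapsto\mu_t(g')$ combined with discreteness of $H^1(G',U(1))$ for the ``only if'', and triviality of the index of $\al\circ\beta^{-1}$ followed by Lemma~\ref{lem:0-dimensional deform to identity} and post-composition with $\beta$ for the ``if''). The only cosmetic difference is that you explicitly invoke the inverse computation from the proof of Lemma~\ref{lem:stacking with inverse yields trivial first cohomology} to get the index of $\beta^{-1}$, whereas the paper leaves this implicit in its appeal to multiplicativity.
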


\begin{proof}
	By Lemma \ref{lem:multiplicativity of the first cohomology index under composition}, $\al \circ \beta^{-1}$ has trivial first cohomology index. By Lemma \ref{lem:0-dimensional deform to identity} there is a continuous path $[0, 1] \ni t \mapsto \gamma_t$ of even equivariant automorphisms such that $\gamma_0 = \id$ and $\gamma_1 = \al \circ \beta^{-1}$. Then the path $t \mapsto \gamma_t \circ \beta$ is a continuous path of even equivariant automorphisms interpolating from $\beta$ to $\al$, as required.

	If $\al$ and $\beta$ have different first cohomology index, then they cannot be deformed into each other. Indeed, suppose $[0, 1] \ni t \mapsto \gamma_t$ interpolates continuously between $\al$ and $\beta$ and let $\gamma_t(V(g')) = \mu_t(g') V(g')$ with $V$ the projective representation of $G'$ associated to $(\caA, \rho)$. Since $t \mapsto \gamma_t$ is continuous, the $t \mapsto \mu_t(g')$ are continuous for all $g' \in G'$. But $H^1(G', U(1))$ is a discrete set. It follows that the $\mu_t(g')$ are all constant along the path, so that $\al$ and $\beta$ have the same first cohomology index.
\end{proof}

The following lemma provides automorphisms with arbitrary first cohomology indices:

\begin{lemma} \label{lem:regular rep yields all first cohomology elements}
	For each $\mu \in H^1(G, U(1))$ there is a unitary $u \in \caU(\caH^{\reg})$ such that $\Ad_u \big( u^{\reg}(g) \big) = \mu(g) u^{\reg}(g)$ for all $g \in G$.
\end{lemma}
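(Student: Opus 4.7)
The plan is to reduce the statement to the standard fact that two finite-dimensional unitary representations of a finite group with the same character are unitarily equivalent, which is the same result (Corollary 2, Chapter 2 of \cite{Serre1977}) already used in the proof of Lemma \ref{lem:make reps isomorphic}.

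First I would note that since $G$ is finite and acts trivially on $U(1)$, we have $H^1(G, U(1)) = \Hom(G, U(1))$, so $\mu : G \to U(1)$ is a group homomorphism. Define the twisted map $\tilde u^{\reg} : G \to \caU(\caH^{\reg})$ by $\tilde u^{\reg}(g) := \mu(g)\, u^{\reg}(g)$. Since $\mu$ is multiplicative, $\tilde u^{\reg}$ is again a genuine unitary representation of $G$ on $\caH^{\reg}$.

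Next I would compare characters. The character of the regular representation satisfies $\chi^{\reg}(g) = |G|\, \delta_{g,e}$, and hence
\begin{equation}
\tilde\chi(g) \;=\; \Tr\{\mu(g)\, u^{\reg}(g)\} \;=\; \mu(g)\, |G|\, \delta_{g,e} \;=\; |G|\, \delta_{g,e} \;=\; \chi^{\reg}(g),
\end{equation}
where in the third equality I use $\mu(e) = 1$. Thus $\tilde u^{\reg}$ and $u^{\reg}$ have identical characters, so by the cited corollary they are isomorphic representations. Since both are unitary representations of a finite group acting on the same Hilbert space $\caH^{\reg}$, the intertwiner can be chosen unitary, giving $w \in \caU(\caH^{\reg})$ with $w\, \tilde u^{\reg}(g)\, w^* = u^{\reg}(g)$ for all $g \in G$.

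Finally, rearranging this identity yields $w^*\, u^{\reg}(g)\, w = \mu(g)\, u^{\reg}(g)$, so that setting $u := w^*$ gives the required unitary satisfying $\Ad_u(u^{\reg}(g)) = \mu(g)\, u^{\reg}(g)$. No step here poses a serious obstacle; the only thing to be careful about is the direction of the intertwining and the fact that $\tilde u^{\reg}$ is itself an honest (not merely projective) representation, which is exactly what forces $\mu$ to be a homomorphism rather than a general cocycle.
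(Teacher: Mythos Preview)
Your proof is correct, but the paper takes a much more direct route: it simply writes down the unitary explicitly as the diagonal operator $u\,|g\rangle = \mu(g)\,|g\rangle$ and verifies by a one-line computation that $\Ad_u(u^{\reg}(g)) = \mu(g)\,u^{\reg}(g)$, using only that $\mu(gh) = \mu(g)\mu(h)$.

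Your argument via character theory is a genuinely different approach. It is non-constructive --- you infer the existence of the intertwiner from equality of characters rather than exhibiting it --- but it has the virtue of explaining \emph{why} such a $u$ must exist: twisting the regular representation by any one-dimensional character leaves its isomorphism class unchanged. The paper's explicit construction is shorter and more elementary, and also makes it transparent that the $u$ obtained is diagonal in the group basis; your route, on the other hand, would generalise more readily to situations where one does not have such a convenient basis at hand.
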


\begin{proof}
	Simply take $u | g \rangle = \mu(g) | g \rangle$ for all $g \in G$, then
	\begin{equation}
		\Ad_{u} \big( r^{\reg}(g) \big) | h \rangle = u u^{\reg} u^* | h \rangle = \overline{\mu(g)} u u^{\reg}(g) | h \rangle = \overline{\mu(g)} u | gh \rangle = \overline{\mu(h)} \mu(gh) | gh \rangle = \mu(g) u^{\reg}(g) | h \rangle
	\end{equation}
	for all $g, h \in G$, \ie $\Ad_{u} \big( u^{\reg}(g) \big) = \mu(g) u^{\reg}(g)$ as required.
\end{proof}

The following Lemma ensures the existence of product automorphisms with arbitrary associated first cohomology index in various cases.

\begin{lemma} \label{lem:existence of product automorphisms with given first cohomology I}
	Let $(\caA_1, \rho_1)$ and $(\caA_2, \rho_2)$ be balanced central simple $G$-systems (Definition \ref{def:balanced central simple G-system}) such that $\rho_i = \Ad_{v_i \oplus v_i}$ with the $v_i$ direct sums of copies of the regular representation. Then for any $\mu \in H^1(G' = \Z^2 \times G, U(1))$ there exists an even equivariant automorphism on $(\caA_1 \gotimes \caA_2, \rho_1 \gotimes \rho_2)$ of the form $\al_1 \gotimes \al_2$ that has $\mu$ as its first cohomology index.
\end{lemma}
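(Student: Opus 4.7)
The plan is to construct $\al_1$ and $\al_2$ on the two factors separately and then invoke Lemma \ref{lem:first cohomology index under stacking} to combine them. First observe that the balanced hypothesis forces $\zeta_1 = \zeta_2 = 0$: the grading operator $\Theta_i$ is the difference of the two block projections of $\caM^{d_i|d_i}$, and $V_i(g) = v_i(g) \oplus v_i(g)$ is block diagonal with respect to that same decomposition, so it commutes with $\Theta_i$. Consequently the stacking formula in Lemma \ref{lem:first cohomology index under stacking} collapses to $\mu(g) = \mu_1(g) \mu_2(g)$ and $\mu(\theta) = \mu_1(\theta)\mu_2(\theta)$. Using the direct product decomposition $H^1(G', U(1)) \cong \Hom(G, U(1)) \times \Hom(\Z_2, U(1))$, it then suffices to realize an arbitrary character $\mu|_G$ via $\al_1$ (with trivial value on $\theta$) and an arbitrary sign $\mu(\theta) \in \{\pm 1\}$ via $\al_2$ (with trivial value on $G$).

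For the $G$-part, since $v_1$ is a direct sum of copies of the regular representation, Lemma \ref{lem:regular rep yields all first cohomology elements} applied summand-wise produces a unitary $u$ on the underlying Hilbert space with $\Ad_u(v_1(g)) = \mu(g)\, v_1(g)$ for all $g \in G$. Set $U_1 = u \oplus u \in \caM^{d_1|d_1}$, which is an even unitary since it commutes with $\Theta_1$, and let $\al_1 = \Ad_{U_1}$; this is an even equivariant automorphism with $\mu_1|_G = \mu|_G$ and $\mu_1(\theta) = 1$. For the $\theta$-part, if $\mu(\theta) = 1$ take $\al_2 = \id$. If $\mu(\theta) = -1$, take $U_2 = \begin{pmatrix} 0 & \I \\ \I & 0 \end{pmatrix} \in \caM^{d_2|d_2}$, which is an \emph{odd} unitary that commutes with every $V_2(g) = v_2(g) \oplus v_2(g)$ precisely because the two diagonal blocks of $V_2(g)$ are identical. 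Since $U_2$ is homogeneous, $\al_2 = \Ad_{U_2}$ is an even equivariant automorphism, and a direct computation gives $\Ad_{U_2}(\Theta_2) = -\Theta_2$, so $\mu_2(\theta) = -1$ and $\mu_2|_G = 1$. Feeding $\mu_1$ and $\mu_2$ into the simplified stacking formula recovers $\mu$.

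The only point that requires any care is that even equivariant inner automorphisms of a central simple superalgebra can be implemented by either an even or an odd unitary, and the first cohomology index at the grading generator $\theta$ is nontrivial exactly when the implementing unitary is odd. Everything else is essentially bookkeeping. The balanced hypothesis is what makes the construction work: it is precisely what guarantees (i) that $\zeta_i$ vanishes so the stacking formula decouples, and (ii) that odd unitaries commuting with the full $G$-representation are available inside $\caA_2$ (namely, the off-diagonal swap between the two isomorphic diagonal copies of $v_2$).
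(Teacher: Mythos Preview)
Your proof is correct. The approach differs from the paper's in one conceptual respect worth noting. The paper observes that the extended representation $V_i$ of $G' = \Z_2 \times G$ (with $V_i(g) = v_i(g)\oplus v_i(g)$ and $V_i(\theta) = \Theta_i$) is itself a direct sum of copies of the \emph{regular representation of $G'$}, which it verifies by a short character computation; then Lemma~\ref{lem:regular rep yields all first cohomology elements} applied to $G'$ shows that \emph{each factor on its own} can realize an arbitrary $\mu_i \in H^1(G',U(1))$, so one may simply take $\mu_1 = \mu$ and $\mu_2 \equiv 1$. Your argument instead splits $\mu$ along the direct-product decomposition $H^1(G',U(1)) \cong \Hom(G,U(1))\times\Hom(\Z_2,U(1))$ and realizes the two pieces on separate factors by explicit constructions (the diagonal $u\oplus u$ for the $G$-part, the off-diagonal swap for the $\theta$-part). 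Both routes are valid; yours is more elementary and hands-on, while the paper's is slightly slicker and in fact proves the stronger statement that a single balanced factor already suffices.
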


\begin{proof}
	Denote by $\Theta_1$, $\Theta_2$ the grading operators of $\caA_1$ and $\caA_2$. The graded tensor product $\caA_1 \gotimes \caA_2$ is represented on $\caA_2 \gotimes \caA_2$ by $\pi(a_1 \gotimes a_2) = a_1 \Theta_1^{\tau(a_2)} \otimes a_2$ and has in this representation the grading operator $\Theta = \Theta_1 \otimes \Theta_2$.

	Let $\al_i$, $i = 1, 2$ be even equivariant automorphisms on $\caA_i$ given by $\Ad_{U_i}$. Let $\mu_i \in H^1(G' = \Z_2 \times G, U(1))$ be the corresponding first cohomology classes. Denote by $\theta$ the generating element of the $\Z_2$ factor.  By Lemma \ref{lem:first cohomology index under stacking} and the fact that the $v_i \oplus v_i$ are all even, so $\zeta_1 = \zeta_2 \equiv 0$, we get that the first cohomology class associated to $\al_1 \gotimes \al_2$ is simply the product of the first cohomology classes $\mu_1$, $\mu_2$ associated to $\al_1$ and $\al_2$. But we can choose $\al_1$ and $\al_2$ to get any $\mu_1$ and $\mu_2$ we want. Indeed the representations $V_i$ defined by $V_i(g) = v_i(g) \oplus v_i(g)$ for $g \in G$ and $V_i(\theta) = \Theta_i$ are direct sums of the regular representation of $G' = \Z_2 \times G$. To see this it is sufficient to show that their characters are integer multiples of $2 \abs{G} \delta_{e}$, the character of the regular representation of $G'$. This is the case:
	\begin{align}
		\Tr V_i(g) = \Tr v_i(g) \oplus v_i(g) = 2 n_i \abs{G} \delta_e(g) \\
		\Tr V_i(\theta g) = \Tr \Theta_i \big( v_i(g) \oplus v_i(g) \big) = 0
	\end{align}
	for all $g \in G$, where $v_i$ is assumed to be isomorphic to a direct sum of $n_i$ copies of the regular representation. From Lemma \ref{lem:regular rep yields all first cohomology elements} we then get a unitaries $U_i$ such that $\Ad_{U_i} (V_i(g)) = \mu_i(g) V_i(g)$ for all $g \in G'$ and any $\mu_i$ we want. The automorphisms $\al_i = \Ad_{U_i}$ are even and equivariant, and have associated first cohomology indices $\mu_i$. Now choose $\mu_1 = \mu$ and $\mu_2 \equiv 1$ to complete the proof.
\end{proof}

\begin{lemma} \label{lem:existence of product automorphisms with given first cohomology II}
	Let $(\caA_1, \id)$ and $(\caA_2, \id)$ be balanced central simple $G$-systems (Definition \ref{def:balanced central simple G-system}) carrying the trivial representations of $G$. Then for any $\mu \in H^1(G' = \Z^2 \times G, U(1))$ arising as the first cohomology index of an even automorphism $\al$ of $(\caA_1 \gotimes \caA_2, \id)$ we have $\mu(g) = 1$ for all $g \in G$. Moreover, there exists an even equivariant automorphism on $(\caA_1 \gotimes \caA_2, \rho_1 \gotimes \rho_2)$ of the form $\al_1 \gotimes \al_2$ that has $\mu$ as its first cohomology index.
\end{lemma}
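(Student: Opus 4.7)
The statement splits into two parts: the constraint that $\mu(g) = 1$ for all $g \in G$, and the construction of a product automorphism $\al_1 \gotimes \al_2$ realizing any such $\mu$. The first is essentially a tautology once definitions are unwound; the second reduces to producing a single even automorphism on one of the factors that flips the sign of the grading operator.

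For the constraint, since $\rho_1$ and $\rho_2$ are trivial I may choose representatives $V_i(g) = \I$ of the projective $G$-representations for all $g \in G$, so the projective representation $V$ of $G' = \Z_2 \times G$ associated to $(\caA_1 \gotimes \caA_2, \id)$ satisfies $V(g) = \I$ for every $g \in G$ and $V(\theta) = \Theta_1 \otimes \Theta_2$. Any automorphism fixes the identity, so $\al(V(g)) = \I = V(g)$ immediately forces $\mu(g) = 1$.

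For the existence part, a homomorphism $\mu : G' \to U(1)$ with $\mu|_G \equiv 1$ is determined by $\mu(\theta) \in \{\pm 1\}$, since $\mu(\theta)^2 = \mu(e) = 1$. If $\mu(\theta) = 1$, then $\al_1 = \al_2 = \id$ works. If $\mu(\theta) = -1$, the plan is to build an even automorphism $\al_1$ of $\caA_1$ with $\al_1(\Theta_1) = -\Theta_1$ and set $\al_2 = \id$. Since $\zeta_1 = \zeta_2 \equiv 0$ (the trivial $G$-actions fix the grading operators), the stacking formula of Lemma \ref{lem:first cohomology index under stacking} then gives $\mu(g) = \mu_1(g) \mu_2(g) = 1$ for $g \in G$ and $\mu(\theta) = \mu_1(\theta) \mu_2(\theta) = (-1) \cdot 1 = -1$, as required. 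Equivariance is automatic because $\rho_1 \gotimes \rho_2$ is trivial.

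The only step demanding care is the construction of $\al_1$, and this is where I expect the (mild) obstacle to lie. Identifying $\caA_1 \simeq \caM^d \gotimes \caM^{1|1}$, take $U_1 = \I \otimes \sigma_X$, an odd self-adjoint unitary. The slightly subtle point is that conjugation by an \emph{odd} unitary still yields an \emph{even} algebra automorphism: from $U_1 \Theta_1 = -\Theta_1 U_1$ one also gets $U_1^{-1} \Theta_1 = -\Theta_1 U_1^{-1}$, and the two minus signs cancel in
\begin{equation}
\Ad_{U_1}(\theta_1(x)) = U_1 \Theta_1 x \Theta_1 U_1^{-1} = \Theta_1 U_1 x U_1^{-1} \Theta_1 = \theta_1(\Ad_{U_1}(x)).
\end{equation}
The same anticommutation gives $\al_1(\Theta_1) = U_1 \Theta_1 U_1^{-1} = -\Theta_1$, completing the construction.
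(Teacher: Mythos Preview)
The proposal is correct and follows essentially the same approach as the paper: both deduce $\mu(g)=1$ from $V(g)=\I$, reduce to the two cases $\mu(\theta)=\pm 1$, and in the nontrivial case construct $\al_1$ by conjugating with the odd swap unitary $\begin{bmatrix} & \I \\ \I & \end{bmatrix}$ on $\caM^{d|d}$ (your $\I\otimes\sigma_X$ under $\caM^{d|d}\simeq\caM^d\otimes\caM^{1|1}$). Your explicit verification that conjugation by an odd unitary is an even automorphism and your invocation of Lemma~\ref{lem:first cohomology index under stacking} make the argument slightly more detailed than the paper's, but the content is the same.
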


\begin{proof}
	The $G$-action of $(\caA_1 \gotimes \caA_2, \id)$ is trivial, hence given by conjugation with the identity for all $G \in G$. Thus $\mu(g) \I = \al(\I) \I^{-1} = \I$ \ie $\mu(g) = 1$. It follows that $\mu$ is completely determined from its value $\mu(\theta) \in \{-1, +1\}$. Let $\al_1$ and $\al_2$ be even automorphisms on $\caA_1$ and $\caA_2$ respectively and denote by $\mu_1$ and $\mu_2$ their first cohomology indices. As before, $\mu_{i}(g) = 1$ for all $g \in G$ and they are completely determined by $\mu_i(\theta) \in \{-1, 1\}$. We can choose the $\al_i$ to get any $\mu_i(\theta) \in \{-1, 1\}$ we want. Indeed, to get $\mu_i(\theta) = 1$ just let $\al_i = \id$. to get $\mu_i(\theta) = -1$ let $\al_i$ be conjugation with the odd unitary
	\begin{equation}
		\begin{bmatrix} & \I \\ \I &  \end{bmatrix}
	\end{equation}
	in a basis where the grading operator of $\caA_i$ is
	\begin{equation}
		\Theta_i = \begin{bmatrix} \I &  \\  &  -\I \end{bmatrix}.
	\end{equation}
	The cohomology index associated to $\al_1 \gotimes \al_2$ is $\mu_1 \cdot \mu_2$, so we can choose $\al_1$ and $\al_2$ to get the required index $\mu$.

\end{proof}

\begin{lemma} \label{lem:existence of product automorphisms with given first cohomology III}
	Let $(\caA_1, \rho_1)$ and $(\caA_2, \rho_2)$ be purely even central simple $G$-systems such that $\rho_i = \Ad_{v_i}$ with the $v_i$ direct sums of copies of the regular representation. Then for any $\mu \in H^1(G, U(1))$ there exists an even equivariant automorphism on $(\caA_1 \gotimes \caA_2, \rho_1 \gotimes \rho_2)$ of the form $\al_1 \gotimes \al_2$ that has $\mu$ as its first cohomology index.
\end{lemma}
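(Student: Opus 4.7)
The proof will largely mirror the argument of Lemma \ref{lem:existence of product automorphisms with given first cohomology I}, simplified by the purely even assumption. Since $\caA_1$ and $\caA_2$ are purely even, their grading operators satisfy $\Theta_1 = \Theta_2 = \I$, so the homomorphisms $\zeta_1$ and $\zeta_2$ from Lemma \ref{lem:first cohomology index under stacking} vanish identically, and the projective representations $V_i$ of $G' = \Z_2 \times G$ associated to $(\caA_i, \rho_i)$ satisfy $V_i(\theta g) = V_i(g) = v_i(g)$. Consequently any first cohomology index $\mu_i$ arising from an even equivariant automorphism of $(\caA_i, \rho_i)$ automatically satisfies $\mu_i(\theta) = 1$, and is determined by its restriction to $G$.

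The plan is then to build $\al_1$ directly on $\caA_1$ realising any desired $\mu \in H^1(G, U(1))$, and take $\al_2 = \id$. Writing $v_1$ as a direct sum of $n_1$ copies of the regular representation acting on $\caH^{\reg, \oplus n_1}$, let $u \in \caU(\caH^{\reg})$ be the unitary provided by Lemma \ref{lem:regular rep yields all first cohomology elements}, so $\Ad_u \big( u^{\reg}(g) \big) = \mu(g) u^{\reg}(g)$ for all $g \in G$. Set $U_1 = u^{\oplus n_1} \in \caU(\caA_1)$ and $\al_1 = \Ad_{U_1}$. A direct computation gives $\Ad_{U_1}(v_1(g)) = \mu(g) v_1(g)$, from which it follows both that $\al_1$ is equivariant (since $U_1 v_1(g) U_1^* v_1(g)^{-1} = \mu(g) \I$ is central) and that its first cohomology index equals $\mu$ on $G$ and is trivial on $\theta$.

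Finally, taking $\al_2 = \id$ (which has trivial first cohomology index), Lemma \ref{lem:first cohomology index under stacking} with $\zeta_1 = \zeta_2 \equiv 0$ yields that the first cohomology index of $\al_1 \gotimes \al_2$ on $(\caA_1 \gotimes \caA_2, \rho_1 \gotimes \rho_2)$ is precisely $\mu_1 \cdot \mu_2 = \mu$, as required. There is no substantive obstacle here beyond assembling the previously established pieces; the purely even assumption is what removes the $\Theta$-twists that complicated the balanced case.
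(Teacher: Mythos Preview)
Your proof is correct and follows essentially the same approach as the paper: both use Lemma \ref{lem:regular rep yields all first cohomology elements} to realise an arbitrary $\mu \in H^1(G, U(1))$ on one factor, observe that the trivial grading forces $\mu_i(\theta)=1$ and $\zeta_1=\zeta_2\equiv 0$, and then appeal to the stacking formula from Lemma \ref{lem:first cohomology index under stacking} to conclude that the index of $\al_1\gotimes\al_2$ is $\mu_1\cdot\mu_2$. Your choice $\al_2=\id$ (i.e.\ $\mu_2\equiv 1$) is exactly the specialisation the paper makes at the end of the analogous argument in Lemma \ref{lem:existence of product automorphisms with given first cohomology I}.
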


\begin{proof}
	By Lemma \ref{lem:regular rep yields all first cohomology elements} we can choose $\al_1$ and $\al_2$ with any associated first cohomology indices $\mu_1, \mu_2 \in H^1(G, U(1))$. (Regarding $\mu_1, \mu_2$ as elements of $H^1(\Z_2 \times G, U(1))$ we have $\mu_1(\theta) = \mu_2(\theta) = 1$ automatically due to the trivial grading, and $\mu_1, \mu_2$ are completely determined by specifying them as elements of $H^1(G, U(1))$). As in the proof of Lemma \ref{lem:existence of product automorphisms with given first cohomology I} one can show that the first cohomology index associated to $\al_1 \gotimes \al_2$ is $\mu_1 \cdot \mu_2$. since we can choose $\al_1$ and $\al_2$ to obtain any $\mu_1$ and any $\mu_2$, we can obtain any $\mu \in H^1(G, U(1))$.
\end{proof}

\begin{lemma} \label{lem:existence of product automorphisms with given first cohomology IV}
Let $(\caA_1, \id)$ and $(\caA_2, \id)$ be purely even central simple $G$-systems carrying trivial representations of $G$. Then the first cohomology index $\mu$ associated to any automorphism of $(\caA_1, \id)$, $(\caA_2, \id)$ or $(\caA_1 \gotimes \caA_2, \rho_1 \gotimes \rho_2)$ is trivial, \ie $\mu \equiv 1$.
\end{lemma}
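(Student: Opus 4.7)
The plan is to observe that in all three cases under consideration ($\caA_1$, $\caA_2$, and $\caA_1 \gotimes \caA_2$), the underlying algebra is a purely even central simple superalgebra carrying the trivial $G$-action, so a single argument handles all three. (Note that the graded tensor product of two purely even superalgebras is again purely even, and the tensor product of trivial $G$-actions is trivial.) It therefore suffices to prove the claim for a generic purely even central simple $G$-system $(\caA, \id)$.

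First, I would use the classification of central simple superalgebras: a purely even one is isomorphic to $\caM^{n \mid 0}$, i.e.\ a full matrix algebra whose elements are all even. Consequently the grading automorphism acts trivially, and the grading operator $\Theta$ satisfies $\Ad_{\Theta} = \id$, so $\Theta$ is central and hence a scalar; after normalizing so that $\Theta^2 = \I$ one may take $\Theta = \I$. Similarly, because $\rho^g = \id$ for all $g \in G$, we may take the implementing projective representation to be $v(g) = \I$.

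Next, I would invoke the definition of the projective representation $V$ of $G' = \Z_2 \times G$ used to define the first cohomology index: $V(g) = v(g)$ and $V(\theta g) = \Theta v(g)$ for all $g \in G$. With the choices above this gives $V(g') = \I$ for every $g' \in G'$. Since any automorphism $\al$ of $\caA$ satisfies $\al(\I) = \I$, the defining relation $\al(V(g')) = \mu(g') V(g')$ forces $\mu(g') = 1$ for all $g' \in G'$. Thus the first cohomology index is trivial, as claimed.

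There is no real obstacle in this proof; it is a direct unwinding of the definitions, relying only on the fact that in a purely even central simple superalgebra the grading operator and the generators of a trivial $G$-action can be taken to be the identity, so that the projective representation $V$ of $G'$ entering the definition of the first cohomology index is the trivial one.
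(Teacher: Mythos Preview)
Your proof is correct and follows essentially the same approach as the paper: both arguments reduce to the observation that in a purely even algebra with trivial $G$-action one may take $\Theta = \I$ and $v(g) = \I$, so $V(g') = \I$ for all $g' \in G'$ and hence $\mu(g') = \al(\I)\I^{-1} = 1$. Your version is a bit more explicit in noting that the tensor product case reduces to the same situation, but the underlying idea is identical.
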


\begin{proof}
	Let $\al$ be such an automorphism, then	$\mu(g) = \al(\I) \I^{-1} = 1$ for any $g \in G$ because the $G$-reps are trivial, and $\mu(\theta) = \al(\I) \I^{-1} = 1$ because the grading is trivial.
\end{proof}

\subsubsection{Circuits}

After stacking a QCA with trivial index with appropriate auxiliary chains we will be able to put it in circuit form:

\begin{definition} \label{def:decoupled QCA}
	An even equivariant QCA $\al$ on a quantum chain $\caA$ is decoupled in of blocks of size $R$ if there is a partition of $\Z$ in intervals $\{I_i\}_{i \in \Z}$ such that $\al$ leaves each $\caA_{I_i}$ invariant, and $\max_{i} \abs{I_i} = R$. The restricitons $\al|_{\caA_{I_i}}$ are all even equivariant automorphisms.
\end{definition}

\begin{definition} \label{def:circuit}
	An even equivariant automorphism $\al$ on a quantum chain $\caA$ is a circuit of depth $D$ with blocks of size $R$ if it is a composition of $D$ decoupled automorphisms with blocks of size $R$.
\end{definition}

We first show that circuits have trivial index.

\begin{lemma} \label{lem:triviality of circuits}
	If $\al$ is an even equivariant circuit, then $\ind(\al) = (1, 0, \boe)$ is trivial.
\end{lemma}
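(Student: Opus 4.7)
The plan is to show the claim in two steps: first that a single decoupled even equivariant automorphism has trivial index, and then to invoke multiplicativity of the index under composition, as established in Proposition \ref{prop:QCA composition and stacking}, to conclude the result for an arbitrary circuit.

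For the first step, let $\al$ be decoupled with respect to a partition $\{I_i\}_{i \in \Z}$ of $\Z$ into intervals of bounded length. I would coarse-grain the quantum chain so that each block $I_i$ becomes a single site of the coarse-grained chain. Under this coarse-graining $\al$ acts on-site, so it is an even equivariant QCA of range $0$, which is in particular nearest-neighbour. The hypotheses of Theorem \ref{thm:overlap factorization} are therefore satisfied, so Definition \ref{def:index for QCA} applies and we may compute the index from the overlap algebras at any coarse-grained site $n$.

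Because $\al$ leaves each (coarse-grained) on-site algebra $\caA_n$ invariant, it leaves $\caB_n = \caA_{2n} \gotimes \caA_{2n+1}$ invariant, \ie $\al(\caB_n) = \caB_n$. Consequently, the left overlap algebra is
\begin{equation}
    \caL_n \;=\; \al(\caB_n) \cap \caC_n \;=\; \caB_n \cap \caC_n \;=\; \big(\caA_{2n} \gotimes \caA_{2n+1}\big) \cap \big(\caA_{2n-1} \gotimes \caA_{2n}\big) \;=\; \caA_{2n},
\end{equation}
so by Definition \ref{def:index for QCA} and the definition of the relative index,
\begin{equation}
    \ind(\al) \;=\; \ind(\caA_{2n}, \caL_n) \;=\; \ind(\caA_{2n}) \cdot \ind(\caA_{2n})^{-1} \;=\; (1, 0, \boe).
\end{equation}

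For the second step, recall that a circuit of depth $D$ is by Definition \ref{def:circuit} a composition $\al = \al_D \circ \cdots \circ \al_1$ of $D$ decoupled even equivariant automorphisms, each of which is itself an even equivariant QCA. By the first step, $\ind(\al_i) = (1, 0, \boe)$ for each $i$, and by Proposition \ref{prop:QCA composition and stacking} the index is multiplicative under composition, so
\begin{equation}
    \ind(\al) \;=\; \ind(\al_D) \cdot \ind(\al_{D-1}) \cdots \ind(\al_1) \;=\; (1, 0, \boe),
\end{equation}
as required. No real obstacle is expected; the only mild subtlety is checking that the coarse-graining in the first step is legitimate for defining the index, which is precisely the content of the independence-of-coarse-graining statement in Proposition \ref{prop:index for QCA}.
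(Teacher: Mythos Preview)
Your proof is correct and follows essentially the same approach as the paper: reduce to a single decoupled layer via multiplicativity of the index under composition (Proposition \ref{prop:QCA composition and stacking}), then coarse-grain so that $\al$ is range $0$ and read off that the overlap algebra coincides with the on-site algebra, giving trivial relative index. The only cosmetic difference is that the paper computes the right overlap algebra $\caR_n = \caA_{2n+1}$ rather than $\caL_n = \caA_{2n}$, but by Proposition \ref{prop:index for QCA} these yield the same index.
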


\begin{proof}
	An even equivariant circuit is a composition of decoupled even equivariant automorphisms, which are in particular all QCAs. By Proposition \ref{prop:QCA composition and stacking} it is sufficient to show that the index of any decoupled even equivariant automorphism is trivial. So suppose $\al$ is decoupled according to a partition $\{I_i\}_{i \in \Z}$ of $\Z$ into finite intervals $I_i$. We know from Proposition \ref{prop:index for QCA} that the index is independent of coarse graining and of where along the chain the index is computed. We caorse grain by blocking the intervals $I_i$ into single sites $\widetilde \caA_i = \caA_{I_i}$, so $\al$ leaves each site invariant w.r.t. to this caorse graining. For any $n \in \Z$, the right overlap algebra is then given by 
	\begin{equation}
		\widetilde \caR_n = \al \big( \widetilde \caA_{2n} \gotimes \widetilde \caA_{2n+1} \big) \cap \big( \widetilde \caA_{2n+1} \gotimes \widetilde \caA_{2n+2} \big) = \widetilde \caA_{2n+1}
	\end{equation}
	and the index of $\al$ is (Definition \ref{def:index for QCA})
	\begin{equation}
		\ind(\al) = \ind \big( \widetilde \caR_n, \widetilde \caA_{2n+1} \big) = \ind \big( \widetilde \caA_{2n+1}, \widetilde \caA_{2n+1} \big) = (1, 0, \boe)
	\end{equation}
	where we used the definition \ref{def:relative ind} of the relative index of central simple $G$-systems.
\end{proof}

\begin{proposition} \label{prop:decoupling circuits}
	Let $\caA$ be a quantum chain with $\caA_n$ either
	\begin{enumerate}[label=(\roman*)]
		\item balanced central simple $G$-systems (Definition \ref{def:balanced central simple G-system}) such that $\rho|_{\caA_n} = \Ad_{v_n \oplus v_n}$ with the $v_n$ direct sums of copies of the regular representation.
		\item balanced central simple $G$-systems all carrying the trivial $G$-representation.
		\item purely even central simple $G$-systems such that $\rho|_{\caA_n} = \Ad_{v_n}$ with the $v_n$ direct sums of copies of the regular representation.
		\item purely even central simple $G$-systems carrying trivial representations of $G$.
	\end{enumerate}
	Then any even equivariant circuit on $\caA$ of depth $D$ with blocks of size $R$ is $G$-equivalent to a decoupled automorphism with blocks of size 1 through a path of even equivariant circuits of depth $D$ with blocks of size $R$.
\end{proposition}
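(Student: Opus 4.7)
The plan is to reduce the statement to a block-by-block, layer-by-layer application of the $0$-dimensional classification (Proposition \ref{prop:classification of 0-dimensional systems}). Write $\al = \beta_D \circ \cdots \circ \beta_1$ with each $\beta_k$ an even equivariant decoupled automorphism associated to a partition $\{I^{(k)}_i\}_{i \in \Z}$ of $\Z$ into intervals of length at most $R$. It suffices to produce, for each layer $\beta_k$, a strongly continuous path $t \mapsto \beta_k(t)$ of even equivariant decoupled automorphisms (with respect to the same partition) with $\beta_k(0) = \beta_k$ and $\beta_k(1)$ decoupled with blocks of size $1$: the composition $\beta_D(t) \circ \cdots \circ \beta_1(t)$ is then a strongly continuous path of even equivariant circuits of depth $D$ with blocks of size $R$, and its endpoint, being a composition of site-decoupled automorphisms, is itself decoupled with blocks of size $1$.

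Fix a layer $\beta_k$ and one block $I = \{n_1, \ldots, n_{|I|}\}$ of the corresponding partition. The restriction $\beta_k|_{\caA_I}$ is an even equivariant automorphism of the rational central simple $G$-system $\caA_I = \caA_{n_1} \gotimes \cdots \gotimes \caA_{n_{|I|}}$, with some first cohomology index $\mu \in H^1(\Z_2 \times G, U(1))$. The goal is to find a product automorphism $\tilde\al_I = \al_{n_1} \gotimes \cdots \gotimes \al_{n_{|I|}}$ on $\caA_I$ with first cohomology index equal to $\mu$. This is exactly what Lemmas \ref{lem:existence of product automorphisms with given first cohomology I}--\ref{lem:existence of product automorphisms with given first cohomology IV} deliver in the four respective cases (i)--(iv) of the proposition: they furnish a two-factor product automorphism realizing any admissible first cohomology index. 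For $|I| = 1$ there is nothing to deform, as $\beta_k|_{\caA_I}$ is already a single-site automorphism. For $|I| \geq 2$, apply the relevant lemma to the first two sites $n_1, n_2$ and pad with $\al_{n_j} = \id$ for $j \geq 3$. Since each $\caA_n$ satisfies $\rho^g(\Theta_n) = \Theta_n$ in all four cases (for balanced systems because $V(g) = v(g) \oplus v(g)$ commutes with $\Theta_n$, for purely even systems because $\Theta_n = \I$), the associated $\zeta_n$ is trivial, so the stacking formula of Lemma \ref{lem:first cohomology index under stacking} collapses to ordinary multiplication of first cohomology indices and the identity factors contribute trivially.

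Having matched the first cohomology indices of $\beta_k|_{\caA_I}$ and $\tilde\al_I$, Proposition \ref{prop:classification of 0-dimensional systems} furnishes a strongly continuous path of even equivariant automorphisms of $\caA_I$ interpolating between them. Performing these deformations independently on every block of the partition yields the required path $\beta_k(t)$: at each $t$ it is an even equivariant decoupled automorphism subordinate to the same partition, and at $t = 1$ its restriction to each block is a product of single-site automorphisms, so $\beta_k(1)$ is decoupled with blocks of size $1$.

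The main subtlety is verifying that the path $\beta_k(t)$ assembled block-by-block is strongly continuous as an automorphism of the full quasi-local algebra $\caA$. For local observables this is immediate, since each local observable has nontrivial overlap with only finitely many blocks; strong continuity on all of $\caA$ then follows by an $\ep/3$ argument using density of $\caA_{\loc}$ and the isometric property of automorphisms. Beyond this the proof is essentially assembly, and the only point requiring care is the verification that the $\zeta_n \equiv 0$ hypothesis genuinely holds for all four listed types of on-site algebra, which is immediate from Definition \ref{def:balanced central simple G-system} and the definition of the purely even case.
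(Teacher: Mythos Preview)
Your proof is correct and follows the same strategy as the paper: reduce to deforming each decoupled layer block-by-block, and on each block invoke Proposition~\ref{prop:classification of 0-dimensional systems} after matching the first cohomology index with a product automorphism supplied by the appropriate one of Lemmas~\ref{lem:existence of product automorphisms with given first cohomology I}--\ref{lem:existence of product automorphisms with given first cohomology IV}. The paper's proof is terser and speaks of a ``repeated application'' of these lemmas (suggesting an inductive peeling-off of sites), whereas you realise the target first cohomology on just two sites and pad with identities; your explicit verification that $\zeta_n \equiv 0$ in all four cases, so that the stacking formula of Lemma~\ref{lem:first cohomology index under stacking} degenerates to ordinary multiplication and the identity factors contribute trivially, is exactly the detail that makes this one-shot variant work. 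Both routes are equivalent in substance.
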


\begin{proof}
	It is sufficient to show that a decoupled automorphism with blocks of size $R$ is $G$-equivalent to a decoupled automorphism with blocks of size 1 through a path of decoupled automorphisms with blocks of size $R$. Let $\al$ be such a trivial product automorphism leaving blocks $\caA_{I_i}$ invariant and denote by $\al_i = \al|_{\caA_{I_i}}$. It is sufficient to show that each $\al_i$ can be deformed to a product. The fact that this can be done follows from a repeated application of Proposition \ref{prop:classification of 0-dimensional systems} and Lemma \ref{lem:existence of product automorphisms with given first cohomology I} in case (i), Lemma \ref{lem:existence of product automorphisms with given first cohomology II} in case (ii), Lemma \ref{lem:existence of product automorphisms with given first cohomology III} in case (iii) and Lemma \ref{lem:existence of product automorphisms with given first cohomology IV} in case (iv).
\end{proof}

\subsubsection{Proof of completeness}

We first show that every even equivariant QCAs with trivial index is stably $G$-equivalent to a decoupled QCA.
\begin{proposition} \label{prop:decoupling trivial QCA}
	Let $\al$ be an even equivariant QCA acting on a quantum chain $\caA$ with trivial index $\ind(\al) = (1, 0, \boe)$. Then $\al \gotimes \id \gotimes \id \gotimes \id$ acting on $\widetilde \caA = \caA \gotimes \bar \caA \gotimes \caA^{\reg} \gotimes \caA^{\fer}$ can be written as a depth 2 circuit consisting of even equivariant blocks supported on intervals of length at most twice the range of $\al$. In particular, $\al \gotimes \id \gotimes \id \gotimes \id$ is $G$-equivalent to an even equivariant QCA that is decoupled in blocks of size at most the range of $\al$ through a strongly continuous path of even equivariant QCAs with ranges uniformly bounded by twice the range of $\al$.

	Moreover, if the quantum chain $\caA$ is trivial in the sense of Definition \ref{def:trivial chain}, then the stacking with the conjugate chain $\bar \caA$ can be omitted. If $\caA$ has trivial symmetry then the stacking with $\caA^{\reg}$ can be ommited and $\widetilde \caA$ also has trivial symmetry. If $\caA$ has trivial grading then the stacking with $\caA^{\fer}$ can be omitted and $\widetilde \caA$ also has trivial grading.
\end{proposition}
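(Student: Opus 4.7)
My plan is to reduce to the nearest-neighbor case by coarse graining the chain so that $\al$ has range $1$ (blocks of size $R$ in the original chain). Then on $\widetilde \caA = \caA \gotimes \bar \caA \gotimes \caA^{\reg} \gotimes \caA^{\fer}$ the automorphism $\tilde \al = \al \gotimes \id \gotimes \id \gotimes \id$ is also nearest-neighbor, and since the auxiliaries are acted on trivially, the overlap algebras of $\tilde \al$ factorize as $\widetilde \caL_n = \caL_n \gotimes \bar \caA_{2n} \gotimes \caA^{\reg}_{2n} \gotimes \caA^{\fer}_{2n}$, and analogously for $\widetilde \caR_n$. The triviality of $\ind(\al)$, together with Proposition \ref{prop:index for QCA}, gives $\ind(\caL_n) = \ind(\caA_{2n})$ and $\ind(\caR_n) = \ind(\caA_{2n+1})$, so that the corresponding stacked algebras have equal indices.

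The core step is to upgrade this equality of indices to an isomorphism of central simple $G$-systems $\widetilde \caL_n \simeq \widetilde \caA_{2n}$ and $\widetilde \caR_n \simeq \widetilde \caA_{2n+1}$. The role of each auxiliary chain is as follows. Stacking with $\bar \caA$ trivializes the second and third index components of $\caA \gotimes \bar \caA$ by Lemma \ref{lem:conjugate stacking}; in particular the $G$-action on the stacked on-site algebra is implemented by even unitaries. Stacking with $\caA^{\fer}$ then turns each rational factor into a balanced central simple $G$-system via Lemma \ref{lem:fermion stacking I}. Stacking with $\caA^{\reg}$ ensures the diagonal blocks of the balanced representation are direct sums of copies of the regular representation, so any two such representations of the same dimension are isomorphic by Lemma \ref{lem:make reps isomorphic}. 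Applying Lemma \ref{lem:fermion stacking II} to the equal-index balanced pair $(\widetilde \caL_n, \widetilde \caA_{2n})$ (and similarly for $\widetilde \caR_n, \widetilde \caA_{2n+1}$) then yields an even $G$-equivariant super $*$-isomorphism.

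Applying Skolem–Noether to $\widetilde \caC_n = \widetilde \caR_{n-1} \gotimes \widetilde \caL_n = \widetilde \caA_{2n-1} \gotimes \widetilde \caA_{2n}$, the combined isomorphism is implemented by conjugation with an even unitary $u_n \in \widetilde \caC_n$ whose adjoint action intertwines the restrictions of the symmetry (so $\Ad_{u_n}$ is automatically $G$-equivariant even though $u_n$ itself need only be $G$-invariant up to a phase). Because the $\widetilde \caC_n$ sit on disjoint pairs of coarse-grained sites, the assignment $\beta := \prod_n \Ad_{u_n}$ is a well-defined even equivariant automorphism of $\widetilde \caA$ which is decoupled in blocks $\widetilde \caC_n$. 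By construction $(\beta \circ \tilde \al)(\widetilde \caB_n) = \widetilde \caB_n$ for every $n$, so $\beta \circ \tilde \al$ is decoupled in blocks $\widetilde \caB_n$, and writing $\tilde \al = \beta^{-1} \circ (\beta \circ \tilde \al)$ exhibits $\tilde \al$ as a depth-$2$ circuit whose blocks are pairs of coarse-grained sites, i.e.\ intervals of length at most $2R$. The second statement then follows from Proposition \ref{prop:decoupling circuits}, whose hypotheses on $\widetilde \caA$ are met by the very choice of auxiliary chains. The three optional simplifications at the end of the statement are immediate: if $\caA$ is already trivial in the sense of Definition \ref{def:trivial chain} then $\ind(\caL_n)$ and $\ind(\caA_{2n})$ already coincide with trivial second and third components, so $\bar \caA$ is unnecessary; trivial symmetry makes Lemma \ref{lem:make reps isomorphic} vacuous, so $\caA^{\reg}$ can be dropped; and trivial grading makes Lemma \ref{lem:fermion stacking I} vacuous, so $\caA^{\fer}$ can be dropped.

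The main obstacle is the passage from equality of indices to a concrete even $G$-equivariant unitary $u_n \in \widetilde \caC_n$ implementing the rearrangement of the two decompositions of $\widetilde \caC_n$. This is precisely what the stacking lemmas are engineered to produce: once balanced-ness and isomorphism of the diagonal representations are achieved, the existence of $u_n$ reduces to Skolem–Noether applied to a full matrix algebra together with the observation that the projective ambiguity in $u_n$ does not obstruct its parity and leaves $\Ad_{u_n}$ manifestly $G$-equivariant. Everything else in the argument is bookkeeping about how $\tilde \al$ factors into the two decoupled layers $\beta^{-1}$ and $\beta \circ \tilde \al$.
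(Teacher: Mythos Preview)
Your proposal is correct and follows essentially the same route as the paper's proof: coarse grain to nearest-neighbour form, use the triviality of $\ind(\al)$ together with the three stacking lemmas (conjugate, regular, fermion) to produce isomorphisms of central simple $G$-systems $\widetilde\caL_n\simeq\widetilde\caA_{2n}$ and $\widetilde\caR_{n-1}\simeq\widetilde\caA_{2n-1}$, assemble these into a decoupled automorphism $\beta$ on the $\widetilde\caC_n$'s, and observe that $\beta\circ\tilde\al$ is decoupled on the $\widetilde\caB_n$'s, giving the depth-$2$ factorization; then invoke Proposition~\ref{prop:decoupling circuits}. The only noteworthy difference is that you pass through Skolem--Noether to realize $\Phi_n$ as $\Ad_{u_n}$, whereas the paper works directly with the abstract automorphism $\Phi_n=\Phi_{\widetilde\caR_{n-1}}\gotimes\Phi_{\widetilde\caL_n}$ of $\widetilde\caC_n$ (a decoupled QCA only needs each block restriction to be an even equivariant automorphism, not an inner one), so your detour is harmless but unnecessary.
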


\begin{proof}
	Coarse grain the chain so that $\al$ is nearest-neighbour and denote by $\caL_n$ and $\caR_n$ the overlap algebras. Since $\ind(\al)$ is trivial we have $\dim \caR_{n-1} = \dim \caA_{2n-1}$ and $\dim \caL_n = \dim \caA_{2n}$. By assumption, $\caA_{2n-1}$ and $\caA_{2n-1}$ are full matrix algebras, so all of the central simple superalgebras $\caL_n, \caR_{n-1}, \caA_{2n}, \caA_{2n-1}$ are rational. Again by triviality of $\ind(\al)$, the projective representations $v_{\frI}$, $\frI = \caL_n, \caA_{2n}, \caR_{n-1}, \caA_{2n-1}$ induced by the symmetry $\rho$ on the $G$-invariant full matrix algebras $\caL_n, \caA_{2n}, \caR_{n-1}, \caA_{2n-1}$ have associated cohomology classes satisfying
	\begin{equation}
		[v_{\caL_n}] = [v_{\caA_{2n}}], \quad [v_{\caR_{n-1}}] = [v_{\caA_{2n-1}}].
	\end{equation}
	Denoting by $\Theta_{\frI}$, $\frI = \caL_n, \caR_{n-1}, \caA_{2n}, \caA_{2n-1}$ the grading operator for $\frI$, the second component of $\ind(\frI)$ is $\zeta_{\frI} : G \rightarrow \Z_2$ appearing in $\rho^g(\Theta_{\frI}) = v_{\frI}(g) \Theta_{\frI} v_{\frI}(g)^* = (-1)^{\zeta_{\frI}(g)} \Theta_{\frI}$. Equivalently, $\theta(u_{\frI}(g)) = (-1)^{\zeta_{\frI}(g)}$, so $\zeta_{\frI}(g) = \tau(v_{\frI}(g))$ is the parity of the unitary $v_{\frI}$. The triviality of $\ind(\al)$ then means that
	\begin{equation}
		\tau(v_{\caL_n}(g)) = \tau(v_{\caA_{2n}}(g)), \quad \tau(v_{\caR_{n-1}}(g)) = \tau(v_{\caA_{2n-1}}(g))
	\end{equation}
	for all $g \in G$.

	If $\caA$ is a trivial chain then $[v_{\caA_n}] = \boe$ are trivial so the $v_{\frI}$ can be taken to be unitary representations of $G$. Moreover, $v_{\frI}(g) \Theta_{\frI} = \Theta_{\frI} v_{\frI}(g)$ for all $g$, \ie the $G$-representations are even.

	We achieve the same in general by stacking with the conjugate chain $\bar \caA$. Then $\caA \gotimes \bar \caA$ is a trivial chain (Lemma \ref{lem:conjugate stacking}) so the on-site algebras and the overlap algebras corresponding to $\al \gotimes \id$ acting on $\caA \gotimes \bar \caA$ all carry unitary representations of $G$ induced by the $G$-symmetry $\rho^g \gotimes \bar \rho^g$, and the grade automorphism acts on each of them by conjugation with a grading operator that commutes with the unitary $G$-representation, \ie the $G$-representations are all even.

	Let now $\caA' = \caA \gotimes \bar \caA$ and $\al' = \al \gotimes \id$ if $\caA$ is not already a trivial chain, and $\caA' = \caA$ and $\al' = \al$ if $\caA$ is a trivial chain. We denote the on-site algebras of by $\caA'_n$ and the overlap algebras corresponding to $\al'$ by $\caL'_n$ and $\caR'_n$. Let's denote the unitary representations of $G$ carried by $\frI' = \caL'_n, \caA'_{2n}, \caR'_{n-1}, \caA'_{2n-1}$ by $v'_{\frI'}$. We want to find isomorphisms of central simple $G$-systems $\Phi_{\caL'_n} : \caL'_n \rightarrow \caA'_{2n}$ and $\Phi_{\caR'_{n-1}} : \caR'_{n-1} \rightarrow \caA'_{2n-1}$ such that $\Phi_{\caL'_n}(v'_{\caL'_n}) = v'_{\caA'_{2n}}$ and $\Phi_{\caR'_{n-1}}(v'_{\caR'_{n-1}}) = v'_{\caA'_{2n-1}}$. Such isomorphisms can be found if and only if $\caL'_n$ and $\caA_{2n}$ are isomorphic as superalgebras and the unitary representations of $G$ carried by $\caL'_n$ and $\caA'_{2n}$, which are already known to be even, are equivalent as representations (and likewise for $\caR'_{n-1}$ and $\caA'_{2n-1}$). At present we only have two even unitary representations of the same dimension, which in general does not imply isomorphism of representations, and rational superalgebras of the same dimension, which also does not imply isomorphism as superalgebras. We will achieve the required isomorphisms by stacking with theregular chain and the fermion chain.

	If the quantum chain $\caA$ has trivial symmetry then so does $\caA'$ and the unitary $G$-representations $v'_{\frI'}$ are all trivial. In particular $\caL'_n$ and $\caA'_{2n}$ carry equivalent even unitary representations, and so do $\caR'_{n-1}$ and $\caA'_{2n-1}$. In this case we put $\caA'' = \caA'$ and $\al'' = \al'$.

	If the quantum chain $\caA$ does not have trivial symmetry, we get isomorphic representations by stacking with the regular chain $\caA^{\reg}$, following \cite{GSSC2020}. Then $\al'' = \al' \gotimes \id$ acting on $\caA'' = \caA' \gotimes \caA^{\reg}$ has on-site algebras $\caA''_n = \caA'_n \gotimes \caA^{\reg}_n$ and overlap algebras $\caL''_n = \caL'_n \gotimes \caA^{\reg}_{2n}$ and $\caR''_{n-1} = \caR'_{n-1} \gotimes \caA^{\reg}_{2n-1}$. By definition the regular chain is a spin chain, \ie all its elements are even so the preceeding graded tensor products are actually ordinary tensor products and the algebras $\frI'' = \caL''_n, \caA''_{2n}, \caR''_{n-1}, \caA''_{2n-1}$ carry unitary $G$-representations $v''_{\frI''} = v'_{\frI'} \otimes \rho^{\reg}$. Since the dimensions of $v_{\caL'_n}$ and $v_{\caA'_{2n}}$ are equal, it follows from Lemma \ref{lem:make reps isomorphic} that $\caL''_n$ and $\caA''_n$ carry isomorphic even unitary representations of $G$, and likewise for $\caR''_{n-1}$ and $\caA''_{2n-1}$.

	The central simple superalgebras $\caL''_n$ and $\caA''_{2n}$ are both rational and of the same dimension, and likewise for $\caR''_{n-1}$ and $\caA''_{2n-1}$. If $\caA$ has trivial grading, then so does $\caA''$ (because $\bar \caA$ and $\caA^{\reg}$ also have trivial grading) so the equality of dimensions of $\caL''_n$ and $\caA''_{2n}$ implies their isomorphism as (trivially graded) superalgebras, and likewise for $\caR''_{n-1}$ and $\caA''_{2n-1}$. In this case we put $\caA''' = \caA''$ and $\al''' = \al''$. 

	If $\caA$ has non-trivial grading, we get isomorphic algebras by stacking with the fermion chain $\caA^{\fer}$, following \cite{FPPV2019}. Then $\caA''' = \caA'' \gotimes \caA^{\fer}$ has on-site algebras $\caA'''_n = \caA''_n \gotimes \caA_n^{\fer}$ and to the extension $\al''' = \al'' \gotimes \id$ are associated overlap algebras $\caL'''_n = \caL''_n \gotimes \caA^{\fer}_n$ and $\caR'''_{n-1} = \caR''_{n-1} \gotimes \caA^{\fer}_{2n-1}$. Since $\caL''_n$ and $\caA''_{2n}$ are rational central simple algebras that carry isomorpic even representations of $G$, and $\caA_n^{\fer} \simeq \caM^{1|1}$ with trivial $G$-action, it follows from Lemma \ref{lem:fermion stacking I} and Lemma \ref{lem:fermion stacking II} that $\caL_n'''$ and $\caA'''_{2n}$ are isomorphic as central simple $G$-systems, and likewise for $\caR'''_{n-1}$ and $\caA'''_{2n-1}$.

	By Theorem \ref{thm:overlap factorization},
	\begin{equation}
		\caC'''_n = \caA'''_{2n-1} \gotimes \caA'''_{2n} = \caR'''_{n-1} \gotimes \caL'''_n
	\end{equation}
	so we can combine the isomorphisms into a single automorphism of the central simple $G$-system $\caC'''_n = \caA'''_{2n-1} \gotimes \caA'''_{2n}$:
	\begin{equation}
		\Phi_n = \Phi_{\caR'''_{n-1}} \gotimes \Phi_{\caL'''_n} : \caC'''_n \rightarrow \caC'''_n
	\end{equation}
	which satisfies $\Phi_n(\caL'''_n) = \caA'''_{2n}$ and $\Phi_{n}(\caR'''_{n-1}) = \caA'''_{2n-1}$.

	The simultaneous action of all the $\Phi_n$ defines an even equivariant depth 1 circuit $\Phi$ with blocks supported on pairs of neighbouring sites (in the coarse grained chain). Define the even equivariant QCA $\Psi = \Phi \circ \al'''$, it leaves each of the algebras $\caB'''_n = \caA'''_{2n} \gotimes \caA'''_{2n+1}$ invariant. Indeed, by Theorem \ref{thm:overlap factorization},
	\begin{equation}
		\caB_n''' = \al'''^{-1} \big( \caL'''_n \big) \gotimes \al'''^{-1} \big( \caR'''_n \big)
	\end{equation}
	while $\Psi \big( \al'''^{-1}( \caL'''_n) \big) = \Phi( \caL'''_n) = \caA'''_{2n}$ and $\Psi \big( \al'''^{-1}(\caR'''_n) \big) = \Phi(\caR'''_n) = \caA_{2n-1}$. It follows that $\Psi$ is a depth 1 even equivariant circuit with blocks supported on nearest neighbour pairs (in the coarse grained chain).

	We have thus written $\al''' = \Phi^{-1} \circ \Psi$ as a depth 2 circuit consisting of even equivariant blocks of size 2 in the coarse grained chain. Before coarse graining, this corresponds to blocks supported on intervals of length at most twice the range of $\al$. This proves the first claim of the proposition.

	In case we stacked with both the regular and the fermion chain, all the $\caA'''_n$ are balanced central simple $G$-systems carrying $G$-reps $v_n \oplus v_n$ with $v_n$ isomorphic to a direct sum of copies of the regular representation. The second claim of the Proposition then follows from case (i) of Proposition \ref{prop:decoupling circuits}.

	If $\caA$ has trivial symmetry action and we did not stack with the regular chain, but we did stack with the fermion chain, then by Lemma \ref{lem:fermion stacking I}, the $\caA'''_n$ are balanced simple $G$-systems with trivial $G$-action. The second claim of the Proposition then follows from case (ii) of Proposition \ref{prop:decoupling circuits}.

	If all $\caA_n$ are purely even, so we don't stack with the fermion chain, but we do stack with the regular chain, then the $\caA'''_n$ are all purely even central simple $G$-systems carrying $G$-reps isomorphic to direct sums of copies of the regular representation. The second claim of the Proposition then follows from case (iii) of Proposition \ref{prop:decoupling circuits}.

	Finally, if $\caA$ has trivial symmetry and trivial grading, so we don't stack with the regular chain or with the fermion chain, then all the $\caA'''_n$ are purely even central simple $G$-systems carrying trivial $G$-reps. The second claim of the Proposition then follows from case (iv) of Proposition \ref{prop:decoupling circuits}.
\end{proof}

A decoupled automorphism is stably $G$-equivalent to the identity through stacking with its inverse.

\begin{proposition} \label{prop:decoupled is stably G-equivalent to identity}
	Let $\al$ be an even equivariant automorphism acting on a quatum chain $\caA$ that is decoupled according to a partition $\{I_i\}_{i \in \Z}$ of $\Z$. Then $\al \gotimes \al^{-1}$ acting on $\caA \gotimes \caA$ is $G$-equivalent to $\id$ through a path automorphisms that are all decoupled accordingto the partition $\{I_i\}_{i \in \Z}$. 
\end{proposition}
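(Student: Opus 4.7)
The plan is to reduce this to the 0-dimensional analysis already carried out in Section \ref{sec:0-dimensional}, exploiting the block structure of the decoupling. Since $\al$ is decoupled along $\{I_i\}_{i \in \Z}$, it restricts to even equivariant automorphisms $\al_i := \al|_{\caA_{I_i}}$ on each block. Each on-site algebra $\caA_n$ is a full matrix algebra of the form $\caM^{p_n|q_n}$, and since the graded tensor product of rational central simple superalgebras is again rational (i.e.\ $\caM^{p|q} \gotimes \caM^{p'|q'} \simeq \caM^{pp'+qq' \,|\, pq'+qp'}$), each block $\caA_{I_i} = \widehat\bigotimes_{n \in I_i} \caA_n$ is a rational central simple superalgebra. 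The $G$-symmetry of the chain preserves each $\caA_n$ and hence each $\caA_{I_i}$, so $(\caA_{I_i}, \rho|_{\caA_{I_i}})$ is a rational central simple $G$-system and $\al_i$ is an even equivariant automorphism of it.

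On the stacked chain $\caA \gotimes \caA$, the block algebras $(\caA \gotimes \caA)_{I_i}$ are canonically graded $\ast$-isomorphic to $\caA_{I_i} \gotimes \caA_{I_i}$, and under this identification the automorphism $\al \gotimes \al^{-1}$ restricts on the $i$-th block to $\al_i \gotimes \al_i^{-1}$. By Lemma \ref{lem:stacking with inverse yields trivial first cohomology}, the first cohomology index of $\al_i \gotimes \al_i^{-1}$ is trivial for every $i$. Hence by Lemma \ref{lem:0-dimensional deform to identity}, there exists an even $G$-invariant self-adjoint $H_i \in \caA_{I_i} \gotimes \caA_{I_i}$ such that $\al_i \gotimes \al_i^{-1} = \Ad_{\exp(\iu H_i)}$, and the path
\begin{equation}
	[0,1] \ni t \;\longmapsto\; \gamma_{i,t} := \Ad_{\exp(\iu t H_i)}
\end{equation}
consists of even equivariant automorphisms of $\caA_{I_i} \gotimes \caA_{I_i}$ interpolating between $\id$ at $t=0$ and $\al_i \gotimes \al_i^{-1}$ at $t=1$.

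Now define $\gamma_t$ on $\caA \gotimes \caA$ by letting its restriction to the $i$-th block equal $\gamma_{i,t}$, for every $i$; this uniquely determines an even equivariant automorphism of $\caA \gotimes \caA$ because the blocks together generate the quasi-local algebra and the $\gamma_{i,t}$ supercommute with each other (having disjoint supports, up to grading-sign considerations handled by evenness). By construction each $\gamma_t$ is decoupled along $\{I_i\}_{i \in \Z}$, $\gamma_0 = \id$, and $\gamma_1 = \al \gotimes \al^{-1}$.

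It remains to verify that $t \mapsto \gamma_t$ is strongly continuous. For any $x \in \caA_{\loc} \gotimes \caA_{\loc}$ the support of $x$ meets only finitely many blocks, so $\gamma_t(x)$ is computed from finitely many norm-continuous factors $\gamma_{i,t}$ and is therefore norm-continuous in $t$. Since the $\gamma_t$ are all isometries and $\caA_{\loc} \gotimes \caA_{\loc}$ is norm-dense in $\caA \gotimes \caA$, an $\ep/3$-argument gives strong continuity on all of $\caA \gotimes \caA$. The only mildly non-routine step is the bookkeeping in the graded block decomposition (keeping track of the identification $(\caA\gotimes\caA)_{I_i} \simeq \caA_{I_i} \gotimes \caA_{I_i}$ and checking that evenness of the $H_i$ makes the product across different blocks well-defined without Koszul sign obstructions); once this is in place the conclusion is immediate.
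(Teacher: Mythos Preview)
Your proof is correct and follows essentially the same approach as the paper: apply Lemma \ref{lem:stacking with inverse yields trivial first cohomology} to see that each block automorphism $\al_i \gotimes \al_i^{-1}$ has trivial first cohomology index, then use Lemma \ref{lem:0-dimensional deform to identity} to deform each block to the identity, and run all deformations simultaneously. The paper's proof is a terse three-sentence version of exactly this; you have simply filled in the details (rationality of the blocks, the explicit form of the path, and the strong-continuity argument) that the paper leaves implicit.
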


\begin{proof}
	By Lemma \ref{lem:stacking with inverse yields trivial first cohomology}, the $(\al \gotimes \al^{-1})|_{\caA_{I_i}}$ all have trivial first cohomology index. By Lemma \ref{lem:0-dimensional deform to identity} all of the $(\al \gotimes \al^{-1})|_{\caA_{I_i}}$ can be continuously deformed to $\id$ along a path of even equivariant automorphisms of $\caA_{I_i}$. Doing the continuous deformations simultaneously for all $i \in \Z$ yields the required strongly continuous path from $\al \gotimes \al^{-1}$ to $\id$.
\end{proof}

\subsection{Examples} \label{sec:examples}

We construct for each possible value $(d, \zeta, \bnu)$ of the index an even equivariant QCA $\al$ such that $\ind(\al) = (d, \zeta, \bnu)$.

\textbf{Shifts :} Consider a quantum chain $\caA$ with trivial gradign and trivial symmetry and on-site algebras $\caA_n$ all isomorphic to the $D \times D$ matrix algebra $\caM^D$. A right shift $\sigma_D$ on $\caA$ is an automorphism such that $\sigma_D(\caA_n) = \caA_{n+1}$ for all $n$. Such a right shift $\sigma_D$ can be constructed by fixing particular isomorphisms $\phi_n : \caA_n \rightarrow \caM^D$ for each $n$ and setting $\sigma_d(x) = (\phi_{n+1}^{-1} \circ \phi_n)(x)$ for $x \in \caA_n$. Clearly, $\sigma_D$ is nearest neighbour and the right overlap algebras are given by
\begin{equation}
	\caR_n = \sigma_D \big( \caA_{2n} \gotimes \caA_{2n+1} \big) \cap \big( \caA_{2n+1} \gotimes \caA_{2n+2} \big) = \caA_{2n+1} \gotimes \caA_{2n+2}
\end{equation}
so the first component of $\ind(\sigma_D)$ is
\begin{equation}
	\sqrt{ \frac{\dim \caR_n}{ \dim \caA_{2n+1} } } = \sqrt{ \dim \caA_{2n+2}} = D.
\end{equation}
Since the grading and symmetry are tirvial, we have
\begin{equation} \label{eq:ind of shift}
	\ind(\sigma_{D}) = (D, 0, \boe).
\end{equation}

By Proposition \ref{prop:QCA composition and stacking}, we see that $\ind(\sigma_{D_1} \gotimes \sigma_{D_2}^{-1})= (D_1/D_2, 0, \boe)$. We now have examples of QCAs the first component of whose index takes any value in $\Q$. The find values in $\sqrt{2} \Q$, we consider the so-call Kitaev chain, introduced in this context in \cite{FPPV2019}.

\textbf{The Kitaev Chain :} Consider the fermion chain $\caA^{\fer}$ (Definition \ref{def:fermion chain}) which has each local algebra $\caA_n$ isomorphic to $\caM^{1|1}$. Since $\caM^{1|1} \simeq K \gotimes K$ where $K = \C[\ep]$ is the superalgebra generated by a single odd self-adjoint unitary $\ep$, we have for each $n$ that $\caA_n = \caA_{n, L} \gotimes \caA_{n, R}$ where $\caA_{n, L}, \caA_{n, R} \simeq K$. We denote by $\ep_{n, L}$ and $\ep_{n, R}$ the odd self-adjoint unitary generators of $\caA_{n, L}$ and $\caA_{n, R}$ respectively. The algebra $K$ describes a Majorana mode, and we define a right-shift $\sigma_{\Maj}$ of Majorana modes by setting $\sigma_{\Maj}(\ep_{n, L}) = \ep_{n, R}$ and $\sigma_{\Maj}(\ep_{n, R}) = \ep_{n+1, L}$ for all $n \in \Z$. This is easily seen to define an even nearest neighbour QCA on the whole of $\caA^{\fer}$, and its right overlap algebras are
\begin{equation}
	\caR_n = \sigma_{\Maj} \big( \caA_{2n} \gotimes \caA_{2n+1} \big) \cap \big( \caA_{2n+1} \gotimes \caA_{2n+2} \big) = \caA_{2n + 1} \gotimes \caA_{2n+2, L} \simeq \caM^{1|1} \gotimes K,
\end{equation}
which are radical central simple superalgebras, and we find that the first component of $\ind \sigma_{\Maj}$ is
\begin{equation}
	\sqrt{\frac{\dim \caR_n}{\dim \caA_{2n+1}}} = \sqrt{ \frac{2 \times 2 \times 2}{2 \times 2}} = \sqrt{2}.
\end{equation}

Since the symmetry action on $\caA^{\fer}$ is trivial, we have
\begin{equation}
	\ind(\sigma_{\Maj}) = (\sqrt{2}, 0, \boe).
\end{equation}

Using Proposition \ref{prop:QCA composition and stacking} we see that $\ind(\sigma_{\Maj} \gotimes \sigma_{D_1} \gotimes \sigma_{D_2}^{-1}) = (\sqrt{2}D_1 / D_2, 0, \boe)$, so for all values in $\sqrt{2} \Q$ we have an even equivariant QCA the first component of whose index takes that value.

\textbf{Arbitrary $\ind_2 \in \Hom(G, \Z_2)$ : } Fix $\zeta \in \Hom(G, \Z_2)$. Let $g \mapsto v_g$ be a $D$-dimensional unitary representation of $G$ and define a $2D$-dimensional representation $g \mapsto V_g$ on $\caM^{D|D}$ by
\begin{equation}
	V_g = \begin{bmatrix} v_g & \\
				& v_g \end{bmatrix} \quad \text{if} \,\,\, \zeta(g) = 0
\end{equation}
and
\begin{equation}
	V_g = \begin{bmatrix} & v_g\\
				v_g  &  \end{bmatrix} \quad \text{if} \,\,\, \zeta(g) = 1.
\end{equation}
One easily verifies that this is still a unitary representation, and noting that the grading operator on $\caM^{D|D}$ is given by
\begin{equation}
	\Theta = \begin{bmatrix} \I & \\ & -\I \end{bmatrix}
\end{equation}
we see that the central simple $G$-system $(\caM^{D|D}, \Ad_{V_g})$ has index
\begin{equation}
	\ind(\caM^{D|D}, \Ad_{V_g}) = (2D, \zeta, \boe).
\end{equation}

Now consider the quantum chain $\caA$ with symmetry automorphisms $\rho^g$ and local algebras $\caA_n = \caA_{n, L} \gotimes \caA_{n, R}$ such that $\caA_{n, L}, \caA_{n, R} \simeq (\caM^{D|D}, \Ad_{V_g})$ as central simple $G$-systems. The isomorphism as central simple $G$-systems implies the existence of even equivariant isomorphisms between the $\caA_{n, L}, \caA_{n, R}$ which can be used to define an even equivariant automorphism $\al$ such that $\al(\caA_{n, L}) = \caA_{2, R}$ and $\al(\caA_{n, R}) = \caA_{n+1, L}$ for all $n \in \Z$. The automorphism $\al$ is a nearest neighbour QCA and its right-overlap algebras are
\begin{equation}
	\caR_n = \al \big( \caA_{2n} \gotimes \caA_{2n+1} \big) \cap \big( \caA_{2n+1} \gotimes \caA_{2n+2} \big) = \caA_{2n + 1, L} \gotimes \caA_{2n+1, R} \gotimes \caA_{2n+2, L}.
\end{equation}

Using Proposition \ref{prop:stacking for absolute ind} we find
\begin{equation}
	\ind(\caR_n) = (8 D^3, \zeta, \boe)
\end{equation}
and
\begin{equation}
	\ind(\caA_{2n+1}) = (4 D^2, 0, \boe).
\end{equation}

The index of the even equivariant QCA $\al$ is therefore given by
\begin{equation}
	\ind(\al) = \ind(\caR_n) \cdot \ind(\caA_{2n+1})^{-1} = (2D, \zeta, \boe).
\end{equation}

By stacking with copies of (Majaorana) shifts we can now obtain QCAs with index $(D, \zeta, \boe)$ for any $D \in \Q \cup \sqrt{2}$ and any $\zeta \in \Hom(G, \Z_2)$. It remains to construct examples with arbitrary second cohomology index $\bnu \in H^2(G, U(1))$.

\textbf{Arbitrary second cohomology :}  Fix $\bnu \in H^2(G, U(1))$ and let $g \mapsto v_g \in \caM^D$ be a $D$-dimensional projective representation of $G$ whose associated second cohmomlogy is $\bnu$. The conjugate projective representation $g \mapsto \bar v_g \in \caM^D$ then has associated second cohomology $\bnu^{-1}$.

Consider a quantum chain $\caA$ with trivial grading and on-site algebras $\caA_n = \caA_{n, L} \otimes \caA_{n, R}$ and a symmertry action $\rho$ such that each $\caA_{n, L}$, $\caA_{n, R}$ is invariant and $(\caA_{n, L}, \rho) \simeq (\caM^D, \Ad_{v_g})$ and $(\caA_{n, R}, \rho) \simeq (\caM^D, \Ad_{\bar v_g})$ as central simple $G$-systems for each $n \in \Z$. The isomorphism as central simple $G$-systems implies the existence of even equivariant isomorphisms between the $\caA_{n, L}, \caA_{n, R}$ which can be used to define an even equivariant automorphism $\al$ such that $\al(\caA_{n, L}) = \caA_{2, R}$ and $\al(\caA_{n, R}) = \caA_{n+1, L}$ for all $n \in \Z$. The automorphism $\al$ is a nearest neighbour QCA and its right-overlap algebras are
\begin{equation}
	\caR_n = \al \big( \caA_{2n} \gotimes \caA_{2n+1} \big) \cap \big( \caA_{2n+1} \gotimes \caA_{2n+2} \big) = \caA_{2n + 1, L} \gotimes \caA_{2n+1, R} \gotimes \caA_{2n+2, L}.
\end{equation}

Using Proposition \ref{prop:stacking for absolute ind} we find
\begin{equation}
	\ind(\caR_n) = (D^3, 0, \bnu)
\end{equation}
and
\begin{equation}
	\ind(\caA_{2n+1}) = (D^2, 0, \boe)
\end{equation}
where we noted that the second component is always zero because the grading is trivial.

The index of the even equivariant QCA $\al$ is therefore given by
\begin{equation}
	\ind(\al) = \ind(\caR_n) \cdot \ind(\caA_{2n+1})^{-1} = (D, 0, \bnu).
\end{equation}.

Above, we'd already constructed even equivariant QCAs with index $(D, \zeta, \boe)$ for arbitrary $D \in \Q \cup \sqrt{2} \Q$ and arbitrary $\zeta \in \Hom(G, U(1))$. By stacking these with the automorphism just constructed we obtain QCAs with index taking any value in $\Q \cup \sqrt{2} \times \Hom(G, \Z_2) \times H^2(G, U(1))$.

\section{Index theory for quasi-local even equivariant automorphisms} \label{sec:index theory for quasi-local}

\subsection{Approximation of quasi-local automorphisms by QCAs}

The results of this subsection are graded equivariant extensions of those presented in Section 5.1 of \cite{RWW2020}. This generalization is rather straightforward and the method of proof is exactly the same, we nevertheless give the proofs in full in order to keep this work self-contained.

\subsubsection{Extending even equivariant automorphisms to the von Neumann algebra}

We want to analyze quasi-local even equivariant automorphisms using Theorem \ref{thm:near inclusions}, this requires extending them to the von Neumann algebra $\caA^{\vN}$, see section \ref{sec:operator algebraic preliminaries}. We want the extension of $\al$ to retain its locality properties. This is verified in the following lemma, which is a graded equivariant version of Lemma 3.6 of \cite{RWW2020}:
\begin{lemma} \label{lem:vN extension}
	\begin{enumerate}[label=(\roman*)]
		\item Suppose $\al$ is an even automorphism of $\caA$ such that $\al(\caA_X) \nsub{\ep} \caA_{Y}$. Then the extension of $\al$ to the von Neumann algebra $\caA^{\vN}$ satisfies $\al(\caA_{X}^{\vN}) \nsub{12\ep} \caA_Y^{\vN}$. In particular, if $\al$ is quasi-local with $f(r)$-tails, then $\al$ extends such that $\al(\caA_X^{\vN}) \nsub{12 f(r)} \caA_{B(X, r)}^{\vN}$ for any interval $X$.

		\item Suppose $\al$ is an even automorphism of $\caA^{\vN}$ such that $\al(\caA_X^{\vN}) \nsub{f(r)} \caA_{X^{(r)}}^{\vN}$ for any interval $X$ and some function $f(r)$. Then
			\begin{equation}
				\al^{-1}(\caA_X^{\vN}) \nsub{16 f(r)} \caA_{X^{(r)}}^{\vN}
			\end{equation}
			for all intervals $X$.

		\item If $\al$ is an even automorphism of $\caA^{\vN}$ such that $\al(\caA_X^{\vN}) \nsub{f(r)} \caA_{X^{(r)}}^{\vN}$ for any interval $X$ and some function $f$ with $\lim_{r \uparrow \infty} = 0$, then $\al$ restricts to an even quasi-local automorphism on $\caA_{\Z}$ with $f(r)$-tails.

		\item If $\al$ is an even quasi-local automorphism on $\caA$ with $f(r)$-tails, then $\al^{-1}$ is an even quasi-local automorphism with $192 f(r)$-tails.
	\end{enumerate}
\end{lemma}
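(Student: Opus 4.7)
The plan is to follow the strategy of Lemma 3.6 of \cite{RWW2020}, verifying that the graded equivariant structure comes along for free. The grade automorphism $\theta$ and the $G$-action extend to $\caA^{\vN}$ (Section \ref{sec:operator algebraic preliminaries}) and an even equivariant $\al$ preserves both, so the four parts only require checking the underlying $C^*$/von Neumann passage.

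For part (i), I would invoke Kaplansky's density theorem: given $x \in \caA_X^{\vN}$ with $\norm{x} \leq 1$, there is a norm-bounded net $(y_\lambda)$ in $\caA_X$ converging to $x$ in the strong operator topology. Since the tracial state is $\al$-invariant, $\al$ extends normally to $\caA^{\vN}$, so $\al(y_\lambda) \to \al(x)$ in the SOT as well. The $C^*$-level hypothesis yields a norm-bounded net $(z_\lambda)$ in $\caA_Y$ with $\norm{\al(y_\lambda) - z_\lambda} \leq \ep$; passing to a weak-* limit $z$ in the unit ball of $\caA_Y^{\vN}$ and invoking a Christensen-type refinement of Kaplansky density (to control the loss of norm distance in the passage from SOT to norm) delivers $\norm{\al(x) - z} \leq 12\ep$.

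For part (ii) I would exploit the duality $(\caA_X^{\vN})' = \caA_{X^c}^{\vN}$ in the tracial GNS representation (with the graded commutant replacing the ordinary commutant in the fermionic case, as developed in the super von Neumann appendix). Since $\al$ is spatial and preserves commutants, applying the hypothesis to the appropriate interval and taking commutants gives a reverse near inclusion for $\al^{-1}$; after relabelling regions ($X \leftrightarrow (X^{(r)})^c$, possibly splitting into two half-lines) this reads $\al^{-1}(\caA_X^{\vN}) \nsub{16 f(r)} \caA_{X^{(r)}}^{\vN}$, the constant $16$ coming from the quantitative commutant-duality estimate.

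For part (iii) I would first show $\al(\caA) \subset \caA$: for $x \in \caA_X$ with $X$ finite the hypothesis supplies $y_r \in \caA_{X^{(r)}}^{\vN}$ with $\norm{\al(x) - y_r} \leq f(r) \norm{x}$, and since $X^{(r)}$ is finite $\caA_{X^{(r)}}^{\vN}$ coincides with the finite matrix algebra $\caA_{X^{(r)}} \subset \caA$, so $\al(x)$ is a norm limit of elements of $\caA$; continuity extends this to all of $\caA$. For the tail bound I apply the conditional expectation $E_{X^{(r)}}$ of Proposition \ref{prop:Araki's conditional expectations}, which fixes $\caA_{X^{(r)}}^{\vN}$ pointwise and is contractive, producing the approximant $E_{X^{(r)}}(\al(x)) \in \caA_{X^{(r)}}$ within the stated distance of $\al(x)$. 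Part (iv) is then immediate by composing (i), (ii), (iii): (i) boosts the $f$-tails of $\al$ on $\caA$ to $12f$-tails on $\caA^{\vN}$, (ii) transfers these to $192f = 16 \cdot 12 f$-tails for $\al^{-1}$ on $\caA^{\vN}$, and (iii) restricts back to $\caA$. The main obstacle I expect is pinning down the sharp constants $12$ and $16$, which requires careful use of Christensen-type estimates in (i) and quantitative commutant-duality in (ii); verifying that the duality holds with the graded commutant with the same constants as in the ungraded case should follow from the super von Neumann framework in the appendix.
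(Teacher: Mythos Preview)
Your outline for parts (ii)--(iv) matches the paper's strategy closely: the paper splits $(X^{(r)})^c$ into at most two intervals and applies Proposition~\ref{prop:simultaneous near inclusions of theta-hyperfinite vN algebras} (the graded commutant-duality estimate you allude to) together with Lemma~\ref{lem:property of Theta-adjoined}, which is precisely where the factor $16$ originates; parts (iii) and (iv) are then handled as you describe. One small omission in your (iii): you must also check $\al^{-1}(\caA)\subset\caA$, for which the paper invokes (ii), otherwise you only get an endomorphism of $\caA$.

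Your approach to (i) is genuinely different from the paper's, and in fact cleaner. The paper does \emph{not} pass directly to a weak-$*$ limit of approximants; instead it converts the near inclusion $\al(\caA_X)\nsub{\ep}\caA_Y\subset(\caA_{Y^c}^{\vN})^{\sharp}$ into the supercommutator bound $\norm{[\al(x_i),c]_s}\leq 4\ep\norm{x_i}\norm{c}$ via Lemma~\ref{lem:near inclusions and supercommutators}, uses lower semicontinuity of the norm to push this bound through the SOT limit, and then converts back to a near inclusion in $(\caA_Y^{\vN})^{\Theta}$ (picking up a further factor $3$), finally stripping the $\Theta$ via Lemma~\ref{lem:property of Theta-adjoined}. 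This round trip is the source of the constant $12=4\cdot 3$. Your direct argument---extract a WOT-convergent subnet of the bounded approximants $z_\lambda$ and use that norm balls are WOT-closed---actually works without any ``Christensen-type refinement'' and delivers the sharper constant $\ep$ rather than $12\ep$; the refinement you worry about is unnecessary. The paper's detour through supercommutators buys nothing here, though the machinery it sets up (Lemma~\ref{lem:near inclusions and supercommutators}) is reused elsewhere.
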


In the proof of this lemma we will take recourse to von Neumann algebras
\begin{equation}
	\caM^{\Theta} := \big( \caM \cup \{\Theta\} \big)''
\end{equation}
for arbitrary von Neumann subalgebras $\caM \subset \caA^{\vN}$ and $\Theta$ the grading unitary implementing $\theta$ on $\caB(\caH)$. They appear because the conditional expectations constructed in Appendix \ref{app:super vN} take values in such algebras. We first prove a near-inclusion property of such algebras.
\note{I would like to prove that $\Theta \not\in \caA^{\vN}$.}

\begin{lemma} \label{lem:property of Theta-adjoined}
Let $\caM, \caN \subset \caA^{\vN}$ be super von Neumann algebras such that $\caM \nsub{\ep} \caN^{\Theta}$, then $\caM \nsub{\ep} \caN$.
\end{lemma}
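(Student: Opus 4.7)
My plan is to use a crossed-product conditional expectation to project $\caN^\Theta$-approximators of elements of $\caM$ onto $\caN$. First, I note that because $\caN$ is a super von Neumann subalgebra, $\Theta$ normalises $\caN$: for any $n \in \caN$ one has $\Theta n \Theta^{-1} = \theta(n) \in \caN$. Combined with $\Theta^2 = \I$, this reduces every word in $\caN \cup \{\Theta\}$ to the form $n_0 + n_1 \Theta$ with $n_0, n_1 \in \caN$, and --- assuming the key fact $\Theta \notin \caA^{\vN}$ --- the resulting sum is direct. Passing to the weak closure, $\caN^\Theta$ is then realised as the von Neumann crossed product of $\caN$ by $\theta$, sitting inside the analogous crossed product $(\caA^{\vN})^{\Theta} := (\caA^{\vN} \cup \{\Theta\})''$.

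Second, this crossed-product structure supplies the norm-one conditional expectation
\[
E : (\caA^{\vN})^{\Theta} \longrightarrow \caA^{\vN}, \qquad E(a_0 + a_1 \Theta) = a_0,
\]
which by the same direct-sum decomposition restricts to a conditional expectation $\caN^\Theta \to \caN$.

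Third, I apply $E$ to the near inclusion. For $m \in \caM$, pick $y \in \caN^\Theta$ with $\norm{m - y} \leq \ep \norm{m}$. Since $\caM \subset \caA^{\vN}$, $E(m) = m$, and since $y \in \caN^\Theta$, $n := E(y) \in \caN$. Then
\[
\norm{m - n} = \norm{E(m - y)} \leq \norm{m - y} \leq \ep \norm{m},
\]
so $m \nin{\ep} \caN$, and $\caM \nsub{\ep} \caN$ follows.

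The hard part will be the direct-sum structure $\caN^\Theta = \caN \oplus \caN \Theta$, which hinges on $\Theta \notin \caA^{\vN}$ --- equivalently, on outerness of the grading automorphism $\theta$ on the hyperfinite $II_1$ factor $\caA^{\vN}$. The author flags precisely this question in a margin note preceding the lemma, and presumably handles it in the super von Neumann algebra appendix; once that fact is in hand the conditional-expectation argument above is routine.
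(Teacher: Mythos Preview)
Your proof is correct and essentially identical to the paper's: the paper defines an involutive $*$-automorphism $\mu$ on $(\caA^{\vN})^{\Theta}$ fixing $\caA^{\vN}$ pointwise and sending $\Theta \mapsto -\Theta$, and your conditional expectation $E$ is precisely the projection $x \mapsto x_{\mu+} = \tfrac{1}{2}(x + \mu(x))$ onto the $\mu$-even part. The remaining steps---contractivity, $E$ fixing $\caM \subset \caA^{\vN}$, and $E(\caN^{\Theta}) \subset \caN$---are carried out identically, and both arguments rest on the same direct-sum/outerness fact you correctly flag (which the paper passes over with ``one easily checks'').
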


\begin{proof}
	We fist define an involutive *-automorphism $\mu$ on $(\caA*{\vN})^{\Theta}$ by setting $\mu(a) = a$ for any $a \in \caA^{\vN}$ and $\mu(\Theta) = -\Theta$. Since $\caA^{\vN}$ is invariant under $\Ad_{\Theta}$, any element of $(\caA^{\vN})^{\Theta}$ is of the form $a_1 + a_2 \Theta$ with $a_1, a_2 \in \caA^{\vN}$. The involution $\mu$ is therefore defined on the whole of $(\caA^{\vN})^{\Theta}$, and using the fact that $\Theta$ supercommutes with all elements of $\caA^{\vN}$ one easily checks that $\mu$ is indeed a $*$-automorphism. An element $a \in (\caA^{\vN})^{\Theta}$ is $\mu$-even if $\mu(a) = a$ and $\mu$-odd if $\mu(a) = -a$. The $\mu$-even elements of $(\caA^{\vN})^{\Theta}$ are precisely the elements of $\caA^{\vN}$.

	If $\caM \subset \caA^{\vN}$ is a super von Neumann algebra, then it is closed under $\mu$. Any element $x \in \caM$ then decomposes into unique $\mu$-even and $\mu$-odd parts,
	\begin{equation}
		x_{\mu \pm} = \frac{1}{2} ( x \pm \mu(x) ).
	\end{equation}
	Note that if $\norm{x} \leq \ep$, then $\norm{x_{\mu +}}, \norm{x_{}\mu-} \leq \ep$ also. Indeed,
	\begin{equation}
		\norm{x_{\mu_{\pm}}} \leq \frac{1}{2} \left( \norm{x} \pm \norm{\mu(x)} \right) \leq \ep.
	\end{equation}

	Now, since $\caM \nsub{\ep} \caN^{\Theta}$ we have for each $x \in \caM$ an element $y \in \caN^{\Theta}$ such that $\norm{x - y} \leq \ep \norm{x}$. By te above, also the $\mu$-even part is bounded, \ie
	\begin{equation}
		\norm{x_{\mu+} - y_{\mu+}} \leq \ep \norm{x}.
	\end{equation}
	But $\caM \subset \caA^{\vN}$ is purely $\mu$-even, so $x_{\mu+} = x$, and we get $\norm{x - y_{\mu+}} \leq \ep \norm{x}$ with $y_{\mu+} \in \caN$ (because it is even). We conclude that $\caM \nsub{\ep} \caN$.
\end{proof}

\begin{proofof}[Lemma \ref{lem:vN extension}]
	(i) By Kaplansky's density theorem there is a net $x_i$ in $\caA_X$ converging strongly to $x$ such that $\norm{x_i} \leq \norm{x}$. Since
	\begin{equation}
		\al(\caA_X) \nsub{\ep} \caA_Y \subset \caA_Y^{\vN} \subset (\caA_{Y^c}^{\vN})^{\sharp}
	\end{equation}
	we get from Lemma \ref{lem:near inclusions and supercommutators} that
	\begin{equation}
		\norm{[x_i, c]_s} \leq 4 \ep \norm{x_i} \norm{c}
	\end{equation}
	for all $c \in \caA_{Y^c}^{\vN}$. Then using the lower semicontinuity of the operator norm and $\norm{x_i} \leq \norm{x}$ we get
	\begin{equation}
		\norm{[x, c]_s} \leq \sup_i \norm{[x_i, c]_s} \leq 4 \ep \norm{x} \norm{c}
	\end{equation}
	for all $x \in \al(\caA_X)^{\vN}$ and all $c \in \caA_{Y^c}^{\vN}$. Since all algebras concerned are $\theta$-hyperfinite and $\widetilde \caA_X^{\vN} \subset (\caA^{\vN})^{\Theta}$ for any $X \subset \Z$, the second part of Lemma \ref{lem:near inclusions and supercommutators} gives
	\begin{equation}
		\al(\caA_X)^{\vN} \nsub{12 \ep} ( \caA_{Y^c}^{\vN})^{\sharp} \cap (\caA^{\vN})^{\Theta} = (\caA_Y^{\vN})^{\Theta}.
	\end{equation}
	Finally, using Lemma \ref{lem:property of Theta-adjoined} and the fact that $\al(\caA_X)^{\vN} = \al(\caA_X^{\vN})$ we conclude that $\al(\caA_X^{\vN}) \nsub{12 \ep} \caA_Y^{\vN}$. This proves (i).

	(ii)  $(X^{(r)})^c$ is a disjoint union of at most two intervals $Y_1$, $Y_2$ and $Y_i^{(r)} \subset X^c$ so
	\begin{equation}
		\al \big( \caA^{\vN}_{Y_i} \big) \nsub{f(r)} \caA_{X^c}^{\vN}.
	\end{equation}
	Since all algebras concerned are $\theta$-hyperfinite and $[\al(\caA_{Y_1}^{\vN}), \al(\caA_{Y_2}^{\vN})]_s = 0$, applying Proposition \ref{prop:simultaneous near inclusions of theta-hyperfinite vN algebras} with $\caC = \caA^{\vN}$ gives
	\begin{equation}
		(\caA_{X^c}^{\vN})^{\sharp} \cap \caA^{\vN} \nsub{16 f(r)} \al(\caA_{(X^{(r)})^c}^{\vN})^{\sharp} \cap (\caA^{\vN})^{\Theta} 
	\end{equation}
	where we used $(\caA^{\vN} \cup \widetilde{\al(\caA_{Y_1}^{\vN})} \cup \widetilde{\al(\caA_{Y_2}^{\vN})}  )'' = (\caA^{\vN})^{\Theta}$. This then yields immediately
	\begin{equation}
		\caA_X^{\vN} \nsub{16 f(r)} \al( \caA_{X^{(r)}}^{\vN} )^{\Theta}.
	\end{equation}
	Using Lemma \ref{lem:property of Theta-adjoined} now yields the claim.

	(iii) Take $x \in \caA$. We must show that $\al(x), \al^{-1}(x) \in \caA$ also. If $x$ is strictly local, then $x \in \caA_X$ for some finite interval $X$ and since $\al(\caA_{X}^{\vN}) \nsub{f(r)} \caA_{X^{(r)}}^{\vN} = \caA_{X^{(r)}}$ for all $r$, there is a sequence of strictly local operators $y_r \in \caA_{X^{(r)}}$ converging to $\al(x)$ in norm. Hence $\al(x) \in \caA$. If $x$ is not strictly local, then let $\{x_i\}$ be a sequence of strictly local operators convering to $x$ in norm. Then $\al(x_i)$ is a sequence in $\caA$ converging in norm to $\al(x)$, so again $\al(x) \in \caA$. Using (ii), the same argument shows that $\al^{-1}(x) \in \caA$. Thus, $\al$ restricts to an even *-automorphism of $\caA$. That $\al|_{\caA}$ is quasi-local with $f$-tails now follows immediately from the assumption that $\al(\caA_X)^{\vN} \nsub{f(r)} \caA_{X^{(r)}}^{\vN}$ for all intervals $X$.

	(iv) By (i), $\al$ extends to an even automorphism of $\caA^{\vN}$ with $12f$-tails. By (ii), the inverse $\al^{-1}$ of this extension is an even automorphism with $192f$-tails. Finally, by (iii) $\al^{-1}$ restricts to $\caA$ as an even quasi-local automorphism with $192f$-tails.
\end{proofof}

\subsubsection{Approximation by QCAs}

We begin with graded equivariant versions of \cite{RWW2020}'s Lemmas 5.1 and 5.2.

\begin{lemma} \label{lem:localize to the right}
	There are universal constants $C, \ep_0 > 0$ such that if $\al$	is an even equivariant $\ep$-nearest neighbour automorphism of $\caA^{\vN}$ with $\ep \leq \ep_0$ and
	\begin{equation}
		\al \big( \caA_{\geq n}^{\vN} \big) \subset \caA_{\geq n-1}^{\vN}
	\end{equation}
	for some site $n \in \Z$, then there exists an even $G$-invariant unitary $u \in \caA_{\geq n-1}^{\vN}$ with $\norm{u - \I} \leq C \ep$ such that $\tilde \al = \Ad_{u} \circ \al$ satisfies
	\begin{align}
		\tilde \al \big( \caA_{\leq n-1}^{\vN} \big) &\subset \caA_{\leq n}^{\vN}, \label{eq:localize right 1} \\
		\tilde \al \big( \caA_{\geq n}^{\vN} \big) &\subset \caA_{\geq n-1}^{\vN}. \label{eq:localize right 2}
	\end{align}

	If $\al$ is quasi-local with $f$-tails \note{as an automorphism of $\caA^{\vN}$ or of $\caA$?} then $\tilde \al$ restricts to a quasi-local even equivariant automorphism of $\caA$ with $\caO(f)$-tails and
	\begin{equation} \label{eq:localize right 3}
		\norm{  (\al - \tilde \al)|_{\caA_{\leq n-r}^{\vN} \gotimes \caA_{\geq n + r + 1}^{\vN}   }  } = \caO \big( f(r) \big)
	\end{equation}
\end{lemma}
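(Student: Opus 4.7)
The plan is to correct the near-inclusion $\al(\caA_{\leq n-1}^{\vN}) \nsub{\ep} \caA_{\leq n}^{\vN}$ (which holds because $\al$ is $\ep$-nearest neighbor) into an exact inclusion by conjugating with an even, $G$-invariant unitary $u$ obtained from Theorem \ref{thm:near inclusions}. Choosing $u$ to lie in $\caA_{\geq n-1}^{\vN}$ will ensure that the exact inclusion $\al(\caA_{\geq n}^{\vN}) \subset \caA_{\geq n-1}^{\vN}$ is automatically preserved by $\Ad_u \circ \al$.

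First I would take graded commutants of $\al(\caA_{\geq n}^{\vN}) \subset \caA_{\geq n-1}^{\vN}$ to deduce $\al(\caA_{\leq n-1}^{\vN}) \supset \caA_{\leq n-2}^{\vN}$, which localizes the defect to the relative graded commutant of $\caA_{\leq n-2}^{\vN}$ inside $\caA_{\leq n}^{\vN}$, namely the finite-dimensional $\caC := \caA_{\{n-1,n\}}^{\vN}$. Setting $\caB := \al(\caA_{\leq n-1}^{\vN}) \cap \caA_{\geq n-1}^{\vN}$, both $\caB$ and $\caC$ are $\theta$- and $G$-hyperfinite subalgebras of $\caA_{\geq n-1}^{\vN}$. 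Applying the trace-preserving conditional expectation $\caA^{\vN} \to \caA_{\geq n-1}^{\vN}$ from Proposition \ref{prop:Araki's conditional expectations} to the near-inclusion $\al(\caA_{\leq n-1}^{\vN}) \nsub{\ep} \caA_{\leq n}^{\vN}$, and using the factorization $\caA_{\leq n}^{\vN} = \caA_{\leq n-2}^{\vN} \gotimes \caA_{\{n-1,n\}}^{\vN}$, I expect to obtain $\caB \nsub{\ep} \caC$. For $\ep$ small enough, Theorem \ref{thm:near inclusions} will then furnish an even, $G$-invariant unitary $u \in (\caB \cup \caC)'' \subset \caA_{\geq n-1}^{\vN}$ with $\Ad_u(\caB) \subset \caC$ and $\norm{u - \I} \leq C \ep$.

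Setting $\tilde\al := \Ad_u \circ \al$, equation \eqref{eq:localize right 2} is immediate, since $u \in \caA_{\geq n-1}^{\vN}$ and $\al(\caA_{\geq n}^{\vN}) \subset \caA_{\geq n-1}^{\vN}$. For \eqref{eq:localize right 1}, I will use that $u$ lies in $(\caA_{\leq n-2}^{\vN})^{\sharp}$ and is even, so $\Ad_u$ fixes $\caA_{\leq n-2}^{\vN}$ pointwise, while $\Ad_u(\caB) \subset \caC \subset \caA_{\leq n}^{\vN}$; since $\caA_{\leq n-2}^{\vN}$ together with $\caB$ generates $\al(\caA_{\leq n-1}^{\vN})$ by the previous paragraph, this yields the desired inclusion.

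For the quasi-local refinement, assuming $\al$ has $f$-tails, the task is to track how $u$ inherits approximate locality. Since $\caC$ is strictly supported on $\{n-1, n\}$ and $\caB$ inherits $\caO(f)$-tails from $\al$, the construction underlying Theorem \ref{thm:near inclusions} should produce $u$ that is $\caO(f(r))$-nearly in $\caA_{[n-1-r, n+r]}^{\vN}$ for every $r$. This approximate localization will then propagate to show that $\Ad_u$ preserves quasi-locality with $\caO(f)$-tails, and that for $x \in \caA_{\leq n-r}^{\vN} \gotimes \caA_{\geq n+r+1}^{\vN}$ the image $\al(x)$ graded-commutes with $u$ up to $\caO(f(r))$, yielding \eqref{eq:localize right 3}. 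The hard part will be precisely this tail-tracking: Theorem \ref{thm:near inclusions} a priori places $u$ only in $(\caB \cup \caC)''$, and extracting the required approximate local support for $u$ requires careful bookkeeping through the proof of the near-inclusion theorem, in parallel with the non-graded analysis in \cite{RWW2020}.
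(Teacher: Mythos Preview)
Your approach diverges from the paper's in a way that introduces a genuine gap. You reduce to a finite-dimensional near-inclusion $\caB \nsub{\ep} \caC$ with $\caB = \al(\caA_{\leq n-1}^{\vN}) \cap \caA_{\geq n-1}^{\vN}$, and then need the claim that $\caA_{\leq n-2}^{\vN}$ together with $\caB$ generates $\al(\caA_{\leq n-1}^{\vN})$. This is a splitting statement for the inclusion $\caA_{\leq n-2}^{\vN} \subset \al(\caA_{\leq n-1}^{\vN})$ of hyperfinite $II_1$ factors, and it does \emph{not} follow from the fact that $\caA_{\leq n-2}^{\vN}$ sits inside $\al(\caA_{\leq n-1}^{\vN})$ as a graded tensor factor of the ambient $\caA^{\vN}$. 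For a general subfactor $\caM \subset \caN$ one does not have $\caN = \caM \vee (\caM' \cap \caN)$; establishing it here would require a (graded) Ge--Kadison type splitting theorem, which you neither prove nor cite. Without this, you cannot conclude \eqref{eq:localize right 1} from $\Ad_u(\caB) \subset \caC$.

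The paper sidesteps this entirely by working on the other half-line. From $\al^{-1}$ being $\caO(\ep)$-nearest neighbour (Lemma~\ref{lem:vN extension}(ii)) one gets $\caA_{\geq n+1}^{\vN} \nsub{8\ep} \al(\caA_{\geq n}^{\vN})$, and Theorem~\ref{thm:near inclusions} applied to \emph{these} two $G'$-hyperfinite algebras yields an even $G$-invariant $u \in (\caA_{\geq n+1}^{\vN} \cup \al(\caA_{\geq n}^{\vN}))'' \subset \caA_{\geq n-1}^{\vN}$ with $\caA_{\geq n+1}^{\vN} \subset \tilde\al(\caA_{\geq n}^{\vN})$. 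Then \eqref{eq:localize right 1} is obtained in one stroke by taking supercommutants inside $\caA^{\vN}$, with no factorization needed. For the tail bound \eqref{eq:localize right 3}, the paper does not attempt to track approximate support of $u$ as you propose; instead it uses directly the ``moreover'' clauses of Theorem~\ref{thm:near inclusions}, which bound $\norm{u^* z u - z}$ whenever $z$ either nearly super-commutes with $\caA_{\geq n+1}^{\vN} \cup \al(\caA_{\geq n}^{\vN})$ (this handles $z \in \al(\caA_{\leq n-r}^{\vN})$) or is $\delta$-nearly in both algebras (this handles $z \in \al(\caA_{\geq n+r+1}^{\vN})$). Combined with Lemma~\ref{lem:local errors control global errors}, this gives \eqref{eq:localize right 3} without ever localizing $u$.
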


\begin{proof}
	The automorphism $\al^{-1}$ is $8 \ep$-nearest neighbour by Lemma \ref{lem:vN extension} (ii) so $\al^{-1}\big( \caA^{\vN}_{\geq n+1} \big) \nsub{8 \ep} \caA_{\geq n}^{\vN}$ whence $\caA_{\geq n+1}^{\vN} \nsub{8 \ep} \al \big(  \caA_{\geq n}^{\vN} \big)$. If $\ep < 1/64$ Theorem \ref{thm:near inclusions} provides an even equivariant unitary $u \in \left(\caA_{\geq n+1}^{\vN} \cup \al \big( \caA_{\geq n}^{\vN}  \big) \right)''$ such that
	\begin{equation}
		u^* \caA_{\geq n+1}^{\vN} u \subset \caA_{\geq n}^{\vN}
	\end{equation}
	and $\norm{u - \I} \leq 96 \ep$. Since $\al \big( \caA_{\geq n}^{\vN} \big) \subset \caA_{\geq n-1}^{\vN}$ we have in fact $u \in \caA_{\geq n-1}^{\vN}$. Define $\tilde \al := \Ad_{u} \circ \al$, then $\caA_{\geq n+1}^{\vN} \subset \tilde \al \big( \caA_{\geq n}^{\vN} \big)$ and we obtain \eqref{eq:localize right 1} by taking supercommutants. Further, by assumption we have $\al \big( \caA_{\geq n}^{\vN} \big) \subset \caA_{\geq n-1}^{\vN}$ and we've seen that $u \in \caA_{\geq n-1}^{\vN}$, so
	\begin{equation}
		\tilde \al \big( \caA_{\geq n}^{\vN} \big) = u \al \big( \caA_{\geq n}^{\vN} \big) u^* \subset u \caA_{\geq n-1}^{\vN} u^* = \caA_{\geq n-1}^{\vN},
	\end{equation}
	hence \eqref{eq:localize right 2} is also satisfied.

	Suppose $\al$ is quasi-local with $f$-tails. Lemma \ref{lem:local errors control global errors} says that in order to show \eqref{eq:localize right 3} it is sufficient to show $\norm{(\al - \tilde \al)|_{\caA_{\leq n-r}^{\vN}}} = \caO(f(r))$ and $\norm{(\al - \tilde \al)|_{\caA_{\geq n+r+1}^{\vN}}} = \caO(f(r))$.
	To see the former, note that since $\al \big( \caA_{\leq n-r}^{\vN} \big) \nsub{f(r)} \caA_{\geq n}^{\vN}$ we have for each $x \in \al \big( \caA_{\leq n-r}^{\vN} \big)$ that $\norm{[x, y]_s} \leq 4 f(r)$ for all $y \in \caA_{\geq n+1}^{\vN} \cup \al \big( \caA_{\geq n}^{\vN} \big)$. Theorem \ref{thm:near inclusions} then implies that $\norm{u^* x u - x} = \caO(f(r)) \norm{x}$, from which $\norm{(\al - \tilde \al)|_{\caA_{\leq n-r}^{\vN}}} = \caO(f(r))$ immediately follows.

	To see the latter, namely that $\norm{(\al - \tilde \al)|_{\caA_{\geq n+r+1}^{\vN}}} = \caO(f(r))$, note that since $\al \big( \caA_{\geq n+r+1}^{\vN} \big) \nsub{f(r)} \caA_{\geq n+1}$ and $\al \big( \caA_{\geq n+r+1}^{\vN} \big) \subset \al \big( \caA_{\geq n}^{\vN} \big)$, so for each $x \in \al \big(  \caA_{\geq n+r+1}^{\vN} \big)$ we have that $x \nin{f(r)} \caA_{\geq n+1}^{\vN}$ and $x \nin \al \big(  \caA_{\geq n} \big)$. It then follows from Theorem \ref{thm:near inclusions} that $\norm{u^* x u - x} = \caO(f(r)) \norm{x}$, and $\norm{(\al - \tilde \al)|_{\caA_{\geq n+r+1}^{\vN}}} = \caO(f(r))$ immediately follows from this.

	Finally, by Lemma \ref{lem:vN extension} (iii), the even equivariant automorphism $\tilde \al$ of $\caA^{\vN}$ restricts to a quasi local even equivariant automorphism of $\caA$ with $\caO(f(r))$-tails.
\end{proof}

\begin{lemma} \label{lem:localize to the left}
	There exist universal constants $C, \ep_0 > 0$ such that if $\al$ is an even equivariant $\ep$-nearest neighbour automorphism of $\caA^{\vN}$ with $\ep \leq \ep_0$ and
	\begin{equation}
		\al \big( \caA_{\leq n}^{\vN} \big) \subset \caA^{\vN}_{\leq n+1}
	\end{equation}
	for some site $n \in \Z$, then there is an even $G$-invariant unitary $u \in \caA_{\geq n+1}^{\vN}$ with $\norm{u - \I} \leq C \ep$ and such that $\tilde \al = \al \circ \Ad_u$ satisfies
	\begin{align}
		\tilde \al \big( \caA^{\vN}_{\geq n+3}  \big) &\subset \caA_{\geq n+2}^{\vN}, \label{eq:localize left 1} \\
		\tilde \al \big( \caA_{\leq n}^{\vN} \big) &\subset \caA_{\leq n+1}^{\vN}. \label{eq:localize left 2}
	\end{align}

	If $\al$ is quasi-local with $f$-tails \note{as an automorphism of $\caA$ or $\caA^{\vN}$?} then $\tilde \al$ restricts to a quasi-local even equivariant automorphism of $\caA$ with $\caO(f)$-tails and
	\begin{equation} \label{eq:localize left 3}
		\norm{ (\al - \tilde \al)|_{ \caA_{\leq n}^{\vN} \gotimes \caA_{\geq n+r+2}^{\vN} } } = \caO(f(r)).
	\end{equation}
\end{lemma}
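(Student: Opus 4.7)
The proof mirrors that of Lemma \ref{lem:localize to the right}, swapping left and right and composing the corrective unitary on the right of $\al$ rather than on the left. First, from the hypothesis $\al(\caA_{\leq n}^{\vN}) \subset \caA_{\leq n+1}^{\vN}$ I pass to supercommutants and apply $\al^{-1}$ to obtain $\al^{-1}(\caA_{\geq n+2}^{\vN}) \subset \caA_{\geq n+1}^{\vN}$. Combining the $\ep$-nearest neighbour assumption $\al(\caA_{\geq n+3}^{\vN}) \nsub{\ep} \caA_{\geq n+2}^{\vN}$ with the fact that $\al^{-1}$ is an isometry yields
\[
\caA_{\geq n+3}^{\vN} \nsub{\ep} \al^{-1}(\caA_{\geq n+2}^{\vN}),
\]
with both algebras contained in $\caA_{\geq n+1}^{\vN}$.

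For $\ep$ below a suitable threshold, Theorem \ref{thm:near inclusions} then produces an even, $G$-invariant unitary $u \in (\caA_{\geq n+3}^{\vN} \cup \al^{-1}(\caA_{\geq n+2}^{\vN}))'' \subset \caA_{\geq n+1}^{\vN}$ with $\norm{u - \I} \leq C \ep$ and (after possibly replacing $u$ by $u^*$) $u \caA_{\geq n+3}^{\vN} u^* \subset \al^{-1}(\caA_{\geq n+2}^{\vN})$. With $\tilde\al := \al \circ \Ad_u$, Eq.~\eqref{eq:localize left 1} is immediate. For Eq.~\eqref{eq:localize left 2}, since $u$ is even and belongs to $\caA_{\geq n+1}^{\vN}$, the graded commutation between $\caA_{\geq n+1}^{\vN}$ and $\caA_{\leq n}^{\vN}$ reduces to ordinary commutation of $u$ with all of $\caA_{\leq n}^{\vN}$; hence $\Ad_u$ fixes $\caA_{\leq n}^{\vN}$ pointwise and $\tilde\al|_{\caA_{\leq n}^{\vN}} = \al|_{\caA_{\leq n}^{\vN}}$ maps into $\caA_{\leq n+1}^{\vN}$ by hypothesis.

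For \eqref{eq:localize left 3}, Lemma \ref{lem:local errors control global errors} reduces the estimate to the two one-sided bounds $\norm{(\al - \tilde\al)|_{\caA_{\leq n}^{\vN}}}$ and $\norm{(\al - \tilde\al)|_{\caA_{\geq n+r+2}^{\vN}}}$. The first vanishes by the previous paragraph. For the second, $(\al - \tilde\al)(x) = \al(x - u x u^*)$, so since $\al$ is an isometry it suffices to show $\norm{u x u^* - x} = \caO(f(r))\norm{x}$ for $x \in \caA_{\geq n+r+2}^{\vN}$. The key observation, paralleling the analogous step in Lemma \ref{lem:localize to the right}, is that such an $x$ is close to both algebras used to build $u$: trivially $x \in \caA_{\geq n+3}^{\vN}$, and by the $\caO(f)$-tails of $\al^{-1}$ (Lemma \ref{lem:vN extension}(ii), (iv)), $x$ is $\caO(f(r))$-near $\al^{-1}(\caA_{\geq n+2}^{\vN})$. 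The quantitative part of Theorem \ref{thm:near inclusions} then yields $\norm{u x u^* - x} = \caO(f(r))\norm{x}$, and $\tilde\al$ restricts to a quasi-local automorphism of $\caA$ with $\caO(f)$-tails by Lemma \ref{lem:vN extension}(iii), exactly as in the right-localize case.

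The main subtlety is that the corrective unitary must be composed on the right of $\al$ so as not to disturb the already-localized left side, and all near-inclusion arguments must be recast in terms of $\al^{-1}$ rather than $\al$; no genuinely new technical ingredient beyond those developed for Lemma \ref{lem:localize to the right} is required.
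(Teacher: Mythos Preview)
Your proof is correct and follows essentially the same route as the paper: obtain the near inclusion $\caA_{\geq n+3}^{\vN} \nsub{\ep} \al^{-1}(\caA_{\geq n+2}^{\vN})$, apply Theorem~\ref{thm:near inclusions} to produce the even $G$-invariant unitary $u \in \caA_{\geq n+1}^{\vN}$, and verify the three conclusions exactly as you do. One small remark: for the bound on $\caA_{\geq n+r+2}^{\vN}$ you invoke the $\caO(f)$-tails of $\al^{-1}$ via Lemma~\ref{lem:vN extension}, but the more direct argument (which the paper uses) is simply to apply the isometry $\al^{-1}$ to the near inclusion $\al(\caA_{\geq n+r+2}^{\vN}) \nsub{f(r)} \caA_{\geq n+2}^{\vN}$ coming from the $f$-tails of $\al$ itself; this gives $x \nin{f(r)} \al^{-1}(\caA_{\geq n+2}^{\vN})$ with no extra constants.
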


\begin{proof}
	From $\al \big( \caA_{\geq n+3}^{\vN} \nsub{\ep} \caA_{\geq n+2}^{\vN}$ it follows that $\caA_{\geq n+3}^{\vN} \nsub{\ep} \al^{-1} \big( \caA_{\geq n+2}^{\vN} \big)$ so if $\ep < 1/8$, Theorem \ref{thm:near inclusions} provides an even $G$-invariant unitary $u \in \big( \caA_{\geq n+3}^{\vN} \cup \al^{-1}(\caA_{\geq n+2}^{\vN} ) \big)''$ with $\norm{u - \I} \leq 12 \ep$ and such that
	\begin{equation}
		u \caA_{\geq n+3}^{\vN} u^* \subset \al^{-1} \big( \caA_{\geq n+2}^{\vN} \big).
	\end{equation}

	The automorphism $\tilde \al = \al \circ \Ad_u$ then satisfies \eqref{eq:localize left 1}.

	Note further that since by assumption $\al \big( \caA_{\leq n} \big) \subset \caA_{\leq n+1}^{\vN}$ we have $\caA_{\leq n}^{\vN} \subset \al^{-1} \big( \caA_{\leq n+1}^{\vN} \big)$. Taking supercommutants yields $\al^{-1}\big( \caA_{\geq n+2}^{\vN} \big) \subset \caA_{\geq n+1}^{\vN}$ so $u \in \big( \caA_{\geq n+3}^{\vN} \cup \al^{-1}(\caA_{\geq n+2}^{\vN} ) \big)'' \subset \caA_{\geq n+1}^{\vN}$ as required. In particular, the even element $u$ commutes with $\caA_{\leq n}^{\vN}$ so
	\begin{equation}
		\tilde \al \big( \caA_{\leq n}^{\vN} \big) = \al \big( u \caA_{\leq n}^{\vN} u^* \big) = \al \big( \caA_{\leq n}^{\vN} \big) \subset \caA_{\leq n+1}^{\vN},
	\end{equation}
	which shows \eqref{eq:localize left 2}.

	If $\al$ is quasi-local with $f$-tails then $\al \big( \caA_{\geq n+r+2}^{\vN} \big) \nsub{f(r)} \caA_{\geq n+2}^{\vN}$ so if $x \in \caA_{\geq n+r+2}^{\vN}$ then $x \nin{f(r)} \al^{-1} \big( \al^{-1}(\caA_{\geq n+2}^{\vN}) \big)$ and trivially (if $r \geq 1$), $x \in \caA_{\geq n+3}^{\vN}$. It then follows from Theorem \ref{thm:near inclusions} that $\norm{u x u^* - x} = \caO(f(r)) \norm{x}$ for all such $x$, hence $\norm{(\al - \tilde \al)|_{\caA_{\geq n+r+2}^{\vN}} } = \caO(f(r))$. Moreover, since $u \in \caA_{\geq n+1}^{\vN}$ we have $\norm{(\al - \tilde \al)|_{\caA_{\leq n}^{\vN}}}$. The required \eqref{eq:localize left 3} then follows from Lemma \ref{lem:local errors control global errors}.

	Finally, by Lemma \ref{lem:vN extension} (iii), the even equivariant automorphism $\tilde \al$ of $\caA^{\vN}$ restricts to a quasi local even equivariant automorphism of $\caA$ with $\caO(f(r))$-tails.
\end{proof}

The following Proposition is a graded equivariant version of \cite{RWW2020}'s Proposition 5.3. It provides for any $\ep$-nearest neighbour automorphism an approximating automorphism that acts as a QCA near a given site.

\begin{proposition} \label{prop:localize locally}
	There are universal constants $C_1, \ep_1 > 0$ such that for any even equivariant $\ep$-nearest neighbour automorphism $\al$ of $\caA^{\vN}$ with $\ep \leq \ep_1$ and for any $n \in \Z$, there exists even equivariant automorphisms $\al_n$ of $\caA^{\vN}$ such that for $k = 0, 1, 2, 3$,
	\begin{align*}
		\al_n \big( \caA^{\vN}_{\leq 2n+2k-1} \big) &\subset \caA_{\leq 2n+2k}, \\
		\al_n \big( \caA_{\geq 2n+2k}^{\vN} \big) &\subset \caA_{\geq 2n+2k-1}^{\vN}, \\
		\norm{\al - \al_n} \leq C_1 \ep.
	\end{align*}

	In particular, denoting $\caB_m = \caA_{2m} \gotimes \caA_{2m+1}$ and $\caC_m = \caA_{2m-1} \gotimes \caA_{2m}$ we have
	\begin{align*}
		\al_n \big( \caB_m \big) \subset \caC_m \gotimes \caC_{m+1} \quad \text{for} \,\,\, m = n, n+1, n+2,
		\al_n^{-1} \big( \caC_{m} \big) \subset \caB_{m-1} \gotimes \caB_m \quad \text{for} \,\,\, m = n+1, n+2.
	\end{align*}

	If $\al$ is a quasi-local automorphism with $f$-tails then $\al_n$ is also quasi-local and
	\begin{equation}
		\norm{(\al - \al_n)|_{\caA_{\leq 2n-r-2}^{\vN} \gotimes \caA_{\geq 2n+r+6}^{\vN} }} = \caO(f(r)).
	\end{equation}
\end{proposition}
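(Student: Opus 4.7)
The plan is to build $\al_n$ from $\al$ through a bounded sequence of conjugations by even $G$-invariant unitaries close to the identity, establishing the four pairs of strict half-line inclusions indexed by $k = 0, 1, 2, 3$ one $k$ at a time. Since each conjugation will be by a unitary $\caO(\ep)$-close to the identity, the resulting perturbation $\norm{\al - \al_n}$ is then $\caO(\ep)$, giving the claimed bound.

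For the initial step ($k=0$), neither of Lemmas \ref{lem:localize to the right}, \ref{lem:localize to the left} applies to $\al$ directly, since both require a strict half-line inclusion while $\ep$-nearest neighbourness yields only a near inclusion. The first step is therefore to apply Theorem \ref{thm:near inclusions} to $\al \big( \caA^{\vN}_{\geq 2n} \big) \nsub{\ep} \caA^{\vN}_{\geq 2n-1}$, producing an even $G$-invariant unitary $v$ with $\norm{v - \I} = \caO(\ep)$ such that $\al' := \Ad_v \circ \al$ satisfies $\al'(\caA^{\vN}_{\geq 2n}) \subset \caA^{\vN}_{\geq 2n-1}$. Lemma \ref{lem:localize to the right} applied to $\al'$ at $n' = 2n$ then produces $\al^{(0)}$ satisfying both strict inclusions for $k=0$.

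The induction step $k \to k+1$ combines the two lemmas: given $\al^{(k)}$ satisfying all strict inclusions for $0, \dots, k$, Lemma \ref{lem:localize to the left} at $n' = 2n+2k-1$ (whose hypothesis is the left inclusion at $k$) yields the right inclusion at $k+1$, via right-conjugation by a unitary in $\caA^{\vN}_{\geq 2n+2k}$; then Lemma \ref{lem:localize to the right} at $n' = 2n+2k+2$ (whose hypothesis is the right inclusion at $k+1$ just produced) yields the left inclusion at $k+1$, via left-conjugation by a unitary in $\caA^{\vN}_{\geq 2n+2k+1}$. Three such rounds starting from $\al^{(0)}$ give $\al_n := \al^{(3)}$. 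Preservation of all previously established strict inclusions at each substep follows from explicit support control: any unitary $u \in \caA^{\vN}_{\geq m}$ commutes with $\caA^{\vN}_{\leq m-1}$ and leaves each $\caA^{\vN}_{\geq m'}$ with $m' \geq m$ invariant, which is exactly enough to preserve the inclusions enforced at smaller $k$. The QCA-type inclusions $\al_n(\caB_m) \subset \caC_m \gotimes \caC_{m+1}$ and $\al_n^{-1}(\caC_m) \subset \caB_{m-1} \gotimes \caB_m$ stated at the end of the proposition follow directly from the four pairs of strict half-line inclusions.

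For $\al$ quasi-local with $f$-tails, each invocation of Lemma \ref{lem:localize to the right} or \ref{lem:localize to the left} additionally delivers a local tail estimate of the form $\norm{(\al^{(j)} - \al^{(j-1)})|_{\caA^{\vN}_{Y}}} = \caO(f(r))$ whenever $Y$ lies at distance at least $r$ from the local window of that modification. Composing these estimates across the $\caO(1)$ steps, and absorbing the fixed window size into a constant shift in $r$, yields the claimed $\norm{(\al - \al_n)|_{\caA_{\leq 2n-r-2}^{\vN} \gotimes \caA_{\geq 2n+r+6}^{\vN}}} = \caO(f(r))$. The main technical point is precisely the coordination of unitary supports across these iterative steps so that no later conjugation destroys an earlier strict inclusion; this is handled automatically by the half-line support statements in Lemmas \ref{lem:localize to the right} and \ref{lem:localize to the left}.
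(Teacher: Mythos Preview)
Your proposal is correct and follows essentially the same route as the paper: an initial application of Theorem~\ref{thm:near inclusions} to bootstrap the first strict right inclusion, followed by alternating applications of Lemmas~\ref{lem:localize to the right} and~\ref{lem:localize to the left} to establish the remaining pairs of half-line inclusions, with preservation of earlier inclusions coming from the half-line support of the conjugating unitaries. Your organization into an explicit induction over $k$ is in fact slightly cleaner than the paper's step-by-step enumeration, and your justification of why earlier inclusions survive each conjugation (via the evenness and support of the unitaries) is more explicit than the paper's, which simply asserts it.
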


\begin{proof}
	We construct a sequence of eight automorphisms $\al_n^{i}$, $i = 1, \cdots, 8$ with $\al_n = \al_n^{8}$.

	Since
	\begin{equation}
		\al \big(  \caA_{\geq 2n}^{\vN} \big) \nsub{\ep} \caA_{\geq 2n-1}^{\vN} 
	\end{equation}
	Theorem \ref{thm:near inclusions} provides an even $G$-invariant unitary $u_1 \in \big( \al(\caA_{\geq 2n}^{\vN}) \cup \caA_{\geq 2n-1}^{\vN} \big)''$ with $\norm{u_1 - \I} \leq 12 \ep$ and such that
	\begin{equation}
		u_1 \al \big( \caA_{\geq 2n}^{\vN} \big) u_1^* \subset \caA_{\geq 2n-1}^{\vN}.
	\end{equation}
	We set $\al_n^{(1)} := \Ad_{u_1} \circ \al$, so $\al_n^{(1)}$ is an even equivariant automorphism that satisfies
	\begin{align*}
		\al_n^{(1)} \big( \caA_{\geq 2n}^{\vN} \big) &\subset \caA_{\geq 2n-1}^{\vN}, \\
		\norm{\al - \al_n^{(1)}} &= \caO(\ep).
	\end{align*}
	In particular, the automorphism $\al_n^{(1)}$ is $\caO(\ep)$-nearest neighbour and we can apply Lemma \ref{lem:localize to the right} to obtain an even equivariant automorphism $\al_n^{(2)} = \Ad_{u_2} \circ \al_n^{(1)}$, where $u_2$ is an even $G$-invaraint unitary in $\caA_{\geq 2n-1}^{\vN}$, such that
	\begin{align*}
		\al_n^{(2)} \big( \caA_{\geq 2n}^{\vN} \big) &\subset \caA_{\geq 2n-1}^{\vN}, \\
		\al_n^{(2)} \big( \caA_{\leq 2n-1}^{\vN} \big) &\subset \caA_{\leq 2n}^{\vN}, \\
		\norm{\al_n^{(1)} - \al_n^{(2)}} &= \caO(\ep).
	\end{align*}
	In particular, $\al_n^{(2)}$ is again a $\caO(\ep)$-nearest neighbour automorphism and we can apply Lemma \ref{lem:localize to the left} to obtain an even equivariant automorphism $\al_n^{(3)} = \al \circ \Ad_{u_3}$, where $u_3$ is an even $G$-invariant unitary element of $\caA_{\geq 2n}^{\vN}$, such that
	\begin{align*}
		\al_n^{(3)} \big( \caA_{\geq 2n+2}^{\vN} \big) &\subset \caA_{\geq 2n+1}^{\vN}, \\
		\norm{\al_n^{(2)} - \al_n^{(3)}} &= \caO(\ep)
	\end{align*}
	and such that $\al_n^{(3)}$ still statisfies the above locality properties satisfied by $\al_n^{(2)}$.

	Continuing in this manner, we next apply Lemma \ref{lem:localize to the right} on site $2n+2$ to obtain $\al_n^{(4)}$, then Lemma \ref{lem:localize to the left} on site $2n+1$ to obtain $\al_n^{(5)}$, then Lemma \ref{lem:localize to the right} on site $2n+4$ to obtain $\al_n^{(6)}$. again Lemma \ref{lem:localize to the left} on site $2n+3$ to obtain $\al_n^{(7)}$ and finally Lemma \ref{lem:localize to the right} on site $2n+5$ to obtain $\al_n^{(8)} =: \al_n$. The even equivariant automorphism $
	al_n$ then satisfies the required locality properties. In order for the Lemmas \ref{lem:localize to the left} and \ref{lem:localize to the right} to apply at each step, $\ep$ has to be sufficiently small, thus determining the universal constant $\ep_1$.

	If $\al$ is moreover quasi-local with $f$-tails then then, for $x \in \al \big(  \caA_{\geq 2n+r-1}^{\vN} \big)$ we have
	\begin{equation}
		x \nin{f(r)} \al \big( \caA_{\geq 2n}^{\vN} \big) \quad \text{and} \quad x \nin{f(r)} \caA_{\geq 2n-1}^{\vN}
	\end{equation}
	so Theorem \ref{thm:near inclusions} implies that the unitary $u_1$ satisfies $\norm{u_1 x u_1^* - x} \leq 46f(r) \norm{x}$ for all such $x$. This implies
	\begin{equation}
		\norm{\al - \al_n^{(1)}|_{\caA_{\geq 2n+r-1}^{\vN}}} = \caO(f(r)).
	\end{equation}

	Since $\al \big( \caA_{\leq 2n-r-2}^{\vN} \big) \nsub{f(r)} \caA_{\leq 2n - 1}^{\vN}$ we have for all $x \in \al \big( \caA_{\leq 2n -r -2}^{\vN} \big)$ and all $y \in \al \big( \caA_{\geq 2n}^{\vN} \big) \cup \caA_{\geq 2n-1}^{\vN}$ that $\norm{[x_{\pm}, y]_s} = \caO(f(r))$ so Theorem \ref{thm:near inclusions} implies that $\norm{u_1 x u_1^* - x} = \caO(f(r))$ for all such $x$. This in turn implies that
	\begin{equation}
		\norm{(\al - \al_n^{(1)})|_{\caA_{\leq 2n-r-2}^{\vN}}} = \caO(f(r)).
	\end{equation}

	These two norm-bounds together with Lemma \ref{lem:local errors control global errors} imply that
	\begin{equation}
		\norm{(\al - \al_n^{(1)})|_{ \caA_{\leq 2n-r-2}^{\vN} \gotimes \caA_{\geq 2n+r-1}^{\vN} }} = \caO(f{r)}.
	\end{equation}

	The unitaries $u_2, \cdots, u_8$ obtained from Lemmas \ref{lem:localize to the left} and \ref{lem:localize to the right} are such that
	\begin{equation}
		\norm{(\al_n^{(i)} - \al_n^{(i-1)})|_{ \caA_{\leq 2n - r + l_i}^{\vN} \gotimes \caA_{\geq 2n + r + r_i}^{\vN} }} = \caO(f{r)}
	\end{equation}
	for $i = 2, \cdots, 8$ and $l_2 = 0$, $l_3 = -1$, $l_4 = 2$, $l_5 = 1$, $l_6 = 4$, $l_7 = 3$ and $l_8 = 5$ while $r_2 = 1$, $r_3 = 1$, $r_4 = 3$, $r_5 = 3$, $r_6 = 5$, $r_7 = 5$ and $r_8 = 6$. Combining these bounds gives
	\begin{equation}
		\norm{(\al - \al_n)|_{\caA_{\leq 2n - r - 1}^{\vN} \gotimes \caA_{\geq 2n + r + 6}}} = \caO(f(r))
	\end{equation}
	thus concluding the proof.
\end{proof}

Next we state and prove a graded equivariant version of \cite{RWW2020}'s Proposition 5.4.
\begin{proposition} \label{prop:QCA approximation of ep-nn}
	There is a universal constant $\ep_2 > 0$ such that if $\al$ is an even equivariant $\ep$-nearest neighbour automorphism of $\caA$ with $\ep \leq \ep_2$, then then there is an even equivariant QCA $\beta$ of range 2 such that
	\begin{equation}
		\norm{(\al - \beta)|_{\caA_X}} = \caO(\ep \abs{X})
	\end{equation}
	for any $X \ssubset \Z$.
\end{proposition}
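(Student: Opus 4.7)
The plan is to construct $\beta$ by gluing together the local QCA approximations provided by Proposition \ref{prop:localize locally}, using the stability result Lemma \ref{lem:QCA stability} as the patching tool. First, I would fix $\ep_2 \leq \ep_1$ small enough that Proposition \ref{prop:localize locally} and Lemma \ref{lem:QCA stability} both apply, and apply Proposition \ref{prop:localize locally} at every $n \in \Z$ to produce even equivariant automorphisms $\al_n$ of $\caA^{\vN}$ with $\norm{\al - \al_n} \leq C_1 \ep$, each of which satisfies the hypotheses of Theorem \ref{thm:overlap factorization} in the window $m \in \{n, n+1, n+2\}$. The factorization theorem then produces central simple super-$G$ subalgebras $\caL_m^{(n)}, \caR_m^{(n)}$ in that window, yielding $\caC_m = \caR_{m-1}^{(n)} \gotimes \caL_m^{(n)}$ and $\al_n(\caB_m) = \caL_m^{(n)} \gotimes \caR_m^{(n)}$.

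The central difficulty is to pick a globally consistent family $\{\caL_n, \caR_n\}_{n \in \Z}$ with $\caC_n = \caR_{n-1} \gotimes \caL_n$ everywhere. Since any two adjacent $\al_{n-1}$ and $\al_n$ lie within $\caO(\ep)$ of $\al$ on $\caB_n$, and hence within $\caO(\ep)$ of each other, Lemma \ref{lem:QCA stability} (for $\ep$ sufficiently small) supplies an even $G$-invariant unitary $w_n \in \caC_n$ with $\norm{w_n - \I} = \caO(\ep)$ such that $\Ad_{w_n}$ carries $\caL_n^{(n-1)}$ onto $\caL_n^{(n)}$ as central simple $G$-systems; a parallel argument deals with the $\caR$-overlaps. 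I would use these unitaries iteratively to absorb corrections into the $\al_n$'s, producing a modified family $\tilde \al_n$ (still satisfying $\norm{(\al - \tilde \al_n)|_{\caB_n}} = \caO(\ep)$) whose overlap algebras $\caL_n := \caL_n^{(n)}$ and $\caR_n := \caR_n^{(n)}$ give a coherent global factorization $\caC_n = \caR_{n-1} \gotimes \caL_n$ as central simple $G$-subalgebras at every $n$.

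With coherent overlap data in hand, I would define $\beta$ by setting $\beta|_{\caB_n}$ equal to the restriction of $\tilde \al_n$, which is an even equivariant graded $*$-isomorphism from $\caB_n$ onto $\caL_n \gotimes \caR_n \subset \caC_n \gotimes \caC_{n+1}$. Because the global factorizations $\caC_n = \caR_{n-1} \gotimes \caL_n$ force the images $\caL_n \gotimes \caR_n$ and $\caL_m \gotimes \caR_m$ to supercommute pairwise inside $\caA^{\vN}$, the local pieces glue uniquely into an even equivariant $*$-automorphism of $\caA_{\loc}$, which extends by continuity to an automorphism $\beta$ of $\caA$. By construction $\beta$ is nearest-neighbour in the coarse-grained chain and hence a QCA of range $2$. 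The norm estimate $\norm{(\al - \beta)|_{\caA_X}} = \caO(\ep \abs{X})$ then follows by expressing any $x \in \caA_X$ as a product of at most $\caO(\abs{X})$ factors supported in the blocks $\caB_n$ meeting $X$ and applying the local bound $\norm{(\al - \tilde \al_n)|_{\caB_n}} = \caO(\ep)$ once per block (together with $\norm{w_n - \I} = \caO(\ep)$ for the patching unitaries that sit in those blocks).

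The main obstacle is the alignment step. The patching unitaries $w_n$ must be chosen simultaneously even and $G$-invariant, or the resulting $\beta$ would fail to be even or equivariant; this is precisely what the graded equivariant near-inclusion Theorem \ref{thm:near inclusions}, on which Lemma \ref{lem:QCA stability} rests, guarantees. The delicate technical point is to verify that after absorbing all the corrections, the overlap factorizations $\caC_n = \caR_{n-1} \gotimes \caL_n$ hold simultaneously at every $n$: a correction $w_n$ at site $n$ could a priori spoil the factorization at a neighbouring site, and one must exploit both the strict locality $w_n \in \caC_n$ and the smallness $\norm{w_n - \I} = \caO(\ep)$ to rule this out and conclude that the family of local isomorphisms assembles into a genuine $*$-automorphism of $\caA$ rather than a merely compatible family of local maps.
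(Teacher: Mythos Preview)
Your strategy matches the paper's: apply Proposition~\ref{prop:localize locally} at every $n$, use Theorem~\ref{thm:overlap factorization} to get overlap algebras $\caL_m^{(n)}, \caR_m^{(n)}$, and then invoke Lemma~\ref{lem:QCA stability} to align the overlap algebras coming from adjacent $\al_n$'s via small even $G$-invariant unitaries in $\caC_n$. However, two points in your execution are off.

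First, the index bookkeeping in your ``coherent family'' is not quite right. From Proposition~\ref{prop:localize locally}, the automorphism $\al_n$ only has the hypotheses of Theorem~\ref{thm:overlap factorization} available in the window $m \in \{n, n+1, n+2\}$, so the factorization $\caC_k = \caR_{k-1}^{(m)} \gotimes \caL_k^{(m)}$ is known only for $m = k-1$ and $m = k-2$. In particular there is no factorization $\caC_n = \caR_{n-1}^{(n)} \gotimes \caL_n^{(n)}$, so your choice $\caL_n := \caL_n^{(n)}$, $\caR_n := \caR_n^{(n)}$ cannot give the global $\caC_n = \caR_{n-1} \gotimes \caL_n$ you want. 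The paper sidesteps the ``iterative absorption'' entirely: it does \emph{not} build a single coherent family $\{\caL_n, \caR_n\}$. Instead it sets $\beta_n := \Ad_{u_n} \circ \al_{n-1}|_{\caB_n}$, where $u_n \in \caC_n$ is chosen (via Lemma~\ref{lem:QCA stability}) so that $u_n \caL_n^{(n-1)} u_n^* = \caL_n^{(n-2)}$. Since $u_n$ is an even element of $\caC_n$ it commutes with $\caR_n^{(n-1)} \subset \caC_{n+1}$, and one gets $\beta_n(\caB_n) = \caL_n^{(n-2)} \gotimes \caR_n^{(n-1)}$. The point of the shifted superscripts is that $\beta_{n-1}(\caB_{n-1})$ contains $\caR_{n-1}^{(n-2)}$ while $\beta_n(\caB_n)$ contains $\caL_n^{(n-2)}$; these sit inside the \emph{same} factorization $\caC_n = \caR_{n-1}^{(n-2)} \gotimes \caL_n^{(n-2)}$ coming from $\al_{n-2}$, so supercommutation and surjectivity follow immediately. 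No iteration is needed, and the ``delicate technical point'' you flag simply evaporates.

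Second, your norm estimate ``express $x \in \caA_X$ as a product of $\caO(\abs{X})$ factors and telescope'' is not a proof: a general $x$ is a linear combination of such products, and the operator norm of $(\al - \beta)$ on $\caA_X$ is not controlled by its values on simple tensors. The paper instead invokes Lemma~\ref{lem:local errors control global errors}, which upgrades the single-site bound $\norm{(\al - \beta)|_{\caA_n}} = \caO(\ep)$ to $\norm{(\al - \beta)|_{\caA_X}} = \caO(\ep \abs{X})$ via Proposition~\ref{prop:making homomorphisms inner}. You should cite that lemma rather than the informal product argument.
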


\begin{proof}
	By Lemma \ref{lem:vN extension} (i), the automorphism $\al$ extends to an even equivariant $12 \ep$-nearest neighbour automorphism of the von Neumann algebra $\caA^{\vN}$ which we continue to denote by $\al$. Let $C_1, \ep_1$ be the universal constants from Proposition \ref{prop:localize locally} and let $\ep \leq \ep_2 := \min \left\lbrace \frac{\ep_1}{12}, \frac{1}{48 C_1} \right\rbrace$. Recall the notation $\caB_n = \caA_{2n} \gotimes \caA_{2n+1}$ and $\caC_n = \caA_{2n-1} \gotimes \caA_{2n}$. We now apply Proposition \ref{prop:localize locally} to obtain for each $m \in \Z$ an automorphism $\al_m$ that satisfies
	\begin{align*}
		\al_m \big( \caB_n \big) \subset \caC_n \gotimes \caC_{n+1} \quad &\text{for} \,\,\, n = m, m+1, m+2, \\
		\al_m^{-1} \big( \caC_n \big) \subset \caB_{n-1} \gotimes \caB_n \quad &\text{for} \,\,\, n = m+1, m+2.
	\end{align*}
	Theorem \ref{thm:overlap factorization} then implies that the overlap algebras
	\begin{align*}
		\caL_n^{(m)} = \al_m \big( \caB_n \big) \cap \caC_n, \\
		\caR_{n-1}^{(m)} = \al_m \big( \caB_{n-1} \big) \cap \caC_n
	\end{align*}
	satisfy
	\begin{equation} \label{eq:factorization of C_n}
		\caC_n = \caR_{n-1}^{(m)} \gotimes \caL_n^{{m}}
	\end{equation}
	for $m = n-1, n-2$ and
	\begin{equation} \label{eq:factorization of B_n}
		\caB_n = \al_{n-1}^{-1} \big( \caL_n^{(n-1)} \big) \, \gotimes \, \al_{n-1}^{-1} \big( \caR_n^{(n-1)} \big)
	\end{equation}
	for all $n$.
	
	From $\norm{\al_{n-1} - \al_{n-2}} \leq \norm{\al_{n-1} - \al} + \norm{\al_{n-2} - \al} \leq 2 C_1 \ep$ (using Proposition \ref{prop:localize locally}), the locality properties of $\al_{n-1}$ and $\al_{n-2}$, and Lemma \ref{lem:QCA stability} we get an even $G$-invariant unitary $u_n \in \caC_n$ such that $u_n \caL_n^{(n-1)} u_n^* = \caL_n^{(n-2)}$ with $\norm{u - \I} \leq 72 C_1 \ep$. We now put
	\begin{equation}
		\beta_n : \caB_n \rightarrow \caC_n \gotimes \caC_{n+1} : x \mapsto \beta_n(x) = u_n \al_{n-1}(x) u_n^*.
	\end{equation}
	these are injective, equivariant super *-homomorphisms, and using Eq. \eqref{eq:factorization of B_n} we get
	\begin{equation} \label{eq:factorization of beta_n B_n}
		\beta_n \big( \caB_n \big) = u_n \left( \caL_n^{(n-1)} \gotimes \caR_n^{(n-1)} \right) u_n^* = \caL_n^{(n-2)} \gotimes \caR_n^{(n-1)},
	\end{equation}
	where we used that $u_n$ is an even element of $\caC_n$ and therefore commutes with $\caR_n^{(n-1)}$. We conclude that $\beta_b \big( \caB_n \big)$ and $\beta_n \big( \caB_m \big)$ supercommute whenever $n \neq m$, so the simultaneous action of the $\beta_n$ defines an injective, equivariant super *-homomorphism on $\caA$. This homomorphism is moreover surjective, indeed $\beta_{n-1}\big( \caB_{n-1} \big) \gotimes \beta_{n}\big( \caB_n \big) \supset \caR_{n-1}^{(n-2)} \gotimes \caL_{n}^{(n-2)} = \caC_n$, so the image of $\beta$ cotains each $\caC_n$, and therefore the entire quanrum chain $\caA$. We conclude that $\beta$ is an even equivariant automorphism of $\caA$. By construction, $\beta$ is a QCA of range 2.

	Finally, if $x$ is supported on a single site, then $x \in \caB_n$ for some $n$, so
	\begin{align*}
		\norm{\beta(x) - \al(x)} &= \norm{\beta_n(x) - \al(x)} = \norm{\Ad_{u_n}\big( \al_{n-1}(x) \big) - \al(x)  } \\
					 &\leq \norm{\Ad_{u_n}\big( \al_{n-1}(x) - \al(x) \big)} + \norm{\Ad_{u_n}\big( \al(x) \big) - \al(x)} \\
					 & \leq \norm{\Ad_{u_n}} + \norm{(\al_{n-1} - \al)} \leq 2\norm{u_n - \I} + \norm{\al_{n-1} - \al} \\
					 &= \caO(\ep).
	\end{align*}
	The final claim then follows from Lemma \ref{lem:local errors control global errors}
\end{proof}

We finally prove the following graded equivariant extension of \cite{RWW2020}'s Theorem 5.5:

\begin{proposition} \label{prop:QCA approximation}
	For any quasi-local even equivariant automorphism $\al$ with $f(n)$-tails on a quantum chain $\caA$ there is a sequence of even equivariant QCAs $\{\beta_j\}_{j = 1}^{\infty}$ of range $2j$ converging strongly to $\al$ such that for any $X \ssubset \Z$,
	\begin{equation} \label{eq:QCA approximation}
		\norm{(\al - \beta_j)|_{\caA_X}} = \caO \left( f(j)  \min \left\lbrace \abs{X}, \ceil{\frac{\diam(X)}{j}}  \right\rbrace \right).
	\end{equation}
\end{proposition}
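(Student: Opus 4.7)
The plan is to coarse-grain the chain into blocks of size $j$ so that the quasi-local automorphism $\al$ becomes an $\caO(f(j))$-nearest neighbour automorphism on the resulting chain, and then invoke Proposition \ref{prop:QCA approximation of ep-nn} to produce a range-2 QCA approximation on the coarse-grained chain, which translates to range $2j$ on the original.

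Concretely, fix $j \geq 1$ and let $\widetilde{\caA}$ be the coarse-grained quantum chain whose on-site algebras $\widetilde{\caA}_n = \caA_{I_n}$ correspond to the intervals $I_n = \{jn+1, \dots, jn+j\}$. Because $\al$ has $f$-tails, for any coarse site $\{n\}$ we have $I_n^{(j)} \subseteq I_{n-1} \cup I_n \cup I_{n+1}$, so $\al(\widetilde{\caA}_n) \nsub{f(j)} \widetilde{\caA}_{n-1} \gotimes \widetilde{\caA}_n \gotimes \widetilde{\caA}_{n+1}$; thus $\al$ is an even equivariant $f(j)$-nearest neighbour automorphism of $\widetilde{\caA}$. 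For $j$ large enough that $f(j) \leq \ep_2$ (the universal constant in Proposition \ref{prop:QCA approximation of ep-nn}), this proposition produces an even equivariant QCA $\beta_j$ of range 2 on $\widetilde{\caA}$, hence of range $2j$ on $\caA$, with
\begin{equation}
    \norm{(\al - \beta_j)|_{\widetilde{\caA}_Y}} = \caO\!\big(f(j) |Y|\big)
\end{equation}
for any finite $Y \ssubset \Z$ of coarse-grained sites. For the finitely many small $j$ where this does not apply we simply set $\beta_j$ to be any fixed even equivariant QCA (e.g. $\beta_j = \id$), which only affects the constant in the $\caO$-notation.

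Given a finite $X \ssubset \Z$, let $Y(X) = \{n \in \Z : I_n \cap X \neq \emptyset\}$ be the set of coarse blocks touched by $X$. Since consecutive blocks are at distance $1$, $|Y(X)| \leq \lceil \diam(X)/j \rceil + 1 = \caO(\lceil \diam(X)/j \rceil)$, and $\caA_X \subseteq \widetilde{\caA}_{Y(X)}$, so the previous bound gives $\norm{(\al - \beta_j)|_{\caA_X}} = \caO(f(j) \lceil \diam(X)/j \rceil)$. For the other half of the minimum, apply the same bound to a single coarse site to obtain $\norm{(\al - \beta_j)|_{\caA_{\{n\}}}} \leq \norm{(\al - \beta_j)|_{\widetilde{\caA}_{\lfloor n/j \rfloor}}} = \caO(f(j))$ for every $n \in \Z$, and then invoke Lemma \ref{lem:local errors control global errors} (the sub-additivity of local errors over tensor products) to obtain $\norm{(\al - \beta_j)|_{\caA_X}} = \caO(f(j) |X|)$. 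Taking the smaller of the two yields \eqref{eq:QCA approximation}. Strong convergence $\beta_j \to \al$ follows because each local operator in $\caA_{\loc}$ is supported in some fixed finite $X$, on which $\norm{(\al-\beta_j)|_{\caA_X}} \to 0$, combined with the uniform norm bound $\norm{\beta_j - \al} \leq 2$ and density of $\caA_{\loc}$ in $\caA$.

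The main obstacle is the justification that the coarse-graining indeed produces an $f(j)$-nearest neighbour automorphism of $\widetilde{\caA}$; this is a straightforward unwinding of Definition \ref{def:near inclusion} but requires the $f$-tail hypothesis to be applied carefully to each coarse block, so that Proposition \ref{prop:QCA approximation of ep-nn} applies uniformly across all $n \in \Z$. Beyond that point the argument is essentially bookkeeping, transferring the linear-in-$|Y|$ error estimate from the coarse chain back to the original chain in two different ways to produce the two arguments of the minimum.
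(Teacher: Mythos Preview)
Your proposal is correct and follows essentially the same approach as the paper: coarse-grain into blocks of size $j$, apply Proposition \ref{prop:QCA approximation of ep-nn} once $f(j)\leq\ep_2$, and translate the resulting linear-in-number-of-blocks estimate back to the original chain. The only minor difference is in deriving the $|X|$ half of the minimum: you bound single-site errors and then invoke Lemma \ref{lem:local errors control global errors}, whereas the paper simply observes that any finite $X$ is contained in at most $|X|$ coarse blocks and applies the block estimate directly; both routes are valid and yield the same bound.
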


\begin{proof}
	If we define new sites by grouping intervals of length $j$ together, we have that $\al$ is $\ep_j$-nearest neighbour with $\ep_j = \caO(f(j))$ with respect to the coarse grained local structure. For $j$ sufficiently large, Proposition \ref{prop:QCA approximation of ep-nn} applies to yield an even equivariant QCA $\beta_j$ wich is of range 2 in the coarse grained lattice, and therefore of range 2j in the original lattice. This QCA moreover satisfies $\norm{(\al - \beta)|_{\caA_X}} = \caO \big(f(j)m \big)$ for any region $X$ consisting of $m$ coarse-grained sites. Since any finite set $X$ can be covered by at most $\abs{X}$ coarse grained sites, we get
	\begin{equation}
		\norm{(\al - \beta_j)|_{\caA_X}} = \caO \big(  f(j) \abs{X} \big)
	\end{equation}
	for any $X \ssubset \Z$. Moreover, a finite set $X$ of diameter $\diam(X)$ can be covered by at most $\ceil{\frac{\diam(X)}{j}}$ coarse grained sites, yielding
	\begin{equation}
		\norm{(\al - \beta_j)|_{\caA_X}} = \caO \left(  f(j) \ceil{\frac{\diam(X)}{j}} \right).
	\end{equation}
	Together, these bounds yield Eq. \eqref{eq:QCA approximation}.

	For $j$ too small for Proposition \ref{prop:QCA approximation of ep-nn} to apply, let $\beta_j$ be an arbitrary even equivariant QCA (\eg $\id$), then the bound holds with $\norm{\al - \beta_j} \leq 2$.

	To see that the sequence $\{\beta_j\}$ converges strongy to $\al$, take $x \in \caA$ and let $\{ x_n \}$ be a sequence of operators with $x_n$ supported on an interval $I_n$ of length $n$, converging in norm to $x$. Then
	\begin{align*}
		\limsup_{j \uparrow \infty} \, \norm{\al(x) - \beta_j(x)} \, &\leq \limsup_{j \uparrow \infty} \, \left( \norm{\al(x) - \al(x_n)} + \norm{\al(x_n) - \beta_j(x_n)} + \norm{\beta_j(x_n) - \beta_j(x)} \right) \\
									     & \leq 2 \norm{x - x_n} + \limsup_{j \uparrow \infty} \, \norm{\al(x_n) - \beta_j(x_n)} = 2 \norm{x - x_n}.
	\end{align*}
	This inequality holds for all $n$, so we coclude that $\limsup_{j \uparrow \infty} \, \norm{\al(x) - \beta_j(x)} = 0$ for any $x \in \caA$, \ie $\beta_j \rightarrow \al$ strongly.
\end{proof}

\subsection{Index for quasi-local automorphisms}

\begin{definition} \label{def:index for quasi-local automorophisms}
	The index of a quasi-local even equivariant automorphism $\al$ is defined by
	\begin{equation}
		\ind(\al) := \lim_{j \uparrow \infty} \ind(\beta_j) \in (\Q \cup \sqrt{2}\Q) \times \Hom(G, \Z_2) \times H^2(G, U(1))
	\end{equation}
	where $\{\beta_j\}$ is any sequence of even equivariant QCAs approximating $\al$ as in Eq. \eqref{eq:QCA approximation}.
\end{definition}

The existence of such an approximating sequence is guaranteed by Proposition \ref{prop:QCA approximation}. The existence of the limit and its independence of the approximating sequence of QCAs is shown in the following proposition, which extends part of Theorem 5.7 of \cite{RWW2020} to the graded equivariant case.

\begin{proposition} \label{prop:index for quasi-local automorphisms}
	Let $\al$ be a quasi-local even equivariant automorphism with $f$-tails acting on a quantum chain $\caA$. Let $\{\beta_j\}_{j = 1}^{\infty}$ be a sequence of even equivariant QCAs approximating $\al$ as in Eq. \eqref{eq:QCA approximation}. Then
	\begin{enumerate} [label=(\roman*)]
		\item $\ind(\beta_j)$ is constant for $j \geq j_0$ for some $j_0$ depending only on $f$. In particular, the limit $\,\, \lim_{j \uparrow \infty} \ind(\beta_j)$ exists. Moreover, this limit does not depend on the approximating sequence $\{\beta_j\}$.

		\item There is a constant $n_1$ depending only on $f$ such that the following holds. Let $\al'$ be a quasi-local even equivariant automorphism with $f(n)$-tails. If $\norm{(\al - \al')|_{\caA_X}} \leq 1/48$ for any interval $X \subset \Z$ of length greater than or equal to $n_1$, then $\ind(\al) = \ind(\al')$.
	\end{enumerate}
\end{proposition}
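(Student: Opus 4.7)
The plan for both parts is to reduce to the QCA-stability result, Lemma \ref{lem:QCA stability}, by working in a suitably coarse-grained chain. Recall that lemma says two even equivariant nearest-neighbour QCAs differing by less than $1/24$ on a single $\caB_n$ block already have the same index. The crucial ingredient from Proposition \ref{prop:QCA approximation} is the bound $\norm{(\al - \beta_j)|_{\caA_X}} = \caO(f(j) \lceil \diam(X)/j \rceil)$: on regions of diameter proportional to $j$, the approximation error is of order $f(j)$ and thus vanishes as $j \to \infty$.

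For (i), I would compare consecutive approximants $\beta_j$ and $\beta_{j+1}$. Coarse-grain the chain into blocks of size $L = 2(j+1)$, so both QCAs become nearest-neighbour in the new structure. A pair of coarse-grained sites $\caB_n$ has diameter $2L = 4(j+1)$, so Proposition \ref{prop:QCA approximation} gives $\norm{(\al - \beta_j)|_{\caB_n}} = \caO(f(j))$ and the analogous bound for $\beta_{j+1}$. The triangle inequality then yields $\norm{(\beta_j - \beta_{j+1})|_{\caB_n}} = \caO(f(j))$, which falls below $1/24$ once $j \geq j_0$ for some $j_0$ depending only on $f$. Lemma \ref{lem:QCA stability}, applied in the coarse-grained chain, gives $\ind(\beta_j) = \ind(\beta_{j+1})$ for all such $j$, so the sequence of indices stabilises. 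Independence of the approximating sequence is the same argument: given two approximating sequences $\{\beta_j\}$ and $\{\beta'_j\}$, compare $\beta_j$ with $\beta'_j$ in a chain coarse-grained by blocks of size $2j$ to conclude $\ind(\beta_j) = \ind(\beta'_j)$ for $j$ large.

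For (ii), fix approximating sequences $\{\beta_j\}$ and $\{\beta'_j\}$ for $\al$ and $\al'$ respectively. Coarse-grain by blocks of size $L = 2j$, so both $\beta_j$ and $\beta'_j$ are nearest-neighbour and every $\caB_n$ is an interval of length $4j$ in the original chain. The three-term triangle inequality
\[
\norm{(\beta_j - \beta'_j)|_{\caB_n}} \leq \norm{(\al - \beta_j)|_{\caB_n}} + \norm{(\al - \al')|_{\caB_n}} + \norm{(\al' - \beta'_j)|_{\caB_n}}
\]
bounds the outer terms by $\caO(f(j))$ exactly as in (i). Setting $n_1 := 4 j_0$ with $j_0$ chosen large enough that the outer bound is strictly less than $1/48$, the hypothesis $\norm{(\al - \al')|_{\caB_n}} \leq 1/48$ applies whenever $j \geq j_0$ (since then $|\caB_n| = 4j \geq n_1$), making the total strictly below $1/24$. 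Lemma \ref{lem:QCA stability} then yields $\ind(\beta_j) = \ind(\beta'_j)$ for all such $j$, and passing to the limit gives $\ind(\al) = \ind(\al')$.

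The main obstacle is the simultaneous tuning of the coarse-graining block size $L$: it must be large enough that the approximants become nearest-neighbour (forcing $L \geq 2j$), small enough that the approximation bound on a pair of blocks stays $\caO(f(j))$ rather than blowing up (so $L$ comparable to $j$), and, for (ii), large enough that a $\caB_n$-block exceeds the threshold $n_1$. The choice $L \asymp j$ satisfies all three conditions at once, and the resulting constants $j_0, n_1$ depend only on $f$, as required.
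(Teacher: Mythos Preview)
Your proposal is correct and follows essentially the same approach as the paper's proof: compare consecutive (respectively, corresponding) approximants on a pair of coarse-grained blocks of size proportional to $j$, use the $\caO(f(j))$ bound from Proposition~\ref{prop:QCA approximation}, and invoke Lemma~\ref{lem:QCA stability}. The only cosmetic difference is that the paper deduces independence of the approximating sequence as a special case of (ii) rather than arguing it directly, and uses a slightly different block size in (i); your choice $L=2(j+1)$ is in fact the correct one to make both $\beta_j$ and $\beta_{j+1}$ nearest-neighbour.
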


\begin{proof}
	The QCAs $\beta_{j}$ and $\beta_{j+1}$ each satisfy the bound \eqref{eq:QCA approximation} so for any finite interval $X \ssubset \Z$ we have
	\begin{equation}
		\norm{(\beta_j - \beta_{j+1})|_{\caA_X}} \leq \norm{ (\beta_j - \al)|_{\caA_X} } + \norm{(\beta_{j+1} - \al)|_{\caA_X}} = \caO \left( f(j) \ceil{\frac{\diam{X}}{j}} \right).
	\end{equation}
	Both QCAs are in nearest neighbour form w.r.t. a coarse graining in blocks of size $2(j-1)$. Take $X$ to be the union of two neighbouring blocks, then $\norm{(\beta_j - \beta_{j+1})|_{\caA_X}} = \caO(f(j))$ and it follows from Lemma \ref{lem:QCA stability} that $\ind(\beta_j) = \beta_{j+1}$ for all $j \geq j_0$ where $j_0$ is the smallest value of $j$ for which $f(j_0) < 1/24$. This shows that any such sequence $\{\beta_j\}$ is eventually constant and therefore has a limit. The fact that this limit does not depend on the particular approximating sequence $\{\beta_j\}$ actually follows from (ii), so let us now prove (ii).

	Let $\{\beta'_j\}$ be a sequence of even equivariant QCAs approximating $\al'$ as in Eq. \eqref{eq:QCA approximation}. We've just shown that both sequences $\ind(\beta_j)$ and $\ind(\beta'_j)$ are constant for $j \geq j_0$, to show that they are in fact equal it is sufficient to show $\ind(\beta_j) = \ind(\beta'_j)$ for any $j \geq j_0$. For any finite interval $X \ssubset \Z$ we have from \eqref{eq:QCA approximation} that
	\begin{align*}
		\norm{(\beta_j - \beta'_j)|_{\caA_X}} &\leq \norm{(\beta_j - \al)|_{\caA_X}} + \norm{(\al - \al')|_{\caA_X}} + \norm{(\beta'_{j} - \al')|_{\caA_X}} \\
						      & \leq \caO \left(f(j) \frac{\diam(X)}{j} \right) + \norm{(\al - \al')|_{\caA_X}}.
	\end{align*}
	The QCAs $\beta_j$ and $\beta'_j$ are nearest neighbour w.r.t. a coarse graining in blocks of size $2j$, so Lemma \ref{lem:QCA stability} implies $\ind(\beta_j) = \ind(\beta'_{j})$ provided that $\norm{(\beta_j - \beta'_j)|_{\caA_X}} \leq 1/24$ for an interval $X$ consisting of two such coarse grained blocks, \ie for any interval of length $4j$. Fors such $X$, there is a $j_1 \geq j_0$ such that the quantity $\caO \left(f(j) \frac{\diam(X)}{j} \right) = \caO((f(j)))$ is less than $1/48$. If we take $n_1 = 4j_1$ then the assumption moreover gives $\norm{(\al - \al')|_{\caA_X} } \leq 1/48$ for any interval $X$ such that $\abs{X} \geq n_1$. We then get the required $\norm{(\beta_j - \beta'_j)|_{\caA_X}} \leq 1/24$ for all $j \geq j_1$, so the sequences $\ind(\beta_j)$ and $\ind(\beta'_j)$ have the same limit.
\end{proof}

\subsection{Basic properties of the index for quasi-local automorphisms}

\begin{lemma} \label{lem:inverse, composition, and stacking of quasi-local automorphisms}
	If $\al$ and $\beta$ are quasi-local even equivariant automorphisms then
	\begin{equation}
		\ind(\al \gotimes \beta) = \ind(\al) \cdot \ind(\beta).
	\end{equation}
	If $\al$ and $\beta$ are defined on the same quantum chain then
	\begin{equation}
		\ind(\al \circ \beta) = \ind(\al) \cdot \ind(\beta).
	\end{equation}
	In particular, using that $\ind(\id) = (1, 0, \boe)$ is trivial, we have
	\begin{equation}
		\ind(\al^{-1}) = \ind(\al)^{-1}.
	\end{equation}
\end{lemma}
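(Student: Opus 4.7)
The strategy is to transfer the corresponding QCA identities (Proposition \ref{prop:QCA composition and stacking}) through the limit defining the index of quasi-local automorphisms (Definition \ref{def:index for quasi-local automorophisms}). For both the stacking and composition identities the recipe is the same: use Proposition \ref{prop:QCA approximation} to fix approximating sequences $\{\beta_j\}$, $\{\gamma_j\}$ of even equivariant QCAs of range $2j$ for $\al$ and $\beta$; build the naturally combined sequence on the left-hand side; and check that the combined sequence still approximates the relevant quasi-local automorphism well enough that Proposition \ref{prop:index for quasi-local automorphisms}(i)--(ii) lets us equate indices in the limit.

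For the stacking identity, the combined sequence is $\{\beta_j \gotimes \gamma_j\}$, a sequence of even equivariant QCAs of range $2j$ on $\caA \gotimes \caA'$. A triangle inequality on elementary tensors shows that it satisfies the approximation bound \eqref{eq:QCA approximation} for $\al \gotimes \beta$ with the same tail function $f$ (up to a larger implicit constant). Proposition \ref{prop:QCA composition and stacking} gives $\ind(\beta_j \gotimes \gamma_j) = \ind(\beta_j) \cdot \ind(\gamma_j)$, and Proposition \ref{prop:index for quasi-local automorphisms}(i) makes all three sequences eventually constant, so the identity passes to the limit.

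For the composition identity, the combined sequence is $\{\beta_j \circ \gamma_j\}$, a sequence of even equivariant QCAs of range at most $4j$. The error splits as
\begin{equation*}
(\al \circ \beta)(x) - (\beta_j \circ \gamma_j)(x) = (\al - \beta_j)(\beta(x)) + \beta_j\big((\beta - \gamma_j)(x)\big).
\end{equation*}
For $x$ supported on a finite interval $X$, the second term is bounded directly by \eqref{eq:QCA approximation} for $\{\gamma_j\}$. For the first, quasi-locality of $\beta$ with $f$-tails lets us $\caO(f(r))\norm{x}$-approximate $\beta(x)$ by an element of $\caA_{X^{(r)}}$, and then \eqref{eq:QCA approximation} for $\{\beta_j\}$ on $\caA_{X^{(r)}}$ controls the remainder. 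Choosing $r$ as a mild function of $j$ (e.g.\ $r = j$) yields an approximation rate sufficient to apply Proposition \ref{prop:index for quasi-local automorphisms}(ii), giving $\ind(\al \circ \beta) = \lim_j \ind(\beta_j \circ \gamma_j)$. Proposition \ref{prop:QCA composition and stacking} then gives $\ind(\beta_j \circ \gamma_j) = \ind(\beta_j) \cdot \ind(\gamma_j)$ and another application of Proposition \ref{prop:index for quasi-local automorphisms}(i) passes to the limit.

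For the inverse identity, $\al^{-1}$ is a quasi-local even equivariant automorphism by Lemma \ref{lem:vN extension}(iv) (equivariance is immediate from the commutation $\rho^g \circ \al = \al \circ \rho^g$), so $\ind(\al^{-1})$ is defined. The identity is a range-$0$ QCA for which all overlap algebras equal on-site algebras, so $\ind(\id) = (1, 0, \boe)$ already as a QCA index, and by construction this agrees with its quasi-local index. Applying the composition rule just established to $\al \circ \al^{-1} = \id$ and using that the index target is an abelian group (Lemma \ref{lem:abelian group}) yields $\ind(\al^{-1}) = \ind(\al)^{-1}$.

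The main obstacle is the composition step: unlike the stacking case, $\beta$ spreads supports, so the crude tensor-product triangle inequality is not available and one must carefully balance the tail $f(r)$ of $\beta$ against the QCA approximation rate $f(j)\ceil{\diam(X)/j}$ to meet the hypothesis of Proposition \ref{prop:index for quasi-local automorphisms}(ii) on sufficiently long intervals.
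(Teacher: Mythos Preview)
Your proposal is correct and follows essentially the same route as the paper: pick approximating QCA sequences via Proposition \ref{prop:QCA approximation}, combine them ($\beta_j \gotimes \gamma_j$ for stacking, $\beta_j \circ \gamma_j$ for composition), invoke the QCA identities of Proposition \ref{prop:QCA composition and stacking}, and pass to the limit via Proposition \ref{prop:index for quasi-local automorphisms}. The paper is in fact terser than you are in the composition step---it simply asserts that $\{\al_j \circ \beta_j\}$ approximates $\al \circ \beta$ in the sense of \eqref{eq:QCA approximation} (with tail $f(j) = \min\{f_1(j_1)+f_2(j_2): j_1+j_2=j\}$) and leaves the verification to the reader; your explicit error splitting and the choice $r=j$ (which keeps $\diam(X^{(r)})/j$ comparable to $\diam(X)/j$) is exactly the kind of argument needed to fill that in.
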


\begin{proof}
	Suppose $\al$ acts on a quantum chain $\caA_1$ and is quasi-local with $f_1$-tails, while $\beta$ acts on a quantum chain $\caA_2$ and is quasi-local with $f_2$-tails. Then $\al \gotimes \beta$ acts on $\caA_1 \gotimes \caA_2$ and is quasi-local with $f$-tails where $f(j) = \max \lbrace f_1(j), f_2(j) \rbrace$.

	Let $\{\al_j\}$ and $\{\beta_j\}$ be sequences of even equivariant QCAs approximating $\al$ resp. $\beta$ as in Eq. \eqref{eq:QCA approximation}. By Proposition \ref{prop:index for quasi-local automorphisms} (i), the sequences $\{\ind(\al_j)\}$ and $\{\ind(\beta_j)\}$ are eventually constant and equal by definition to $\ind(\al)$ and $\ind(\beta)$ respectively.

	By Proposition \ref{prop:QCA composition and stacking}, the QCAs $\al_j \gotimes \beta_j$ have index $\ind(\al_j \gotimes \beta_j) = \ind(\al_j) \cdot \ind(\beta_j)$. Moreover, for any finite set $X \ssubset \Z$ we have
	\begin{equation}
		\norm{\big((\al_j \gotimes \beta_j) - \al \gotimes \beta \big)|_{(\caA_1 \gotimes \caA_2)_X}} \leq \norm{(\al_j - \al)|_{(\caA_1)_X}} + \norm{(\beta_j - \beta)|_{(\caA_2)_X}} = \caO \left( f(j) \min \left\lbrace  \abs{X}, \frac{\diam(X)}{j} \right\rbrace \right),
	\end{equation}
	\ie the sequence of even equivariant QCAs $\{ \al_j \gotimes \beta_j \}$ approximates $\al \gotimes \beta$ in the sense of Eq. \eqref{eq:QCA approximation}. It then follows from Proposition \ref{prop:index for quasi-local automorphisms} (i) that the sequence $\{ \ind(\al_j \gotimes \beta_j) \}$ is eventually equal to $\ind(\al \gotimes \beta)$. Combining the above findings, we get for large enough $j$ that
	\begin{equation}
		\ind(\al \gotimes \beta) = \ind(\al_j \gotimes \beta_j) = \ind(\al_j) \cdot \ind(\beta_j) = \ind(\al) \cdot \ind(\beta),
	\end{equation}
	as required.

	If $\al$ and $\beta$ are defined on the same quantum chain $\caA$, then $\al \circ \beta$ is a quasi-local automorphism on $\caA$ with $f$-tails where $f(j) = \min \{ f_1(j_1) + f_2(j_2) \, : \, j_1 + j_2 = j \}$. One then shows that the sequence of QCAs $\{\al_j \circ \beta_j\}$ approximates $\al \circ \beta$ in the sense of Eq. \ref{eq:QCA approximation}, and the desired $\ind(\al \circ \beta) = \ind(\al) \cdot \ind(\beta)$ follows in the same way as above, using that Proposition \ref{prop:QCA composition and stacking} implies $\ind(\al_j \circ \beta_j) = \ind(\al_j) \cdot \ind(\beta_j)$.
\end{proof}

\subsection{Completeness} \label{sec:quasi-local completeness}

We begin with a lemma that says that an even equivariant QCA that is `locally small' is $G$-equivalent to the identity.
\begin{lemma} \label{lem:small QCA is G-equivalent to id}
	Let $\beta$ be an even equivariant QCA of range $R$ on a quantum chain $\caA$. Suppose there is $\ep < 1/145$ such that $\norm{\beta(x) - x} \leq \ep\norm{x}$ for all $x \in \caA$ supported on an interval of length at most $2R$. Then $\beta$ is $G$-equivalent to the identity through a path of even equivariant QCAs of range $3R$.
	
	Moreover, this path may be taken to be generated by a time-dependent Hamiltonian interaction $[0, 1] \ni t \mapsto H_t$ such that there are parititions $\{I^{(1)}_i\}$ and $\{ I^{(2)}_i\}$ of $\Z$ with blocks of size at most $2R$ such that for $t \in [0, 1/2]$ we have $H_t(I^{(1)}_i) = h^{(1)}_i$ and for $t \in (1/2, 1]$ we have $H_t(I^{(2)}_i) = h^{(2)}_i$ for even $G$-invariant self-adjoint elements $h^{(1)}_i \in \caA_{I^{(1)}_i}$ and $h^{(2)}_i \in \caA_{I^{(2)}_i}$ satisfying $\norm{h^{(1)}_i}, \norm{h^{(2)}_i} \leq 18 \ep$.
\end{lemma}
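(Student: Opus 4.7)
My plan is to write $\beta$ as a composition of two depth-$1$ circuits of near-identity unitaries on the interleaved partitions $\{\caC_n\}$ and $\{\caB_n\}$, and then to interpolate each circuit by a time-independent Hamiltonian evolution.

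After coarse-graining so that $\beta$ is nearest-neighbour, with blocks $\caC_n = \caA_{2n-1} \gotimes \caA_{2n}$ and $\caB_n = \caA_{2n} \gotimes \caA_{2n+1}$ of size $2R$ in the original lattice, I let $\caL_n, \caR_n$ be the overlap algebras of $\beta$ from Theorem~\ref{thm:overlap factorization}. The local smallness hypothesis gives $\norm{(\beta-\id)|_{\caB_n}} \leq \ep$, so Lemma~\ref{lem:QCA stability} applied with $\al_1 = \beta$ and $\al_2 = \id$ (whose left overlap algebra at $n$ is $\caA_{2n}$) yields, for each $n$, an even $G$-invariant unitary $u_n \in \caC_n$ with $\norm{u_n - \I} \leq 36\ep$ such that $\Ad_{u_n}(\caL_n) = \caA_{2n}$. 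Because $\Ad_{u_n}$ is a super-*-automorphism of $\caC_n = \caR_{n-1} \gotimes \caL_n = \caA_{2n-1} \gotimes \caA_{2n}$ preserving supercommutants, it also maps $\caR_{n-1} \to \caA_{2n-1}$. Defining $\Phi$ to be the decoupled QCA with $\Phi|_{\caC_n} = \Ad_{u_n}$ and setting $\gamma := \Phi \circ \beta$, the factorization $\beta(\caB_n) = \caL_n \gotimes \caR_n$ yields $\gamma(\caB_n) = \caA_{2n} \gotimes \caA_{2n+1} = \caB_n$, so $\gamma$ is decoupled on $\{\caB_n\}$, and a straightforward triangle-inequality estimate produces $\norm{\gamma|_{\caB_n} - \id} \leq 145\ep < 1$.

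Each $\caB_n$ is a rational central simple superalgebra, so $\gamma|_{\caB_n}$ is inner, $\gamma|_{\caB_n} = \Ad_{v_n}$. Its first cohomology index $\mu_n \in H^1(\Z_2 \times G, U(1))$ of Definition~\ref{def:first cohomology index} satisfies $|\mu_n(g') - 1| \leq 145\ep < 1$ for every $g'$, and since the character group of the finite group $\Z_2 \times G$ is discrete with only the trivial character within distance $1$ of itself, $\mu_n$ must be trivial. Following the argument in the proof of Lemma~\ref{lem:0-dimensional deform to identity}, I may then choose $v_n$ to be even and $G$-invariant, and optimizing the overall phase gives $\norm{v_n - \I} = O(\ep)$. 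Writing $u_n = e^{ih^{(1)}_n}$ and $v_n = e^{ih^{(2)}_n}$ via the principal branch of the logarithm produces even $G$-invariant self-adjoint Hamiltonian terms on $\caC_n$ and $\caB_n$ respectively, with $\norm{h^{(i)}_n} \leq 18\ep$ under the sharp hypothesis $\ep < 1/145$.

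The path from $\beta$ to $\id$ is then built in two phases, each generated by a time-independent Hamiltonian on the corresponding partition. For $t \in [0, 1/2]$ I set $\Phi_s := \prod_n \Ad_{e^{ish^{(1)}_n}}$ and $\beta_t := \Phi_{1-2t}^{-1} \circ \gamma$, which interpolates from $\Phi^{-1} \circ \gamma = \beta$ at $t = 0$ to $\gamma$ at $t = 1/2$; for $t \in [1/2, 1]$ I set $\gamma_s := \prod_n \Ad_{e^{ish^{(2)}_n}}$ and $\beta_t := \gamma_{2(1-t)}$, which interpolates from $\gamma$ to $\id$. In both phases $\beta_t$ is the composition of at most two depth-$1$ circuits on interleaved partitions of block size $2R$; since a single-site operator spreads under $\gamma$ to a single block $\caB_n$ and then under $\Phi_s^{-1}$ to the two blocks $\caC_n \cup \caC_{n+1}$ it meets, whose union has diameter $3R$ around the original site, the range of $\beta_t$ is at most $3R$. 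The main technical obstacle is the numerical bookkeeping needed to reach the stated bound $\norm{h^{(i)}_n} \leq 18\ep$ under the sharp hypothesis $\ep < 1/145$, propagating the $36\ep$ estimate from Lemma~\ref{lem:QCA stability} through the composition $\gamma = \Phi \circ \beta$, the optimal choice of phase for $v_n$, and the principal-branch logarithm; the structural content is just the two-layer decomposition followed by two independent Hamiltonian interpolations.
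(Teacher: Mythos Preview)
Your argument is correct and matches the paper's proof almost exactly: coarse-grain to nearest-neighbour form, apply Lemma~\ref{lem:QCA stability} with $\al_2 = \id$ to obtain even $G$-invariant $u_n \in \caC_n$ sending $\caL_n$ to $\caA_{2n}$, form the decoupled $\Phi$, show that $\gamma = \Phi\circ\beta$ (the paper's $\Psi$) is decoupled on $\{\caB_n\}$ with $\|\gamma|_{\caB_n} - \id\| \leq 145\ep$, then take logarithms of $u_n$ and $v_n$ and run the two-phase interpolation. The only substantive difference is in how you produce the even $G$-invariant $v_n$ with $\|v_n - \I\|$ small: you argue that the first-cohomology index of $\gamma|_{\caB_n}$ must vanish by discreteness of $H^1(\Z_2\times G, U(1))$ and then optimise the phase of an arbitrary implementing unitary, whereas the paper applies Proposition~\ref{prop:making homomorphisms inner} directly to the near-identity automorphism $\gamma|_{\caB_n}$, which yields the even $G$-invariant implementing unitary together with its norm bound in a single step. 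Your detour is correct (a nontrivial character of a finite group into $U(1)$ always hits a point at distance $\geq\sqrt{3}$ from $1$, so $145\ep<1$ suffices), but Proposition~\ref{prop:making homomorphisms inner} is the cleaner tool here and handles the numerics you flag as the ``main technical obstacle'' without any separate phase-adjustment step.
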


\begin{proof}
	Coarse grain the chain in blocks of size $R$ so $\beta$ is in nearest-neighbour form. Denote by $\caL_n$ and $\caR_n$ the left and right overlap algebras. By Lemma \ref{lem:QCA stability} the central simple $G$-systems $(\caL_n, \rho|_{\caL_n})$ and $(\caA_{2n}, \rho|_{\caA_{2n}})$ are isomorphic through conjugation with an even $G$-invariant unitary $u_n \in \caC_n$ with $\norm{u_n - \I} \leq 36 \ep$. Since $\caC_n = \caR_{n-1} \gotimes \caL_n$ it follows that also $(\caR_{n-1}, \rho|_{\caR_{n-1}})$ and $(\caA_{2n-1}, \rho|_{\caA_{2n-1}})$ are isomorphic central simple $G$-systems with isomorphism given by conjugation with $u_n$. Denote by $\Phi_n = \Ad_{u_n}$ and let $\Phi$ be the automorphism of $\caA$ defined by the simultaneous action of all the $\Phi_n$. Set $\Psi = \beta \circ \Phi$, then $\Psi(\caB_n) = \Phi_n( \caL_n \gotimes \caR_n ) = \caA_{2n} \gotimes \caA_{2n+1} = \caB_n$. Denote by $\Psi_n = \Psi|_{\caB_n}$. We have for all $x \in \caB_n$
	\begin{align*}
		\norm{\Psi_n(x) - x} &\leq \norm{ (u_n \gotimes u_{n+1}) \beta(x) (u_n \gotimes u_{n+1})^* - \beta(x) } + \norm{\beta(x) - x} \\
				     &\leq 2 \norm{ u_n \gotimes u_{n+1} - \I } \norm{x} + \ep \norm{x} \\
				     &\leq (4 \cdot 36 + 1) \ep \norm{x} \leq 145 \ep \norm{x} < 1.
	\end{align*}
	It then follows from Proposition \ref{prop:making homomorphisms inner} that $\Psi_n$ is given by conjugation with an even $G$-invariant unitary $v_n \in \caB_n$ such that $\norm{v_n - \I} \leq \sqrt{2} \ep$.

	Any even $G$-invariant unitary $u$ can be continuously deformed to $\I$ through a path of even $G$-invariant unitaries. Indeed, we can write $u = \ed^{\iu h}$ with $h$ an even $G$-invariant self-adjoint operator such that $\norm{h} \leq \norm{u - \I}/2$, then $[0, 1] \ni t \mapsto \ed^{\iu t h}$ is the desired path. It follows that all the $\Phi_n^{-1}$ and all the $\Psi_n$ are connected to $\id$ along a continuous path of even equivariant automorphisms generated by even $G$-invariant $h^{(1)}_n \in \caC_n$ and $h^{(2)}_n \in \caB_n$ respectively, satisfying $\norm{h^{(1)}_n}, \norm{h^{(2)}_n} \leq 18 \ep$ for all $n$. Doing these deformations simultaneously yields strongly continuous paths $[0,1] \ni t \mapsto \Phi(t)^{-1}$ from $\id$ to $\Phi^{-1}$ and $[0, 1] \ni t \mapsto \Psi(t)$ from $\id$ to $\Psi$ generated by Hamiltonian interactions $H^{(1)}(\caC_n) = h^{(1)}_n$ and $H^{(2)}(\caB_n) = h^{(2)}_n$ respectively. The strongly continuous path $[0, 1] \ni t \mapsto \beta(t)$ generated by the time-dependent Hamiltonian interaction $H_t$ such that $H_t = 2 H^{(1)}$ for $t \in [1, 1/2]$ and $H_t = 2 H^{(2)}$ for $t \in (1/2, 1]$ then interpolates between $\id$ and $\beta$ along even equivariant QCAs of range 3 in the coarse grained chain, hence of range $3R$ in the original chain.
\end{proof}

We now show that any quasi-local even equivariant automorphism is $G$-equivalent to an even equivariant QCA.

\begin{proposition} \label{prop:quasi-local is G-equivalent to QCA}
	Any quasi-local even equivariant autmorphism $\al$ with $f$-tails on a quantum chain $\caA$ is $G$-equivalent to an even equivariant QCA of range $R_0$ depending only on $f$, through a path of quasi-local automorphisms with $g(n) = f(Cn)$-tails, where $C$ is a universal constant.
\end{proposition}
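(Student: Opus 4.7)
The plan is to construct a strongly continuous path from $\al$ to a QCA of bounded range $R_0$ by iteratively correcting $\al$ via short Hamiltonian evolutions, each of which removes one "layer" of quasi-locality.

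First I would apply Proposition \ref{prop:QCA approximation} to obtain a sequence of even equivariant QCAs $\{\beta_j\}_{j \geq 1}$ of range $2j$ with
\begin{equation*}
	\norm{(\al - \beta_j)|_{\caA_X}} = \caO\!\left( f(j) \min\!\left\{ \abs{X}, \ceil{\diam(X)/j}  \right\} \right).
\end{equation*}
Fix $j_0$ depending only on $f$ large enough that $f(j_0)$ is below the universal threshold needed in the steps below, and take $R_0 = 2 j_0$ with the target QCA being $\beta_{j_0}$. For $j \geq j_0$, define the correction QCAs $\delta_j := \beta_{j+1} \circ \beta_j^{-1}$. Each $\delta_j$ is an even equivariant QCA of range $\caO(j)$, and by the triangle inequality applied to $\al - \beta_j$ and $\al - \beta_{j+1}$, one gets $\norm{(\delta_j - \id)|_{\caA_X}} = \caO(f(j))$ for any interval $X$ of length at most $\caO(j)$, in particular for intervals of length at most twice the range of $\delta_j$.

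Second, I would apply Lemma \ref{lem:small QCA is G-equivalent to id} to each $\delta_j$ (valid for $j$ large enough that the local norm $\caO(f(j))$ is below $1/145$). This gives a strongly continuous path from $\id$ to $\delta_j$ generated by a two-step time-dependent Hamiltonian interaction $H^{(j)}$ whose local terms live on blocks of size $\caO(j)$ and have norm $\caO(f(j))$. I would then concatenate these paths by choosing a partition $1/2 = t_{j_0} < t_{j_0+1} < \cdots \to 1$ of $[1/2, 1]$, devoting the time interval $[t_j, t_{j+1}]$ to running $H^{(j)}$ (with appropriate rescaling), and running the conjugation by $\beta_{j_0}$ over $[0, 1/2]$ (which is itself a finite-range QCA and can be connected to $\id$ via nearest-neighbour blocks). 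The partial products obtained at times $t_j$ are precisely $\beta_j$, which converge strongly to $\al$ by the approximation bound, giving a strongly continuous path $[0, 1] \ni t \mapsto \al_t$ from $\beta_{j_0}$ to $\al$, with each $\al_t$ even and $G$-equivariant by construction (since the generating Hamiltonian terms are even and $G$-invariant).

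The main obstacle will be Step four: controlling the quasi-locality tails uniformly along the path. For any $t$, the automorphism $\al_t$ is a composition of $\beta_{j_0}$ with the time-ordered exponential of a time-dependent Hamiltonian interaction of range $\caO(j)$ and strength $\caO(f(j))$ on the $j$-th segment. Using a Lieb-Robinson estimate for such block-decoupled, time-dependent interactions, the "spreading" contributed by segment $j$ is bounded by a quantity of the order of the total integrated strength $\caO(f(j))$ times the interaction range $\caO(j)$; summing the tails produced by segments $j' \geq j$ and using that $f$ is non-increasing yields a tail of order $\caO(f(cn))$ for some universal constant $c$, at scale $n$. Choosing $C$ appropriately so $g(n) = f(Cn)$ dominates this sum uniformly in $t$ finishes the argument. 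This last bookkeeping is the graded $G$-equivariant analogue of the corresponding construction in \cite{RWW2020}; the fact that all Hamiltonians and unitaries encountered in the previous lemmas are even and $G$-invariant ensures that all the requisite properties pass through unchanged.
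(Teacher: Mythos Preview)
Your overall architecture matches the paper's: telescope via the approximating QCAs from Proposition~\ref{prop:QCA approximation}, apply Lemma~\ref{lem:small QCA is G-equivalent to id} to each increment, concatenate the resulting short Hamiltonian evolutions, and compress the infinite product into a finite time interval so that the endpoint is $\al$. Two points deserve correction.

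First, the statement that $\beta_{j_0}$ ``can be connected to $\id$ via nearest-neighbour blocks'' on $[0,1/2]$ is false in general: $\beta_{j_0}$ has the same index as $\al$, and if $\ind(\al)\neq(1,0,\boe)$ there is no path from $\beta_{j_0}$ to $\id$ at all. Fortunately this portion is unnecessary---the proposition only asks for a path from $\al$ to \emph{some} QCA, and your construction on $[1/2,1]$ already furnishes a path from $\beta_{j_0}$ to $\al$. Simply start the path at $\beta_{j_0}$ and drop the $[0,1/2]$ leg. (The paper does exactly this: its path begins at $\beta_{2^{k_0}}$.)

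Second, your tail argument via Lieb--Robinson summation is not the paper's method and, as sketched, is not obviously correct: the quantity ``$\sum_{j'\ge j} f(j')\cdot(\text{range } j')$'' need not be $\caO(f(cn))$ for a merely non-increasing $f$ vanishing at infinity. The paper instead argues directly: at any time along the path the intermediate automorphism is a QCA of some range $R$, and on intervals of length $\caO(R)$ it differs from $\al$ by $\caO(f(R))$ (this follows from the bound on $\gamma_{k,s}-\id$ together with $\beta_{2^k}-\al$). For a scale $n$, either $n\ge R$ (and the QCA is exactly supported in $X^{(n)}$), or $n<R$ and one splits $X=X_i\cup X_e$ with interior $X_i$ mapped into $X$ and boundary $X_e$ of size $\caO(R)$; on $X_e$ the $\caO(f(R))$-closeness to $\al$ and $\al$'s own $f$-tails give the bound $\caO(f(n/C))$ via Proposition~\ref{prop:simultaneous near inclusions of theta-hyperfinite vN algebras}. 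This avoids any summation. A minor related point: the paper uses the dyadic subsequence $\gamma_k=\beta_{2^k}\circ\beta_{2^{k-1}}^{-1}$ rather than your $\delta_j=\beta_{j+1}\circ\beta_j^{-1}$; for the comparison-to-$\al$ tail argument either works, but the dyadic choice keeps the number of increments logarithmic, which would be essential if one really did want to sum contributions.
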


\begin{proof}
	Let $\{\beta_j\}$ be a sequence of even equivariant QCAs approximating $\al$ as in Eq. \eqref{eq:QCA approximation}. The existence of such a sequence is guaranteed by Proposition \ref{prop:QCA approximation}. Let $\gamma_k = \beta_{2^{k}} \circ \beta_{2^{k-1}}^{-1}$. These are even equivariant QCAs of radius less than $r_k = 2^{k+2}$ and
	\begin{align*}
		\norm{ (\gamma_k - \id)|_{\caA_X} } &= \norm{ ( \beta_{2^{k}} - \beta_{2^{k-1}} )|_{\caA_X} } \\
						    &= \norm{ (\beta_{2^{k}} - \al)|_{\caA_X} } + \norm{ (\beta_{2^{k-1}} - \al)|_{\caA_X} } \\
					 &= \caO \left( f(2^k) \right)
	\end{align*}
	for any $X \ssubset \Z$ with $\diam(X) \leq 2 \times r_k = 2^{k+3}$, where the constant implicit in the $\caO$-notation depends only on $f$. The reason for choosing $\diam(X)$ less then or equal to twice the radius of $\gamma_k$ is to apply Lemma \ref{lem:small QCA is G-equivalent to id}, which gives us a $k_0$ such that for all $k \geq k_0$ we have that $\gamma_k$ is $G$-equivalent to the identity automorphism through a path $[0, 1] \ni t \mapsto \gamma_{k, t}$ of even equivariant QCAs of range at most $3 \times 2^{k + 2}$, generated by a time-dependent Hamiltonian interaction $[0, 1] \ni t \mapsto H_{k, t}$ such that $H_{k, t}(X) \neq 0$ only if $\diam(X) \leq 2 r_k = 8 \times 2^k$ and $\norm{H_{k, t}(X)} \leq \caO \big( f(2^k) \big)$.

	Concatenate the Hamiltonian interactions $[0, 1] \ni t \mapsto H_{k, t}$ to form $[0, \infty) \ni t \mapsto \widetilde H_t$ defined by
	\begin{equation}
		\widetilde H_t = H_{\floor{t} + k_0, t - \floor{t} }.
	\end{equation}
	The Hamiltonian interaction $\widetilde H_t$ generates a strongly continuous path of even equivariant QCAs $\tilde \al_t$ with $\tilde \al_0 = \beta_{2^{k_0}}$ such that for integer $t \geq 1$ we have
	\begin{equation}
		\tilde \al_t = \gamma_{k_0 + t} \circ \cdots \gamma_{k_0} \circ \beta_{2^{k_0}} = \beta_{2^{k_0 + t}},
	\end{equation}
	and for all $t$ we have $\widetilde H_t(X) \neq 0$ only if $\diam{X} \leq 8 2^{\floor{t} + k_0}$ and $\norm{H_t(X)} \leq \caO \big( f(2^{\floor{t} + k_0}) \big)$.

	Since $\beta_j \rightarrow \al$ strongly, we have that $\tilde \al_t \rightarrow \al$ stronly as $t \uparrow \infty$. We speed up to evolution so that we reach $\al$ in finite time. Let
	\begin{equation}
		t ( \tilde t ) = \frac{\tilde t}{\tilde t + 1}, \quad \tilde t (t) = \frac{t}{1 - t}
	\end{equation}
and put $\al_t := \tilde \al_{\tilde t (t)}$. Since $t \mapsto \tilde t$ is continuous for all $t \in [0, 1)$ we have that $t \mapsto \al_t$ is strongly continuous for all $t \in [0, 1)$. Moreover, $\al_{t(k)} = \beta_{2^{k_0 + k}}$ for all $k \in \N$ so $\al_t \rightarrow \al$ strongly as $t \uparrow 1$, so $[0, 1] \ni t \mapsto \al_t$ is strongly continuous for all $t \in [0, 1]$ and interpolates from the range $R_0 = 2 k_0$ even equivariant QCA $\al_0 = \beta_{2^{k_0}}$ to $\al_1 = \al$. Note that the range $R_0 = 2 k_0$ depends only on $f$.

	Finally, $t \mapsto \al_t$ is generated by Hamiltonian interaction
	\begin{equation}
		H_t = \frac{\dd \tilde t}{\dd t} \widetilde H_{\tilde t} = (\tilde t + 1 )^2 \widetilde H_{\tilde t}.
	\end{equation}
	It then follows from the properties of 	$\widetilde H_{\tilde t}$ that for all $t \in [0, 1]$ there is an integer $l = 2^{\floor{\tilde t} + k_0}$ such that $H_t(X) \neq 0$ only if $\diam(X) \leq 16 l$ and $\norm{H_t(X)} \leq \caO \big( f(l) \log(l) \big)$.

	We finally show that $\al_t$ has $g(n) = \caO\big( f(Cn) \big)$	tails for all $t \in [0, 1]$. This is the case if and only if $\tilde \al_{\tilde t}$ has such tail bounds for all $\tilde t \in [0, \infty)$, so we'll prove that instead. (Note that for $\al_1 = \al$ we already have $f$-tails).

	Let $\tilde t = l + s$ where $l$ is an integer and $s \in [0, 1)$. Then $\tilde \al_{\tilde t}$ is a QCA of range $3 r_l$. Pick $n > 3 r_l$, then there is a $k$ such that $3 r_k \leq n \leq 3 r_{k+1}$. Fix an interval $X \subset \Z$.

	If $k \geq l$ then $\tilde \al_{\tilde t}(\caA_X) \subset \caA_{X^{(3 r_l)}} \subset \caA_{X^{(n)}}$.

	If $k > l$, then write $X = X_{i} + X_{e}$ where $X_i = \{ n \in X \, : \, \dist(x, X^c) > 3 r_l \}$ contains the sites in $X$ that are futher than  $3 r_l$ away from the complement of $X$ and $X_e$ is the complement of $X_i$ in $X$, \ie the sites in $X$ a distance at most $3 r_l$ from $X^c$. Then $\tilde \al_{\tilde t}(\caA_{X_i}) \subset \caA_{X}$ and since $X_e$ consists of at most two intervals of size $3 r_l$ it follows from Lemma \ref{lem:local errors control global errors} and
	\begin{equation}
		\norm{(\tilde \al_{\tilde t} - \al)|_{\caA_X}} \leq \norm{ (\gamma_{l, s} - \id)|_{\caA_X} } + \norm{(\al - \beta_{2^{k_0 + l}}} = \caO \big( f(2^l) \big)
	\end{equation}
	for $\diam(X) \leq r_l$ that $\norm{(\al - \tilde \al_{\tilde t})|_{\caA}} = \caO \big( f(2^l) \big)$ where the constant implicit in the $\caO$-notation is universal. Since $\al$ has $f$-tails by assumption, $\al(\caA_{X_e}) \nsub{\caO(f(n))} \caA_{X^{(n)}}$, and since $n < 3 r_l < 2^{l+4}$ we get $\tilde \al_{\tilde t}(\caA_{X_e}) \nsub{\caO\big( f(n/16) \big)} \caA_{X^{(n)}}$. Since $(\caA_{X_e}^{\vN} \cup \caA_{X_i}^{\vN})'' = \caA_{X}^{\vN}$ and $\caA_{X^{(n)}}^{\vN}$ are $\theta$-hyperfinite von Neumann algebras, Proposition \ref{prop:simultaneous near inclusions of theta-hyperfinite vN algebras} implies
	\begin{equation}
		\tilde \al_{\tilde t}(\caA_X) \nsub{\caO(f(n/16) )} \caA_{X^{(n)}}
	\end{equation}
	where the constant implicit in the $\caO$-notation is universal. This concludes the proof.
\end{proof}

\begin{proposition} \label{prop:decoupling trivial quasi-local automorphisms}
	A quasi-local even equivariant automorophism $\al$ with $f$-tails on a quantum chain $\caA$ is stably $G$-equivalent to a decoupled even equivariant QCA if and only if $\ind(\al) = (1, 0, \boe)$ is trivial. The equivalence may be realized through stacking with the opposite quantum chain $\overline{\caA}$, the regular chain $\caA^{\reg}$ and the fermion chain $\caA^{\fer}$, and going through a path of quasi-local even equivariant automorphisms with $g(n) = f(C n)$-tails, where $C$ is a universal constant.
	Moreover, if the quantum chain $\caA$ is trivial in the sense of Definition \ref{def:trivial chain}, then the stacking with the conjugate chain $\bar \caA$ can be omitted. If $\caA$ has trivial symmetry then the stacking with $\caA^{\reg}$ can be ommited. If $\caA$ has trivial grading then the stacking with $\caA^{\fer}$ can be omitted.
\end{proposition}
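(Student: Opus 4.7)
The plan is to establish the two implications separately, chaining together results already proved. The "only if" direction is immediate: if $\al \gotimes \tilde\al$ is $G$-equivalent in $\scrA_Q$ to a decoupled even equivariant QCA $\gamma$, then by the invariance of $\ind$ along strongly continuous paths of even equivariant quasi-local automorphisms (Proposition \ref{prop:index for quasi-local automorphisms} (ii)) combined with the multiplicativity of the index (Lemma \ref{lem:inverse, composition, and stacking of quasi-local automorphisms}), we get $\ind(\al) \cdot \ind(\tilde\al) = \ind(\gamma)$; since both the ancillary decoupled automorphism $\tilde\al$ and the target decoupled QCA $\gamma$ have trivial index by Lemma \ref{lem:triviality of circuits}, this forces $\ind(\al) = (1, 0, \boe)$.

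For the "if" direction, I would proceed in two steps. First, apply Proposition \ref{prop:quasi-local is G-equivalent to QCA} to produce a strongly continuous path of even equivariant quasi-local automorphisms with $g(n) = f(Cn)$-tails from $\al$ to an even equivariant QCA $\beta$ of some range $R_0$ depending only on $f$. Since the index is invariant along this path, $\ind(\beta) = \ind(\al) = (1, 0, \boe)$. Second, apply Proposition \ref{prop:decoupling trivial QCA} directly to $\beta$: after stacking with $\overline{\caA}$, $\caA^{\reg}$, and $\caA^{\fer}$, the QCA $\beta \gotimes \id \gotimes \id \gotimes \id$ is $G$-equivalent to a decoupled even equivariant QCA through a strongly continuous path of even equivariant QCAs of uniformly bounded range. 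Concatenating the first path (tensored by $\id$ on the three auxiliary chains) with this second path produces the required stable $G$-equivalence.

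The main point requiring attention is the tail bound along the concatenated path. The first segment has $f(Cn)$-tails by Proposition \ref{prop:quasi-local is G-equivalent to QCA}, while the second segment consists of QCAs of uniformly bounded range and so trivially has compactly supported tails. Since $Q$ is robust in the sense of Definition \ref{def:robust class}, the bound $g(n) = f(Cn)$ still satisfies $Q$, and the composite path lies entirely in $\scrA_Q$. The "moreover" statements follow immediately from the corresponding parts of Proposition \ref{prop:decoupling trivial QCA}, which tell us which of the three auxiliary chains may be omitted in each case (trivial chain, trivial symmetry, trivial grading); the first step does not introduce any auxiliary chains, so no additional argument is needed there.
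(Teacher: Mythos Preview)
Your proposal is correct and follows essentially the same approach as the paper. The paper's own proof is a one-liner stating that the result is an immediate consequence of Propositions \ref{prop:quasi-local is G-equivalent to QCA} and \ref{prop:decoupling trivial QCA} together with the stability of the index; you have simply unpacked these references, including the ``only if'' direction and the tail-bound bookkeeping, in a manner fully consistent with the intended argument.
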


\begin{proof}
	This is an immediate consequence of Propositions \ref{prop:quasi-local is G-equivalent to QCA} and \ref{prop:decoupling trivial QCA} and the stability of the index.
\end{proof}

We finally prove the main Theorem \ref{thm:Main Theorem}.

\begin{proofof}[Theorem \ref{thm:Main Theorem}]
	The behaviour of $\ind$ under stacking and composition is shown in Lemma \ref{lem:inverse, composition, and stacking of quasi-local automorphisms}. The properties of the binary operation are shown in Lemma \ref{lem:abelian group}. The fact that for each $(d, \zeta, \bnu) \in \Q \cup \sqrt{2} \Q \times \Hom(G, \Z_2) \times H^2(G, U(1))$ there is an even equivariant QCA $\al$ with $\ind(\al)$ is shown in Section \ref{sec:examples}.

	Let us now show the equivalence of statements (i) and (ii) in the Theorem. First, suppose even equivariant quasi-local automorphisms $\al$ and $\beta$ are stably $G$-equivalent in a robust class $\scrA_Q$. This means that after stacking with auxiliary decoupled automorphisms, they can be deformed into eachother along a strongly continuous path in $\scrA_{Q}$. By the triviality of decoupled automorphisms (Lemma \ref{lem:triviality of circuits}), the behaviour of the index under stacking (Lemma \ref{lem:inverse, composition, and stacking of quasi-local automorphisms}) and the stability of the index under strongly continuous deformation, which follows immediately from Proposition \ref{prop:index for quasi-local automorphisms}, we find that $\al$ and $\beta$ must have the same index.

	Conversely, suppose $\al$ and $\beta$ belong to robust class $\scrA_Q$ and have the same index. It follows from Lemma \ref{lem:inverse, composition, and stacking of quasi-local automorphisms} and Proposition \ref{prop:decoupling trivial quasi-local automorphisms} that $\beta^{-1} \gotimes \beta$ is stably $G$-equivalent along a path in $\scrA_Q$ to a decoupled automorphism $\gamma_1$. We have therefore that $\al \gotimes \gamma_1$ is stably $G$-equivalent along a path in $\scrA_Q$ to $\al \gotimes \beta^{-1} \gotimes \beta$. Again by Lemma \ref{lem:inverse, composition, and stacking of quasi-local automorphisms} and Proposition \ref{prop:decoupling trivial quasi-local automorphisms}, the automorphism $\al \gotimes \beta^{-1}$ is stably $G$-equivalent along a path in $\scrA_Q$ to a decoupled automorphism $\gamma_2$, so $\al \gotimes \beta^{-1} \gotimes \beta$ is stably $G$-equivalent along a path in $\scrA_Q$ to $\gamma_2 \gotimes \beta$. Concatenating these paths, we find that $\al \gotimes \gamma_1$ is stably $G$-equivalent to $\gamma_2 \gotimes \beta$. Finally, since $\gamma_1$ and $\gamma_2$ are decoupled, we conclude that $\al$ is stably $G$-equivalent to $\beta$ along a path in $\scrA_Q$.
\end{proofof}

\subsection{Relation to the cohomology index of SPT states on spin chains}

Let $\caA$ be a spin chain, \ie a quantum chain with trivial grading, equipped with a $G$-symmetry $\rho$. A state $\psi$ on $\caA$ is $G$-invariant if $\psi \circ \rho^g = \psi$ for all $g \in G$. In \cite{KSY2020} a second cohomology index is associated to any pure $G$-invariant state $\psi$ that satisfies an additional condition called invertibility, which essentially says that there is sufficient decay of correlations (see aslo \cite{CGW2011, Ogata2021}).

We will start from a $G$-invariant pure product state $\psi_0$ (\ie such that $\psi_0|_{\caA_n}$ is pure for every $n \in \Z$) and consider a $G$-invariant pure state $\psi := \psi_0 \circ \al$ for a quasi-local equivariant automorphism $\al$. If $\al \in \scrA_{n^{-\infty}}$ then Theorem 5.11 of \cite{RWW2020} implies that it is a composition of a shift and a time-one evolution generated by an interaction $\Phi_t$ such that for all $t \in [0, 1]$ there is a $k \in \N$ such that $\Phi_t(X)$ is non vanishing only for $\abs{X} \leq 16k$, for each $x \in \Z$ there are at most two sets $X \ni x$ such that $\Phi_t(X) \neq 0$ and $\norm{\Phi_t(X)} = \caO(k^{-\infty} \log(k)) = \caO(k^{-\infty})$. It follows that $\psi$ is a $G$-invariant invertible state according to \cite{KSY2020}.

We denote by $\rho^g_{\geq n}$ the automorphism of $\caA_{\Z}$ that acts as $\rho^{g}$ on $\caA_{\geq n}$ and as $\id$ on $\caA_{< n}$. The following Proposition is proven in Section 3 of \cite{KSY2020}.
\begin{proposition} \label{prop:KSY2020 cohomology index}
	Let $(\caH, \pi, \Omega)$ be the GNS data of $\psi$. There is a projective unitary representation $g \mapsto V_{\geq n}(g) \in \caB(\caH)$ of $G$ such that
	\begin{equation} \label{eq:KSY2020 projective rep}
		V_{\geq n}(g) \pi(x) V_{\geq n}(g)^* = \pi \left( \rho^g_{\geq n}(x) \right)
	\end{equation}
	for all $x \in \caA_{\Z}$. The unitaries $V_{\geq n}(g)$ are uniquely determined by \eqref{eq:KSY2020 projective rep} up to a phase and therefore determine a unique cohomology class $[V_{\geq n}] \in H^2(G, U(1))$.

	Moreover, the cohomology class depends neither on $n$ nor on the choice of $G$-invariant product state $\psi_0$.
\end{proposition}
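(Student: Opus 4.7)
The plan is to follow the argument of \cite{KSY2020}. The hypothesis $\al \in \scrA_{n^{-\infty}}$ guarantees that $\psi = \psi_0 \circ \al$ is a pure $G$-invariant state possessing the split property: in its GNS representation the von Neumann algebras $\pi(\caA_{<n})''$ and $\pi(\caA_{\geq n})''$ are separated by a type I intermediate factor. Combining equivariance of $\al$ with $G$-invariance of $\psi_0$, one first rewrites $\psi \circ \rho^g_{\geq n} = \psi_0 \circ \al'$, where $\al' = (\rho^g_{\geq n})^{-1} \circ \al \circ \rho^g_{\geq n}$ is a quasi-local automorphism differing from $\al$ only near site $n$. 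The split property then implies $\psi$ and $\psi \circ \rho^g_{\geq n}$ are unitarily equivalent, and since $\pi$ is irreducible (as $\psi$ is pure) there exists a unitary $V_{\geq n}(g) \in \caB(\caH)$ implementing the intertwining relation \eqref{eq:KSY2020 projective rep}, unique up to a phase.

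For the projective representation structure, both $V_{\geq n}(g) V_{\geq n}(h)$ and $V_{\geq n}(gh)$ implement $\rho^{gh}_{\geq n}$, so irreducibility forces them to agree up to a scalar $\nu_n(g,h) \in U(1)$. Associativity of the group law yields the 2-cocycle identity, while the $U(1)$ ambiguity in the choice of each $V_{\geq n}(g)$ corresponds exactly to coboundary ambiguity in $\nu_n$, giving a well-defined class $[V_{\geq n}] \in H^2(G, U(1))$. Independence of $\psi_0$ is essentially immediate from the construction: $V_{\geq n}(g)$ is defined purely in terms of the GNS data of $\psi$ and the intertwining property, with no reference to any decomposition $\psi = \psi_0 \circ \al$.

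For independence of $n$, for $m > n$ the operator $V_{\geq n}(g) V_{\geq m}(g)^{-1}$ implements the local automorphism $\rho^g_{[n,m-1]} = \Ad_{u_{[n,m-1]}(g)}$ with $u_{[n,m-1]}(g) := u_n(g) \cdots u_{m-1}(g)$, so $V_{\geq n}(g) = \lambda(g)\, V_{\geq m}(g)\, \pi(u_{[n,m-1]}(g))$ for phases $\lambda(g)$. A direct calculation using that $V_{\geq m}(h)$ commutes with $\pi(\caA_{<m})$ yields $\nu_n = \nu_m \cdot \nu_{\mathrm{loc}} \cdot d\lambda$, where $\nu_{\mathrm{loc}}$ is the cocycle of the product $u_{[n,m-1]}$. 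The key observation is that $G$-invariance of each \emph{pure} on-site state $\psi_k$ forces $u_k(g)$ to fix the corresponding pure vector up to a phase $\chi_k(g)$, and the identity $\chi_k(g)\chi_k(h) = \nu_k^{\mathrm{loc}}(g,h)\chi_k(gh)$ exhibits each on-site cocycle as a coboundary; hence the product $\nu_{\mathrm{loc}}$ is also a coboundary, giving $[\nu_n] = [\nu_m]$. The technical heart, and the main obstacle, is the first step: establishing unitary equivalence of $\psi$ and $\psi \circ \rho^g_{\geq n}$, for which the invertibility / split property of $\psi$ inherited from the superpolynomial tails of $\al$ is indispensable. This is the content of Section 3 of \cite{KSY2020}.
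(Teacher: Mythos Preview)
Your treatment of existence, uniqueness, the projective structure, and independence of $n$ is essentially correct and aligns with what the paper takes from \cite{KSY2020}.

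There is, however, a genuine gap in your argument for independence of $\psi_0$. You write that it is ``essentially immediate'' because the $V_{\geq n}(g)$ are defined purely from the GNS data of $\psi$, with no reference to the decomposition $\psi = \psi_0 \circ \al$. But this misreads the claim. The automorphism $\al$ is fixed, and replacing $\psi_0$ by another $G$-invariant product state $\tilde\psi_0$ produces a \emph{different} state $\tilde\psi = \tilde\psi_0 \circ \al$, with a different GNS representation and a priori a different cohomology class. The point of the Proposition (and the reason it is stated) is that the class nonetheless depends only on $\al$, so that $\ind_G(\al)$ in the subsequent Definition is well defined. This is not automatic and is precisely the part the paper singles out as \emph{not} following directly from \cite{KSY2020}.

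The paper's argument is as follows. Given two $G$-invariant product states $\psi_0$ and $\tilde\psi_0$, one invokes Appendix A of \cite{KSY2020} to produce auxiliary spin chains $\caA'$, $\caA''$ with $G$-invariant product states $\psi'$, $\psi''$ and an equivariant automorphism $\beta$ (generated by a $G$-invariant interaction with $\caO(n^{-\infty})$ tails) such that $\psi_0 \otimes \psi' = (\tilde\psi_0 \otimes \psi'') \circ \beta$. Composing with $\al \otimes \id$ on both sides gives
\[
(\psi_0 \circ \al) \otimes \psi' \;=\; \bigl((\tilde\psi_0 \circ \al) \otimes \psi''\bigr) \circ (\al^{-1} \otimes \id) \circ \beta \circ (\al \otimes \id),
\]
where the composite automorphism on the right is again equivariant with $\caO(n^{-\infty})$ tails. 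One then uses that the cohomology class of \cite{KSY2020} is invariant under such automorphisms, multiplicative under stacking, and trivial on product states, to conclude $[\psi_0 \circ \al] = [\tilde\psi_0 \circ \al]$. None of this is captured by your remark.
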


\begin{proof}
	All statements follow immediately from the analysis in Section 3 of \cite{KSY2020} except for the independence from the choice of $\psi_0$. Let $\tilde \psi_0$ be another $G$-invariant product state. It is shown in appendix A of \cite{KSY2020} that there exists auxilliary spin chains $\caA'$ and $\caA''$ whose local dimensions grow at most polynomially with $\abs{n}$ (the distance from the origin) and $G$-invariant product states $\psi'$,$\psi''$ on $\caA', \caA''$ respectively such that $\psi_0 \otimes \psi' = (\tilde \psi_0 \otimes \psi'') \circ \beta$ where $\beta$ is an equivariant automorphism generated for time 1 by a $G$-invariant interaction with $\caO(n^{-\infty})$ tails. It follows that
	\begin{equation}
		(\psi_0 \circ \al \otimes \psi') = (\tilde \psi_0 \circ \al \otimes \psi'') \circ (\al^{-1} \otimes \id) \circ \beta \circ (\al \otimes \id).
	\end{equation}
	The automorphism $(\al^{-1} \otimes \id) \circ \beta \circ (\al \otimes \id)$ is still equivariant and generated by a $G$-invariant interaction with $\caO(n^{-\infty})$ tails. It is shown in Section 3 of \cite{KSY2020} that the cohomology class is invariant under such automorphisms. Moreover, the cohomology class is multiplicative under stacking (Corollary 2.1 of \cite{KSY2020}) and the cohomology class of a product state is the trivial class $\boe$. We conclude that $\psi_0 \circ \al$ and $\tilde \psi_0 \circ \al$ yield the same cohomology class.
\end{proof}

This allows us to associate a cohomology class to the automorphism $\al$:
\begin{definition} \label{def:cohomology index for SPT}
	Let $\al \in \scrA_{n^{-\infty}}$ be an equivariant quasi-local automorphism with superpolynomial tails on a spin chain. For any $G$-invariant product state, let $\psi = \psi_0 \circ \al$ and define 
	\begin{equation}
		\ind_G(\al) \in H^2(G, U(1))
	\end{equation}
	to be the unique cohomology class associated to $\al$ by Proposition \ref{prop:KSY2020 cohomology index}.
\end{definition}

We now show that $\ind_G(\al) = \ind_3(\al)$ coincides with the cohomology component of $\ind(\al)$.

\begin{proposition} \label{prop:extends cohomology index for QCA}
	Let $\al \in \scrA_{n^{-\infty}}$ be a quasi-local equivariant automorphism with superpolynomial tails on a spin chain $\caA$  and let $\psi_0$ be any $G$-invariant product state on $\caA$, then
	\begin{equation}
		\ind_G(\al) = \ind_3(\al).
	\end{equation}
\end{proposition}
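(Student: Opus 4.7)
The plan is to first establish the identity for even equivariant QCAs by a direct computation using the GNS picture, and then to extend to quasi-local automorphisms by the QCA approximation machinery of Proposition \ref{prop:QCA approximation}. Both indices are multiplicative and stable, so identifying them on a dense enough collection will suffice once stability is under control.

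For the QCA case, let $\al$ be an even equivariant nearest-neighbour QCA on a spin chain $\caA$ (after coarse graining) and let $\psi_0$ be a pure $G$-invariant product state on $\caA$. The GNS data $(\caH,\pi,\Omega)$ of $\psi=\psi_0\circ\al$ is canonically unitarily equivalent to the GNS data $(\caH_0,\pi_0,\Omega_0)$ of $\psi_0$ through an intertwiner $U:\caH\to\caH_0$ with $U\pi(x)U^*=\pi_0(\al(x))$. Under this identification, $V_{\geq 2n}(g)$ corresponds to the unitary $W_{\geq 2n}(g):=UV_{\geq 2n}(g)U^*$ on $\caH_0$ that implements the modified automorphism $\tilde\rho^{g}_n:=\al\circ\rho^{g}_{\geq 2n}\circ\al^{-1}$. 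Using Theorem \ref{thm:overlap factorization} together with nearest-neighbourness, a direct unwinding shows that $\tilde\rho^{g}_n$ acts as $\rho^{g}$ on the overlap algebra $\caL_n\subset\caC_n$ and on $\caA_{\geq 2n+1}$, and trivially on $\caR_{n-1}\subset\caC_n$ and on $\caA_{\leq 2n-2}$.

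The fact that $\psi_0$ admits a pure $G$-invariant restriction to each site forces $\ind_3(\caA_k)=\boe$ for every $k$ (a pure $G$-invariant vector generates a $1$-dimensional projective subrepresentation, whose cocycle is necessarily a coboundary). Accordingly the on-site projective representations $u_k(g)$ may be rescaled to honest unitary representations fixing the local cyclic vector, so the half-line product unitary $V^{0}_{\geq 2n+1}(g):=\bigotimes_{k\geq 2n+1}u_k(g)$ on $\caH_0$ has trivial cocycle. Picking any projective unitary $v_L(g)\in\caL_n$ implementing $\rho^{g}|_{\caL_n}$ with cocycle $\nu_L$, one verifies that $W_{\geq 2n}(g)=v_L(g)\,V^{0}_{\geq 2n+1}(g)$ (up to a scalar $\lambda(g)$), since the two factors act on disjoint tensor factors of $\caH_0$ and together reproduce the action of $\tilde\rho^{g}_n$. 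Multiplying this factorisation out gives $[V_{\geq 2n}]=[\nu_L]=\ind_3(\caL_n,\rho|_{\caL_n})$, and the identity $\ind_3(\al)=\ind_3(\caA_{2n})\cdot\ind_3(\caL_n)^{-1}$ from Proposition \ref{prop:index for QCA}, combined with $\ind_3(\caA_{2n})=\boe$, then yields $\ind_G(\al)=\ind_3(\al)$.

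For the extension to general $\al\in\scrA_{n^{-\infty}}$, Proposition \ref{prop:QCA approximation} produces a sequence $\{\beta_j\}$ of even equivariant QCAs with $\|(\al-\beta_j)|_{\caA_X}\|=\caO\bigl(f(j)\min\{|X|,\diam(X)/j\}\bigr)$, and Proposition \ref{prop:index for quasi-local automorphisms} already gives $\ind_3(\beta_j)=\ind_3(\al)$ for all $j$ sufficiently large. It remains to show that $\ind_G(\beta_j)=\ind_G(\al)$ for large $j$. For this we write $\al=\beta_j\circ\gamma_j$ with $\gamma_j:=\beta_j^{-1}\circ\al$ converging strongly to the identity and having $\caO(f(j))$-tails, so that $\psi_0\circ\beta_j$ and $\psi_0\circ\al$ are related by composition with $\gamma_j$. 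Theorem 5.11 of \cite{RWW2020} expresses $\gamma_j$ as a time-one evolution under a $G$-invariant Hamiltonian interaction with superpolynomially decaying norms, and the stability statements of Section 3 of \cite{KSY2020}, which are engineered precisely to accommodate such symmetric quasi-local perturbations, imply that the KSY cohomology index is preserved by $\gamma_j$. Thus $\ind_G(\al)=\ind_G(\beta_j)=\ind_3(\beta_j)=\ind_3(\al)$.

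The main technical obstacle is the stability step in the last paragraph: one must argue carefully that the $\caO(n^{-\infty})$ tail bounds on $\gamma_j$ are strong enough to fit into the hypotheses of the KSY invariance results, and that the relevant time-dependent interactions generate an equivariant path of quasi-local automorphisms with tails sufficient to transport the cocycle $V_{\geq n}(g)$ continuously in the strong operator topology. The QCA computation itself is mostly bookkeeping once one has Theorem \ref{thm:overlap factorization} and the observation $\ind_3(\caA_k)=\boe$ forced by existence of pure invariant product states.
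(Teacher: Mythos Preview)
Your QCA computation is essentially the same as the paper's, viewed through a dual lens. You pass via the intertwiner $U$ to the GNS of $\psi_0$ and compute the action of $\tilde\rho^g_n=\al\circ\rho^g_{\geq 2n}\circ\al^{-1}$ there; the paper instead stays in the GNS of $\psi=\psi_0\circ\al$, observes that $\psi$ is a product state with respect to the decomposition $\caA=\cdots\gotimes\al^{-1}(\caC_{n-1})\gotimes\al^{-1}(\caC_n)\gotimes\cdots$, and writes $V_{\geq 2n}(g)$ directly as the (formal) tensor product $\al^{-1}(v_n(g))\otimes\al^{-1}(u_{2n+1}(g)\otimes u_{2n+2}(g))\otimes\cdots$. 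Both arguments arrive at the identification $[V_{\geq 2n}]=[v_n]$, and your explicit remark that the existence of a pure $G$-invariant product state forces $\ind_3(\caA_k)=\boe$ is exactly what the paper uses implicitly when it asserts that the tail $\bigotimes_{k\geq 2n+1}u_k(g)$ contributes trivially to the cocycle.

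Where you diverge is in the reduction step. The paper does not use the approximating sequence $\{\beta_j\}$ at all here; it invokes Proposition \ref{prop:quasi-local is G-equivalent to QCA}, which furnishes a \emph{strongly continuous path} of equivariant automorphisms in $\scrA_{n^{-\infty}}$ from $\al$ to a QCA, and then simply notes that both $\ind_G$ and $\ind_3$ are constant along such a path. This is cleaner because the path in that proposition is, by construction, generated by a time-dependent $G$-invariant interaction with controlled decay, which is precisely the input the KSY stability results require. Your route via $\gamma_j=\beta_j^{-1}\circ\al$ is workable in principle, but your appeal to Theorem~5.11 of \cite{RWW2020} is not quite legitimate: that theorem is non-equivariant and does not by itself produce a $G$-invariant generating interaction for $\gamma_j$. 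The equivariant substitute you actually need is the machinery behind Proposition \ref{prop:quasi-local is G-equivalent to QCA} (and Lemma \ref{lem:small QCA is G-equivalent to id}), and once you have that in hand it is more economical to use the continuous deformation directly, as the paper does, rather than pass through the discrete approximation and then separately justify KSY stability for each $\gamma_j$.
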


\begin{proof}
	Since $\ind_G$ and $ind_3$ are both constant under strongly continuous deformation of $\al$ through $\scrA_{n^{-\infty}}$, and $\al$ can be so deformed to a QCA according to Proposition \ref{prop:quasi-local is G-equivalent to QCA}, it is sufficient to prove the statement if $\al$ is a QCA.

	Coarse grain the lattice so $\al$ is in nearest neighbour form.	Let $\psi_0$ be any $G$-invariant product state and let $\psi = \psi_0 \circ \al$. Let $\caB_n = \caA_{2n} \otimes \caA_{2n+1}$ and $\caC_{n} = \caA_{2n-1} \otimes \caA_{2n}$ and let $\caL_n$, $\caR_n$ be the overlap algebras defined by Eqs. (\ref{eq:left overlap}, \ref{eq:right overlap}) which, by Theorem \ref{thm:overlap factorization}, satisfy
	\begin{align}\begin{split}
		\caC_n &= \caL_n \otimes \caR_{n-1}, \\
		\caB_n &= \al^{-1}(\caL_n) \otimes \al^{-1}(\caR_n).
	\end{split}\end{align}
	For any $x \in \al^{-1}(\caC_n)$ we have $\psi(x) = \psi_0(x)$ so $\psi$ is a $G$-invariant product state with respect to the tensor decomposition
	\begin{equation}
		\caA = \cdots \otimes  \al^{-1}(\caC_{n-1}) \otimes \al^{-1}(\caC_n) \otimes \al^{-1}(\caC_{n+1}) \otimes \cdots
	\end{equation}
	So in the GNS representation $(\caH, \pi, \Omega)$ associated to $\psi$ we can write
	\begin{equation}
		|\Omega\rangle = \cdots \otimes | \phi_{n-1} \rangle \otimes | \phi_n \rangle \otimes | \phi_{n+1} \rangle \otimes \cdots
	\end{equation}

	Since
	\begin{equation}
		\caA_{\geq 2n} = \al^{-1}(\caL_n) \otimes \al^{-1}(\caC_{n+1}) \otimes \al^{-1}(\caC_{n+2}) \otimes \cdots
	\end{equation}
	we see that the action of $\rho_{\geq 2n}^g$ is given by conjugation with the (formal) unitary
	\begin{equation}
		\al^{-1}(v_n(g)) \otimes \al^{-1}\big(u_{2n+1}(g) \otimes u_{2n+2}(g)\big) \otimes \al^{-1}\big(u_{2n+3}(g) \otimes u_{2n+4}(g)\big) \otimes \cdots
	\end{equation}
	where $v_n$ is the projective representation of $G$ carried by $\caL_n$ and the $u_m$ are the on-site $G$-representations defining the action of the $\rho^g$.
	In the GNS representation of $\psi$ this becomes
	\begin{equation}
		V_{\geq 2n}(g) = \pi \left(   \al^{-1}(v_n(g)) \otimes \al^{-1}\big(u_{2n+1}(g) \otimes u_{2n+2}(g)\big) \otimes \cdots \right)
	\end{equation}
	This formal expression indeed defines a unitary $V(g)$ on $\caH$. The reason is that any state obtained from $| \Omega \rangle$ by acting with an operator supported on $[2n - m, 2n + m]$ is left invariant by it outside the region $[2n - m - 1, 2n + m + 1]$ (because all the states $\phi_m$ are $G$-invariant) \ie the action of this unitary does not take us out of the GNS Hilbert space.

	It is clear that the unitaries $V_{\geq 2n}(g)$ form a projective representation of $G$ as in Proposition \ref{prop:KSY2020 cohomology index} and that its cohomology class is the same as that of $v_n$. Since the former defines $\widetilde \ind_G(\al)$ and the latter defines $\ind_3(\al)$ we have $\ind_G(\al) = \ind_3(\al)$ as required.
\end{proof}

\appendix

\section{super von Neumann algebras} \label{app:super vN}

\subsection{Basic definitions}

We consider the algebra of bounded operators $\caB(\caH)$ on a Hilbert space $\caH$, equipped with a grade automorphism $\theta$. By Wigner's theorem there is a unitary $\Theta \in \caB(\caH)$ such that $\theta = \Ad_{\Theta}$. Moreover, since $\theta^2 = \id$ we can choose $\Theta$ such that $\Theta^2 = \I$. $\Theta$ and $-\Theta$ are the only two unitaries with this propery.

A subset $\caS \subset \caB(\caH)$ is called self-adjoint if $\caS^* := \{ s^* \, : \, s \in \caS \} = \caS$.
A subset $\caS \subset \caB(\caH)$ is a super subset if $\theta(\caS) = \caS$. A super subset that is also a sub vector space is called a super subspace.

An element $x \in \caB(\caH)$ is called homogeneous if $\theta(x) = (-1)^{\tau(x)} x$ with $\tau(x) \in \{0, 1\}$. If $\tau(x) = 0$ then $x$ is said to be even, and $x$ is odd if $\tau(x) = 1$. If $x$ and $y$ are both homogeneous then
\begin{equation}
	\theta(x y) = \theta(x) \theta(y) = (-1)^{\tau(x) + \tau(y)} xy
\end{equation}
so $xy$ is homogeneous of degree $\tau(xy) = \tau(x) + \tau(y)$ where the addition is modulo 2.

Similarly, a vector $\xi \in \caH$ is said to be homogeneous if $\Theta \xi = (-1)^{\tau(\xi)} \xi$. The `parity' $\tau(\xi)$ depends on the choice of $\pm\Theta$, but the prperty of homogeneity is independent of this choice.

Every element $x \in \caB(\caH)$ is the sum of it's even and odd parts, $x = x_+ + x_-$ with
\begin{equation}
	x_{\pm} := \frac{x \pm \theta(x)}{2}.
\end{equation}

The supercommutator is defined for homogeneous elements $x, y \in \caB(\caH)$ by
\begin{equation} \label{def:supercommutator}
	[x, y]_s = x y - (-)^{\tau(x) \tau(y)} y x
\end{equation}
and extended to all elements of $\caB(\caH)$ by linearity.

\begin{lemma} \label{lem:basic properties of the supercommutator}
	For any $x, y \in \caB(\caH)$ we have
	\begin{enumerate} [label=(\roman*)]
		\item \begin{equation}
			\big(  [x, y]_s \big)^* = [y^*, x^*]_s,
		\end{equation}
	\item	\begin{equation}
			\theta( [x, y]_s ) = [\theta(x), \theta(y)]_s.
		\end{equation}
	\item 	\begin{equation}
			\norm{[x, y]_s} \leq 2 (\norm{x_+} + \norm{x_-}) \norm{y} \leq 4 \norm{x} \norm{y}.
		\end{equation}
	\end{enumerate}
\end{lemma}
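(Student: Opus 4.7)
All three items are routine once the supercommutator is rewritten in a form that avoids case distinctions on parities. The plan is to first verify the identities on homogeneous elements using the defining formula, and then extend by linearity, using the fact that the even/odd projections $x \mapsto x_\pm = (x \pm \theta(x))/2$ commute with both the adjoint and $\theta$, and moreover satisfy $\norm{x_\pm} \leq \norm{x}$ since $\theta = \Ad_\Theta$ is an isometry.

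For (i), fix homogeneous $x, y$. Then $\tau(x^*) = \tau(x)$ and $\tau(y^*) = \tau(y)$ because $\theta$ is a $*$-automorphism, so
\begin{equation*}
	\big([x,y]_s\big)^* = (xy)^* - (-1)^{\tau(x)\tau(y)}(yx)^* = y^* x^* - (-1)^{\tau(y^*)\tau(x^*)} x^* y^* = [y^*, x^*]_s.
\end{equation*}
Since $(\cdot)^*$ is conjugate-linear and $[\cdot,\cdot]_s$ is extended bilinearly, writing $x = x_+ + x_-$, $y = y_+ + y_-$ and noting $(x_\pm)^* = (x^*)_\pm$ gives the general case. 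For (ii) the same bookkeeping works: on homogeneous $x,y$,
\begin{equation*}
	\theta([x,y]_s) = \theta(x)\theta(y) - (-1)^{\tau(x)\tau(y)} \theta(y)\theta(x) = [\theta(x),\theta(y)]_s,
\end{equation*}
and bilinearity together with $\theta(x_\pm) = \pm x_\pm$ handles the extension.

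For (iii) the key observation is that for a \emph{homogeneous} $x$ and an arbitrary $y$ one has
\begin{equation*}
	[x,y]_s = xy - \theta^{\tau(x)}(y)\, x,
\end{equation*}
which one verifies by decomposing $y = y_+ + y_-$ and comparing with the defining formula separately in the cases $\tau(x) = 0$ and $\tau(x) = 1$. Since $\theta$ is implemented by the unitary $\Theta$, it is isometric, hence
\begin{equation*}
	\norm{[x,y]_s} \leq \norm{xy} + \norm{\theta^{\tau(x)}(y)\, x} \leq 2\norm{x}\norm{y}
\end{equation*}
for homogeneous $x$. Applying this to $x_+$ and $x_-$ separately and using bilinearity yields
\begin{equation*}
	\norm{[x,y]_s} \leq \norm{[x_+,y]_s} + \norm{[x_-,y]_s} \leq 2(\norm{x_+} + \norm{x_-})\norm{y},
\end{equation*}
and the second inequality in the claim is immediate from $\norm{x_\pm} \leq \norm{x}$.

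There is no real obstacle here; the only thing to be mildly careful about is that the supercommutator is defined on homogeneous elements and extended by linearity, so each identity must be checked on homogeneous pieces before being combined. I expect the proof in the paper to compress these observations into a few lines.
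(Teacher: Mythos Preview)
Your proof is correct and follows essentially the same approach as the paper. For (i) and (ii) your argument is identical to the paper's. For (iii) the paper writes the single identity $[x,y]_s = [x_+, y] + \Theta[\Theta x_-, y]$ and reads off the bound directly, whereas you verify the equivalent formula $[x,y]_s = xy - \theta^{\tau(x)}(y)\,x$ for homogeneous $x$ and then sum over the two parities; unpacking the paper's identity for the odd piece gives exactly your $x_- y - \theta(y)x_-$, so the two computations are the same up to whether one works with the unitary $\Theta$ or the automorphism $\theta$.
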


\begin{proof}
	(i) It is sufficient to show this for homogeneous $x$ and $y$, the general case then follows by linearity. For homogeneous $x$ and $y$ we have
	\begin{equation}
		([x, y]_s)^* = y^* x^* - (-1)^{\tau(x^*) \tau(y^*)} x^* y^* = [y^*, x^*]_s
	\end{equation}
	where we used that $x^*$ and $y^*$ are homogeneous of the same degree as $x$ and $y$ respectively.

	(ii) It is sufficient to show this for homogeneous $x$ and $y$, the general case then follows by linearity. For homogeneous $x$ and $y$ we have
	\begin{equation}
		\theta([x, y]_s) = \theta(x) \theta(y) - (-1)^{\tau(\theta(x)) \tau(\theta(y))} \theta(y) \theta(x) = [\theta(x), \theta (y)]_s
	\end{equation}
	where we used that $\theta(x)$ and $\theta(y)$ are homogeneous of the same degree as $x$ and $y$ respectively.

	(iii) Simply note that $[x, y]_s = [x_+, y] + \Theta [\Theta x_-, y]$, the bounds follow immediately.

\end{proof}

\begin{definition} \label{def:supercommutant}
	The supercommutant of $\caS \subset \caB(\caH)$, denoted $\caS^{\sharp}$, is the set of all elements of $\caB(\caH)$ that supercommute with all elements of $\caS$:
	\begin{equation} \label{eq:supercommutant}
		\caS^{\sharp} := \{ x \in \caB(\caH) \, : \, [x, s]_s = 0 \,\,\, \text{for all} \,\, s \in \caS \}.	
	\end{equation}
\end{definition}

\begin{remark}
	We could have defined $\caS^{\sharp} = \{ x \in \caB(\caH) \, : \, [s, x]_s = 0 \,\,\, \text{for all} \,\,\, s \in \caS \}$ instead. Since the supercommutant is not (anti-)symmetric, this could be a different set. However, point (ii) of the following Lemma shows that if $\caS$ is a super subspace, the two definitions actually coincide. We will use this fact without comment in the sequel.
\end{remark}

\begin{lemma} \label{lem:basic properties of the supercommutant}
	For any super subspace $\caS \subset \caB(\caH)$ such that $\theta(\caS) = \caS$ we have
	\begin{enumerate}[label=(\roman*)]
		\item $\caS^{\sharp}$ contains the identity.
		\item $\theta \big( \caS^{\sharp} \big) = \caS^{\sharp}$ and $\caS^{\sharp}$ is the linear span of its homogeneous elements. Moreover, $\caS^{\sharp} = \{ x \in \caB(\caH) \, : \, [s, x]_s = 0 \,\,\, \text{for all} \,\,\, s \in \caS \}$.
		\item $\caS^{\sharp}$ is closed in the WOT.
		\item $\caS^{\sharp}$ is closed under linear operations and multiplication.
		\item If $\caS$ is self-adjoint, then so is $\caS^{\sharp}$.	
	\end{enumerate}
\end{lemma}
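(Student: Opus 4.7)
The plan is to reduce each claim to routine parity bookkeeping using the even/odd decomposition $x = x_+ + x_-$ with $x_{\pm} := (x \pm \theta(x))/2$, together with the identities already collected in Lemma \ref{lem:basic properties of the supercommutator} and the bilinearity of the supercommutator.

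For (i), I would check directly that $[\I, s]_s = s - (-1)^{0 \cdot \tau(s)} s = 0$ for every homogeneous $s$, then extend by bilinearity. For (ii), the crucial observation is that for any homogeneous $s \in \caS$, the expression $[x, s]_s = [x_+, s]_s + [x_-, s]_s$ decomposes as a sum of terms of opposite parity: $[x_+, s]_s$ has parity $\tau(s)$ while $[x_-, s]_s$ has parity $1 + \tau(s)$. Hence $[x,s]_s = 0$ forces both summands to vanish, so $x \in \caS^{\sharp}$ implies $x_\pm \in \caS^{\sharp}$, proving that $\caS^{\sharp}$ is $\theta$-invariant and is spanned by its homogeneous elements. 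For the alternative description of $\caS^{\sharp}$, on homogeneous $x$ and $s$ one has $[s, x]_s = -(-1)^{\tau(x) \tau(s)} [x, s]_s$, so the two conditions are equivalent pointwise; since both $\caS$ and $\caS^{\sharp}$ are spanned by homogeneous elements, the equivalence extends by linearity.

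For (iii), WOT-closedness follows from the fact that multiplication is separately WOT-continuous: if a net $x_\lambda \in \caS^{\sharp}$ converges to $x$ in the WOT, then for each fixed homogeneous $s \in \caS$ the net $[x_\lambda, s]_s$ converges to $[x, s]_s$ in the WOT, hence the limit vanishes; the general case reduces to this by writing $s = s_+ + s_-$. For (iv), linearity is bilinearity of the supercommutator, while multiplicativity follows from the super-Leibniz rule $[ab, c]_s = a [b, c]_s + (-1)^{\tau(b) \tau(c)} [a, c]_s \, b$ on homogeneous elements (a short algebraic check), which shows at once that if $a, b \in \caS^{\sharp}$ then $ab$ supercommutes with every homogeneous $s \in \caS$.

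Finally, for (v), if $\caS$ is self-adjoint and $x \in \caS^{\sharp}$, then for any $s \in \caS$ the element $s^*$ lies in $\caS$, so $[x, s^*]_s = 0$; applying Lemma \ref{lem:basic properties of the supercommutator}(i) gives $[s, x^*]_s = ([x, s^*]_s)^* = 0$ for all $s \in \caS$, which by the alternative characterization from (ii) says precisely that $x^* \in \caS^{\sharp}$. There is no substantial obstacle in this lemma: the whole proof is just careful tracking of parity signs in the supercommutator, and every step follows from the identities of Lemma \ref{lem:basic properties of the supercommutator} and the definitions.
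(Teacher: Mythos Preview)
Your proposal is correct and follows essentially the same route as the paper: reduce everything to homogeneous elements and use the identities of Lemma~\ref{lem:basic properties of the supercommutator}. The only noticeable difference is in part~(ii): the paper shows $\theta(x)\in\caS^{\sharp}$ via the identity $\theta([x,\theta(s)]_s)=[\theta(x),s]_s$ and then deduces $x_\pm\in\caS^{\sharp}$, whereas you go the other way, observing that for homogeneous $s$ the two summands $[x_+,s]_s$ and $[x_-,s]_s$ have opposite parity and must therefore vanish separately --- a slightly slicker argument that yields the same conclusion. For (iii) your appeal to separate WOT-continuity of multiplication is fine once one notes (as the paper does explicitly via $\Theta$) that $\theta=\Ad_\Theta$ is WOT-continuous, so that $x\mapsto [x,s]_s$ is WOT-continuous for each fixed homogeneous $s$.
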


\begin{proof}
	(i) is obvious.

	(ii) If $x \in \caS^{\sharp}$ then $[x, s]_s = 0$ for all $s \in \caS$. For all homogeneous $s \in \caS$, using that $\theta(s) \in \caS$ is homogeneous of the same degree as $s$, and using Lemma \ref{lem:basic properties of the supercommutator} (ii) we then get	
	\begin{equation}
		0 = \theta \big( [x, \theta(s)]_s \big) = [\theta(x), s]_s,
	\end{equation}
	so $\theta(x) \in \caS^{\sharp}$. Since $x \in \caS^{\sharp}$ was arbitrary, we find $\theta(\caS^{\sharp}) = \caS^{\sharp}$.

	It follows that for any $x \in \caS^{\sharp}$ also the even and odd parts $x_{\pm} \in \caS^{\sharp}$, so $\caS^{\sharp}$ is the linear span of its even and odd elements.

	Then $x \in \caS^{\sharp}$ if and only if $x_{\pm} \in \caS^{\sharp}$, \ie $[x_{\pm}, s]_s = 0$ for all $s \in \caS$. Since $\caS$ is a super subspace it is the linear span of its homogeneous elements, so $x \in \caS^{\sharp}$ if and only if $0 = [x_{\pm}, s]_s = -(-1)^{\tau(x_{\pm} \tau(s))} [s, x_{\pm}]$ for all homogeneous elements $s$ of $\caS$. It follows that $x \in \caS^{\sharp}$ if and only if $[s, x]_s = 0$ for all $s \in \caS$. 

	(iii) Suppose $\{x_i\}_{i \in I}$ is a net in $\caS^{\sharp}$ converging in the WOT to $x \in \caB(\caH)$. \ie for all $\zeta, \xi \in \caH$ we have $\langle \zeta, x_i \xi \rangle \rightarrow \langle \zeta, x \xi \rangle$. Then also, for all $\zeta, \xi \in \caH$, $\langle \zeta, (x_i \pm \theta(x_i) ) \zeta \rangle = \langle \zeta, x \xi \rangle \pm \langle \Theta^* \zeta, x_i \Theta \xi \rangle \rightarrow \langle \zeta, x \xi \rangle \pm \langle \Theta^* \zeta, x \Theta \xi \rangle = \langle \zeta, (x \pm \theta(x)) \xi \rangle$ where $\Theta \in \caU(\caH)$ is such that $\theta = \Ad_{\Theta}$, hence the nets $\{(x_i)_\pm\}_{i \in I}$ converge in the WOT to $x_{\pm}$. By (ii), the nets $\{ (x_i)_{\pm} \}$ are in $\caS^{\sharp}$. We then find for any $\zeta, \xi \in \caH$ and any homogeneous $s \in \caS$ that
	\begin{align*}
		\langle \zeta, [s, x_{\pm}]_s \xi \rangle &= \langle \zeta, (s x_{\pm} - (-1)^{\tau(s) \tau(x_{\pm})} x_{\pm} s   ) \xi \rangle \\
							  &= \langle s^* \zeta, x_{\pm} \xi \rangle - (-1)^{\tau(s) \tau(x_{\pm})} \langle \zeta, x_{\pm} \, s \xi \rangle \\
							  &= \lim_{i \rightarrow \infty} \langle s^* \zeta, (x_i)_{\pm} \xi \rangle - (-1)^{\tau(s) \tau((x_i)_{\pm})} \langle \zeta, (x_i)_{\pm} \, s \xi \rangle \\
							  &= \lim_{i \rightarrow \infty} \langle \zeta, [s, (x_i)_{\pm}]_s \xi \rangle = 0 
	\end{align*}
	where we used $\tau((x_i)_{\pm}) = \tau(x_{\pm})$ for all $i \in I$. Since $\theta(\caS) = \caS$ we have for any $s \in \caS$ that also $s_{\pm} \in \caS$ ,and by linearity of the supercommutator we find
	\begin{equation}
		\langle \zeta, [s, x]_s \xi \rangle = \langle \zeta, [s_+, x_+]_s \xi \rangle + \langle \zeta, [s_+, x_-]_s \xi \rangle + \langle \zeta, [s_-, x_+]_s \xi \rangle + \langle \zeta, [s_-, x_-]_s \xi \rangle = 0
	\end{equation}
	for all $\zeta, \xi \in \caH$. We conclude that $x \in \caS^{\sharp}$, \ie $\caS^{\sharp}$ is closed in the weak operator topology.

	(iv) If $x, y \in \caS^{\sharp}$ and $\lambda \in \C$ then by linearity of the supercommutator we have for any $s \in \caS$ that
	\begin{equation}
		[s, x + \lambda y]_s = [s, x]_s + \lambda [s, y]_s = 0,
	\end{equation}
	so $x + \lambda y \in \caS^{\sharp}$, \ie $\caS^{\sharp}$ is closed under linear operations.

	If $s \in \caS$ and $x, y \in \caS^{\sharp}$ are homogeneous, then
	\begin{align*}
		[s, xy]_s &= s x y - (-1)^{\tau(s) \tau(xy)} xy s \\
			  &=  \big( s x y - (-1)^{\tau(s) \tau(x)} x s y \big) - \big( (-1)^{\tau(s) \tau(xy)} xy s - (-1)^{\tau(s)\tau(xy)} (-1)^{\tau(s)\tau(y)} x s y \big) \\
			  &= [s, x]_s y - (-1)^{\tau(s) \tau(xy)} x [y, s]_s = 0
	\end{align*}
	where we used $\tau(s)\tau(xy) + \tau(s) \tau(y) = \tau(s) \tau(x)$ to get the second line. For arbitrary $s \in \caS$ and $x, y \in \caS^{\sharp}$ we then get $[s, xy]_s = 0$ by decomposing $s, x, y$ in their homogeneous parts. We conclude that $xy \in \caS^{\sharp}$, \ie $\caS^{\sharp}$ is closed under multiplication.

	(v) Take $x \in \caS^{\sharp}$. If $\caS$ is self-adjoint then $[s^*, x]_s = 0$ for all $s \in \caS$ so by point (i) of Lemma \ref{lem:basic properties of the supercommutator}, we find $0 = ([s^*, x]_s)^* = [x^*, s]_s$ for all $s \in \caS$, \ie $x^* \in \caS^{\sharp}$. Since $x$ was arbitrary, we conclude that $\big( \caS^{\sharp} \big)^* = \caS^{\sharp}$.
\end{proof}

\subsection{Double supercommutant theorem}

We prove a super-version of von Neumann's double commutant theorem:

\begin{theorem} \label{thm:double supercommutant theorem}
	Let $\caA \subset \caB(\caH)$ be super *-algebra containing the identity. Then $\caA$ is dense in $\caA^{\sharp \sharp}$ in the strong operator topology. 
\end{theorem}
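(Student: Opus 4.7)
I will mimic von Neumann's classical proof of the double commutant theorem, with the amplification trick, but adapted to keep track of the $\Z_2$-grading throughout. Containment $\caA \subset \caA^{\sharp\sharp}$ is immediate from the definition (and Lemma \ref{lem:basic properties of the supercommutant} (ii) which lets us check supercommutation from either side). For SOT density, I must show: given $b \in \caA^{\sharp\sharp}$, vectors $\xi_1,\dots,\xi_n \in \caH$ and $\ep>0$, there exists $c \in \caA$ with $\|(b-c)\xi_i\|<\ep$ for all $i$.

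\textbf{Step 1: Amplification and its supercommutant.} Work in $\caH^n$ with grading unitary $\Theta^{(n)} = I_n \otimes \Theta$, so that the matrix units $e_{ij}\in M_n(\C)$ are even and an element $A\otimes X \in M_n(\C)\otimes \caB(\caH)$ is homogeneous iff $X$ is, with parity $\tau(X)$. Let $\caA$ act diagonally, $a\mapsto a\otimes I_n$. A direct computation shows
\begin{equation}
[a\otimes I_n,\, e_{ij}\otimes x]_s \;=\; e_{ij}\otimes [a,x]_s
\end{equation}
for homogeneous $a,x$, so the supercommutant of the diagonal copy of $\caA$ inside $\caB(\caH^n)$ is exactly $M_n(\C)\otimes \caA^{\sharp}$, and hence the double supercommutant contains $b\otimes I_n$ for every $b\in\caA^{\sharp\sharp}$.

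\textbf{Step 2: The cyclic projection.} Let $\xi=(\xi_1,\dots,\xi_n)\in\caH^n$, let $K=\overline{\caA\,\xi}$ (with $\caA$ acting diagonally) and let $P$ be the orthogonal projection onto $K$. Because $\caA$ is a $*$-algebra, both $K$ and $K^{\perp}$ are invariant under diagonal $\caA$, so $P$ commutes with the diagonal copy of $\caA$ in the ordinary sense. Now $P$ is even (being a self-adjoint projection, so $\theta^{(n)}(P)$ is again the projection onto $K$, assuming one chooses things consistently — this point is the main thing to verify: $K$ is $\theta^{(n)}$-invariant because $\caA$ is a super subset and $\Theta^{(n)}\xi$ can be arranged to lie in $K$ by separately treating the even/odd parts of $\xi$, or equivalently by replacing $K$ by $K+\Theta^{(n)}K$ which remains $\caA$-invariant and still contains $\xi$). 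Since $P$ is even, the supercommutator $[a\otimes I_n,P]_s$ reduces to the ordinary commutator $[a\otimes I_n,P]$ regardless of the parity of $a$, so $P \in M_n(\C)\otimes \caA^{\sharp}$ by Step 1.

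\textbf{Step 3: Conclude.} Since $I\in\caA$, we have $\xi\in K$, i.e.\ $P\xi=\xi$. Assume first that $b\in\caA^{\sharp\sharp}$ is homogeneous. Then $b\otimes I_n$ is homogeneous and supercommutes with $P\in M_n(\C)\otimes\caA^{\sharp}$; as $P$ is even this again collapses to ordinary commutation, so $(b\otimes I_n)\xi = P(b\otimes I_n)\xi \in K$. Approximate $(b\otimes I_n)\xi$ in norm by some $c\,\xi$ with $c\in\caA$ to obtain the required $\|(b-c)\xi_i\|<\ep$ simultaneously for all $i$. For general $b\in\caA^{\sharp\sharp}$, decompose $b=b_++b_-$ into even and odd parts (which again lie in $\caA^{\sharp\sharp}$ by Lemma \ref{lem:basic properties of the supercommutant} (ii)) and sum the two approximations.

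\textbf{Main obstacle.} The one nontrivial point is checking that the cyclic subspace $K$ can be chosen $\theta^{(n)}$-invariant so that $P$ is honestly even; this is the price to pay for working with a super subspace rather than an ungraded algebra. It is resolved either by enlarging $K$ to $K+\Theta^{(n)}K$ (still $\caA$-invariant since $\caA$ is a super subset, still containing $\xi$, still having $\xi$ in its closure) or by splitting $\xi$ into its homogeneous pieces and treating each separately. Once $P$ is even, supercommutation with $P$ collapses to ordinary commutation and the whole argument reduces to the familiar one.
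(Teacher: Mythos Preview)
Your overall strategy matches the paper's, and you correctly pinpoint the single nontrivial issue: making the cyclic projection $P$ even. But neither of your proposed fixes works as stated.

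\emph{Enlarging $K$ to $K+\Theta^{(n)}K$.} This space is $\caA$-invariant and $\Theta^{(n)}$-invariant, so its projection $\tilde P$ is indeed even and lies in $(\caA\otimes I_n)^\sharp$. Hence $(b\otimes I_n)\xi\in \tilde K$. But elements of $\tilde K$ are limits of $c_1\xi + \Theta^{(n)}c_2\xi$, whose $i$-th component is $(c_1 + \Theta c_2)\xi_i$. Unless $\Theta\in\caA$, this is not of the form $c\xi_i$ with $c\in\caA$, so you cannot conclude the required approximation $\|(b-c)\xi_i\|<\ep$.

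\emph{Splitting $\xi$ into homogeneous pieces.} This only reduces to the case where each $\xi_i$ is homogeneous. With your grading $\Theta^{(n)}=I_n\otimes\Theta$, the vector $\xi$ is still inhomogeneous whenever the $\xi_i$ have mixed parities, and $K=\overline{\caA\xi}$ is then not $\Theta^{(n)}$-invariant: for homogeneous $a$, $\Theta^{(n)}(a\xi)$ has $i$-th component $(-1)^{\tau(a)+\tau(\xi_i)}a\xi_i$, with signs varying across $i$, so it is not of the form $a'\xi$.

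The paper resolves this by \emph{twisting the grading on the amplification}: after reducing to homogeneous $\xi_i$ (its Lemma~\ref{lem:homogeneous base for SOT}), it reorders so the first $p$ are $\Theta$-even and the last $q$ are $\Theta$-odd, and takes $\widetilde\Theta=\Theta^{\oplus p}\oplus(-\Theta)^{\oplus q}$ on $\caH^n$. Then $\eta=\xi_1\oplus\cdots\oplus\xi_n$ is $\widetilde\Theta$-even, so $K=[\widetilde\caA\eta]$ is automatically $\widetilde\Theta$-invariant and $P$ is even. The price is that the supercommutant $\widetilde\caA^\sharp$ now has a block-off-diagonal twist by $\Theta$, which the paper computes explicitly before concluding $\widetilde\caA^{\sharp\sharp}=(\caA^{\sharp\sharp})^{\oplus n}$. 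This twist of the amplified grading is the missing ingredient in your argument.
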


An immediate consequence is
\begin{corollary} \label{cor:double super commutant is vN}
	Let $\caA \subset \caB(\caH)$ be super *-algebra containing the identity. Then $\caA^{\sharp \sharp} = \caA''$ is the von Neumann algebra generated by $\caA$.
\end{corollary}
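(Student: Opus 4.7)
The plan is to deduce the corollary by combining the double supercommutant theorem stated just above with the ordinary double commutant theorem, using the structural properties of the supercommutant recorded in Lemma \ref{lem:basic properties of the supercommutant}.

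First I would verify the inclusion $\caA'' \subseteq \caA^{\sharp\sharp}$. For this, note that $\caA^{\sharp\sharp}$ is by Lemma \ref{lem:basic properties of the supercommutant} a self-adjoint, unital subalgebra of $\caB(\caH)$ that is closed in the weak operator topology, and it manifestly contains $\caA$ (every element of $\caA$ supercommutes with everything in $\caA^{\sharp}$). Hence $\caA^{\sharp\sharp}$ is a von Neumann algebra containing $\caA$, and so contains the smallest such algebra, which is $\caA''$.

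For the reverse inclusion $\caA^{\sharp\sharp} \subseteq \caA''$, the key input is Theorem \ref{thm:double supercommutant theorem}: $\caA$ is strongly dense in $\caA^{\sharp\sharp}$. On the other hand $\caA \subseteq \caA''$, and $\caA''$ is SOT-closed by the usual double commutant theorem. Consequently the SOT-closure of $\caA$, which is all of $\caA^{\sharp\sharp}$, is contained in $\caA''$. Putting the two inclusions together gives $\caA^{\sharp\sharp} = \caA''$.

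There is essentially no obstacle here; the work has all been done in the lemma (which supplies WOT-closedness and *-algebra structure of the supercommutant) and in the double supercommutant theorem (which supplies SOT-density). The only minor point to be careful about is that the lemma is stated for super subspaces $\caS$ with $\theta(\caS) = \caS$, so one should remark that a super *-algebra $\caA$ is in particular such a subspace, so the lemma applies to $\caA$, and then to $\caA^{\sharp}$ in order to conclude that $\caA^{\sharp\sharp}$ is a WOT-closed super *-algebra containing the identity.
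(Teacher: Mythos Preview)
Your proof is correct and uses the same ingredients as the paper: Lemma \ref{lem:basic properties of the supercommutant} for WOT-closedness of $\caA^{\sharp\sharp}$, Theorem \ref{thm:double supercommutant theorem} for density of $\caA$, and the classical bicommutant theorem. The paper's version is marginally more economical --- it observes directly that $\caA^{\sharp\sharp}$ is weakly closed and contains $\caA$ as a weakly dense subset, hence equals the weak closure of $\caA$, which is $\caA''$ --- rather than splitting into two inclusions; but this is a cosmetic difference, not a substantive one.
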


\begin{proof}
	From Theorem \ref{thm:double supercommutant theorem} $\caA$ is strongly dense, hence weakly dense in $\caA^{\sharp \sharp}$. By Lemma \ref{lem:basic properties of the supercommutant} (iii) $\caA^{\sharp \sharp}$ is weakly closed. It follows that $\caA^{\sharp \sharp}$ is the weak closure of $\caA$. By von Neumann's bicommutant theorem, this is $\caA''$, the von Neumann algebra generated by $\caA$.
\end{proof}

To prove Theorem \ref{thm:double supercommutant theorem} we follow closely the proof of von Neumann's double commutant theorem as presented in \cite{Arveson2012}.

\begin{proofof}[Theorem \ref{thm:double supercommutant theorem}]
	Let $\caA_s$ and $\caA_w$ be the strong and weak closure of $\caA$ respectively. Then $\caA_s \subset \caA_w \subset \caA^{\sharp \sharp}$ because clearly $\caA \subset \caA^{\sharp \sharp}$ and the latter is weakly closed by Lemma \ref{lem:basic properties of the supercommutant} (iii).

	It therefore suffices to show that every $x \in \caA^{\sharp \sharp}$ can be strongly approximated by operators in $\caA$. By Lemma \ref{lem:homogeneous base for SOT} it is sufficient to show that for every $\ep > 0$, and every finite set of homogeneous vectors $\xi_1, \cdots, \xi_n \in \caH$ there is an operator $a \in \caA$ also in
	\begin{equation}
		\caU(x ; \xi_1, \cdots \xi_n ; \ep) = \{ y \in \caB(\caH) \, : \, \sum_{i=1}^n \norm{ (x-y) \xi_i}^2 < \ep^2 \}.
	\end{equation}

	We first deal with the case $n = 1$. Fix a grading operator $\Theta$ and let $\Theta \xi = (-1)^{\tau_i} \xi_i$ for $\tau_i \in \{0, 1\}$. Let $p$ be the orthogonal projection on $[\caA \xi_1]$, the (Hilbert space) norm closure of $\caA \xi_1 = \{ a \xi_1 \, : \, a \in \caA \}$. Since $[\caA \xi_1]$ and its orthogonal complement are both invariant under $\caA$ we have $p \in \caA'$. Moreover, $p$ is even. Indeed, since $\xi_1$ is homogeneous and $\caA = \theta(\caA)$, we find that $[\caA \xi_1]$ is invariant under $\Theta$, and so is $[\caA \xi_1]^{\perp}$. Decomposing an aribtary $\xi = \xi_{\parallel} + \xi_{\perp}$ with $\xi_{\parallel} \in [\caA \xi_1]$ and $\xi_{\perp} \in [\caA \xi_1]^{\perp}$ we have
	\begin{equation}
		\theta(p) \xi = \Theta p (\Theta \xi_{\parallel} + \Theta \xi_{\perp} ) = \Theta^2 \xi_{\parallel} = p \xi.
	\end{equation}
	Since this holds for any $\xi \in \caH$, we have $\theta(p) = p$, \ie $p$ is even.

	Since $p$ is an even element of $\caA'$ also $p \in \caA^{\sharp}$.

	Since $\I \in \caA$ we have $\xi_1 \in [\caA \xi_1]$ so $p^{\perp} \xi_1 = 0$. The operator $x \in \caA^{\sharp \sharp}$ commutes with the even $p \in \caA^{\sharp}$ so $x$ leaves the range of $p$ invariant. In particular $x \xi_1 \in [\caA \xi_1]$ so for any $\ep > 0$ there is an $a \in \caA$ such that $\norm{(x - a) \xi_1} < \ep$, \ie $a \in \caU(x ; \xi_1 ; \ep)$, as required.

	If $n > 1$, we have homogeneous vectors $\xi_1, \cdots, \xi_n$. Order them such that $\Theta \xi_i = \xi_i$ for $i = 1, \cdots, p$ and $\Theta \xi_i = - \xi_i$ for $i = p+1, \cdots p+q$ for inergers $p+q=n$. Let $\widetilde \caH = \caH^{\oplus n}$ and equip $\caB(\widetilde \caH)$ with a $\Z_2$ grading $\tilde \theta \ \Ad_{\widetilde \Theta}$ where $\widetilde \Theta = \Theta^{\oplus p} \oplus (- \Theta)^{\oplus q}$. Then the vector $\eta = \xi_1 \oplus \cdots \oplus \xi_n \in \widetilde \caH$ is homogeneous w.r.t $\widetilde \Theta$, in fact $\widetilde \Theta \eta = \eta$.

	We will often decompose elements of $\caB(\widetilde \caH)$ as $2 \times 2$ block matrices
	\begin{equation}
		\tilde x = \begin{bmatrix} x_{++} & x_{+-} \\ x_{-+} & x_{--} \end{bmatrix}
	\end{equation}
	with $x_{++} ,x_{--}, x_{+-}$ and $x_{-+}$ matrices of dimension $p \times p$, $q \times q$, $p \times q$ and $q \times p$ respectivley, with entries from $\caB(\caH)$.

	An element $\tilde x \in \caB(\widetilde \caH)$ is even ($\tilde \theta(\tilde x) = \tilde x$) if and only if the entries of $x_{++}$ and $x_{--}$ are even elements of $\caB(\caH)$ and the entries of $x_{+-}$ and $x_{-+}$ are odd elements of $\caB(\caH)$. Similarly, an element $\tilde x \in \caB(\widetilde \caH)$ is odd ($\tilde \theta(\tilde x) = -\tilde x$) if and only if the entries of $x_{++}$ and $x_{--}$ are odd elements of $\caB(\caH)$ and the entries of $x_{+-}$ and $x_{-+}$ are even elements of $\caB(\caH)$.

	Consider the $\tilde \theta$-invariant *-algebra $\widetilde \caA := \caA^{\oplus n} \subset \caB(\widetilde \caH)$. \ie $\widetilde \caA$ consists of elements
	\begin{equation}
		\tilde a = \begin{bmatrix} a & &  & 0 \\ & a & & \\ & & \ddots & \\ 0 & & & a \end{bmatrix}
	\end{equation}
	for $a \in \caA$.
	
	For a matrix $y$ with entries in $\caB(\caH)$, let $\Theta y$ denote the matrix obtained by multiplying each of the entries of $y$ with the grading operator $\Theta$.	A straightforward calculation shows that $\widetilde \caA^{\sharp}$ is spanned by homogeneous elements of the form
	\begin{equation}
		\tilde y = \begin{bmatrix} y_{++} & \Theta y_{+-} \\ \Theta y_{-+} & y_{--} \end{bmatrix}
	\end{equation}
	where all the entries of $y_{++}, y_{--}, y_{+-}$ and $y_{+-}$ are elements of $\caA^{\sharp}$.

	A similar calculation \note{this might need elaboration...} shows that $\widetilde \caA^{\sharp \sharp}$ consists of elements of the form
	\begin{equation}
		\tilde x = \begin{bmatrix} x^{\oplus p} & 0 \\ 0 & x^{\oplus q} \end{bmatrix} = x^{\oplus n} 
	\end{equation}
	where $x \in \caA^{\sharp \sharp}$. \ie $\widetilde \caA^{\sharp \sharp} = (\caA^{\sharp \sharp})^{\oplus n}$.

	For $x \in \caA^{\sharp \sharp}$ we therefore have $\tilde x = x^{\oplus n} \in (\widetilde \caA)^{\sharp \sharp}$. We can then apply the case $n=1$ to find for every $\ep > 0$ an element $\tilde a \in \widetilde \caA$ such that $\tilde a \in \caU(\tilde x ; \eta ; \ep)$. Since $\tilde a = a^{\oplus n}$ for some $a \in \caA$, this yields $a \in \caU(x ; \xi_1, \cdots, \xi_n ; \ep)$ as required.
\end{proofof}

\begin{lemma} \label{lem:homogeneous base for SOT}
	The sets
	\begin{equation}
		\caU(x ; \xi_1, \cdots, \xi_n ; \ep) := \{ y \in \caB(\caH) \, : \, \sum_{i=1}^n \norm{ (x-y) \xi_i}^2 < \ep^2 \} \subset \caB(\caH)
	\end{equation}
	for $x \in \caB(\caH)$, $\ep > 0$ and $\xi_1, \cdots, \xi_n \in \caH$ homogenous form a base for the strong operator topology on $\caB(\caH)$.
\end{lemma}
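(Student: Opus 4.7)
The plan is to show two inclusions of topologies, both in terms of neighborhood bases. The sets $\caU(x; \xi_1, \ldots, \xi_n; \ep)$ with \emph{arbitrary} vectors $\xi_i$ form the standard base for the strong operator topology, so I need to show that (a) each such set with \emph{homogeneous} vectors is SOT-open (trivial, since it is already a standard basic open set), and (b) every SOT-open neighborhood of any $y \in \caB(\caH)$ contains a homogeneous basic set also containing $y$.

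For (b), first I would translate the problem to recentering at $y$. Given $y \in \caU(x; \eta_1, \ldots, \eta_n; \ep)$, set $\delta = \ep - \sqrt{\sum_i \norm{(x-y) \eta_i}^2} > 0$. The Minkowski inequality applied to the $\ell^2$ norm on $\R^n$, combined with the operator-triangle inequality $\norm{(x-z)\eta_i} \leq \norm{(x-y)\eta_i} + \norm{(y-z)\eta_i}$, then gives $\caU(y; \eta_1, \ldots, \eta_n; \delta) \subset \caU(x; \eta_1, \ldots, \eta_n; \ep)$. So it suffices to show that any set of the form $\caU(y; \eta_1, \ldots, \eta_n; \delta)$ contains a homogeneous basic set containing $y$.

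To produce such a homogeneous basic set, decompose each $\eta_i = \eta_i^+ + \eta_i^-$ with $\eta_i^{\pm} := \frac{1}{2}(\eta_i \pm \Theta \eta_i)$, which are homogeneous since $\Theta^2 = \I$. Using $\norm{a+b}^2 \leq 2(\norm{a}^2 + \norm{b}^2)$ on each summand gives
\begin{equation}
\sum_{i=1}^n \norm{(y-z) \eta_i}^2 \leq 2 \sum_{i=1}^n \big( \norm{(y-z)\eta_i^+}^2 + \norm{(y-z)\eta_i^-}^2 \big)
\end{equation}
so that the homogeneous basic set $\caU(y; \eta_1^+, \eta_1^-, \ldots, \eta_n^+, \eta_n^-; \delta/\sqrt{2})$ is contained in $\caU(y; \eta_1, \ldots, \eta_n; \delta)$. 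Clearly $y$ lies in both. Concatenating the two inclusions produces the required homogeneous basic neighborhood inside the original SOT-open set.

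There is no real obstacle here; the only thing to keep track of is the precise choice of $\delta$ so that re-centering works, and the factor of $\sqrt{2}$ coming from splitting into even and odd parts. The lemma is a genuinely routine topological fact that justifies the reduction to homogeneous test vectors used in the proof of Theorem \ref{thm:double supercommutant theorem}.
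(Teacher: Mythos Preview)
Your proof is correct and follows essentially the same approach as the paper: decompose each test vector into its homogeneous parts $\eta_i^{\pm} = \tfrac{1}{2}(\eta_i \pm \Theta\eta_i)$ and compare the resulting basic neighborhoods via an elementary inequality. Two minor differences: the paper works directly with neighborhoods centered at the given point (so your re-centering step, while correct, is not needed), and the paper uses the cruder bound $\norm{(x-y)\xi_i} \leq 2\delta$ on each summand, picking up an extra factor of $\sqrt{n}$ in the shrinking of $\ep$, whereas your use of $\norm{a+b}^2 \leq 2(\norm a^2 + \norm b^2)$ gives the cleaner $n$-independent factor $\sqrt 2$.
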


\begin{proof}
	Let $\caU_s \subset \caB(\caH)$ be a strongly open set and take $x \in \caU_s$. We want to find a finite set of homogeneous vectors $\{\eta_i\}_{i \in I}$ and $\ep  0$ such that $x \in \caU(x ; \{\eta_i\}_{i \in I} ; \ep) \subset \caU_s$.

	The sets $\caU(x ; \xi_1, \cdots, \xi_n ; \ep)$ where the $\xi_i$ are not neccessarily homogeneous form a base of the strong operator topology. So there are vector $\xi_1, \cdots \xi_n$ and $\ep > 0$ such that $x \in \caU(x : \xi_1, \cdots, \xi_n ; \ep) \subset \caU_s$.

	Let $\eta_{i, \pm} = \frac{1}{2}(\xi_i \pm \Theta \xi_i)$ so $\xi_i = \eta_{i, +} + \eta_{i, -}$. The vectors $\eta_{i, \pm}$ are homogeneous, and have parity $\pm$ w.r.t. the grading unitary $\Theta$.

	We have $y \in \caU( x ; \{ \eta_{i, \pm} \}, ; \delta )$ if and only if
	\begin{equation}
		\sum_i \left(  \norm{(x-y) \eta_{i, +}}^2 + \norm{ (x-y) \eta_{i, -} }^2  \right) \leq \delta^2
	\end{equation}
	in which case $\norm{(x-y) \eta_{i, +}} \leq \delta$ for all $i = 1, \cdots, n$. It follows that
	\begin{equation}
		\norm{(x-y)\xi_i} = \norm{ (x-y) (\eta_{i, +} + \eta_{i, -}) } \leq 2 \delta
	\end{equation}
	for all $i$, so taking $\delta = \ep / (2 \sqrt{n})$ we get
	\begin{equation}
		\sum_i \norm{(x-y) \xi_i}^2 < n (2\delta)^2 = \ep^2
	\end{equation}
	so $y \in \caU(x ; \{\xi_i \} ; \ep)$. We conclude that
	\begin{equation}
		x \in \caU( x ; \{ \eta_{i, \pm} \}, ; \delta ) \subset \caU(x ; \{\xi_i \} ; \ep) \subset \caU_s
	\end{equation}
	where $\{\eta_{i \pm}\}$ is a finite collection of homogeneous vectors. This concludes the proof.
\end{proof}

\subsection{Conditional Expectations}

We recall first the definition:
\begin{definition} \label{def:conditional expectation}
	A conditional expectation on a von Neumann algebra $\caM \subset \caB(\caH)$ is a linear map $\E : \caB(\caH) \rightarrow \caM$ such that
	\begin{enumerate}[label=(\roman*)]
		\item $\E$ is completely positive.
		\item for all $x \in \caB(\caH)$ we have $\norm{\E(x)} \leq \norm{x}$, \ie $\E$ is a contraction.
		\item for all $a, b \in \caM$ and all $x \in \caB(\caH)$ we have $\E(axb) = a \E(x) b$.
	\end{enumerate}
\end{definition}

We will construct conditional expectations on the supecommutants of certain hyperfinite super von Neumann algebras. We begin by stating a useful result from \cite{NSW2013} (Proposition 4.1):
\begin{theorem} \label{thm:good CE on commutant}
	If $\caA \subset \caB(\caH)$ is a hyperfinite von Neuman algebra, then there is a conditional expectation $\E_{\caA'} : \caB(\caH) \rightarrow \caA'$ such that if $\norm{[x, a]} \leq \ep \norm{x} \norm{a}$ for all $a \in \caA$, then $\norm{\E_{\caA'}(x) - x} \leq \ep \norm{x}$. 
\end{theorem}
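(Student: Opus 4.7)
The plan is to exploit hyperfiniteness to build $\E_{\caA'}$ as an ultrafilter limit of simple averaging maps over finite-dimensional subalgebras. Write $\caA$ as the weak closure of an increasing union $\bigcup_{n} \caA_n$ of finite-dimensional *-subalgebras. For each $n$, define the averaging map
\begin{equation*}
	\E_n(x) := \int_{U(\caA_n)} u x u^* \, du,
\end{equation*}
where the integral is against normalized Haar measure on the compact unitary group $U(\caA_n)$. Invariance of the Haar measure immediately shows that $\E_n$ takes values in $\caA_n'$ and fixes $\caA_n'$ pointwise; complete positivity, unitality, and the $\caA_n'$-bimodule property are straightforward. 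So each $\E_n$ is a conditional expectation $\caB(\caH) \to \caA_n'$.

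The key quantitative step: for any unitary $u \in U(\caA_n)$, under the commutator hypothesis we have $\norm{u x u^* - x} = \norm{[u, x]} \leq \ep \norm{x}$, and averaging over $U(\caA_n)$ gives
\begin{equation*}
	\norm{\E_n(x) - x} \leq \int_{U(\caA_n)} \norm{u x u^* - x} \, du \leq \ep \norm{x},
\end{equation*}
provided $\norm{[x, a]} \leq \ep \norm{x} \norm{a}$ for all $a \in \caA_n$ (which holds in particular if this bound holds on the whole of $\caA$).

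Next, fix a free ultrafilter $\omega$ on $\N$ and define $\E_{\caA'}(x)$ as the weak-* limit of $\E_n(x)$ along $\omega$. Existence is guaranteed because each $\E_n$ is norm-contractive, so $\{\E_n(x)\}$ sits in the weak-* compact ball of radius $\norm{x}$ in $\caB(\caH)$ (the dual of trace class). Complete positivity, unitality, contractivity, and the bimodule identity $\E_{\caA'}(axb) = a \E_{\caA'}(x) b$ for $a,b \in \caA'$ pass to the limit from the corresponding properties of $\E_n$ (using that $\caA' \subseteq \caA_n'$ for every $n$, so every $\E_n$ is already $\caA'$-bilinear). To see that $\E_{\caA'}(x) \in \caA'$, note that for any $a \in \caA_m$ and any $n \geq m$, $\E_n(x)$ commutes with $a$; taking the weak-* limit preserves this, and since $\bigcup_m \caA_m$ is weakly dense in $\caA$ while multiplication by a fixed $a$ is weak-* continuous on $\caB(\caH)$, $\E_{\caA'}(x)$ commutes with all of $\caA$. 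The estimate $\norm{\E_{\caA'}(x) - x} \leq \ep \norm{x}$ survives the limit by weak-* lower semicontinuity of the norm.

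The main obstacle is the verification that the limit genuinely lies in $\caA'$ and respects the bimodule property; this is where the hyperfinite (increasing) structure of $\caA$ is essential, since it ensures $\caA' = \bigcap_n \caA_n'$ and lets us propagate commutation through an approximation argument. Injectivity of any hyperfinite factor gives a cleaner but less hands-on alternative (the conditional expectation onto $\caA'$ exists abstractly), but the averaging construction above has the advantage that the quantitative bound comes out essentially for free.
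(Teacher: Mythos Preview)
Your proposal is correct and matches the approach the paper takes. The paper itself does not prove this particular theorem---it is quoted from \cite{NSW2013}---but in its proofs of the graded and equivariant analogues (Propositions~\ref{prop:conditional expectation on supercommutant of theta-hyperfinite vN algebra} and~\ref{prop:equivariant conditional expectation}) it uses exactly your construction: Haar-average over the unitary group of each finite-dimensional subalgebra, observe that the commutator hypothesis gives $\norm{\E_n(x)-x}\leq \ep\norm{x}$ directly, and pass to a weak limit (the paper uses a universal subnet/weak accumulation point rather than an ultrafilter, but this is cosmetic).
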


We want to construct conditional expectations on $\caA^{\sharp}$. We can use the above theorem to obtain such a conditional expectation if we know that $\widetilde \caA := (\caA^{\sharp})'$ is hyperfinite. For the following class of super von Neumann algebras, we can describe $\widetilde \caA$ explicitly:

\begin{definition} \label{def:inner-super}
	A von Neumann algebra $\caA \subset \caB(\caH)$ is said to be inner-super if there is a unitary $\Theta_{\caA} \in \caA$ with $\Theta_{\caA}^2 = \I$ implementing the grading, $\theta(a) = \Ad_{\Theta_{\caA}}(a)$ for all $a \in \caA$. \ie $\theta|_{\caA}$ is inner.
\end{definition}

\begin{lemma} \label{lem:tilde A of inner-super A}
	If $\caA \subset \caB(\caH)$ is inner-super with grading implemented by $\Theta_{\caA} \in \caA$ then
	\begin{equation}
		\widetilde \caA := (\caA^{\sharp})' = \{ a_+ + a_- \Theta_{\caA^{\sharp}} \, : \, a \in \caA  \}.
	\end{equation}
	where $\Theta_{\caA^{\sharp}} = \Theta_{\caA} \Theta$.

	Moreover, $\widetilde \caA$ is isomorphic to $\caA$ as a super algebra.
\end{lemma}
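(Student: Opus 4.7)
My plan is to prove the displayed equality by exhibiting a concrete super *-isomorphism $\phi : \caA \to \widetilde\caA$ given by $\phi(a) = a_+ + a_-\Theta_{\caA^{\sharp}}$, so both statements will be proved together. The first preparatory step is to establish structural properties of $\Theta_{\caA^{\sharp}} := \Theta_{\caA}\Theta$. Since $\Theta_{\caA}\in\caA$ implements $\theta|_{\caA}$, it is $\theta$-invariant, hence $\Theta$ and $\Theta_{\caA}$ commute; this gives $\Theta_{\caA^{\sharp}}^2 = \I$, $\Theta_{\caA^{\sharp}}^*=\Theta_{\caA^{\sharp}}$, and $\theta(\Theta_{\caA^{\sharp}}) = \Theta_{\caA^{\sharp}}$ (even). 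A one-line computation using $\Theta_{\caA}a = \theta(a)\Theta_{\caA}$ for homogeneous $a\in\caA$ shows $\Theta_{\caA^{\sharp}}a = a\Theta_{\caA^{\sharp}}$, so $\Theta_{\caA^{\sharp}}\in\caA'\cap\caA^{\sharp}$, and the identity $\Theta_{\caA^{\sharp}}x\Theta_{\caA^{\sharp}} = (-1)^{\tau(x)}x$ for homogeneous $x\in\caA^{\sharp}$ shows that $\Theta_{\caA^{\sharp}}$ implements the ambient grading on $\caA^{\sharp}$.

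For the inclusion $\widetilde\caA \supset \phi(\caA)$ I would check that $\phi(a)x = x\phi(a)$ for any homogeneous $x\in\caA^{\sharp}$: the even part gives $a_+x = xa_+$ (since super-commutation with even elements is ordinary commutation), and for the odd part the two sign factors from $a_-x = (-1)^{\tau(x)}xa_-$ and $\Theta_{\caA^{\sharp}}x = (-1)^{\tau(x)}x\Theta_{\caA^{\sharp}}$ cancel. For the reverse inclusion, take $y\in\widetilde\caA$ and decompose $y = y_+ + y_-$; since $\caA^{\sharp}$ is $\theta$-invariant, so is $\widetilde\caA$, and both $y_{\pm}$ lie in $\widetilde\caA$. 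Because $y_+$ is even, its ordinary commutator with any $x\in\caA^{\sharp}$ coincides with $[y_+,x]_s$, so $y_+\in(\caA^{\sharp})^{\sharp}=\caA$ by Corollary \ref{cor:double super commutant is vN}, i.e.\ $y_+ = a_+\in\caA_+$. For $y_-$, consider $\Theta_{\caA^{\sharp}}y_-$: applying the graded Leibniz rule to $[\Theta_{\caA^{\sharp}}y_-,x]_s$ for homogeneous $x\in\caA^{\sharp}$, using $y_-x = xy_-$ (ordinary commutation) and $\Theta_{\caA^{\sharp}}x = (-1)^{\tau(x)}x\Theta_{\caA^{\sharp}}$, the two contributions cancel in both parity cases. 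Hence $\Theta_{\caA^{\sharp}}y_-\in(\caA^{\sharp})^{\sharp}=\caA$, and being odd it is some $a_-\in\caA_-$; since $\Theta_{\caA^{\sharp}}$ commutes with $\caA$, this gives $y_- = a_-\Theta_{\caA^{\sharp}}$, completing the reverse inclusion.

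For the super *-algebra isomorphism, $\phi$ is surjective by the equality just proved, and injective because $\phi(a)=0$ together with $\theta(\phi(a)) = a_+ - a_-\Theta_{\caA^{\sharp}} = 0$ forces $a_{\pm}=0$. Multiplicativity reduces, using $\Theta_{\caA^{\sharp}}\in\caA'$ and $\Theta_{\caA^{\sharp}}^2=\I$, to the identity $(a_+ + a_-\Theta_{\caA^{\sharp}})(b_+ + b_-\Theta_{\caA^{\sharp}}) = (a_+b_+ + a_-b_-) + (a_+b_- + a_-b_+)\Theta_{\caA^{\sharp}} = (ab)_+ + (ab)_-\Theta_{\caA^{\sharp}}$, which is exactly the parity decomposition of $ab$; the involution is preserved for the same reason together with $\Theta_{\caA^{\sharp}}^*=\Theta_{\caA^{\sharp}}$; and $\theta(\phi(a)) = a_+ - a_-\Theta_{\caA^{\sharp}} = \phi(\theta(a))$ shows $\phi$ intertwines the gradings. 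The main point where care is needed is the Leibniz step for $\Theta_{\caA^{\sharp}}y_-$, where one must use the anti-commutation of $\Theta_{\caA^{\sharp}}$ with odd elements of $\caA^{\sharp}$ (not of $\caA$!) to see the cancellation; the rest is bookkeeping with parities once the initial structural properties of $\Theta_{\caA^{\sharp}}$ are in place.
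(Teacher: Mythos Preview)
Your proof is correct and follows essentially the same strategy as the paper: establish the structural properties of $\Theta_{\caA^{\sharp}}$, then show that the map $\phi(a)=a_++a_-\Theta_{\caA^{\sharp}}$ is a super $*$-isomorphism of $\caA$ onto $(\caA^{\sharp})'$. The only noteworthy difference is in how the set equality $\widetilde\caA=\phi(\caA)$ is obtained. The paper sets $\caC=\phi(\caA)$ and proves directly that $\caC'=\caA^{\sharp}$ by computing, for homogeneous $x$, that $[x,a_+]=0$ and $[x,a_-\Theta_{\caA^{\sharp}}]=0$ together are equivalent to $[x,a_{\pm}]_s=0$; taking commutants then gives $\widetilde\caA=\caC''$, which requires the (easy but unstated) observation that $\phi$ is normal so $\caC$ is already a von Neumann algebra. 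You instead prove both inclusions directly, invoking Corollary~\ref{cor:double super commutant is vN} to identify $(\caA^{\sharp})^{\sharp}$ with $\caA''=\caA$ and thereby recognise $y_+$ and $\Theta_{\caA^{\sharp}}y_-$ as elements of $\caA$. Your route trades the normality check for an appeal to the double supercommutant theorem; both are short, and your version has the minor advantage of not leaving that step implicit.
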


\begin{proof}
	$\Theta_{\caA^{\sharp}} = \Theta_{\caA} \Theta$ is an even element of $\caA^{\sharp}$ such that $\theta(b) = \Ad_{\Theta_{\caA^{\sharp}}}(b)$ for all $b \in \caA^{\sharp}$. Indeed, to see that $\Theta_{\caA_{\sharp}} \in \caA^{\sharp}$, note that for $x \in \caA$ we have $\Theta_{\caA^{\sharp}}^* x \Theta_{\caA^{\sharp}} = \Theta^* \Theta_{\caA}^* x \Theta_{\caA} \Theta = \theta( \theta(x)) = x$, so $\Theta_{\caA^{\sharp}} \in \caA'$. since $\Theta_{\caA^{\sharp}}$ is even, we also have $\Theta_{\caA^{\sharp}} \in \caA^{\sharp}$. To see that $\Theta_{\caA^{\sharp}}$ implements $\theta$ on $\caA^{\sharp}$, take $b \in \caA^{\sharp}$ and compute $\theta(b) = \Theta^* b \Theta = \Theta_{\caA^{\sharp}}^* \Theta_{\caA^*} b \Theta_{\caA} \Theta_{\caA^{\sharp}} = \Ad_{\Theta_{\caA^{\sharp}}}(b)$, where we used that $\Theta_{\caA} \in \caA$ is even and therefore commutes with all elements of $\caA^{\sharp}$.

	Denote $\caC = \{ a_+ + a_- \Theta_{\caA^{\sharp}} \, : \, a \in \caA  \}$. We show that $\widetilde \caA = \caC$ by showing that $\caC' = \caA^{\sharp}$. 

A homogeneous element $x$ is in the commutant of $\caC$ if and only if $[x, a_+] = [x, a_+]_s = 0$ and $[x, a_- \Theta_{\caA^{\sharp}}] = 0$ for all $a \in \caA$. The first condition in particular entails that $[x, \Theta_{\caA}] = 0$, because $\Theta_{\caA} \in \caA^+$. The second condition can then be written
	\begin{align*}
		0 &= x a_- \Theta_{\caA} \Theta - a_- \Theta_{\caA} \Theta x \\
		  &= \big( x a_- \Theta_{\caA} - (-1)^{\tau(x)}  a_- \Theta_{\caA} x \big) \Theta \\
		  &= \big(   x a_- \Theta_{\caA} - (-1)^{\tau(x)} a_- x \Theta_{\caA} + (-1)^{\tau(x)} a_- x \Theta_{\caA} - (-1)^{\tau(x)}  a_- \Theta_{\caA} x  \big) \Theta \\
		  &= \big( [x, a_-]_s \Theta_{\caA} + (-1)^{\tau(x)} a_- [x, \Theta_{\caA}] \big) \Theta \\
		  &= [x, a_-]_s \Theta_{\caA^{\sharp}}
	\end{align*}
	where in obtaining the last line we used $[x, \Theta_{\caA}] = 0$. We find that $x \in \caC'$ if and only if $[x, a_+]_s = [x, a_-]_s$ for all $a \in \caA$. By Lemma \ref{lem:basic properties of the supercommutant} (ii) we conclude that $x \in \caC'$ if and only if $x \in \caA^{\sharp}$. Since both algebras are spanned by their homogeneous elements, we conclude that they are equal.

Finally, $\widetilde \caA$ is the image of $\Phi_{\caA} : \caA \rightarrow \caB(\caH) : a \mapsto a_+ + a_- \Theta_{\caA^{\sharp}}$. Using that $\Theta_{\caA_{\sharp}}$ commutes with all elements of $\caA$, is self-adjoint and squares to $\I$, one easily checks that $\Phi_{\caA}$ is a unital and injective super *-homomorphism, so $\widetilde \caA$ is a superalgebra isomorphic to $\caA$.
\end{proof}

We have further,
\begin{lemma} \label{lem:supercommute with A iff commute with tilde A}
Let $\caA \subset \caB(\caH)$ be an inner-super factor and suppose $x \in \caB(\caH)$ satisfies $\norm{[x, a]_s} \leq \ep \norm{x} \norm{a}$ for all $a \in \caA$. Then
	\begin{equation}
		\norm{[x, \tilde a]} \leq 3 \ep \norm{x} \norm{\tilde a}
	\end{equation}
	for any $\tilde a \in \widetilde \caA$.

	Conversely, if $x \in \caB(\caH)$ satisfies $\norm{[x, \tilde a]} \leq \ep \norm{x} \norm{\tilde a}$ for all $\tilde a \in \widetilde \caA$ then
	\begin{equation}
		\norm{[x, a]_s} \leq 3 \ep \norm{x} \norm{a}
	\end{equation}
	for all $a \in \caA$.
\end{lemma}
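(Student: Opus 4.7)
The plan is to reduce both implications to a single algebraic identity relating $[x,\tilde a]$ to $[x,a]_s$, using the explicit isomorphism $\Phi_{\caA}:\caA \to \widetilde\caA$ from Lemma \ref{lem:tilde A of inner-super A}: on a homogeneous $a \in \caA$ it acts as the identity when $a$ is even and as right-multiplication by $\Theta_{\caA^{\sharp}} = \Theta_{\caA}\Theta$ when $a$ is odd. The even case is essentially free because the super-commutator and the ordinary commutator coincide whenever one argument is even, so all the work is concentrated in the odd case.

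For $a \in \caA$ odd and $x \in \caB(\caH)$ arbitrary, I would establish the identity
\begin{equation} \label{eq:super-key-identity}
[x, a\Theta_{\caA^{\sharp}}] \;=\; [x,a]_s \, \Theta_{\caA^{\sharp}} \;+\; a\, [\theta(x), \Theta_{\caA}]_s \, \Theta.
\end{equation}
To derive it I would split $x = x_+ + x_-$ and use the relation $\Theta y = \theta(y)\Theta$ to move $\Theta$ past $x$, obtaining $\Theta_{\caA^{\sharp}} x = \Theta_{\caA}\theta(x)\Theta$. After adding and subtracting $a x \Theta_{\caA}$ and grouping the $x_+$ and $x_-$ contributions separately, the even part produces $[x_+,a]\Theta_{\caA} + a[x_+, \Theta_{\caA}]$ and the odd part produces $\{x_-,a\}\Theta_{\caA} - a[x_-, \Theta_{\caA}]$; these recombine into $[x,a]_s \Theta_{\caA}$ in the first term and $a\bigl([x_+, \Theta_{\caA}]_s - [x_-,\Theta_{\caA}]_s\bigr) = a\,[\theta(x), \Theta_{\caA}]_s$ in the second, giving \eqref{eq:super-key-identity} after right-multiplication by $\Theta$.

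For the forward implication, the hypothesis applied to $a \in \caA$ bounds the first term of \eqref{eq:super-key-identity} by $\ep\|x\|\|a\|$. For the second, I observe that $\Theta_{\caA} \in \caA$ with $\|\Theta_{\caA}\| = 1$ and that $\theta$ is isometric with $\theta(\Theta_{\caA}) = \Theta_{\caA}$, so $\|[\theta(x), \Theta_{\caA}]_s\| = \|\theta([x,\Theta_{\caA}]_s)\| = \|[x, \Theta_{\caA}]_s\| \le \ep\|x\|$. Hence $\|[x,\tilde a]\| \le 2\ep \|x\|\|a\|$ for homogeneous odd $a$. For general $a \in \caA$ I decompose $a = a_+ + a_-$ so that $\tilde a = a_+ + a_-\Theta_{\caA^{\sharp}}$; the even and odd pieces contribute factors $\ep$ and $2\ep$ respectively, and since $\|a_\pm\| \le \|a\| = \|\tilde a\|$ (the last equality because $\Phi_{\caA}$ is a $*$-isomorphism) the triangle inequality yields the claimed $3\ep$.

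For the converse, the essential new input is that $\Theta_{\caA}$ also lies in $\widetilde\caA$: it is even, so $\Phi_{\caA}(\Theta_{\caA}) = \Theta_{\caA}$, and the hypothesis applied to this element gives $\|[x,\Theta_{\caA}]_s\| = \|[x,\Theta_{\caA}]\| \le \ep\|x\|$. Rearranging \eqref{eq:super-key-identity} to solve for $[x,a]_s\,\Theta_{\caA^{\sharp}}$, using unitarity of $\Theta_{\caA^{\sharp}}$, and applying the hypothesis to $\tilde a \in \widetilde\caA$ and to $\Theta_{\caA} \in \widetilde\caA$ yields $\|[x,a]_s\| \le 2\ep\|x\|\|a\|$ for odd $a$; the even case and the same splitting of $a$ then give the factor $3$. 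The only step requiring genuine care is the sign bookkeeping in \eqref{eq:super-key-identity}, where the interplay between the two grading operators $\Theta$ (on $\caB(\caH)$) and $\Theta_{\caA}$ (on $\caA$) must be tracked carefully; once this identity is in place, the norm bounds are immediate from the triangle inequality and the observation that $\Theta_{\caA}$ is a norm-one element sitting in both $\caA$ and $\widetilde\caA$.
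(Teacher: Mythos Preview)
Your proof is correct and takes essentially the same approach as the paper: both arguments write $\tilde a = a_+ + a_-\Theta_{\caA^{\sharp}}$, exploit that $\Theta_{\caA}$ is a norm-one element lying in both $\caA$ and $\widetilde\caA$, and express the relevant (super)commutator as a sum of three terms each bounded by $\ep\|x\|\|a\|$. Your organization---isolating the odd case into the single identity \eqref{eq:super-key-identity} and then reusing it for the converse---is slightly cleaner than the paper's two separate manipulations, but the algebraic content is identical.
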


\begin{proof}
	For the first claim, take $\tilde a \in \widetilde \caA$. By Lemma \ref{lem:tilde A of inner-super A} it is of the form $\tilde a = a_+ + a_-\Theta_{\caA^{\sharp}}$ with $a_+, a_- \in \caA$ even and odd respectivley. Then
	\begin{align*}
	[x, \tilde a] &= [x, \Theta_{\caA^{\sharp}} a_-] + [x, a_+]_s = x \Theta \Theta_{\caA} a_- - \Theta \Theta_{\caA} a_- x + [x, a_+] \\
			      &= \Theta_{\caA} \left( a_- \theta(x) - x a_- \right) \Theta + [\Theta_{\caA}, x]_s a_- + [x, a_+] \\
			      &= -\Theta_{\caA} [x, a_-]_s \Theta + [\Theta_{\caA}, x]_s a_- + [x, a_+].
	\end{align*}
	Each of the three terms in the last line is bounded by $\ep \norm{x} \norm{\tilde a}$ (using $\norm{a_{\pm}} \leq \norm{a} = \norm{\tilde a}$) so
	\begin{equation}
		\norm{ [x, \tilde a] } \leq 3 \ep \norm{x} \norm{\tilde a}
	\end{equation}
	as required.

	For the second claim, take $a \in \caA$ and compute
	\begin{align*}
		[x, a]_s &= [x, a_+] + [x_+, a_- \Theta_{\caA^{\sharp}}^2] + [x_-, a_- \Theta_{\caA^{\sharp}}^2]_s \\
			 &= [x, a_+] + [x, a_- \Theta_{\caA^{\sharp}}] \Theta_{\caA^{\sharp}} - a_- \Theta_{\caA^{\sharp}} [\Theta_{\caA}, x] \Theta
	\end{align*}
	where we used
	\begin{equation}
		\Theta_{\caA^{\sharp}} x_{\pm} = \Theta_{\caA} \Theta x_{\pm} = \pm \Theta_{\caA} x_{\pm} \Theta = \pm x_{\pm} \Theta_{\caA^{\sharp}} \pm [\Theta_{\caA}, x_{\pm}] \Theta.
	\end{equation}
	Since $a_+$, $a_- \Theta_{\caA^{\sharp}}$ and $\Theta_{\caA}$ are elements of $\widetilde \caA$ we have
	\begin{equation}
		\norm{[x, a]_s} \leq 3 \ep \norm{x} \norm{a}
	\end{equation}
	as required.
\end{proof}

Note that super algebras isomorphic to full matrix algebras and containing $\I$ are inner-super factors.

\begin{proposition} \label{prop:conditional expectation on supercommutant of a matrix algebra}
	Let $\caA$ be super subalgebra of $\caB(\caH)$ isomorphic to a full matrix algebra and containing $\I$. Then
	\begin{equation} \label{eq:condtional expectation on supercommutant of a matrix algebra}
		\E_{\caA^{\sharp}}(x) := \int_{\caU(\caA)} \dd u \, \Phi_{\caA}(u)^* x \Phi_{\caA}(u) = \int_{\caU(\widetilde \caA)} \dd u \, u^* x u,
	\end{equation}
	where the integrations are w.r.t. the Haar measure on the unitary groups, is a conditional expectation from $\caB(\caH)$ to $\caA^{\sharp}$.

	Moreover, if $x \in \caB(\caH)$ is homogeneous, then $\E_{\caA^{\sharp}}(x)$ is homogeneous of the same degree as $x$.
\end{proposition}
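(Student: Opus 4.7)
The plan is to interpret $\E_{\caA^{\sharp}}$ as the standard averaging projection onto the commutant of the finite-dimensional von Neumann algebra $\widetilde\caA$, and then unpack why this target algebra is exactly $\caA^{\sharp}$. First I would verify that the two expressions in \eqref{eq:condtional expectation on supercommutant of a matrix algebra} agree: by Lemma \ref{lem:tilde A of inner-super A}, $\Phi_{\caA}$ is a unital *-isomorphism $\caA\to\widetilde\caA$, hence restricts to a topological group isomorphism $\caU(\caA)\to\caU(\widetilde\caA)$ of compact groups; by uniqueness of Haar measure the pushforward of Haar measure on $\caU(\caA)$ is Haar measure on $\caU(\widetilde\caA)$, and the equality follows by change of variables.

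Next I would identify the target algebra. Since $\caA^{\sharp}$ is weakly closed and contains $\I$ (Lemma \ref{lem:basic properties of the supercommutant}), it is a von Neumann algebra, so $\widetilde\caA'=(\caA^{\sharp})'' = \caA^{\sharp}$. Now I would verify the defining properties of a conditional expectation in Definition \ref{def:conditional expectation}. Complete positivity is immediate since each $x\mapsto u^{*}xu$ is a *-automorphism and a Haar integral of completely positive maps is completely positive; contractivity follows from $\|u^{*}xu\|=\|x\|$ together with the triangle inequality for the Bochner integral. For the range, left-invariance of Haar measure gives $v^{*}\E_{\caA^{\sharp}}(x)v=\E_{\caA^{\sharp}}(x)$ for any $v\in\caU(\widetilde\caA)$, so $\E_{\caA^{\sharp}}(x)$ commutes with every unitary of $\widetilde\caA$; since $\widetilde\caA\simeq\caA$ is a full matrix algebra, it is finite-dimensional and therefore the linear span of its unitaries, so $\E_{\caA^{\sharp}}(x)\in\widetilde\caA'=\caA^{\sharp}$. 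For $b\in\caA^{\sharp}$ the integrand is constantly $b$, so $\E_{\caA^{\sharp}}(b)=b$. Bimodularity $\E_{\caA^{\sharp}}(axb)=a\E_{\caA^{\sharp}}(x)b$ for $a,b\in\caA^{\sharp}$ follows because $a,b\in\widetilde\caA'$ commute through $u\in\widetilde\caA$ inside the integrand.

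For the final claim about grading, I would exploit the fact that $\theta$ restricts to a super *-automorphism of $\widetilde\caA$, hence induces a continuous group automorphism of $\caU(\widetilde\caA)$; such an automorphism sends Haar measure to a bi-invariant probability measure, which is again Haar measure by uniqueness. Applying $\theta$ under the integral and then substituting $u\mapsto\theta(u)$ yields
\begin{equation*}
\theta\bigl(\E_{\caA^{\sharp}}(x)\bigr)=\int_{\caU(\widetilde\caA)}\dd u\,\theta(u)^{*}\theta(x)\theta(u)=\int_{\caU(\widetilde\caA)}\dd u\,u^{*}\theta(x)u=\E_{\caA^{\sharp}}\bigl(\theta(x)\bigr),
\end{equation*}
so $\E_{\caA^{\sharp}}$ intertwines $\theta$ and homogeneity is preserved.

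None of the individual steps should present a serious obstacle; the content is essentially the well-known unitary-averaging construction of a conditional expectation, lifted to the graded setting. The only place where care is needed is making sure that $\Phi_{\caA}$ really pushes Haar measure forward correctly so that the two formulations of $\E_{\caA^{\sharp}}$ coincide, and that the grading automorphism $\theta$ restricts well enough to $\widetilde\caA$ to allow the measure-preserving substitution in the last step — both of which reduce to the uniqueness of Haar measure on a compact group.
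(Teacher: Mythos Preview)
Your proposal is correct and follows essentially the same approach as the paper: identify the averaging map as the standard conditional expectation onto $\widetilde\caA'$, use Lemma~\ref{lem:tilde A of inner-super A} together with weak closedness of $\caA^{\sharp}$ to get $\widetilde\caA'=\caA^{\sharp}$, and then argue homogeneity via $\theta$-invariance of the Haar measure on $\caU(\widetilde\caA)$. The only difference is that the paper outsources the conditional-expectation verification to a reference (Blackadar II.6.10.4(iv)) and does not spell out the change-of-variables argument for the equality of the two integral formulas, whereas you prove both in full.
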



\begin{proof}
	The map
	\begin{equation}
		x \mapsto \int_{\caU(\widetilde \caA)} \dd u \, u^* x u
	\end{equation}
	is a conditional expectation from $\caB(\caH)$ to $\widetilde \caA'$ (see for example II.6.10.4 (iv) of \cite{Blackadar2006}). From Lemma \ref{lem:tilde A of inner-super A} we have $\widetilde \caA' = \caA^{\sharp}$, so we find that $\E_{\caA^{\sharp}}$ is indeed a conditional expectation on $\caA^{\sharp}$.

	If $x$ is homogeneous then, using the $\theta$-invariance of $\caU(\widetilde \caA)$, we find
	\begin{equation}
		\theta \left( \E_{\caA^{\sharp}}(x) \right) = \int_{\caU(\widetilde \caA)} \dd u \, \, \theta(u)^* \, \theta(x) \, \theta(u) = (-1)^{\tau(x)} \int_{\caU(\widetilde \caA)} \dd u \, \, u^* x u = (-1)^{\tau(x)} \E_{\caA^{\sharp}}(x),
	\end{equation}
	as required.
\end{proof}

\begin{lemma} \label{lem:conditional expectation preserves near supercommutation}
	Let $\caA$ be super subalgebra of $\caB(\caH)$, containing $\I$, and isomorphic to a full matrix algebra. Suppose $x \in \caB(\caH)$ is such that $\norm{[x, a]_s} \leq \ep \norm{x} \norm{a}$ for all $a \in \caA$. Then
	\begin{equation}
		\norm{ \E_{\caA^{\sharp}}(x) - x} \leq 3 \ep \norm{x}.
	\end{equation}
\end{lemma}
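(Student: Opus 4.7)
The plan is to reduce the near-supercommutation with $\caA$ to ordinary near-commutation with the commutant $\widetilde \caA = (\caA^{\sharp})'$, then to estimate the averaging integral defining $\E_{\caA^{\sharp}}$ term-by-term.

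First I would invoke Lemma \ref{lem:supercommute with A iff commute with tilde A}: the hypothesis $\norm{[x,a]_s} \le \ep \norm{x}\norm{a}$ for all $a \in \caA$ yields
\begin{equation}
\norm{[x, \tilde a]} \le 3 \ep \norm{x} \norm{\tilde a} \qquad \text{for all } \tilde a \in \widetilde \caA.
\end{equation}
In particular, applying this to any unitary $u \in \caU(\widetilde \caA)$ (which has $\norm{u} = 1$) gives $\norm{u^* x u - x} = \norm{[x, u]} \le 3 \ep \norm{x}$.

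Next I would use the explicit integral formula from Proposition \ref{prop:conditional expectation on supercommutant of a matrix algebra},
\begin{equation}
\E_{\caA^{\sharp}}(x) = \int_{\caU(\widetilde \caA)} \dd u \, u^* x u,
\end{equation}
together with the fact that $\I \in \widetilde \caA$ and the Haar measure has total mass 1, to write
\begin{equation}
\E_{\caA^{\sharp}}(x) - x = \int_{\caU(\widetilde \caA)} \dd u \, (u^* x u - x).
\end{equation}
Taking norms and pulling the norm inside the integral,
\begin{equation}
\norm{\E_{\caA^{\sharp}}(x) - x} \le \int_{\caU(\widetilde \caA)} \dd u \, \norm{u^* x u - x} \le \sup_{u \in \caU(\widetilde \caA)} \norm{[x, u]} \le 3 \ep \norm{x},
\end{equation}
which is the required bound.

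There is no real obstacle here: the only nontrivial input is Lemma \ref{lem:supercommute with A iff commute with tilde A}, which already does the work of translating the graded estimate on $\caA$ into an ungraded estimate on $\widetilde \caA$, and then the standard averaging argument for conditional expectations on commutants of (hyper)finite algebras applies directly. The factor of $3$ is inherited verbatim from the constant in Lemma \ref{lem:supercommute with A iff commute with tilde A}.
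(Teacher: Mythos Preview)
Your proof is correct and essentially identical to the paper's: the paper writes $\E_{\caA^{\sharp}}(x) - x = \int_{\caU(\widetilde \caA)} \dd u\, u^*[x,u]$ and then invokes Lemma~\ref{lem:supercommute with A iff commute with tilde A}, which is exactly your argument after noting $u^*[x,u] = u^*xu - x$.
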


\begin{proof}
	We have
	\begin{equation}
		\E_{\caA^{\sharp}}(x) - x = \int_{\caU(\widetilde \caA)} \dd u \, u^* [x, u]
	\end{equation}
	so the bound follows immediately from Lemma \ref{lem:supercommute with A iff commute with tilde A}.
\end{proof}

\begin{lemma} \label{lem:double conditional expectation supercommutes with both}
	Let $\caA$ and $\caB$ be super subalgebras of $\caB(\caH)$, containing $\I$ and isomorphic to full matrix algebras, and such that $[\caA, \caB]_s = 0$. Then
	\begin{equation}
		\E_{\caA^{\sharp}} \left(  \E_{\caB^{\sharp}}(x) \right) \in \Span \left( \caA \cup \caB \right)^{\sharp}
	\end{equation}
	for any $x \in \caB(\caH)$.
\end{lemma}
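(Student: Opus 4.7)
My plan is to reduce the claim to showing that $\E_{\caA^{\sharp}}$ preserves $\caB^{\sharp}$, and then to use the assumption $[\caA, \caB]_s = 0$ to turn super-commutation with $\caB$ into ordinary commutation with the algebra $\widetilde \caA = (\caA^{\sharp})'$, whose unitary group is what $\E_{\caA^{\sharp}}$ averages over. First I would observe that by bilinearity of the super-commutator one has $\Span(\caA \cup \caB)^{\sharp} = \caA^{\sharp} \cap \caB^{\sharp}$. Since $\E_{\caA^{\sharp}}(z) \in \caA^{\sharp}$ for every $z$, and since $\E_{\caB^{\sharp}}(x) \in \caB^{\sharp}$ by the definition of a conditional expectation, the lemma reduces to showing that $\E_{\caA^{\sharp}}$ carries $\caB^{\sharp}$ into itself.

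The key observation is the following inclusion of ordinary commutants. The hypothesis $[\caA, \caB]_s = 0$ says exactly that $\caB \subset \caA^{\sharp}$; taking commutants and using the identification $\widetilde \caA = (\caA^{\sharp})'$ from Lemma \ref{lem:tilde A of inner-super A}, this gives $\widetilde \caA \subset \caB'$. In other words, every $\tilde u \in \widetilde \caA$ commutes in the ordinary sense with every $b \in \caB$. This is what will allow a super-commutator with $b$ to be pushed inside the integral $\E_{\caA^{\sharp}}(\cdot) = \int_{\caU(\widetilde \caA)} \tilde u^* (\cdot)\, \tilde u \, d \tilde u$ provided by Proposition \ref{prop:conditional expectation on supercommutant of a matrix algebra}.

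With this in hand, take a homogeneous $y \in \caB^{\sharp}$ and a homogeneous $b \in \caB$. Since $\E_{\caA^{\sharp}}$ preserves parity, $\E_{\caA^{\sharp}}(y)$ is homogeneous of the same degree as $y$, so the super-commutator $[b, \E_{\caA^{\sharp}}(y)]_s$ is unambiguous. Using the ordinary commutation of $\tilde u$ and $\tilde u^*$ with $b$, the integrand rearranges as
\begin{equation*}
b \tilde u^* y \tilde u - (-1)^{\tau(b)\tau(y)} \tilde u^* y \tilde u \, b = \tilde u^* \bigl(b y - (-1)^{\tau(b) \tau(y)} y b \bigr) \tilde u = \tilde u^* [b, y]_s \tilde u = 0,
\end{equation*}
the last equality because $y \in \caB^{\sharp}$. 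Integrating and then extending by linearity over the homogeneous decompositions of $y$ and $b$ (legitimate because $\caB$ and $\caB^{\sharp}$ are spanned by their homogeneous elements, by Lemma \ref{lem:basic properties of the supercommutant}) gives $\E_{\caA^{\sharp}}(y) \in \caB^{\sharp}$, as required.

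I do not anticipate any serious obstacle: the argument is essentially bookkeeping with parities once one notices the inclusion $\widetilde \caA \subset \caB'$. The only step that deserves a careful check is the interchange of super-commutator with the Haar integral, which is routine since $\caU(\widetilde \caA)$ is compact and the integrand is norm-continuous in $\tilde u$.
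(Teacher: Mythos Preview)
Your proof is correct and follows essentially the same approach as the paper's own proof. Both arguments reduce to showing $\E_{\caA^{\sharp}}$ preserves $\caB^{\sharp}$, use the inclusion $\widetilde\caA \subset \caB'$ coming from $\caB \subset \caA^{\sharp}$, exploit parity preservation of $\E_{\caA^{\sharp}}$ to work with homogeneous elements, and then move the ordinary commutator with $b$ past the Haar-integration unitaries; the only cosmetic difference is that the paper decomposes $x$ into $x_\pm$ at the outset whereas you decompose the intermediate element $y \in \caB^{\sharp}$ into homogeneous parts at the end.
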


\begin{proof}
	Fix $x \in \caB(\caH)$ and let $y = \E_{\caA^{\sharp}} \left(  \E_{\caB^{\sharp}}(x) \right)$. It follows immediately from Proposition \ref{prop:conditional expectation on supercommutant of a matrix algebra} that $y \in \caA^{\sharp}$, we now show that also $y \in \caB^{\sharp}$. Let $x = x_+ + x_-$ be the decomposition of $x$ in its even and odd parts, then again by Proposition \ref{prop:conditional expectation on supercommutant of a matrix algebra},
	\begin{equation}
		y_{\pm} = \E_{\caA^{\sharp}} \left(  \E_{\caB^{\sharp}}(x_{\pm}) \right) 
	\end{equation}
	are the even and odd parts of $y$.

	Take $b \in \caB$ homogeneous. To show that $y$ supercommutes with $b$ we must show that each of $y_{\pm}$ supercommutes with $b$. Because $[\caA, \caB]_s = 0$ we have $b \in \caA^{\sharp} = \widetilde \caA'$ by Lemma \ref{lem:tilde A of inner-super A} so
	\begin{align*}
		b y_{\pm} &= \int_{\caU(\widetilde \caA)} \dd u \,\, y u^* \E_{\caB^{\sharp}}(x_{\pm}) u = (-1)^{\tau(y) \tau(x_{\pm})} \int_{\caU(\widetilde \caA)} \dd u \,\, u^* \E_{\caB^{\sharp}}(x_{\pm}) u \, b \\
			  &= (-1)^{\tau(y) \tau(x_{\pm})} y_{\pm} b = (-1)^{\tau(b) \tau(y_{\pm})} y_{\pm} b,
	\end{align*}
	where we also used that $\E_{\caB^{\sharp}}(x_{\pm}) \in \caB^{\sharp}$ are homogeneous of the same degree as $x_{\pm}$. We conclude that $y$ supercommutes with all homogeneous elements of $\caB$, \ie $y \in \caA^{\sharp} \cap \caB^{\sharp} = \Span(\caA \cup \caB)^{\sharp}$.
\end{proof}

\begin{definition}
	We say a super von Neumann algebra $\caA \subset \caB(\caH)$ is $\theta$-hyperfinite if it is the weak closure of an increasing family of super algebras that are isomorphic to full matrix algebras with common identity $\I$.
\end{definition}

We follow \cite{NSW2013} to obtain:

\begin{proposition} \label{prop:conditional expectation on supercommutant of theta-hyperfinite vN algebra}
	For any $\theta$-hyperfinite von Neumann algebra $\caA \subset \caB(\caH)$ there is a conditional expectation $\E_{\caA^{\sharp}} : \caB(\caH) \rightarrow \caA^{\sharp}$ such that if $x \in \caB(\caH)$ is such that $\norm{[x, a]_s} \leq \ep \norm{x} \norm{a}$ for all $a \in \caA$ then
	\begin{equation}
		\norm{\E_{\caA^{\sharp}}(x) - x} \leq 3 \ep \norm{x}.
	\end{equation}
	Moreover, $\E_{\caA^{\sharp}}(x) \in (\{x\} \cup \widetilde \caA)''$.
\end{proposition}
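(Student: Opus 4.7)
The plan is to build $\E_{\caA^\sharp}$ as a weak operator limit of the conditional expectations associated to the approximating matrix subalgebras, and then transfer the pointwise properties of those to the limit.

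By hypothesis we can write $\caA$ as the WOT-closure of an increasing net $\{\caA_\lambda\}_{\lambda \in \Lambda}$ of super subalgebras of $\caB(\caH)$, each isomorphic to a full matrix algebra and containing $\I$. By Proposition \ref{prop:conditional expectation on supercommutant of a matrix algebra} each $\caA_\lambda$ carries a conditional expectation $\E_\lambda : \caB(\caH) \to \caA_\lambda^\sharp$ defined by averaging over $\caU(\widetilde{\caA_\lambda})$. In particular each $\E_\lambda$ is a contraction, so for every $x \in \caB(\caH)$ the net $(\E_\lambda(x))_\lambda$ lies in the $\|x\|$-ball, which is WOT-compact by Alaoglu. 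Fix a cofinal ultrafilter $\mathcal{U}$ on $\Lambda$ and define
\begin{equation}
    \E_{\caA^\sharp}(x) := \mathrm{WOT}\text{-}\lim_{\mathcal{U}} \E_\lambda(x).
\end{equation}
This is well-defined, linear, and a contraction because each $\E_\lambda$ is linear and contractive and the ultrafilter limit preserves these properties.

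Next I would verify that $\E_{\caA^\sharp}$ lands in $\caA^\sharp$ and satisfies the bimodule property. For any fixed $\mu \in \Lambda$ and any $\lambda \geq \mu$, we have $\caA_\mu \subset \caA_\lambda$, hence $\E_\lambda(x) \in \caA_\lambda^\sharp \subset \caA_\mu^\sharp$. Since $\caA_\mu^\sharp$ is WOT-closed by Lemma \ref{lem:basic properties of the supercommutant}(iii) and $\mathcal{U}$ is cofinal, $\E_{\caA^\sharp}(x) \in \caA_\mu^\sharp$ for every $\mu$, i.e., it supercommutes with $\bigcup_\mu \caA_\mu$. Because supercommutation is separately WOT-continuous (the same Lemma), this extends to $\caA$, so $\E_{\caA^\sharp}(x) \in \caA^\sharp$. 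Complete positivity of each $\E_\lambda$ passes to the WOT-limit in the standard way, and the $\caA^\sharp$-bimodule property $\E_{\caA^\sharp}(b_1 x b_2) = b_1 \E_{\caA^\sharp}(x) b_2$ for $b_1,b_2 \in \caA^\sharp$ follows similarly: for $b_i \in \caA_\mu^\sharp$ (a dense set) we have $\E_\lambda(b_1 x b_2) = b_1 \E_\lambda(x) b_2$ for all $\lambda \geq \mu$, and then one passes to the limit and extends by WOT-continuity of left/right multiplication.

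For the near-supercommutation estimate, suppose $\|[x,a]_s\| \leq \epsilon \|x\|\|a\|$ for all $a \in \caA$. Then in particular this holds for all $a \in \caA_\lambda$, so Lemma \ref{lem:conditional expectation preserves near supercommutation} gives $\|\E_\lambda(x) - x\| \leq 3\epsilon \|x\|$ for every $\lambda$. The operator norm is lower semicontinuous with respect to the WOT, so taking $\lambda \to \mathcal{U}$ yields $\|\E_{\caA^\sharp}(x) - x\| \leq 3\epsilon \|x\|$. Finally, for the last claim, observe that $\caA_\lambda^\sharp \supset \caA^\sharp$ implies $\widetilde{\caA_\lambda} = (\caA_\lambda^\sharp)' \subset (\caA^\sharp)' = \widetilde{\caA}$, so by formula \eqref{eq:condtional expectation on supercommutant of a matrix algebra} each $\E_\lambda(x)$ lies in $(\{x\} \cup \widetilde{\caA_\lambda})'' \subset (\{x\} \cup \widetilde{\caA})''$; the latter is WOT-closed, so the same holds for $\E_{\caA^\sharp}(x)$.

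The main technical obstacle is ensuring that the ultrafilter limit genuinely delivers all the structure of a conditional expectation (linearity, complete positivity, bimodule property over $\caA^\sharp$ rather than just over each $\caA_\lambda^\sharp$); this is largely bookkeeping with WOT-continuity but must be checked carefully, in particular the bimodule identity which requires approximating elements of $\caA^\sharp$ by elements of the $\caA_\lambda^\sharp$, for which the inclusions go the ``wrong'' way and one instead uses that the identity $\E_\lambda(bxb') = b\E_\lambda(x)b'$ already holds for any $b,b' \in \caA^\sharp \subset \caA_\lambda^\sharp$ without any approximation.
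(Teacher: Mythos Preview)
Your argument is correct and follows essentially the same route as the paper's proof: both construct $\E_{\caA^\sharp}(x)$ as a weak limit of the $\E_{\caA_\lambda^\sharp}(x)$, use the decreasing family $\caA_\lambda^\sharp$ together with WOT-closedness to land in $\caA^\sharp$, and invoke lower semicontinuity of the norm for the $3\epsilon$ bound. Your use of an ultrafilter is in fact slightly more careful than the paper's bare ``weak accumulation point'' in guaranteeing linearity. The only notable difference is that the paper, rather than verifying the bimodule identity and complete positivity directly as you do, observes from the $\epsilon = 0$ case that $\E_{\caA^\sharp}$ is a norm-one projection onto $\caA^\sharp$ and then invokes Tomiyama's theorem to conclude it is a conditional expectation; your direct check (using $\caA^\sharp \subset \caA_\lambda^\sharp$ so the bimodule identity already holds at each stage) is equally valid.
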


\begin{proof}
	Let $\{\caA_{\al}\}$ be an increasing family of super algebras, each isomorphic to a full matrix algebra, whose weak closure is $\caA$. For $x \in \caB(\caH)$ let $y_{\al} = \E_{\caA_{\al}^{\sharp}}$ with $\E_{\caA_{\al}}^{\sharp}$ the conditional expectation of Proposition \ref{prop:conditional expectation on supercommutant of a matrix algebra}. Since $\norm{y_{\al}} = \norm{ \E_{\caA_{\al}^{\sharp}}(x) } \leq \norm{x}$ is bounded independent of $\al$, the family $\{y_{\al}\}$ has a weak accumulation point $y$ with $\norm{y} \leq \norm{x}$. Set $\E_{\caA^{\sharp}}(x) = y$.

	To see that $y \in \caA^{\sharp}$, note that by Proposition \ref{prop:conditional expectation on supercommutant of a matrix algebra} we have $y_{\al} \in \caA_{\al}^{\sharp}$ for each $\al$. Since $\{ \caA_{\al} \}$ is an increasing family, the family of supercommutants $\{ \caA_{\al}^{\sharp} \}$ is decreasing. Moreover, by Lemma \ref{lem:basic properties of the supercommutant} (iii), each $\caA_{\al}^{\sharp}$ is weakly closed. It follows that $y \in \caA_{\al}^{\sharp}$ for each $\al$, hence
	\begin{equation}
		y \in \bigcap_{\al} \caA_{\al}^{\sharp} = \Span \left( \cup_{\al} \caA_{\al} \right)^{\sharp} = \caA^{\sharp}
	\end{equation}
	where we used that the supercommutant of a superspace is the supercommutant of the weak closure of that space.

	If $\norm{[x, a]_s} \leq \ep \norm{x} \norm{a}$ for all $a \in \caA$ then it follows from Lemma \ref{lem:conditional expectation preserves near supercommutation} that $\norm{\E_{\caA_{\al}^{\sharp}}(x) - x} \leq 3 \ep \norm{x}$ for all $\al$. Since $\E_{\caA^{\sharp}}(x) - x$ is a weak accumulation point of $\{ \E_{\caA_{\al}^{\sharp}}(x) - x \}$ it follows that also $\norm{\E_{\caA^{\sharp}}(x) - x} \leq 3 \ep \norm{x}$.

	In particular, if $x \in \caA^{\sharp}$ then $\E_{\caA^{\sharp}}(x) = x$ so $\E_{\caA^{\sharp}} : \caB(\caH) \rightarrow \caA^{\sharp}$ is a projection on the von Neumann algebra $\caA^{\sharp}$ which contains the identity. It follows that $\E_{\caA^{\sharp}}$ is completely positive, is of norm 1, and satisfies $\E_{\caA^{\sharp}}(a x b) = a \E_{\caA^{\sharp}}(x) b$ for $a, b \in \caA^{\sharp}$ and $x \in \caB(\caH)$ \cite{Tomiyama1959, EvansLewis1977} \margin{read the references}, \ie $\E_{\caA^{\sharp}}$ is a conditional expectation.

	The final claim follows by noting that $\E_{\caA^{\sharp}}(x)$ is a weak accumulation point of the
	\begin{equation}
		\E_{\caA_{\al}^{\sharp}}(x) = \int_{\caU(\widetilde \caA_{\al})} \, \dd u \, u^* x u  \in (\{x\} \cup \widetilde \caA)''.
	\end{equation}
\end{proof}

\begin{lemma} \label{lem:tilde A of theta-hyperfinite A}
	If $\caA \subset \caB(\caH)$ is $\theta$-hyperfinite, \ie $\caA = \overline{\cap_{\al} \caA_{\al}}^w$ for some increasing family $\{\caA_{\al}\}$ of super algebras containing $\I$ and isomorphic to full matrix algebras, then
	\begin{equation}
		\widetilde \caA = \overline{ \bigcap_{\al}  \widetilde \caA_{\al} }^w.
	\end{equation}
	In particular, $\widetilde \caA$ is $\theta$-hyperfinite.
	
	Moreover, if $\caB$ is any super subalgebra of $\caA$ containing $\I$, then $\widetilde \caB \subset \widetilde \caA$.
\end{lemma}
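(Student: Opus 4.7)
The plan is to reduce the claim to a standard fact about commutants of decreasing families of von Neumann algebras, using the observation that $\widetilde \caA = (\caA^{\sharp})'$.

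\textbf{Step 1.} First observe that each $\caA_{\al}$ is automatically inner-super: since $\caA_{\al}$ is isomorphic to a full (ungraded) matrix algebra, it has trivial center, so the grade involution, being an order-two automorphism, is inner. Thus Lemma \ref{lem:tilde A of inner-super A} applies to each $\caA_{\al}$: the algebra $\widetilde \caA_{\al} = (\caA_{\al}^{\sharp})'$ is well-defined, is isomorphic as a super algebra to $\caA_{\al}$ (hence is again isomorphic to a full matrix algebra), and contains $\I$ (since $\Phi_{\caA_\al}(\I) = \I$). The family $\{\widetilde \caA_{\al}\}$ is increasing: if $\caA_\al \subset \caA_\beta$, then $\caA_\al^{\sharp} \supset \caA_\beta^{\sharp}$ and hence $\widetilde \caA_\al \subset \widetilde \caA_\beta$.

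\textbf{Step 2.} Next, compute $\caA^{\sharp}$ in terms of the $\caA_\al^{\sharp}$. By Lemma \ref{lem:basic properties of the supercommutant}, the supercommutant of a set equals the supercommutant of the superspace it generates, and it is WOT-closed; therefore
\begin{equation}
    \caA^{\sharp} = \Big( \bigcup_\al \caA_\al \Big)^{\sharp} = \bigcap_\al \caA_\al^{\sharp}.
\end{equation}
Here the family $\{\caA_\al^{\sharp}\}$ is a \emph{decreasing} family of von Neumann algebras (by Lemma \ref{lem:basic properties of the supercommutant}, each $\caA_\al^{\sharp}$ is a WOT-closed, self-adjoint, unital subalgebra of $\caB(\caH)$).

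\textbf{Step 3.} Now apply the standard commutant identity for decreasing families of von Neumann algebras: if $\{\caM_\al\}$ is a decreasing net of von Neumann algebras, then
\begin{equation}
    \Big(\bigcap_\al \caM_\al\Big)' = \Big(\bigcup_\al \caM_\al'\Big)'' = \overline{\bigcup_\al \caM_\al'}^{w},
\end{equation}
the last equality because $\{\caM_\al'\}$ is an increasing family of von Neumann algebras, so $\bigcup_\al \caM_\al'$ is already a unital $\ast$-subalgebra, and its double commutant is its WOT-closure. Applied to $\caM_\al = \caA_\al^{\sharp}$, this gives
\begin{equation}
    \widetilde \caA = (\caA^{\sharp})' = \Big(\bigcap_\al \caA_\al^{\sharp}\Big)' = \overline{\bigcup_\al \widetilde \caA_\al}^{w},
\end{equation}
as required. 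Combined with Step 1, this exhibits $\widetilde \caA$ as the WOT-closure of an increasing family of super algebras isomorphic to full matrix algebras with common identity $\I$, i.e. $\widetilde \caA$ is $\theta$-hyperfinite.

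\textbf{Step 4.} Finally, the monotonicity claim is immediate: if $\caB \subset \caA$ is any super subalgebra containing $\I$, then $\caB^{\sharp} \supset \caA^{\sharp}$, and taking commutants reverses the inclusion again, giving $\widetilde \caB = (\caB^{\sharp})' \subset (\caA^{\sharp})' = \widetilde \caA$.

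I do not anticipate a serious obstacle: the proof is essentially an application of the double-commutant theorem together with Lemma \ref{lem:tilde A of inner-super A}. The only place that needs a little care is verifying that the $\caA_\al$ are inner-super (Step 1) and that the supercommutant interacts well with weak closure and unions (Step 2), both of which are handled by results already in Appendix \ref{app:super vN}.
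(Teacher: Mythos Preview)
Your proof is correct and follows essentially the same approach as the paper's: both reduce to the identity $\widetilde\caA=(\caA^\sharp)'$ together with the standard fact that the commutant of a decreasing intersection of von Neumann algebras is the weak closure of the increasing union of commutants. The only organizational difference is that the paper proves the monotonicity claim first and uses it for one inclusion, while you state the commutant identity as a black box and deduce monotonicity separately at the end; neither proof requires anything the other does not.
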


\begin{proof}
	By definition, $\widetilde \caA$ is the commutant of $\caA^{\sharp}$. If $\caB$ is a super subalgebra of $\caA$ containing $\I$ then $\widetilde \caB$ is the commutant of $\caB^{\sharp} \supset \caA^{\sharp}$, hence $\widetilde \caB = (\caB^{\sharp})' \subset (\caA^{\sharp})' = \widetilde \caA$. This proves the last claim of the lemma.

	In particular, we have $\widetilde \caA_{\al} \in \widetilde \caA$ for all $\al$, so $\widetilde \caA \supset \overline{\cap_{\al} \widetilde \caA_{\al}}^w$. To see the opposite inclusion, note that $x \in \left( \overline{\cap_{\al} \widetilde \caA_{\al}}^w \right)'$ if and only if $x \in \widetilde \caA_{\al}' = \caA_{\al}^{\sharp}$ for all $\al$. But then $x \in \caA^{\sharp} = \widetilde \caA'$, so $\left( \overline{\cap_{\al} \widetilde \caA_{\al}}^w \right)' \subset \widetilde \caA'$. Taking commutants then yields $\widetilde \caA \subset \overline{\cap_{\al} \widetilde \caA_{\al}}^w$.
\end{proof}

\subsubsection{near inclusions}

We start with a super-version of \cite{RWW2020}'s Lemma 2.4, which in turn finds its origin in the work \cite{Christensen1977b}:
\begin{lemma} \label{lem:near inclusions and supercommutators}
	Let $\caA, \caB \subset  \caB(\caH)$ be super $C^*$-subalgebras. If $\caA \nsub{\ep} \caB^{\sharp}$, then
	\begin{equation}
		\norm{[a, b]_s} \leq 4 \ep \norm{a} \norm{b}
	\end{equation}
	for all $a \in \caA$, $b \in \caB$.

	Conversely, if $\caB$ is a $\theta$-hyperfinite super von Neumann algebra such that
	\begin{equation}
		\norm{[a, b]_s} \leq \ep \norm{a} \norm{b}
	\end{equation}
	for all $a \in \caA$ and all $b \in \caB$, then $\caA \nsub{3\ep} \caB^{\sharp}$. If $\caA, \widetilde \caB \subset \caM$ for some von Neumann algebra $\caM$, then $\caA \nsub{3\ep} \caB^{\sharp} \cap \caM$. \note{I could just use Bruno's conditional expectation to prove this...}
\end{lemma}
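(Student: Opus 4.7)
The plan is to prove the two directions by using, respectively, the approximation property of near inclusion plus the bound on the supercommutator from Lemma \ref{lem:basic properties of the supercommutator}(iii), and the conditional expectation $\E_{\caB^{\sharp}}$ constructed in Proposition \ref{prop:conditional expectation on supercommutant of theta-hyperfinite vN algebra}. Both directions are essentially soft, the main work has already been done in the super-algebraic tools developed earlier in the appendix.

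For the first direction, fix $a \in \caA$ and $b \in \caB$. By the assumption $\caA \nsub{\ep} \caB^{\sharp}$, there is some $c \in \caB^{\sharp}$ such that $\norm{a - c} \leq \ep \norm{a}$. Since $c$ supercommutes with every element of $\caB$, we get
\begin{equation}
[a, b]_s = [a - c, b]_s + [c, b]_s = [a - c, b]_s,
\end{equation}
and applying Lemma \ref{lem:basic properties of the supercommutator}(iii) gives the desired bound $\norm{[a, b]_s} \leq 4 \norm{a - c} \norm{b} \leq 4 \ep \norm{a} \norm{b}$.

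For the converse, assume $\caB$ is $\theta$-hyperfinite and the uniform supercommutator bound holds for every $a \in \caA$. Apply Proposition \ref{prop:conditional expectation on supercommutant of theta-hyperfinite vN algebra} to obtain the conditional expectation $\E_{\caB^{\sharp}} : \caB(\caH) \to \caB^{\sharp}$: for each $a \in \caA$ the hypothesis $\norm{[a, b]_s} \leq \ep \norm{a}\norm{b}$ for all $b \in \caB$ yields $c := \E_{\caB^{\sharp}}(a) \in \caB^{\sharp}$ with $\norm{c - a} \leq 3 \ep \norm{a}$. This is exactly what it means for $\caA \nsub{3\ep} \caB^{\sharp}$. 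For the last statement, recall that Proposition \ref{prop:conditional expectation on supercommutant of theta-hyperfinite vN algebra} moreover yields $\E_{\caB^{\sharp}}(a) \in (\{a\} \cup \widetilde \caB)''$; if $a \in \caA \subset \caM$ and $\widetilde \caB \subset \caM$, this von Neumann algebra lies inside $\caM$, and hence $c \in \caB^{\sharp} \cap \caM$, proving $\caA \nsub{3\ep} \caB^{\sharp} \cap \caM$.

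I do not expect any real obstacle here: the first direction is a one-line approximation argument, and the second direction is essentially a direct application of the conditional expectation machinery already set up in the appendix, the only point worth noting is that one must invoke the refinement stating that $\E_{\caB^{\sharp}}(a)$ lies in $(\{a\}\cup\widetilde\caB)''$ in order to deduce the containment in $\caM$ for the second part of the converse.
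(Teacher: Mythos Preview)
Your proof is correct and essentially identical to the paper's: the first direction uses the approximation $c\in\caB^{\sharp}$ and Lemma~\ref{lem:basic properties of the supercommutator}(iii), and the second direction applies Proposition~\ref{prop:conditional expectation on supercommutant of theta-hyperfinite vN algebra} together with its refinement $\E_{\caB^{\sharp}}(a)\in(\{a\}\cup\widetilde\caB)''$ to handle the containment in $\caM$.
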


\begin{proof}
	For the first claim, take $a \in \caA$ and $b \in \caB$. By the near inclusion, there is a $c \in \caB^{\sharp}$ such that $\norm{a - c} \leq \ep \norm{a}$, hence
	\begin{equation}
		\norm{[a, b]_s} = \norm{[(a-c), b]_s} \leq 4 \ep \norm{a} \norm{b}
	\end{equation}
	where we used Lemma \ref{lem:basic properties of the supercommutator} (iii).

	For the second claim, take $a \in \caA$. It follows from Proposition \ref{prop:conditional expectation on supercommutant of theta-hyperfinite vN algebra} that $\norm{ \E_{\caB^{\sharp}}(a) - a} \leq 3 \ep \norm{a}$ and $\E_{\caB^{\sharp}}(a) \in \caB^{\sharp}$, \ie $\caA \nsub{3 \ep} \caB^{\sharp}$, as required. In fact, $\E_{\caB^{\sharp}}(a) \in (\{a\} \cup \widetilde \caB)'' \subset (\caA \cup \widetilde \caB)'' \subset \caM$, showing the last claim. 
\end{proof}

Using this, we can show
\begin{lemma} \label{lem:near inclusion of supercommutants implies near inclusion of algebras}
	Let $\caA, \caB \subset \caB(\caH)$ be $\theta$-hyperfinite von Neumann algebras such that $\caB^{\sharp} \nsub{\ep} \caA^{\sharp}$ with $\ep \leq 1/16$. Then $\caA \nsub{48 \ep} \caB$.
\end{lemma}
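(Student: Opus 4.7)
The plan is to deduce the near inclusion of algebras from the near inclusion of supercommutants by invoking Theorem \ref{thm:near inclusions} and then taking supercommutants of both sides. This should be essentially straightforward because taking supercommutants is compatible with conjugation by an even unitary, and the double supercommutant of a von Neumann algebra recovers that algebra (Corollary \ref{cor:double super commutant is vN}).

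Concretely, the first step is to apply Theorem \ref{thm:near inclusions} to the hypothesis $\caB^{\sharp} \nsub{\ep} \caA^{\sharp}$, using $\ep \leq 1/16$ to ensure the theorem applies. Both $\caA^{\sharp}$ and $\caB^{\sharp}$ are super von Neumann algebras by Lemma \ref{lem:basic properties of the supercommutant}, and their $\theta$-hyperfiniteness (needed for Theorem \ref{thm:near inclusions}) follows from the $\theta$-hyperfiniteness of $\caA$ and $\caB$ via Lemma \ref{lem:tilde A of theta-hyperfinite A}. This produces an even unitary $u \in \caB(\caH)$ with $\norm{u - \I} \leq 24 \ep$ such that $u \caB^{\sharp} u^* \subset \caA^{\sharp}$.

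Next, I would observe that since $u$ is even, conjugation by $u$ is a super $*$-automorphism of $\caB(\caH)$, hence it commutes with the operation of taking supercommutants: if $\caM \subset \caB(\caH)$ is any super subset, then $(u \caM u^*)^{\sharp} = u \caM^{\sharp} u^*$. Applying this to both sides of $u \caB^{\sharp} u^* \subset \caA^{\sharp}$ and using the super bicommutant identity $\caM^{\sharp\sharp} = \caM$ for super von Neumann algebras yields
\begin{equation}
    \caA = (\caA^{\sharp})^{\sharp} \subset (u\caB^{\sharp} u^*)^{\sharp} = u \caB^{\sharp\sharp} u^* = u \caB u^*.
\end{equation}
Finally, for any $a \in \caA$ I can take $b := u^* a u \in \caB$; since conjugation by a unitary is norm preserving, $\norm{b} = \norm{a}$, and
\begin{equation}
    \norm{a - b} = \norm{u b u^* - b} \leq 2 \norm{u - \I} \norm{b} \leq 48 \ep \norm{a},
\end{equation}
which is precisely the desired conclusion $\caA \nsub{48\ep} \caB$.

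The only genuine obstacle is verifying that Theorem \ref{thm:near inclusions} really applies with the asserted constants: one must check that both $\caA^{\sharp}$ and $\caB^{\sharp}$ satisfy its hypotheses (in particular $\theta$-hyperfiniteness, which follows since $\widetilde{\caA}$ and $\widetilde{\caB}$ are $\theta$-hyperfinite by Lemma \ref{lem:tilde A of theta-hyperfinite A} and $\caM^{\sharp}$ can be analyzed through its relation to $\widetilde{\caM}'$ from Lemma \ref{lem:tilde A of inner-super A}), and that the threshold $\ep \leq 1/16$ together with the constant $24$ in the bound on $\norm{u - \I}$ are the ones delivered by that theorem. All subsequent manipulations — taking supercommutants, estimating $\norm{ubu^* - b}$ — are routine.
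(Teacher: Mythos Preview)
Your approach has a real gap at the first step: Theorem~\ref{thm:near inclusions} requires both input algebras to be $\theta$-hyperfinite, and nothing in the hypotheses of the lemma tells you that $\caA^{\sharp}$ and $\caB^{\sharp}$ are. Your attempted justification---that $\caA^{\sharp} = \widetilde{\caA}'$ with $\widetilde{\caA}$ $\theta$-hyperfinite by Lemma~\ref{lem:tilde A of theta-hyperfinite A}---does not close this: the paper nowhere proves that the \emph{commutant} of a $\theta$-hyperfinite algebra is again $\theta$-hyperfinite, and in the abstract setting of the lemma (arbitrary $\caH$, no standard-form assumption) there is no reason to expect it. So you cannot invoke Theorem~\ref{thm:near inclusions} directly on $\caB^{\sharp}$ and $\caA^{\sharp}$. (A minor symptom of the mismatch: applied directly with parameter $\ep$, the theorem would give $\norm{u-\I}\leq 12\ep$ and need only $\ep<1/8$; the constant $24\ep$ and the threshold $1/16$ you quote really belong to the paper's route, not yours.)

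The paper sidesteps the issue by first passing to ordinary commutants: from $\caB^{\sharp} \nsub{\ep} \caA^{\sharp}$ it obtains $(\caA^{\sharp})' \nsub{2\ep} (\caB^{\sharp})'$ via Lemma~2.5 of \cite{RWW2020}, i.e.\ $\widetilde{\caA} \nsub{2\ep} \widetilde{\caB}$. Now Lemma~\ref{lem:tilde A of theta-hyperfinite A} \emph{does} guarantee that $\widetilde{\caA}$ and $\widetilde{\caB}$ are $\theta$-hyperfinite, so Theorem~\ref{thm:near inclusions} legitimately applies there, producing an even unitary $u$ with $\norm{u-\I}\le 24\ep$ and $u\widetilde{\caA}u^*\subset\widetilde{\caB}$. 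Since conjugation by an even unitary commutes with both $(\,\cdot\,)^{\sharp}$ and $(\,\cdot\,)'$, one unwinds this to $u\caA u^*\subset\caB$, and the final $48\ep$ estimate follows exactly as in your last display.
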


\begin{proof}
	Using \cite{RWW2020}'s Lemma 2.5 we see that $\caB^{\sharp} \nsub{\ep} \caA^{\sharp}$ implies $(\caA^{\sharp})' \nsub{2 \ep} (\caB^{\sharp})'$. But $(\caA^{\sharp})' = \widetilde \caA'' = \widetilde \caA$ and likewise $(\caB^{\sharp})' = \widetilde \caB$, so
	\begin{equation}
		\widetilde \caA \nsub{2\ep} \widetilde \caB.
	\end{equation}

	It follows from Theorem \ref{thm:near inclusions} that there is an even unitary $u \in (\widetilde \caA \cup \widetilde \caB)''$ such that $u^* \widetilde \caA u \subset \widetilde \caA$ and $\norm{u - \I} \leq 24 \ep$. Using the last claim of Lemma \ref{lem:tilde A of theta-hyperfinite A} this implies $u^* \caA u \subset \caB$ and hence $\caA \nsub{48 \ep} \caB$.
\end{proof}

\begin{lemma} \label{lem:near inclusion respects parity}
	If $\norm{a - b} \leq \ep$ then also $\norm{a_+ - b_+} \leq \ep$ and $\norm{a_- - b_-} \leq \ep$. In particular, if $\caA, \caB$ are super subalgebras of $\caB(\caH)$ such that $\caA \nsub{\ep} \caB$, then also $\caA^{\pm} \nsub{\ep} \caB^{\pm}$.
\end{lemma}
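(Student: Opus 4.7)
The plan is to reduce the claim to the fact that the grading automorphism $\theta$ is an isometry. Since $\theta$ is implemented by a unitary $\Theta$ on $\caH$ (Wigner), for any $x \in \caB(\caH)$ we have $\norm{\theta(x)} = \norm{\Theta x \Theta^*} = \norm{x}$. The even/odd parts are defined by the averaging formulas
\begin{equation}
a_{\pm} = \tfrac{1}{2}\bigl(a \pm \theta(a)\bigr), \qquad b_{\pm} = \tfrac{1}{2}\bigl(b \pm \theta(b)\bigr),
\end{equation}
so writing $x := a - b$ we get
\begin{equation}
a_{\pm} - b_{\pm} = \tfrac{1}{2}\bigl(x \pm \theta(x)\bigr).
\end{equation}

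The first part of the lemma follows immediately by the triangle inequality and the isometry property of $\theta$:
\begin{equation}
\norm{a_{\pm} - b_{\pm}} \leq \tfrac{1}{2}\bigl(\norm{x} + \norm{\theta(x)}\bigr) = \norm{x} = \norm{a-b} \leq \ep.
\end{equation}

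For the ``in particular'' statement, take any $a \in \caA^{+}$ (the case of $\caA^{-}$ is identical). By hypothesis $\caA \nsub{\ep} \caB$, so there exists $b \in \caB$ with $\norm{a - b} \leq \ep \norm{a}$. Since $\caB$ is a super subalgebra, $b_+ \in \caB^+$. Applying the inequality just established (with $\ep$ replaced by $\ep \norm{a}$) yields $\norm{a_+ - b_+} \leq \ep \norm{a}$, and since $a_+ = a$ this reads $\norm{a - b_+} \leq \ep \norm{a}$, which witnesses $a \nin{\ep} \caB^+$. As $a \in \caA^+$ was arbitrary, $\caA^+ \nsub{\ep} \caB^+$; the odd case is symmetric, using $b_- \in \caB^-$. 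There is no real obstacle here — the result is essentially a one-line observation once one uses that the grading is implemented by a unitary.
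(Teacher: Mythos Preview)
Your proof is correct and follows essentially the same approach as the paper: both use the averaging formula $a_{\pm}-b_{\pm}=\tfrac12\bigl((a-b)\pm\theta(a-b)\bigr)$ together with the fact that $\theta$ is an isometry, and then deduce the near-inclusion statement from the first part. The paper's proof is terser (it simply says the second claim ``follows immediately''), while you spell out explicitly that $b_+\in\caB^+$ because $\caB$ is a super subalgebra, but the substance is identical.
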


\begin{proof}
	We have
	\begin{equation}
		\norm{a_{\pm} - b_{\pm}} = \norm{ \frac{1}{2} \left( (a - b) \pm \theta(a-b) \right)  } \leq \norm{a - b} \leq \ep,
	\end{equation}
	proving the first claim. The second claim follows immediately from the first.
\end{proof}

\begin{lemma} \label{lem:near inclusion implies near commutation}
	Let $\caA$ be a super subalgebra of $\caB(\caH)$ isomorphic to a full matrix algebra, and let $\caB$ be a super subalgebra of $\caB(\caH)$ such that $\caA \nsub{\ep} \caB$. Then for each $\tilde a \in \widetilde \caA$ and each $c \in \caB^{\sharp}$ we have
	\begin{equation}
		\norm{[\tilde a, c]} \leq (12 + 4 \ep) \ep \norm{\tilde a} \norm{c}.
	\end{equation}
\end{lemma}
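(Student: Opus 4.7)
The plan is to combine the near inclusion $\caA \nsub{\ep} \caB$ with the general passage between $\caA$-super\-commutation and $\widetilde{\caA}$-commutation expressed by Lemma \ref{lem:supercommute with A iff commute with tilde A}.

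The first step is to derive a supercommutator bound between $\caA$ and $\caB^\sharp$. For any $a \in \caA$ and $c \in \caB^\sharp$, the near inclusion produces some $b \in \caB$ with $\norm{a - b} \leq \ep \norm{a}$. Since $c$ supercommutes with every element of $\caB$ by the definition of $\caB^\sharp$, one has $[b, c]_s = 0$, whence $[a, c]_s = [a - b, c]_s$. Applying Lemma \ref{lem:basic properties of the supercommutator}(iii) then yields
\begin{equation*}
    \norm{[a, c]_s} \leq 4 \norm{a - b} \norm{c} \leq 4\ep \norm{a} \norm{c}.
\end{equation*}

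The next step is to verify that $\caA$ is an inner-super factor in the sense of Definition \ref{def:inner-super}, so that Lemma \ref{lem:supercommute with A iff commute with tilde A} applies. Being isomorphic to a full matrix algebra and containing $\I$, $\caA$ is a factor, and the restriction of the grade involution $\theta$ to $\caA$ is an involutive $*$-automorphism of a full matrix algebra. All such automorphisms are inner, implemented by some unitary $u \in \caA$; since $u^2$ is central it is a scalar, and after adjusting the phase we obtain $\Theta_\caA \in \caA$ with $\Theta_\caA^2 = \I$ implementing the grading on $\caA$.

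With these two ingredients in place, the final step is to apply the first claim of Lemma \ref{lem:supercommute with A iff commute with tilde A} with $x = c$, using the constant $4\ep$ in place of $\ep$. This gives
\begin{equation*}
    \norm{[c, \tilde a]} \leq 3 \cdot 4\ep \norm{c} \norm{\tilde a} = 12 \ep \norm{\tilde a} \norm{c}
\end{equation*}
for every $\tilde a \in \widetilde \caA$; the identity $\norm{[\tilde a, c]} = \norm{[c, \tilde a]}$ and the trivial inequality $12 \ep \leq (12 + 4 \ep) \ep$ (valid for $\ep \geq 0$) then yield the claimed bound. There is no real obstacle in this plan; the main point is bookkeeping, i.e.\ checking that the hypotheses of Lemma \ref{lem:supercommute with A iff commute with tilde A} hold, which amounts to the observation that a full matrix algebra containing $\I$ and closed under $\theta$ is automatically an inner-super factor.
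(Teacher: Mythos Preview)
Your proof is correct and takes a genuinely different route from the paper's. The paper argues directly: it approximates the grading operator $\Theta_{\caA} \in \caA$ by an even $\widetilde\Theta_{\caB} \in \caB$, uses this to bound $\norm{[\Theta_{\caA^\sharp}, c]_s}$ for homogeneous $c$, then writes $\tilde a = a_+ + a_-\Theta_{\caA^\sharp}$ explicitly and estimates $[\tilde a, c]$ term by term, obtaining $(6+2\ep)\ep$ for homogeneous $c$ and doubling to $(12+4\ep)\ep$ upon decomposing a general $c$ into parities. You instead package the passage from $\caA$-supercommutation to $\widetilde\caA$-commutation entirely inside Lemma~\ref{lem:supercommute with A iff commute with tilde A}, which makes the argument shorter and in fact yields the slightly sharper constant $12\ep$.

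One small point of bookkeeping: you derive $\norm{[a,c]_s} \leq 4\ep\norm{a}\norm{c}$, but the hypothesis of Lemma~\ref{lem:supercommute with A iff commute with tilde A} is stated as $\norm{[x,a]_s} \leq \ep\norm{x}\norm{a}$, i.e.\ with the arguments in the other order. This is harmless: since $\caB$ is a super subspace, $c \in \caB^\sharp$ satisfies $[c,b]_s = 0$ as well (Lemma~\ref{lem:basic properties of the supercommutant}(ii) and the remark after Definition~\ref{def:supercommutant}), so the identical argument with $[c,a]_s = [c,a-b]_s$ and Lemma~\ref{lem:basic properties of the supercommutator}(iii) gives the bound you need.
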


\begin{proof}
	Let $\Theta_{\caA}$ be a unitary that squares to $\I$, implementing $\theta$ on $\caA$. Since $\caA \nsub{\ep} \caB$ and using Lemma \ref{lem:near inclusion respects parity}, there is an even $\widetilde \Theta_{\caB} \in \caB$ such that $\norm{\Theta_{\caA} - \widetilde \Theta_{\caB}} \leq \ep$.

	If $c \in \caB^{\sharp}$ is homogeneous, then
	\begin{align*}
		\Theta_{\caA^{\sharp}} c &= \Theta \Theta_{\caA} c = \Theta \widetilde \Theta_{\caB} c + \Theta (\Theta_{\caA} - \widetilde \Theta_{\caB}) c = (-1)^{\tau(c)} c \Theta \widetilde \Theta_{\caB} + \Theta (\Theta_{\caA} - \widetilde \Theta_{\caB}) c \\
					 &= (-1)^{\tau(c)}  c \Theta_{\caA^{\sharp}} + (-1)^{\tau(c)} c \Theta (\widetilde \Theta_{\caB} - \Theta_{\caA}) + \Theta (\Theta_{\caA} - \widetilde \Theta_{\caB}) c
	\end{align*}
	hence
	\begin{equation} \label{eq:c almost supercommutes with Theta A sharp}
		\norm{[\Theta_{\caA^{\sharp}}, c]_s} \leq 2 \ep \norm{c}.
	\end{equation}

	Now take $\tilde a = a_+ + a_- \Theta_{\caA^{\sharp}} \in \widetilde \caA$ where $a = a_+ + a_- \in \caA$. since $\caA \nsub{\ep} \caB$ and by Lemma \ref{lem:near inclusion respects parity} there are homogeneous elements $b_+, b_- \in \caB$ of even and odd parity respectively, such that $\norm{a_{\pm} - b_{\pm}} \leq \ep \norm{a}$. Using this and Eq. \eqref{eq:c almost supercommutes with Theta A sharp} we find
	\begin{align*}
		\norm{[a_+ + a_1 \Theta_{\caA^{\sharp}}, c]} &\leq \norm{ [ (a_+ - b_+) + (a_- - b_-) \Theta_{\caA^{\sharp}}, c  ] } + \norm{ [ b_+ + b_- \Theta_{\caA^{\sharp}}, c ]} \\
							     &\leq 4 \ep \norm{a} \norm{c} + \norm{ b_- [\Theta_{\caA^{\sharp}}, c] } \\
							     & \leq (6 + 2 \ep) \ep \norm{a} \norm{c}.
	\end{align*}
	where we also used $\norm{b_-} \leq \norm{b_- - a_-} + \norm{a_-} \leq (1 + \ep)\norm{a}$. Finally noting that $\norm{\tilde a} = \norm{\Phi_{\caA}(a)} = \norm{a}$ and using linearity of the supercommutator to extend the above to arbitrary elements $c \in \caB^{\sharp}$ yields the claim.
\end{proof}

\begin{lemma} \label{lem:simultaneous near inclusions of matrix super algebras}
	Let $\caA_1, \caA_2$ be super subalgebras of $\caB(\caH)$, both isomorphic to full matrix algebras, containing $\I$, and such that $[\caA_1, \caA_2]_s = 0$. Let $\caB$ be a super von Neumann subalgebra of $\caB(\caH)$ such that $\caA_1, \caA_2 \nsub{\ep} \caB$ with $\ep \leq 1$ and let $\caC \subset \caB(\caH)$ be a super von Neumann algebra such that $\caB \subset \caC$  Then $\caB^{\sharp} \cap \caC \nsub{16 \ep} \Span \left( \caA_1 \cup \caA_2 \right)^{\sharp} \cap (\caC \cup \widetilde \caA_1 \cup \widetilde \caA_2)''$.
\end{lemma}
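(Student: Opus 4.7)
The plan is to exhibit, for each $x \in \caB^{\sharp} \cap \caC$, an explicit approximant in $\Span(\caA_1 \cup \caA_2)^{\sharp} \cap (\caC \cup \widetilde{\caA}_1 \cup \widetilde{\caA}_2)''$. The natural candidate is the doubly-averaged element
\begin{equation}
    y := \E_{\caA_1^{\sharp}}\bigl( \E_{\caA_2^{\sharp}}(x) \bigr),
\end{equation}
where $\E_{\caA_i^{\sharp}}$ is the conditional expectation from Proposition~\ref{prop:conditional expectation on supercommutant of a matrix algebra}, defined by averaging by unitaries in $\widetilde{\caA}_i$ against Haar measure. That $y$ lies in the right algebra is almost automatic: by Lemma~\ref{lem:double conditional expectation supercommutes with both} (applicable because $[\caA_1, \caA_2]_s = 0$) we have $y \in \Span(\caA_1 \cup \caA_2)^{\sharp}$, and because each conditional expectation is a Haar integral of conjugations by unitaries in $\widetilde{\caA}_i$ the element $y$ lies in $(\{x\} \cup \widetilde{\caA}_1 \cup \widetilde{\caA}_2)'' \subset (\caC \cup \widetilde{\caA}_1 \cup \widetilde{\caA}_2)''$, since $x \in \caC$.

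The content of the lemma is therefore the norm bound $\norm{y - x} \le 16 \ep \norm{x}$, which splits as
\begin{equation}
    \norm{y - x} \le \norm{ \E_{\caA_1^{\sharp}}(z) - z } + \norm{z - x}, \qquad z := \E_{\caA_2^{\sharp}}(x).
\end{equation}
For the first (outer) term, Lemma~\ref{lem:near inclusion implies near commutation} applied to the hypothesis $\caA_2 \nsub{\ep} \caB$ gives $\norm{[\tilde{a}_2, x]} \le (12 + 4\ep) \ep \norm{\tilde a_2} \norm{x} \le 16 \ep \norm{\tilde a_2} \norm{x}$ for all $\tilde{a}_2 \in \widetilde{\caA}_2$, and feeding this into the integral expression $z - x = \int u^*[x,u]\,du$ over $u \in \caU(\widetilde{\caA}_2)$ yields $\norm{z - x} \le \caO(\ep) \norm{x}$.

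The harder step, and the main obstacle, is the second (inner) bound: we need to show that the partially averaged $z$ still (nearly) supercommutes with $\caA_1$, even though $z$ is no longer in $\caB^{\sharp}$. The key observation is that because $\caA_1 \subset \caA_2^{\sharp}$ (this is the supercommutation hypothesis $[\caA_1,\caA_2]_s = 0$ together with Lemma~\ref{lem:basic properties of the supercommutant}), one has $\widetilde{\caA}_2 = (\caA_2^{\sharp})' \subset \caA_1'$, so every unitary $u \in \caU(\widetilde{\caA}_2)$ \emph{ordinarily} commutes with every $a_1 \in \caA_1$. Decomposing $x = x_+ + x_-$ into homogeneous parts (each still in $\caB^{\sharp}$ by Lemma~\ref{lem:basic properties of the supercommutant}(ii)), and using the Haar-invariance of $\caU(\widetilde{\caA}_2)$ under $\theta$, one checks that $z_\pm = \E_{\caA_2^{\sharp}}(x_\pm)$ is homogeneous of the same parity as $x_\pm$, so that
\begin{equation}
    [z_\pm, a_1]_s = \int_{\caU(\widetilde{\caA}_2)} u^* [x_\pm, a_1]_s u \, du.
\end{equation}
Combined with $\norm{[x_\pm, a_1]_s} \le \caO(\ep) \norm{x_\pm} \norm{a_1}$ (coming from $\caA_1 \nsub{\ep} \caB$ and $x_\pm \in \caB^{\sharp}$ via Lemma~\ref{lem:basic properties of the supercommutator}(iii)), this transfers the near-supercommutation from $x$ to $z$. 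Then Lemma~\ref{lem:supercommute with A iff commute with tilde A} upgrades this to a bound on $\norm{[z,\tilde a_1]}$ for $\tilde a_1 \in \widetilde{\caA}_1$, and finally Lemma~\ref{lem:conditional expectation preserves near supercommutation} (or rather its integral proof) gives $\norm{\E_{\caA_1^{\sharp}}(z) - z} \le \caO(\ep)\norm{x}$. Assembling the two contributions with the correct numerical constants, and using the hypothesis $\ep \le 1$ to absorb higher order terms, yields the claimed bound $16 \ep \norm{x}$.
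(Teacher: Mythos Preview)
Your proposal is correct and uses the same approximant $y=\E_{\caA_1^{\sharp}}\bigl(\E_{\caA_2^{\sharp}}(x)\bigr)$ and the same membership argument (Lemma~\ref{lem:double conditional expectation supercommutes with both} plus the explicit Haar-integral form) as the paper. The only difference is how you obtain the norm bound, and there your route is more circuitous than necessary.

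The paper does not split $\norm{y-x}$ via the intermediate $z=\E_{\caA_2^{\sharp}}(x)$. Instead it writes the double average as a single integral
\[
y-x=\int_{\caU(\widetilde{\caA}_1)}\int_{\caU(\widetilde{\caA}_2)} u_1^{*}u_2^{*}\,[x,u_2u_1]\,du_2\,du_1
\]
and bounds $\norm{[x,u_2u_1]}$ directly: since $x\in\caB^{\sharp}$ throughout, Lemma~\ref{lem:near inclusion implies near commutation} applies to each factor $u_i\in\caU(\widetilde{\caA}_i)$ separately, and the commutator identity $[x,u_2u_1]=[x,u_2]u_1+u_2[x,u_1]$ finishes it. This avoids entirely your ``harder step'' of transferring the near-supercommutation with $\caA_1$ from $x$ to $z$ via the inclusion $\widetilde{\caA}_2\subset\caA_1'$. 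Your argument for that transfer is correct, but it is not needed for the norm estimate; the observation $\widetilde{\caA}_2\subset\caA_1'$ is only used (implicitly, inside Lemma~\ref{lem:double conditional expectation supercommutes with both}) for the membership $y\in\Span(\caA_1\cup\caA_2)^{\sharp}$, which you already invoke.
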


\begin{proof}
	Take $c \in \caB^{\sharp} \cap \caC$ and consider
	\begin{equation} \label{eq:explicit form of double expectation}
		\E_{\caA_1^{\sharp}} \left( \E_{\caA_2^{\sharp}} (c) \right) = \int_{\caU(\widetilde \caA_1)} \dd u_1 \, \int_{\caU(\widetilde \caA_2)} \dd u_2 \,\, u_1^* u_2^* \, c \, u_2 u_1.
	\end{equation}
	It follows from Lemma \ref{lem:near inclusion implies near commutation} that $\norm{[c, u_2 u_1]} \leq 16 \ep \norm{c}$ so
	\begin{equation}
		\norm{ \E_{\caA_1^{\sharp}} \left( \E_{\caA_2^{\sharp}} (c) \right) - c } \leq 16 \ep.
	\end{equation}
	By Lemma \ref{lem:double conditional expectation supercommutes with both} we have $\E_{\caA_1^{\sharp}} \left( \E_{\caA_2^{\sharp}} (c) \right) \in \Span(\caA_1 \cup \caA_2)^{\sharp}$ and by the explicit formula \eqref{eq:explicit form of double expectation} we also have $\E_{\caA_1^{\sharp}} \left( \E_{\caA_2^{\sharp}} (c) \right) \in (\caC \cup \widetilde \caA_1 \cup \widetilde \caA_2)''$. Since $c \in \caB^{\sharp}$ was arbitrary, this concludes the proof.
\end{proof}

\begin{proposition} \label{prop:simultaneous near inclusions of theta-hyperfinite vN algebras}
	Let $\caA_1, \caA_2$ be $\theta$-hyperfinite von Neumann subalgebras of $\caB(\caH)$ such that $[\caA_1, \caA_2]_s = 0$. Let $\caB$ be a super von Neumann subalgebra of $\caB(\caH)$ such that $\caA_1, \caA_2 \nsub{\ep} \caB$ with $\ep \leq 1$. Let $\caC$ be super von Neumann algebra containing $\caB$. Then $\caB^{\sharp} \cap \caC \nsub{16 \ep} \Span \left( \caA_1 \cup \caA_2 \right)^{\sharp} \cap (\caC \cup \widetilde \caA_1 \cap \widetilde \caA_2)''$.

	If moreover $\Span(\caA_1 \cup \caA_2)''$ and $\caB$ are $\theta$-hyperfinite von Neumann algebras and $\ep \leq 1/256$, then $(\caA_1 \cup \caA_2)^{\sharp \sharp} \nsub{768 \ep} \caB$.
\end{proposition}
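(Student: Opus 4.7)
The plan is to reduce the first claim to Lemma \ref{lem:simultaneous near inclusions of matrix super algebras} by approximating $\caA_1$ and $\caA_2$ by increasing nets of finite-dimensional super matrix subalgebras and then taking a weak operator accumulation point, and to derive the second claim from the first by combining it with Lemma \ref{lem:near inclusion of supercommutants implies near inclusion of algebras}.

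For the first claim, I would fix $c \in \caB^{\sharp} \cap \caC$ and write $\caA_i = \overline{\bigcup_{\al_i} \caA_{i, \al_i}}^w$ with each $\caA_{i, \al_i}$ a super subalgebra of $\caB(\caH)$ isomorphic to a full matrix algebra and containing $\I$. Since $\caA_{i, \al_i} \subset \caA_i$ yields $\caA_{i, \al_i} \nsub{\ep} \caB$, Lemma \ref{lem:simultaneous near inclusions of matrix super algebras} applied at each pair $(\al_1, \al_2)$ produces an element $c_{\al_1, \al_2} \in \Span(\caA_{1, \al_1} \cup \caA_{2, \al_2})^{\sharp} \cap (\caC \cup \widetilde \caA_{1, \al_1} \cup \widetilde \caA_{2, \al_2})''$ satisfying $\norm{c - c_{\al_1, \al_2}} \leq 16 \ep \norm{c}$. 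This doubly-indexed net is uniformly norm-bounded, so Banach--Alaoglu furnishes a weak-$*$ accumulation point $\tilde c$, and on bounded sets the weak-$*$ and weak operator topologies coincide. The norm bound passes to $\tilde c$ because the closed ball of radius $16 \ep \norm{c}$ around $c$ is convex and norm-closed, hence weakly closed.

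To show $\tilde c$ lies in the target intersection, I would fix any $\al_1^{(0)}, \al_2^{(0)}$ and pass to the cofinal subnet with $\al_i \geq \al_i^{(0)}$. By monotonicity---supercommutants reverse inclusions, and $\widetilde \caB \subset \widetilde \caA$ whenever $\caB \subset \caA$ by Lemma \ref{lem:tilde A of theta-hyperfinite A}---every element of this subnet lies in the weakly closed set $\Span(\caA_{1, \al_1^{(0)}} \cup \caA_{2, \al_2^{(0)}})^{\sharp} \cap (\caC \cup \widetilde \caA_{1, \al_1^{(0)}} \cup \widetilde \caA_{2, \al_2^{(0)}})''$, and hence so does $\tilde c$. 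Intersecting over all $\al_1^{(0)}, \al_2^{(0)}$, and using that the supercommutant of a set equals that of its weak closure together with the identification of $\widetilde \caA_i$ via Lemma \ref{lem:tilde A of theta-hyperfinite A}, I arrive at $\tilde c \in \Span(\caA_1 \cup \caA_2)^{\sharp} \cap (\caC \cup \widetilde \caA_1 \cup \widetilde \caA_2)''$, completing the first claim.

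For the second claim, I would apply the first with $\caC = \caB(\caH)$ to obtain $\caB^{\sharp} \nsub{16 \ep} \Span(\caA_1 \cup \caA_2)^{\sharp} = ((\caA_1 \cup \caA_2)^{\sharp\sharp})^{\sharp}$. Under the stated hypotheses, both $(\caA_1 \cup \caA_2)^{\sharp\sharp} = \Span(\caA_1 \cup \caA_2)''$ and $\caB$ are $\theta$-hyperfinite, and $\ep \leq 1/256$ gives $16 \ep \leq 1/16$, so Lemma \ref{lem:near inclusion of supercommutants implies near inclusion of algebras} applies and yields $(\caA_1 \cup \caA_2)^{\sharp\sharp} \nsub{48 \cdot 16 \ep} \caB = 768 \ep$, as required. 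The main technical obstacle is the bookkeeping in the weak-limit argument, particularly verifying that the weak closures of the approximating $\widetilde \caA_{i, \al_i}$ do reconstruct $\widetilde \caA_i$ and that the intersection over the approximating indices recovers $\Span(\caA_1 \cup \caA_2)^{\sharp}$; both are guaranteed by Lemma \ref{lem:tilde A of theta-hyperfinite A} and the fact that supercommutants are invariant under weak closure.
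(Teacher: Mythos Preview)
Your proposal is correct and follows essentially the same route as the paper: reduce to the finite-dimensional Lemma~\ref{lem:simultaneous near inclusions of matrix super algebras}, take a weak accumulation point of the resulting net, and then invoke Lemma~\ref{lem:near inclusion of supercommutants implies near inclusion of algebras} for the second claim.

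One small bookkeeping slip: in your cofinal-subnet step you assert that every $c_{\al_1,\al_2}$ with $\al_i \geq \al_i^{(0)}$ lies in $(\caC \cup \widetilde{\caA}_{1,\al_1^{(0)}} \cup \widetilde{\caA}_{2,\al_2^{(0)}})''$. The monotonicity goes the other way for the tilde-algebras---since $\caA_{i,\al_i^{(0)}} \subset \caA_{i,\al_i}$, Lemma~\ref{lem:tilde A of theta-hyperfinite A} gives $\widetilde{\caA}_{i,\al_i^{(0)}} \subset \widetilde{\caA}_{i,\al_i}$, so the double commutants grow rather than shrink along the net. The fix is exactly what the paper does: observe directly that $\widetilde{\caA}_{i,\al_i} \subset \widetilde{\caA}_i$ for every $\al_i$, so each $c_{\al_1,\al_2}$ already sits in the fixed weakly closed algebra $(\caC \cup \widetilde{\caA}_1 \cup \widetilde{\caA}_2)''$, and hence so does the accumulation point. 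The decreasing-intersection argument is only needed for the supercommutant factor, where your monotonicity is correct.
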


\begin{proof}
	Let $\{\caA_i^{(\al)}\}$ be an increasing family of super algebras, isomorphic to full matrix algebras, whose weak closure is $\caA_i$ for $i = 1, 2$. Clearly $[\caA_1^{(\al)}, \caA_2^{(\beta)}]_s = 0$ and $\caA_{i}^{(\al)} \nsub{\ep} \caB$ for all $\al, \beta$ and $i = 1, 2$. It follows from Lemma \ref{lem:simultaneous near inclusions of matrix super algebras} that
	\begin{equation}
		\caB^{\sharp} \nsub{16 \ep} \Span( \caA_1^{(\al)} \cup \caA_2^{(\beta)} )^{\sharp} \cap (\caC \cup \widetilde \caA_1^{\al} \cup \widetilde \caA_2^{\beta})''
	\end{equation}
	for all $\al, \beta$. For $c \in \caB^{\sharp} \cap \caC$ let $a^{(\al, \beta)} \in \Span(\caA_1^{(\al)} \cup \caA_2^{(\beta)})^{\sharp} \cap (\caC \cup \widetilde \caA_1^{\al} \cup \widetilde \caA_2^{\beta})''$ be such that $\norm{c - a^{(\al, \beta)}} \leq \ep \norm{c}$. The net $\{a^{(\al, \beta)}\}$ is contained in the ball of radius $\ep \norm{c}$ centered on $c$ and so has a weak accumulation point $a$ in that ball, \ie $\norm{c - a} \leq \ep \norm{c}$. Moreover, since the $\Span(\caA_1^{(\al)} \cup \caA_2^{(\beta)})^{\sharp}$ form a decreasing family of weakly closed spaces, $a$ is an element of each of them, and clearly $a \in (\caC \cup \widetilde \caA_1 \cup \widetilde \caA_2)''$ hence
	\begin{align*}
		\caB^{\sharp} &\nsub{16\ep} \bigcap_{\al, \beta} \Span(\caA_1^{(\al)} \cup \caA_2^{(\beta)})^{\sharp} \cup (\caC \cup \widetilde \caA_1 \cup \widetilde \caA_2)'' \\
			      &= \Span \left( \big( \cup_{\al} \caA_1^{(\al)} \big) \cup \big( \cup_{\beta} \caA_2^{(\beta)} \big)   \right)^{\sharp} \cup (\caC \cup \widetilde \caA_1 \cup \widetilde \caA_2)'' \\
			      &= \Span ( \caA_1 \cup \caA_2 )^{\sharp} \cup (\caC \cup \widetilde \caA_1 \cup \widetilde \caA_2)''
	\end{align*}
	where we used that the supercommutant of a superspace is the supercommutant of the weak closure of that space.

	If moreover $\Span(\caA_1 \cup \caA_2)''$ and $\caB$ are $\theta$-hyperfinite von Neumann algebras and $\ep \leq 1/256$, then the final claim follows from Lemma \ref{lem:near inclusion of supercommutants implies near inclusion of algebras}.
\end{proof}

\section{Equivariant near inclusions}

We consider super von Neumann algebras that are $G$-invariant subalgebras of $B(\caH)$ (itself equipped with a superstructure $\theta$ and $G$ action $g \mapsto \rho^g$). It is assumed that all group actions commute with the parity operation : $\rho^g \circ \theta = \theta \circ \rho^g$ for all $g \in G$. Let us denote by $G' = \langle G, \theta \rangle$, the group generated by $G$ and parity $\theta$, where the parity operation commutes with all elements of $G$. 

\begin{definition}
	A von Neumann algbra $\caA$ equipped with a $G$-action $g \mapsto \rho^g$ is called $G$-hyperfinite if it is the weak closure of an increasing net of $G$-invariant finite matrix subalgebras, all containing the identity.
\end{definition}

We will prove the following graded equivariant version of \cite{RWW2020}'s Theorem 2.6:
\begin{theorem} \label{thm:near inclusions}
	Let $\caA, \caB \subset \caB(\caH)$ be $G'$-hyperfinite $G$-invariant von Neumann superalgebras with $\caA \nsub{\ep} \caB$ for some $\ep < 1/8$. Then there exists an even $G$-invariany unitary $u \in \big( \caA \cup \caB \big)''$ such that $u \caA u^* \subset \caB$ and $\norm{u - \I} \leq 12 \ep$

	Moreover, $u$ can be chosen such that for all $z \in B(\caH)$ with $z \nin{\delta} \caA$ and $z \nin{\delta} \caB$ we have $\norm{uzu^* - z} \leq 46 \delta \norm{z}$. Also, for all $z \in B(\caH)$ such that $\norm{[z_{\pm}, c]_s} \leq \delta \norm{z_{\pm}} \norm{c}$ for all $c \in \caA \cup \caB$ we have $\norm{u^* z u - z} \leq 40 \delta \norm{z}$.
\end{theorem}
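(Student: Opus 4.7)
My plan is to parallel the proof of \cite{RWW2020}'s Theorem 2.6, replacing the ordinary commutant and conditional expectation with the graded analogues developed in Appendix \ref{app:super vN}, and handling the $G$-symmetry by group averaging. Throughout the argument I would work inside the ambient von Neumann algebra $\caM := (\caA \cup \caB)''$, which is itself super and $G$-invariant, and I would repeatedly use that because $\caB$ is $G'$-hyperfinite, $\widetilde \caB \subset \caM$, so Proposition \ref{prop:conditional expectation on supercommutant of theta-hyperfinite vN algebra} produces a conditional expectation $\E_{\caB^\sharp}: \caB(\caH) \to \caB^\sharp$ with $\E_{\caB^\sharp}(x) \in (\{x\} \cup \widetilde\caB)''$, so the range of this expectation applied to elements of $\caM$ lands in $\caM$ itself.

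The first step is to convert the near inclusion into near-supercommutation: Lemma \ref{lem:near inclusions and supercommutators} applied to $\caA \nsub{\ep} \caB \subset (\caB^\sharp)^\sharp$ gives $\norm{[a,c]_s} \leq 4\ep \norm{a} \norm{c}$ for $a \in \caA$ and $c \in \caB^\sharp$. This is the quantitative input that drives a Christensen-type construction. Following the ungraded argument in \cite{RWW2020}, I would then build an even, $G$-invariant self-adjoint element $k \in \caM$ with $\norm{k} \leq \caO(\ep)$ such that the unitary $u := e^{\iu k}$ satisfies $u \caA u^* \subset \caB$. The element $k$ is produced by applying $\E_{\caB^\sharp}$ to a suitable seed operator (encoding the deviation of $\caA$ from $\caB$), then taking its even part (which preserves the norm bound by Lemma \ref{lem:near inclusion respects parity}), then averaging over $G$ (which preserves both the conditional expectation and the inclusion, since $\caA$, $\caB$, $\caB^\sharp$ are all $G$-invariant and the $G$-action commutes with $\theta$). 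At this stage the only properties I use about $G$-invariance are that $\rho^g$ is an automorphism of $\caM$ preserving $\caB^\sharp$, so that the averaged element still lies in $\caM$ and $u$ is manifestly in $(\caA \cup \caB)''$, even and $G$-invariant.

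The norm bound $\norm{u - \I} \leq 12\ep$ is then obtained by expanding $u = \I + \iu k + \caO(\norm{k}^2)$ and using the factor-of-$3$ loss in the approximation property of $\E_{\caB^\sharp}$ (Lemma \ref{lem:conditional expectation preserves near supercommutation}) combined with the factor of $4$ from Lemma \ref{lem:near inclusions and supercommutators}. For the two perturbation estimates, both are direct consequences of the formula for $k$. If $z \nin{\delta} \caA$ and $z \nin{\delta} \caB$, then $z$ is $\caO(\delta)$-close to an element in the joint near-range of the two algebras, so the commutator $\norm{[k, z]} = \caO(\delta \norm{z})$ by the standard near-inclusion/near-commutation bookkeeping; this yields $\norm{u z u^* - z} \leq 46 \delta \norm{z}$. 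If instead $z_\pm$ nearly supercommute with $\caA \cup \caB$, then by Lemma \ref{lem:near inclusions and supercommutators} they nearly supercommute with the whole supercommutant, hence with $k$ (since $k$ is built from iterating $\E_{\caB^\sharp}$ and $\E_{\caA^\sharp}$-type operations and lies in a bicommutant generated by these), giving $\norm{u^* z u - z} \leq 40 \delta \norm{z}$.

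The main obstacle will be Step 2: making the explicit construction of $k$ precise enough to track the universal constants $12$, $46$, and $40$. The ungraded argument in \cite{RWW2020} relies on Christensen's explicit formula together with tight bounds on the conditional expectation onto the commutant; here the supercommutant expectation loses an extra factor of $3$ per application (Lemma \ref{lem:conditional expectation preserves near supercommutation}), and one must also verify that the parity- and $G$-averagings do not further inflate the constants. Once the unitary has been produced with the correct norm bound, checking the two perturbation inequalities for general $z$ amounts to expanding $e^{\iu k}$ to first order and reusing the already-established near-supercommutation bounds, which is routine compared with the construction of $k$ itself.
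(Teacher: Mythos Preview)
Your proposal has a genuine gap in the construction of $u$. You work with the conditional expectation $\E_{\caB^\sharp}$ onto the \emph{supercommutant} of $\caB$ and propose to build a small self-adjoint $k$ with $u = e^{\iu k}$ from it. But $\E_{\caB^\sharp}$ lands in $\caB^\sharp$, not in $\caB$, and there is no mechanism in your outline that produces an actual $*$-homomorphism $\caA \to \caB$; near-supercommutation of $\caA$ with $\caB^\sharp$ is equivalent to the hypothesis $\caA \nsub{\ep} \caB$ (Lemma \ref{lem:near inclusions and supercommutators}) and does not by itself yield a unitary conjugating $\caA$ into $\caB$. The ansatz $u = e^{\iu k}$ is also not what the Christensen argument in \cite{RWW2020} does, so ``paralleling'' that proof does not lead here.

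The paper's route is substantially different. One uses injectivity of $\caB$ to get a conditional expectation $\E_{\caB}:\caB(\caH)\to\caB$ (onto $\caB$, not $\caB^\sharp$), group-averages it over $G'$, and applies the equivariant Stinespring theorem (Proposition \ref{prop:equivariant Stinespring}) to write $\E_{\caB}(x)=v^*\pi(x)v$. In the dilated picture the projection $p=vv^*$ commutes with $\pi(\caB)$ and nearly commutes with $\pi(\caA)$; using the $G'$-equivariant conditional expectation onto $\pi(\caA_0)'$ (Proposition \ref{prop:equivariant conditional expectation}) and a spectral cut one finds a $G'$-invariant projection $q\in\pi(\caA_0)'$ with $\norm{p-q}\leq 2\ep$, and the polar part $w$ of $y=qp+q^\perp p^\perp$ intertwines $p$ and $q$. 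This produces a genuine unital $G'$-equivariant $*$-homomorphism $\Phi(a)=v^*w^*\pi(a)wv$ from $\caA_0$ into $\caB$ with $\norm{\Phi(a)-a}\leq 8\ep\norm{a}$. Only then does one invoke Proposition \ref{prop:making homomorphisms inner} (property P and polar decomposition, not an exponential) to obtain the even $G$-invariant unitary $u\in(\caA\cup\caB)''$ with $\norm{u-\I}\leq 12\ep$; the constants $40$ and $46$ come from the explicit control of $[z,y]$ and $[\pi(a),y]$ in those two steps, not from a first-order expansion of $e^{\iu k}$.
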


\subsection{Proofs}

First a lemma
\begin{lemma} \label{lem:group averaged expectation is expectation}
	If $\caA \subset \caB(\caH)$ is a $G$-invariant von Neumann algebra and $\E : \caB(\caH) \rightarrow \caA$ is a conditional expectation, then its group average
	\begin{equation}
		\E^G := \frac{1}{\abs{G}} \sum_{g \in G} \rho^g \circ \E \circ \rho^{g^{-1}}
	\end{equation}
	is an equivariant conditional expectation.
\end{lemma}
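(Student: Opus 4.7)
The plan is to verify each of the three defining properties of a conditional expectation (Definition \ref{def:conditional expectation}) for $\E^G$, then check equivariance as a separate final step. Throughout, the key input is that $\caA$ is $G$-invariant, so $\rho^g|_{\caA}$ is an automorphism of $\caA$ for each $g \in G$.

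First I would check that $\E^G$ takes values in $\caA$: for each $g$, $\rho^{g^{-1}}$ sends $\caB(\caH)$ to itself, $\E$ sends this into $\caA$, and then $\rho^g$ sends $\caA$ to $\caA$ because $\caA$ is $G$-invariant. Complete positivity follows because each $\rho^g$ is a $*$-automorphism, hence completely positive, and the composition and convex combination of completely positive maps is completely positive. The contraction property $\norm{\E^G(x)} \leq \norm{x}$ follows from the triangle inequality together with $\norm{\rho^g \circ \E \circ \rho^{g^{-1}}(x)} \leq \norm{\E(\rho^{g^{-1}}(x))} \leq \norm{\rho^{g^{-1}}(x)} = \norm{x}$, using that $\rho^g$ is an isometry.

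The bimodule property requires a short computation. For $a, b \in \caA$ and $x \in \caB(\caH)$, note that $\rho^{g^{-1}}(a), \rho^{g^{-1}}(b) \in \caA$, so
\begin{equation}
  \E(\rho^{g^{-1}}(axb)) = \E(\rho^{g^{-1}}(a) \rho^{g^{-1}}(x) \rho^{g^{-1}}(b)) = \rho^{g^{-1}}(a) \, \E(\rho^{g^{-1}}(x)) \, \rho^{g^{-1}}(b),
\end{equation}
and applying $\rho^g$ gives $a \, (\rho^g \circ \E \circ \rho^{g^{-1}})(x) \, b$. Averaging over $g$ yields $\E^G(axb) = a \, \E^G(x) \, b$.

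Finally, equivariance is a change-of-variables argument: for any $h \in G$,
\begin{equation}
  \rho^h \circ \E^G = \frac{1}{|G|} \sum_{g \in G} \rho^{hg} \circ \E \circ \rho^{g^{-1}} = \frac{1}{|G|} \sum_{g' \in G} \rho^{g'} \circ \E \circ \rho^{g'^{-1}} \circ \rho^h = \E^G \circ \rho^h,
\end{equation}
where in the middle equality I substituted $g' = hg$ and used that $g \mapsto hg$ is a bijection of $G$. No step here is an obstacle; the whole proof is essentially bookkeeping, and the only thing to keep an eye on is that $G$-invariance of $\caA$ is invoked both for the range of $\E^G$ and for the bimodule property.
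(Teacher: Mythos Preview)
Your proof is correct and follows essentially the same approach as the paper: verify equivariance via the substitution $g' = hg$, and check the three conditional-expectation properties (complete positivity, contractivity, bimodule) term by term using that each $\rho^g$ is a $*$-automorphism preserving $\caA$. The only difference is ordering and presentation---the paper checks equivariance first and spells out complete positivity at the matrix level, while you note more explicitly that the range lands in $\caA$.
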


\begin{proof}
	Equivariance of $\E^G$ follows immediately from the definition:
	\begin{equation}
		\rho^h \circ \E^G \circ \rho^{h^{-1}} = \frac{1}{\abs{G}} \sum_{g \in G} \rho^{gh} \circ \E \circ \rho^{(gh)^{-1}} = \E^G.
	\end{equation}
	We now check that $\E^G$ is still a conditional expectation. We have contractivity:
	\begin{align*}
		\norm{\E^G(x)} &\leq \frac{1}{\abs{G}} \sum_{g \in G} \norm{ \rho^g( \E (\rho^{g^{-1}}(x))) } \\
			       &= \frac{1}{\abs{G}} \sum_{g \in G} \norm{ \E(\rho^{g^{-1}}(x))} \\
			       &\leq \frac{1}{\abs{G}} \sum_{g \in G} \norm{\rho^{g^{-1}}(x)} \\
			       &= \norm{x}.
	\end{align*}

	If $y \in \caB(\caH)$ and $x, z \in \caA$ then
	\begin{align*}
		\E^G(xyz) &= \frac{1}{\abs{G}} \sum_{g} \rho^g \left( \E \left(  \rho^{g^{-1}}(x)  \rho^{g^{-1}}(y) \rho^{g^{-1}}(z) \right) \right) \\
			  &= \frac{1}{\abs{G}} \sum_{g \in G} \rho^g \left( \rho^{g^{-1}}(x) \E( \rho^{g^{-1}}(y) ) \rho^{g^{-1}}(z) \right) \\
			  &= x \left(  \frac{1}{\abs{G}} \sum_{g \in G} \rho^g \left( \E( \rho^{g^{-1}}(y) ) \right)  \right) \\
			  &= x \E^G(y) z.
	\end{align*}
	where we used property iii in the definition \ref{def:conditional expectation} of a conditional expectation.

	Finally, we check complete positivity. We want to show that for any $k \in \N$, the map
	\begin{equation}
		(\E^G)^{k \times k} : \caB(\caH)^{k \times k} \rightarrow \caA^{k \times k} : \begin{bmatrix} x_{1 1} & \cdots & x_{1 k} \\
		\vdots & \, & \vdots \\ 
		x_{k 1} & \cdots & x_{k k} \end{bmatrix}
			\mapsto \begin{bmatrix} \E^G(x_{1 1}) & \cdots & \E^G(x_{1 k}) \\
				\vdots & \, & \vdots \\
				\E^G(x_{k 1}) & \cdots & \E^G(x_{k k })
		\end{bmatrix}
	\end{equation}
	is positive. But we have
	\begin{equation}
		(\E^G)^{k \times k} = \frac{1}{\abs{G}} \sum_{g \in G} \rho^g \circ \E^{k \times k} \circ \rho^{g^{-1}}
	\end{equation}
	which is a convex combination of positive maps, and therefore itself positive.
\end{proof}

We generalize Proposition B.1. of \cite{RWW2020}.

\begin{proposition} \label{prop:making homomorphisms inner}
	Consider $G$-invariant $C^*$-superalgebras $\caA_i, \caB_i \subset B(\caH)$ for $i = 1, \cdots, n$ that satisfy $[\caA_i, \caA_j]_s = [\caB_i, \caB_j]_s = 0$ for $i \neq j$, and such that each $\caA_i''$ and $\caB_i''$ is a hyperfinite $G$-invariant von Neumann superalgebra. Denote $\caA = \big( \cup_i \caA_i \big)''$ and $\caB = \big( \cup_i \caB_i \big)''$. Consider unital $G$-equivariant $*$-superhomomorphisms $\Phi_i : \caA_i \rightarrow \caB_i$ with $\norm{\Phi_i(a_i) - a_i} \leq \gamma_i \norm{a_i}$ for all $a_i \in \caA_i$, with $\sum_i \gamma_i < 1$. Then there exists a $G$-invariant even unitary  $u \in (\caA \cup \caB)''$ such that $\Phi_i(a_i) = u^* a_i u$ for all $i$ and $a_i \in \caA_i$, with
	\begin{align}
		\norm{\I - u} &\leq \sqrt{2} \ep \big( 1 + (1 - \ep^2)^{1/2} \big)^{-1/2} \leq \sqrt{2} \ep, \\
		\ep &= \sum_i \gamma_i
	\end{align}
	where the middle expression in the first line is $\sqrt{2} \ep + \caO(\ep^2)$. Moreover, for $\ep \leq 1/8$, the unitary $u$ can be chosen such that for any $z \in B(\caH)$, if $\norm{[z_{\pm}, c]_s} \leq \delta \norm{z_{\pm}} \norm{c}$ for all $c \in \caA \cup \caB$ then $\norm{u^* z u - z} \leq 40 \delta \norm{z}$.
\end{proposition}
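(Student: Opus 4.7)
The plan is to assemble the maps $\Phi_i$ into a single even equivariant $*$-superhomomorphism $\Phi\colon\caA\to\caB$, construct an intertwiner $x$ via Haar averaging over unitary groups of finite-dimensional approximating subalgebras, and extract the unitary $u$ by polar decomposition.

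For the assembly, the pairwise supercommutation of the $\caA_i$ (respectively $\caB_i$) together with parity-preservation of each $\Phi_i$ makes the prescription $\Phi(a_1\cdots a_n)=\Phi_1(a_1)\cdots\Phi_n(a_n)$ into a well-defined even equivariant unital $*$-superhomomorphism on the $*$-algebra generated by $\cup_i\caA_i$. A telescoping argument in the graded minimal tensor product $\caA_1\gotimes\cdots\gotimes\caA_n$, using the contractivity of each $\Phi_i$, yields $\norm{\Phi(a)-a}\le\ep\norm{a}$ with $\ep=\sum_i\gamma_i$. By standard extension, $\Phi$ continues as a normal $*$-superhomomorphism on the hyperfinite vN algebra $\caA=(\cup_i\caA_i)''$, still satisfying this bound; by $G'$-hyperfiniteness, we obtain an increasing net $\{\caA^{(\alpha)}\}$ of finite-dimensional $G'$-invariant matrix superalgebras with weak closure $\caA$.

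For each $\alpha$, define
\begin{equation*}
x_\alpha := \int_{U(\caA^{(\alpha)})} u^*\Phi(u)\,du
\end{equation*}
with respect to the Haar measure on the compact unitary group of $\caA^{(\alpha)}$. A change of variable $v=ua^{-1}$ shows that $ax_\alpha=x_\alpha\Phi(a)$ for every unitary $a\in\caA^{(\alpha)}$, hence for all $a\in\caA^{(\alpha)}$. Analogous changes of variable $u\mapsto\rho^g(u)$ and $u\mapsto\theta(u)$, using the $G'$-invariance of $\caA^{(\alpha)}$ and the equivariance and evenness of $\Phi$, imply that $x_\alpha$ is $G$-invariant and even. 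The element $x_\alpha$ clearly lies in $(\caA\cup\caB)''$ and satisfies $\norm{x_\alpha-\I}\le\int\norm{\Phi(u)-u}\,du\le\ep$. A weak accumulation point $x$ of the bounded net $\{x_\alpha\}$ inherits all these properties, and by normality of $\Phi$ on $\caA$ together with weak continuity of left multiplication, satisfies $ax=x\Phi(a)$ for all $a\in\caA$, in particular all $a\in\caA_i$.

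Polar-decompose $x=u|x|$; since $\norm{x-\I}<1$, $x$ is invertible and $u$ is unitary. Taking adjoints of $ax=x\Phi(a)$ and combining yields $x^*x\,\Phi(a)=\Phi(a)\,x^*x$, so $|x|$ commutes with $\Phi(\caA)$, and combined with the intertwining relation and invertibility of $|x|$ this gives $au=u\Phi(a)$, i.e.\ $u^*au=\Phi(a)$ for $a\in\caA_i$. Evenness, $G$-invariance, and membership in $(\caA\cup\caB)''$ transfer from $x$ to $u=x|x|^{-1}$ via continuous functional calculus applied to $x^*x$. The explicit bound $\norm{u-\I}\le\sqrt{2}\ep(1+(1-\ep^2)^{1/2})^{-1/2}$ follows from $\norm{x^*x-\I}\le2\ep+\ep^2$ by a standard polar-decomposition computation. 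For the final near-commutation estimate, the hypothesis $\norm{[z_\pm,c]_s}\le\delta\norm{z_\pm}\norm{c}$ for $c\in\caA\cup\caB$ passes, up to a universal constant and by weak density (cf.\ Lemma \ref{lem:near inclusions and supercommutators}), to $c=u\in(\caA\cup\caB)''$; since $u$ is even, the supercommutator with each $z_\pm$ is the ordinary commutator, so $\norm{u^*zu-z}=\norm{u^*[z,u]}\le\norm{[z_+,u]}+\norm{[z_-,u]}\le40\delta\norm{z}$ after tracking constants through the construction. The main technical obstacle is the normal extension of $\Phi$ to $\caA$ together with preservation of the intertwining relation under the weak limit; the bookkeeping of constants leading to the universal factor $40$ is then a routine verification.
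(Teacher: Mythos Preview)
Your overall architecture---build an intertwiner by averaging, then polar-decompose---matches the paper's, but two steps are not justified as written.

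First, the claim $\norm{\Phi(a)-a}\le\ep\norm{a}$ for all $a\in\caA$ does not follow from ``telescoping''. Telescoping on an elementary product $a=a_1\cdots a_n$ gives $\norm{\Phi(a)-a}\le\ep\prod_i\norm{a_i}$, not $\ep\norm{a}$; for a general element this blows up. To upgrade to the full algebra you would need a completely-bounded estimate on each $\Phi_i-\id$, which at best costs an extra factor of~$2$ and spoils the stated constant. The paper avoids this entirely by never forming a global $\Phi$: it iterates, averaging over each $\caC_i=\{\mathrm{diag}(a_i,\Phi_i(a_i))\}$ in turn via property~P, so the resulting intertwiner lies in the weak closure of the convex hull of products $u_n^*\cdots u_1^*\,\Phi_1(u_1)\cdots\Phi_n(u_n)$ with each $u_j\in\caU(\caA_j)$. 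For such products the bound $\norm{\I-u_n^*\cdots u_1^*\Phi_1(u_1)\cdots\Phi_n(u_n)}\le\sum_i\gamma_i$ is immediate because every factor is a unitary. Your single Haar integral over $U(\caA^{(\alpha)})$ requires the unjustified global bound precisely because a generic unitary in $\caA^{(\alpha)}$ is not such a product.

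Second, the argument for the final $40\delta$ estimate is wrong: the hypothesis $\norm{[z_\pm,c]_s}\le\delta\norm{z_\pm}\norm{c}$ for $c\in\caA\cup\caB$ does \emph{not} pass to arbitrary $c\in(\caA\cup\caB)''$ with a universal constant---products of $k$ elements pick up a factor of~$k$, and Lemma~\ref{lem:near inclusions and supercommutators} does not say otherwise. The paper's argument works because the intertwiner $y$ is explicitly a limit of convex combinations of elements $c_1c_2$ with $c_1\in\caU(\caA)$ and $c_2\in\caU(\caB)$; the graded Leibniz rule then gives $\norm{[z,c_1c_2]_s}\le 8\delta\norm{z}$ directly, after which one applies the polar-decomposition commutator estimate $\norm{[z,u]}\le 3\norm{[z,y]}+2\norm{[z,y^*]}$ to reach $40\delta$. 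Your $x_\alpha=\int u^*\Phi(u)\,du$ has the same two-factor structure, so the correct argument is available to you---but it is not the one you wrote.
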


\begin{proof}
	By Proposition \ref{prop:G-equiv super homomorphisms extend to vN} the homomorphisms $\Phi_i : \caA_i \rightarrow \caB_i$ can be extended to $G$-equivariant super $*$-isomorphisms $\Phi_i' : \caA''_i \rightarrow \Phi_i(\caA''_i) \subset \caB''_i$ with $\norm{\Phi'_i(a_i) - a_i} \leq \gamma_i \norm{a_i}$ for all $a_i \in \caA''_i$. Because the $\caA''_i$ and $\caB''_i$ are assumed to be hyperfinite, we can now without loss of generality assume that the $\caA_i$, $\Phi_i(\caA_i)$ and $\caB_i$ are hyperfinite $G$-invariant von Neumann superalgebras.

	Consider the following subalgebras of $\caB(\caH \oplus \caH)$:
	\begin{equation}
		\caC_i := \left\lbrace \begin{bmatrix} a_i & \, \\ \, & \Phi_i(a_i) \end{bmatrix} \, : \, a_i \in \caA_i \right\rbrace.
	\end{equation}
	The ambient $\caB(\caH \oplus \caH)$ is $\Z_2$-graded by $\theta \oplus \theta$ and carries $G$-actions $\rho^g \oplus \rho^g$. The subalgebra $\caC_i$ is a hyperfinite $G$-invariant von Neumann superalgebra with respect to this $\Z_2$-grading and these symmetry actions.

	Being hyperfinite, the algebras $\caC_i$ have property P, which we first apply to $\caC_1$ and the even element
	\begin{equation}
		x_0 = \begin{bmatrix} 0 & \I \\ 0 & 0 \end{bmatrix} \in \caB(\caH \oplus \caH)
	\end{equation}
	yielding an element $x'_1 \in \caC'_1$ that is also in 
	\begin{equation}
		\overline{   \hull \{  c_1^* x_0 c_1 \, : \, c_1 \in \caU(\caC_1) \}   }^{w}
	\end{equation}

	Both this set and $\caC_1$ are $\langle \theta, G\rangle$-invariant so the even $G$-invariant element
	\begin{equation}
		\tilde x_1 = \frac{1}{2 \abs{G}} \sum_{g \in G} \left(  \rho^g(x_1) + \theta(\rho^g(x_1))  \right)
	\end{equation}
	is also in $\caC'_1$ and in
	\begin{equation}
		\overline{  \hull \{ c_1^* x_0 c_1 \, : \, c_1 \in \caU(\caC_1) \}  }^{w}.
	\end{equation}

	Elements $c_1^* x_0 c_1$ are of the form
	\begin{equation}
		c_1^* x_0 c_1 = \begin{bmatrix} 0 & u_1^* \Phi_1(u_1) \\ 0 & 0 \end{bmatrix}
	\end{equation}
	so
	\begin{equation}
		\tilde x_1 = \begin{bmatrix} 0 & y_1 \\ 0 & 0 \end{bmatrix}
	\end{equation}
	for some $G$-invariant even element $y_1 \in (\caA_1 \cup \caB_1)''$.

	The fact that $\tilde x_1 \in \caC'_1$ implies that
	\begin{equation}
		\begin{bmatrix}
			a_1 & \, \\ \, & \Phi_1(a_1)
		\end{bmatrix} 
		\begin{bmatrix}
			\, & y_1 \\ \, & \,
		\end{bmatrix}
		=
		\begin{bmatrix}
			\, & y_1 \\ \, & \,
		\end{bmatrix}
		\begin{bmatrix}
			a_1 & \, \\ \, & \Phi_1(a_1)
		\end{bmatrix}
	\end{equation}
	hence
	\begin{equation}
		a_1 y_1 = y_1 \Phi_1(a_1)
	\end{equation}
	for all $a_1 \in \caA_1$.

	Let us now assume that we have a $G$-invariant even element
	\begin{equation}
		\tilde x_i = \begin{bmatrix}
			\, & y_i \\ \, & \,
		\end{bmatrix}
	\end{equation}
	that is in $\caC'_j$ for all $j \leq i$ and such that $a_j y_i = y_i \Phi_j(a_j)$ for all $a_j \in \caA_j$ for $j \leq i$, and
	\begin{equation}
		y_i \in \overline{  \hull \{ u_i^* \cdots u_1^* \Phi_1(u_1) \cdots \Phi_i(u_i) \, : \, u_j \in \caU(\caA_j) \}  }^{w}.
	\end{equation}

	We then apply property P of $\caC_{i+1}$ to the element $\tilde x_i \in \caB(\caH \oplus \caH)$ to obtain 
	\begin{equation}
		x_{i+1} \in \caC'_{i+1} \cap \overline{  \hull \{ c_{i+1}^* \tilde x_i c_{i+1} \, : \, c_{i+1} \in \caU(\caC_{i+1}) \}  }^{w}.
	\end{equation}
	Both sets being intersected are $\langle \theta, G\rangle$-invariant so the $G$-invariant even element
	\begin{equation}
		\tilde x_{i+1} = \frac{1}{2 \abs{G}} \sum_{g \in G} \left( \rho^{g}(x_{i+1}) + \theta(\rho^g(x)) \right)
	\end{equation} 

	Elements $c_{i+1}^* \tilde x_i c_{i+1}$ are of the form
	\begin{equation}
		c_{i+1}^* \tilde x_i c_{i+1} =
		\begin{bmatrix}
			\, & u_{i+1}^* y_i \Phi_{i+1}(u_{i+1}) \\ \, & \, 
		\end{bmatrix}
	\end{equation}
	so
	\begin{equation}
		\tilde x_{i+1} =
		\begin{bmatrix}
			\, & y_{i+1} \\
			\, & \,
		\end{bmatrix}
	\end{equation}
	for some $G$-symmetric even element $y_{i+1}$ of $\left( \cup_{j = 1}^i \caA_j \cup \caB_j \right)''$ in
	\begin{equation}
		\overline{  \hull \{ u_{i+1}^* u_i^* \cdots u_1^* \Phi_1(u_1) \cdots \Phi_i(u_i) \Phi_{i+1}(u_{i+1}) \, : \, u_j \in \caU(\caA_j) \}  }^{w}.
	\end{equation}

	The fact that $\tilde x_{i+1} \in \caC'_{i+1}$ implies that
	\begin{equation}
		a_{i+1} y_{i+1} = y_{i+1} \Phi_{i+1}(a_{i+1})
	\end{equation}
	for all $a_{i+1} \in \caA_{i+1}$.

	Moreover, $\tilde x_i$ is an even element of $\caC'_j$ for all $j \leq i$, so it is in the supercommutant of these $\caC_j$. The algebra $\caC_{i+1}$ graded commutes all these $\caC_j$, so all elements $c_{i+1}^* \tilde x_i c_{i+1}$ for $c_{i+1} \in \caU(\caC_{i+1})$ are in the graded commutant of the $\caC_j$ for $j \leq i$. It follows that
	\begin{equation}
		\tilde x_{i+1} \in \overline{  \hull \{ c_{i+1}^* \tilde x_i c_{i+1} \, : \, c_{i+1} \in \caU(\caC_{i+1}) \}  }^{w}
	\end{equation}
	is in the supercommutant of each of the $\caC_j$ for $j \leq i$. But $\tilde x_{i+1}$ is even, so it is in fact also in the commutants $\caC'_j$ which implies
	\begin{equation}
		a_j y_{i+1} = y_{i+1} \Phi_{j}(a_j)
	\end{equation}
	for all $a_j \in \caA_j$ and all $j \leq i$. We have already shown that the same is true for all $a_{i+1} \in \caA_{i+1}$ also.

	By induction, we have shown that there is a $G$-invariant even element $y = y_n \in (\caA \cup \caB)''$ in
	\begin{equation} \label{eq:convex set}
		\overline{  \hull \{ u_n^* \cdots u_1^* \Phi_1(u_1) \cdots \Phi_n(u_n) \, : \, u_i \in \caU(\caA_i) \}  }^{w}
	\end{equation}
	such that for all $i \in \{1, \cdots, n\}$ and all $a_i \in \caA_i$,
	\begin{equation} \label{eq:intertwining of y}
		a_i y = u \Phi_i(a_i).
	\end{equation}

	Now, $y$ is close to the identity because it is in the weak operator closure of the convex hull of elements
	\begin{align*}
		\norm{\I - u_n^* \cdots u_1^* \Phi_1(u_1) \cdots \Phi_n(u_n)} &= \norm{\I - u_n^* \Phi_n(u_n) + u_n^* \Phi_n(u_n) - u_n^* \cdots u_1^* \Phi_1(u_1) \cdots \Phi_n(u_n)} \\
									      &\leq \norm{u_n^* \Phi_n(u_n) - u_n^* \cdots u_1^* \Phi_1(u_1) \cdots \Phi_n(u_n) } + \norm{u_n - \Phi_n(u_n)} \\
									      &= \norm{\I - u_{n-1}^* \cdots u_1^* \Phi_1(u_1) \cdots \Phi_{n-1}(u_{n-1})} + \norm{u_n - \Phi_n(u_n)} \\
									      & \vdots \\
									      &\leq \sum_{i = 1}^n \norm{u_i - \Phi_i(u_i)},
	\end{align*}
	hence
	\begin{equation}
		\norm{\I - y} \leq \sum_{i = 1}^n \gamma_i = \ep.
	\end{equation}

	We now define the $G$-invariant even unitary $u = y \abs{y}^{-1}$. By Lemma 2.7 in \cite{Christensen1975} we have
	\begin{equation}
		\norm{\I - u} \leq \sqrt{2} \ep \left( 1 + (1 - \ep^2)^{1/2} \right)^{-1/2} \leq \sqrt{2} \ep.
	\end{equation}
	We will now show that $u^* a_i u = \Phi_i(a_i)$ for all $i = 1, \cdots, n$ and all $a_i \in \caA_i$.

	It follows from Eq. \eqref{eq:intertwining of y} that $y^* y = \Phi_i(u_i)^* y^* y \Phi_i(u_i)$ for any $u_i \in \caU(\caA_i)$, hence $[\Phi_i(u_i), y^* y] = 0$. Since any $a_i \in \caA_i$ can be written as a linear combination of unitary elements in $\caA_i$, it follows that $[\Phi_i(a_i), y^* y] = 0$, hence $[\Phi_i(a_i), \abs{y}^{-1}] = 0$ and
	\begin{equation}
		u^* a_i u = \abs{y}^{-1} y^* a_i y \abs{y}^{-1} = \abs{y}^{-1} y^* y \Phi_i(a_i) \abs{y}^{-1} = \abs{y}^{-1} y^* y \abs{y}^{-1} \Phi_i(a_i) = \Phi_i(a_i)
	\end{equation}
	for any $i$ and any $a_i \in \caA_i$.

	Finally, consider a $z \in \caB(\caH)$ such that $\norm{[z_{\pm}, c]_s} \leq \delta \norm{z_{\pm}} \norm{c}$ for all $c \in \caA \cup \caB$. We want to show that $\norm{[z, y]}$ is small. Since for homogeneous elements $z_{\pm}, a, b$ we have the chain rule
	\begin{equation}
		[z_{\pm}, ab]_s = [z_{\pm}, a]_s b + (-)^{\tau(z_{\pm}) \tau(a)} a [z_{\pm, b}]_s	
	\end{equation}
	we find that $\norm{[z, x]_s} \leq 8 \delta \norm{z}$ for any element $x$ in
	\begin{equation}
		\hull \{ u_n^* \cdots u_1^* \Phi_1(u_1) \cdots \Phi_n(u_n) \, : \, u_i \in \caU(\caA_i).
	\end{equation}
	Since $y$ is an in the weak closure of this convex hull there is a net of elements $y_i$ in the convex hull converging wekaly to $y$. For each $y_i$ we have $\norm{[z, y_i]_s} \leq 8 \delta \norm{z}$ and since norm-balls in $\caB(\caH)$ are weakly closed, also $\norm{[z, y]_s} \leq 8 \delta \norm{z}$. Noting that $y$ is even, the supercommutator con be relaced by a commutator so $\norm{[z, y]} \leq 8 \delta \norm{z}$. In the same way we find that $\norm{[z, y^*]} \leq 8 \delta \norm{z}$ so, using Lemma A.2 of \cite{RWW2020} we conclude that
	\begin{equation}
		\norm{u^* z u - z} = \norm{[z, u]} \leq 3 \norm{[z, y]} + 2 \norm{[z, y^*]} \leq 40 \delta \norm{z}.
	\end{equation}
\end{proof}

The following generalizes Proposition B.2 of \cite{RWW2020}.

\begin{proposition} \label{prop:G-equiv super homomorphisms extend to vN}
	Given a unital $G$-symmetric $C^*$-superalgebra $\caA \subset B(\caH)$ and a unital $G$-equivariant $*$-superhomomorphism $\Phi : \caA \rightarrow B(\caH)$ such that $\norm{ \Phi(a) - a } \leq \ep \norm{a}$ for all $a \in \caA$ and some $\ep < 1$, then $\Phi$ can be extended to a $G$-equivariant $*$-superisomorphism $\Phi' : \caA'' \rightarrow \Phi(\caA)''$ with $\norm{\Phi(a) - a} \leq \ep \norm{a}$ for all $a \in \caA''$.
\end{proposition}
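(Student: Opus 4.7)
The plan is to construct $\Phi'$ by density, using strong/weak operator topology, and then to check that all the structural properties (norm bound, grading, $G$-action) pass to the extension. Roughly this follows the scheme of Proposition B.2 of \cite{RWW2020}, with the extra book-keeping needed to track the grade involution $\theta$ and the $G$-action.

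First I would record that $\Phi$ is automatically a topological $*$-isomorphism from $\caA$ onto the $C^*$-subalgebra $\Phi(\caA)\subset B(\caH)$: the hypothesis gives $(1-\ep)\norm{a}\leq \norm{\Phi(a)}\leq\norm{a}$, so $\Phi$ is injective with closed range, and its inverse satisfies a similar near-identity estimate $\norm{\Phi^{-1}(b)-b}\leq \tfrac{\ep}{1-\ep}\norm{b}$. In particular $\caA\nsub{\ep}\Phi(\caA)$ and $\Phi(\caA)\nsub{\ep/(1-\ep)}\caA$ inside $B(\caH)$.

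The heart of the argument is to produce a map $\Phi':\caA''\to \Phi(\caA)''$. For each $x\in\caA''$, Kaplansky's density theorem provides a net $(a_\alpha)\subset\caA$ with $\norm{a_\alpha}\leq\norm{x}$ converging to $x$ in SOT. The net $(\Phi(a_\alpha))$ is then norm-bounded by $\norm{x}$, so weak-$*$ compactness of the unit ball of $B(\caH)$ yields a subnet along which $\Phi(a_{\alpha_\beta})$ converges in WOT to some $y\in \Phi(\caA)''$, and one would set $\Phi'(x):=y$. Taking WOT-lower-semicontinuity of the norm applied to $\norm{\Phi(a_\alpha)-a_\alpha}\leq \ep\norm{a_\alpha}$ yields directly $\norm{\Phi'(x)-x}\leq \ep\norm{x}$, which in turn reproves injectivity on $\caA''$.

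The main obstacle is to show that $y$ is independent of the choice of approximating net and of the subnet, i.e.\ that $\Phi$ is automatically normal on $\caA$. My plan is to exploit the near-inclusion framework developed earlier: since $\caA\nsub{\ep}\Phi(\caA)$ and $\Phi(\caA)\nsub{\ep/(1-\ep)}\caA$ with both ambient algebras $\caA''$ and $\Phi(\caA)''$ being $G'$-hyperfinite (they inherit this from $\caA$ by Lemma \ref{lem:tilde A of theta-hyperfinite A}--style stability arguments), Theorem \ref{thm:near inclusions} produces an even $G$-invariant unitary $u\in(\caA\cup\Phi(\caA))''$ implementing a $*$-isomorphism $\Ad_u:\caA''\to \Phi(\caA)''$ that is norm-close to the inclusion. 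Composing appropriately, one checks that any WOT-accumulation point $y$ of $\Phi(a_\alpha)$ must agree with $\Ad_u$ applied to $x$ after correction by the (unique) isometry determined by $\Phi$ on the dense subalgebra $\caA$; uniqueness of $\Phi|_\caA$ together with WOT density of $\caA$ in $\caA''$ forces $y$ to be unique.

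Once $\Phi'$ is well-defined, the remaining properties are essentially automatic. $\Phi'$ is a $*$-homomorphism because multiplication and involution are WOT-continuous on norm-bounded sets and $\Phi$ is a $*$-homomorphism on the dense subalgebra $\caA$. Equivariance $\rho^g\circ\Phi' = \Phi'\circ\rho^g$ and preservation of the grading $\theta\circ\Phi' = \Phi'\circ\theta$ pass to the WOT limit because $\rho^g$ and $\theta$ are implemented by conjugation with unitaries in $B(\caH)$ and are therefore WOT-continuous. The image of $\Phi'$ is $\Phi(\caA)''$: it contains $\Phi(\caA)$ (since $\Phi'$ extends $\Phi$), is contained in $\Phi(\caA)''$ (since it is built as WOT limits of elements of $\Phi(\caA)$), and is itself WOT-closed by normality, so the three are equal. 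Finally, injectivity of $\Phi'$ follows from $\norm{\Phi'(x)-x}\leq\ep\norm{x}$ with $\ep<1$, completing the construction of the desired $G$-equivariant super $*$-isomorphism.
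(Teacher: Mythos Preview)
Your proposal has a structural problem: it is circular. You invoke Theorem~\ref{thm:near inclusions} to pin down the extension, but in the paper's logical order that theorem is proved \emph{using} Proposition~\ref{prop:making homomorphisms inner}, whose very first step is an appeal to the present Proposition~\ref{prop:G-equiv super homomorphisms extend to vN} in order to pass from the $C^*$-algebras $\caA_i$ to their bicommutants. So you cannot use Theorem~\ref{thm:near inclusions} here. Independently of the circularity, the hypotheses of Theorem~\ref{thm:near inclusions} are not available: the proposition does not assume $\caA''$ or $\Phi(\caA)''$ to be $G'$-hyperfinite, and the cited lemma does not manufacture hyperfiniteness from a bare $C^*$-inclusion. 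Even granting all that, your uniqueness argument is only a sketch: knowing there exists \emph{some} even $G$-invariant unitary $u$ with $u\caA'' u^*\subset \Phi(\caA)''$ and $\norm{u-\I}$ small does not by itself force every WOT accumulation point of $\Phi(a_\alpha)$ to coincide; $\Ad_u$ need not agree with $\Phi$ on $\caA$, and ``correction by the isometry determined by $\Phi$'' is not a well-defined operation without already knowing $\Phi$ is normal. (As a minor aside, multiplication is not jointly WOT-continuous on bounded sets, so your argument for multiplicativity of $\Phi'$ would also need repair.)

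The paper avoids all of this with a self-contained doubling trick: it embeds the graph of $\Phi$ as the $C^*$-algebra
\[
\caC=\left\{\begin{bmatrix} a & 0\\ 0 & \Phi(a)\end{bmatrix}:a\in\caA\right\}\subset\caB(\caH\oplus\caH),
\]
and defines $\Phi'(a)=b$ to be the unique $b$ with $\begin{bmatrix} a & 0\\ 0 & b\end{bmatrix}\in\caC''$. Existence comes from Kaplansky plus weak compactness exactly as you suggest, but uniqueness is obtained directly from the bound $\ep<1$: if both $b_1,b_2$ worked then $\begin{bmatrix} 0 & 0\\ 0 & z\end{bmatrix}\in\caC''$ with $z=b_1-b_2$, and a Kaplansky approximation inside $\caC$ forces $a_i\to 0$ strongly with $\norm{a_i}\leq\norm{z}$ while $\Phi(a_i)\to z$, so lower semicontinuity of the norm gives $\norm{z}\leq\limsup\norm{\Phi(a_i)-a_i}\leq\ep\norm{z}$, hence $z=0$. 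Multiplicativity, the $*$-property, bijectivity, and $G'$-equivariance then all drop out of the fact that $\caC''$ is itself a $(\rho\oplus\rho,\theta\oplus\theta)$-invariant $*$-algebra. This argument uses nothing beyond Kaplansky and the estimate $\ep<1$, which is exactly what makes it suitable as a foundation for the later results you tried to invoke.
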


\begin{proof}
	Note first of all that the von Neumann algebra $\caA''$ is indeed a $G$-symmetric superalgebra by Lemma \ref{lem:G-invariance extends to vN}.

	Consider the algebra
	\begin{equation}
		\caC := \left\lbrace \begin{bmatrix} a & \, \\ \, & \Phi(a) \end{bmatrix} \, : \, a \in \caA \right\rbrace
	\end{equation}
	which is a subalgebra of $\caB(\caH \oplus \caH)$. It follows from the fact that $\caA$ is a $G$-invariant superalgebra and that $\Phi$ is a $G$-equivariant superhomomorphism that $\caC$ is a $G$-invariant superalgebra with respect to the $G$-symmetries $\rho^g \oplus \rho^g$ and the parity operator $\theta \oplus \theta$.

	Take $a \in \caA''$. We show that there is a unique $b \in \Phi(\caA)''$ such that
	\begin{equation}
		c = \begin{bmatrix} a & \, \\ \, & b \end{bmatrix} \in \caC''.
	\end{equation}

	Kaplansky's density theorem gives us a net $\{a_i\}$ in $\caA$, converging to $a$ in the weak operator topology such that $\norm{a_i} \leq \norm{a}$ for all $i$.But then $\norm{\Phi(a_i) - a} \leq \ep \norm{a}$ and $\norm{\Phi(a_i)} \leq (1 + \ep)\norm{a}$ so the net
	\begin{equation}
		c_i = \begin{bmatrix} a_i & \, \\ \, & \Phi(a_i) \end{bmatrix} \in \caC''
	\end{equation}
	is contained within a closed ball of finite radius in $\caB(\caH \oplus \caH)$. Any such ball is weak-$*$ compact, and therefore also compact in the weak operator topology, so $\{c_i\}$ admits a subnet converging in the weak operator topology to some
	\begin{equation}
		c = \begin{bmatrix} a  & \, \\ \, & b  \end{bmatrix} \in \caC''.
	\end{equation}

	This show the existence of $b$, we now show uniqueness. Suppose there are $b_1, b_2 \in \Phi(\caA)''$ such that
	\begin{equation}
		\begin{bmatrix}
			a & \, \\ \, & b_1
		\end{bmatrix},
		\begin{bmatrix}
			a & \, \\ \, & b_2
		\end{bmatrix} \in \caC''.
	\end{equation}
	Then
	\begin{equation}
		c = \begin{bmatrix} 0 & \, \\ \, & z \end{bmatrix} \in \caC'' \quad \text{with} \quad z = b_1 - b_2 \in \Phi(\caA)''.
	\end{equation}
	We will show that $z = 0$.

	Kaplansky's density theorem provides a net $\{c_i\}$ in $\caC$ that covnerges strongly to $c$ and such that $\norm{c_i} \leq \norm{c} = \norm{z}$ for all $i$. Since $c_i \in \caC$, by definition there is a net $\{a_i\}$ in $\caA$ such that
	\begin{equation}
		c_i = \begin{bmatrix} a_i & \, \\ \, & \Phi(a_i) \end{bmatrix}.
	\end{equation}
	Since the $c_i$ converge strongly to $c$, the $a_i$ converge strongly to zero with $\norm{a_i} \leq \norm{z}$ for all $i$, and the $\Phi(a_i)$ converge strongly to $z$. It follows that $\Phi(a_i) - a_i$ converges strongly to $z$. By lower semicontinuity of the norm in the strong topology, we find that
	\begin{equation}
		\norm{z} \leq \limsup_{i} \norm{\Phi(a_i) - a_i} \leq \ep \norm{z}.
	\end{equation}
	Since $\ep < 1$, we must have $z = 0$, hence $b_1 = b_2$.

	We define $\Phi' : \caA'' \rightarrow \Phi(\caA)''$ by $\Phi(a) = b$.

	Bijectivity of this $\Phi'$ follows if we can show that for any $b \in \Phi(\caA)''$ there is a unique $a \in \caA''$ suc that
	\begin{equation}
		c = \begin{bmatrix} a & \, \\ \, & b \end{bmatrix} \in \caC''.
	\end{equation}

	We first show existence of such an $a$. Kaplansky's density theorem provides a net $\{b_i\}$ in $\Phi(\caA)$ converging to $b$ in the weak operator topology and such that $\norm{b_i} \leq \norm{b}$ for all $i$. Then there is also a net $\{a_i\} \in \caA$ such that $b_i = \Phi(a_i)$, so
	\begin{equation}
		c_i = \begin{bmatrix} a_i & \, \\ \, & b_i \end{bmatrix} \in \caC
	\end{equation}
	is a net in $\caC$. Since $\Phi$ is a $*$-homomorphism, $\norm{a_i} = \norm{b_i} \leq \norm{b}$, so the net $\{c_i\}$ is contained in a bounded closed ball in $\caB(\caH \oplus \caH)$. By Banach-Alaoglu, $\{c_i\}$ admits a subnet converging in the weak operator topology to some
	\begin{equation}
		c = \begin{bmatrix} a & \, \\ \, & b \end{bmatrix} \in \caC''
	\end{equation}
	with $a \in \caA''$. This shows the existence of $a$.

	To see uniqueness, suppose $a_1, a_2 \in \caA''$ are both such that
	\begin{equation}
		\begin{bmatrix}
			a_1 & \, \\ \, & b
		\end{bmatrix},
		\begin{bmatrix}
			a_2 & \, \\ \, & b
		\end{bmatrix} \in \caC''.
	\end{equation}
	Then
	\begin{equation}
		\begin{bmatrix}
			z & \, \\ \, & 0
		\end{bmatrix} \in \caC'' \quad \text{with} \quad z = a_1 - a_2 \in \caA''.
	\end{equation}
	Kaplansky's density theorem gives us a net $\{c_i\} \in \caC$ converging strongly to $c$ with $\norm{c_i} \leq \norm{c} = \norm{z}$. Let
	\begin{equation}
		c_i = \begin{bmatrix} a_i & \, \\ \, \Phi(a_i) \end{bmatrix}
	\end{equation}
	for some net $\{a_i\}$ in $\caA$. Then the net $\{ \Phi(a_i) \}$ converges strongly to zero with $\norm{a_i} = \norm{\Phi(a_i)} \leq \norm{z}$ and $\{a_i\}$ converges strongly to $z$, so $\Phi(a_i) - a_i$ converges strongly to $z$. But
	\begin{equation}
		\norm{z} \leq \limsup_i \norm{\Phi(a_i) - a_i} \leq \ep \norm{z}.
	\end{equation}
	since $\ep < 1$, we conclude that $z = 0$ and $a_1 = a_2$.

	This shows that $\Phi' : \caA'' \rightarrow \Phi(\caC)''$ is a bijection.

	We must still check that it is a $G$-equivariant super $*$-isomorphism.

	For linearity, take any $a_1, a_2 \in \caA''$ and any $\lambda \in \C$. Then $\Phi'(a_1)$ and $\Phi'(a_2)$ are the unique elements of $\Phi(\caA)''$ such that
	\begin{equation}
		\begin{bmatrix}
			a_1 & \, \\ \, & \Phi'(a_1)
		\end{bmatrix},
		\begin{bmatrix}
			a_2 & \, \\ \, & \Phi'(a_2)
		\end{bmatrix} \in \caC''.
	\end{equation}
	Since $\caC''$ is a linear space,
	\begin{equation}
		\begin{bmatrix}
			a_1 + \lambda a_2 & \, \\ \, & \Phi'(a_1) + \lambda \Phi'(a_2)
		\end{bmatrix} \in \caC''.
	\end{equation}
	Hence, by definition of $\Phi'$, we have that $\Phi'(a_1 + \lambda a_2) = \Phi'(a_1) + \lambda \Phi'(a_2)$. So $\Phi'$ is linear.

	Similarly, multiplicativity of $\Phi'$ follows from
	\begin{equation}
		\begin{bmatrix}
			a_1 & \, \\ \, & \Phi'(a_1) 
		\end{bmatrix}
		\begin{bmatrix}
			a_2 & \, \\ \, \Phi'(a_2)
		\end{bmatrix}
		=
		\begin{bmatrix}
			a_1 a_2 & \, \\ \, & \Phi'(a_1) \Phi'(a_2)
		\end{bmatrix},
	\end{equation}
	hence $\Phi'(a_1 a_2) = \Phi'(a_1) \Phi'(a_2)$.

	The $*$-property follows from
	\begin{equation}
		\begin{bmatrix}
			a & \, \\ \, & \Phi'(a)
		\end{bmatrix}^*
		=
		\begin{bmatrix}
			a^* & \, \\ \, & \Phi'(a)^*
		\end{bmatrix}.
	\end{equation}

	This shows that $\Phi' : \caA'' \rightarrow \Phi(\caA)''$ is a $*$-isomorphism.

	To see that it is a $G$-equivariant super isomorphism, we need to show that it commutes with the $\rho^g$ and with the parity operator $\theta$. The argument is the same in both cases, so let $\rho$ be a $*$-automorphism leaving $\caA''$ and $\Phi(\caA)''$ invariant, we then show that $\Phi'(\rho(a)) = \rho(\Phi'(a))$ for any $a \in \caA''$. The algebra
	\begin{equation}
		\caC = \left\lbrace  \begin{bmatrix} a & \, \\ \, & \Phi(a) \end{bmatrix} \, : \, a \in \caA \right\rbrace \subset \caB(\caH \oplus \caH)
	\end{equation}
	is invariant under the action of $\rho \oplus \rho$, and by Lemma \ref{lem:G-invariance extends to vN}, so is $\caC''$. Let $a \in \caA''$, then
	\begin{equation}
		(\rho \oplus \rho) \left( \begin{bmatrix} a & \, \\ \, & \Phi'(a) \end{bmatrix}  \right) = \begin{bmatrix} \rho(a) & \, \\ \, & \rho(\Phi'(a)) \end{bmatrix} \in \caC''
	\end{equation}
	so, by denfinition of $\Phi'$, we have $\Phi'(\rho(a)) = \rho(\Phi'(a))$, as required.
\end{proof}

We now come to the proof of the  main theorem:

\begin{proofof}[Theorem \ref{thm:near inclusions}]
	Since $\caB$ is hyperfinite, it is injective and so there exists a conditional expectation
	\begin{equation}
		\widetilde \E_{\caB} : \caB(\caH) \rightarrow \caB
	\end{equation}
	onto $\caB$. 

	We take a group average over $G'$ to obtain a conditional expectation that is $G$-equivariant and respects the $\Z_2$-graded structure (lemma \ref{lem:group averaged expectation is expectation}):
	\begin{equation}
		\E_{\caB} = \frac{1}{\abs{G'}} \sum_{g \in G'} \, \rho^g \circ \widetilde E_{\caB} \circ (\rho^g)^{-1}
	\end{equation}
	where $\rho^{\theta \, g} = \theta \circ \rho^{g}$ for all $g \in G$. \margin{establish better notation for the goup $G'$ and its representations...}

	From the equivariant Stinespring theorem \ref{prop:equivariant Stinespring} we get a Hilbert space $\caK$ equipped with a unitary representation of $G'$, the group generated by $G$ and parity. This representaion is given by $g \mapsto \tilde \rho^g \in \caU(\caK)$ for $g \in G'$ which satisfies $(\tilde \rho^{\theta})^2 = \I$ and $[\tilde \rho^g, \tilde \rho^{\theta}] = 0$ for all $g \in G$. Further, there is a unital $*$-homomorphism $\pi : \caB(\caH) \rightarrow \caB(\caK)$ and an isometry $v : \caH \rightarrow \caK$ such that
	\begin{equation} \label{eq:Stinespring rep}
		\E_{\caB}(x) = v^* \pi(x) v \quad \text{for all} \,\,\, x \in \caB(\caH)
	\end{equation}
	and
	\begin{equation}
		\pi(\rho^{g}(x)) = \tilde \rho^{g}(\pi(x)) \quad \text{for all } \,\,\, x \in \caB(\caH) \,\,\, \text{and all} \,\,\, g \in G'.
	\end{equation}
	Moreover, the subspace $v(\caH) \subset \caK$ is invariant under all the $\tilde \rho^g$ for $g \in G'$.

	Let $p = vv^* \in \caB(\caK)$, it is a $G'$-invariant projector. Moreover, for $b \in \caB$ we have
	\begin{equation}
		\E_{\caB}(b) = b = v^* \pi(b) v
	\end{equation}
	so the $*$-homomorphism
	\begin{equation}
		b \mapsto v b v^* = p \pi(b) p
	\end{equation}
	is injective.

	We have
	\begin{align*}
		p \pi(b^2) p &= v v^* \pi(b^2) v v^* = v b^2 v^* \\
			     &= v b v^* v b v^* = v v^* \pi(b) v v^* v v^* \pi(b) v v^* \\
			     &= p \pi(b) p \pi(b) p
	\end{align*}
	so $p \pi(b) p^\perp \pi(b) p = 0$ for all $b \in \caB$. If $b$ is self-adjoint, it follows that $p^{\perp} \pi(b) p = 0$ hence $[\pi(b), p] = 0$. But $\caB$ is a $*$-algebra so all of its elements are linear combination of self-adjoint elements and we have $[\pi(b), p] = 0$ for all $b \in \caB$, that is, $p \in \pi(\caB)'$.

	We will now show that $p$ nearly commutes with $\pi(\caA)$. Pick $a \in \caA$ and choose $b \in \caB$ such that $\norm{a - b} \leq \ep \norm{a}$. Then
	\begin{equation}
		\norm{[\pi(a), p]} = \frac{1}{2} \norm{[\pi(a-b), 2p - \I]} \leq \norm{\pi(a - b)} \norm{2p - \I} \leq \ep \norm{a}.
	\end{equation}

	We would now like to use Proposition \ref{prop:equivariant conditional expectation} to get conditional expectation onto $\pi(\caA)'$, but we have no guarantee that $\pi(\caA)''$ is $G'$-hyperfinite. Instead, using that $\caA$ is $G'$-hyperfinite we note that there exists an AF $C^*$-algebra $\caA_0 \subset \caA$ which is the norm closure of an increasing net of $G'$-invariant finite matrix algebras and such that $\caA = \caA_0''$. The $C^*$-algebra $\caA_0$ is automatically $G'$-invariant.

	Then $\pi(\caA_0)$ is also AF, and $\pi(\caA_0)$ is $G'$-hyperfinite and $G'$-invariant. We can therefore apply Proposition \ref{prop:equivariant conditional expectation} to this von Neumann algebra to get a $G'$-equivariant conditional expectation $\E_{\pi(\caA_0)'}$ which, when applied to $p$, gives a $G'$-invariant element
	\begin{equation}
		\E_{\pi(\caA_0)'}(p) \in \pi(\caA_0)'
	\end{equation}
	that satisfies
	\begin{equation}
		\norm{\E_{\pi(\caA_0)'}(p) - p} \leq \ep.
	\end{equation}
	We then further project $\E_{\pi(\caA_0)'}(p)$ onto the $G'$-invariant von Neumann algebra $\big( \{p\} \cup \pi(\caB(\caH))  \big)''$. By Corollary 1.3.2 of \cite{Arveson1969}, $\big( \{p\} \cup \pi(\caB(\caH)) \big)'$ is isomorphic to $\caB'$ \margin{Check out the Arveson corollary} and hence injective, so we can again use a $G'$-invariant conditional expectation, defining a $G'$-invariant element
	\begin{equation}
		x = \E_{(\{p\} \cup \pi(\caB(\caH)))''} \left( \E_{\pi(\caA_0)'}(p) \right) \in \big( \{p\} \cup \pi(\caB(\caH)) \big)''
	\end{equation}
	such that
	\begin{equation}
		\norm{x - p} \leq \ep
	\end{equation}
	where the latter bound follows from $\norm{\E_{\pi(\caA_0)'} - p} \leq \ep$ and the fact that conditional expectations are contractions.

	Moreover, we have $x \in \pi(\caA_0)'$. Indeed, $[x, a] = 0$ for any $a \in \pi(\caA_0)$ because $\E_{\pi(\caA_0)'}(p) \in \pi(\caA_0)'$ and $\pi(\caA_0) \subset \big( \{p\} \cup \pi(\caB(\caH)) \big)''$, and using property (iii) in definition \ref{def:conditional expectation} of the conditional expectation.

	The operator $x$ is self-adjoint and $\ep$-close to the projector $p$, so its spectrum is contained in $[-\ep, \ep] \cup [1 - \ep, 1  + \ep]$. Define $q \in \pi(\caA_0)'$ to be the spectral projection on the spectral subspace of $x$ associated to $[1 - \ep, 1 + \ep]$. Then $\abs{x - q} \leq \ep$ so $\norm{p - q} \leq 2 \ep$. Define the $G'$-invariant element
	\begin{equation}
		y = qp + q^\perp p^\perp.
	\end{equation}
	This element is close to the identity:
	\begin{equation}
		\norm{y - \I} = \norm{(2q - \I)(p -q)} \leq \norm{p-q} \leq 2\ep.
	\end{equation}
	Now define the $G'$-invariant unitary $w = y \abs{y}^{-1}$. By Lemma 2.7 of \cite{Christensen1975}, it is also close to the identity:
	We now define the $G$-invariant even unitary $u = y \abs{y}^{-1}$. By Lemma 2.7 in \cite{Christensen1975} we have
	\begin{equation} \label{eq:w close to I}
		\norm{w - \I} \leq 2 \sqrt{2} \ep.
	\end{equation}
	Since $y^* y = pqp + p^\perp q^\perp p^\perp$, we have $[p, y^* y] = 0$ so $[p, \abs{y}^{-1}] = 0$. Moreover, $yp = qy$, so
	\begin{equation} \label{eq:w intertwines p and q}
		w p w^* = y \abs{y}^{-1} p \abs{y}^{-1} y^* = y p \abs{y}^{-1} \abs{y}^{-1} y^* = q.
	\end{equation}

	Now define the $G'$-equivariant map
	\begin{equation}
		\Phi : \caA_0 \rightarrow \caB \subset \caB(\caH) \, : \, a \mapsto v^* w^* \pi(a) w v.
	\end{equation}
	This is a unital $*$-homomorphism, indeed
	\begin{align*}
		\Phi(a_1) \Phi_{a_2} &= v^* w^* \pi(a_1) w p w^* \pi(a_2) w v = v^* w^* \pi(a_1) q \pi(a_2) w v \\
				     &= v^* w^* \pi(a_1 a_2) q w v = v^* w^* \pi(a_1 a_2) w p v = \Phi(a_1 a_2),
	\end{align*}
	where we used $p = v v^*$, the fact that $q \in \pi(\caA_0)'$ and that $v$ is an isometry. To see that the image of $\Phi$ lies in $\caB$, note that $w^* \pi(a) w \in \big( \{p\} \cup \pi(\caB(\caH)) \big)''$ and recall Eq. \eqref{eq:Stinespring rep}.

	Moreover, for any $a \in \caA_0$ there exists $b \in \caB$ with $\norm{a - b} \leq \ep \norm{a}$ so
	\begin{align*}
		\norm{\Phi(a) - a} &\leq \norm{\Phi(a) - b} + \norm{b - a} \\
				   &= \norm{v^* ( w^* \pi(a) w - \pi(b) ) v} + \norm{b - a} \\
				   &\leq \norm{w^* \pi(a) w - \pi(b)} + \norm{b - a} \\
				   &\leq \norm{w^* \pi(a) w - \pi(a)} + 2 \norm{b - a} \\
				   &= \norm{[\pi(a), w]} + 2 \norm{b - a} \\
				   &= \norm{[\pi(a), w - \I]} + 2 \norm{b-a} \\
				   &\leq 2 \norm{w - \I} \norm{a} + 2 \norm{b - a} \leq 8 \ep \norm{a}.
	\end{align*}
	Proposition \ref{prop:making homomorphisms inner} then supplies a $G'$-invariant unitary $u \in \big( \caA \cup \caB \big)''$ with $\norm{u - \I} \leq 12 \ep$ and such that $u^* a u = \Phi(a) \in \caB$ for all $a \in \caA$, where $\Phi : \caA \rightarrow \caB$ now has its domain extended to $\caA = \caA_0''$. Proposition \ref{prop:making homomorphisms inner} also guarantees that if $z \in \caB(\caH)$ satisfies $\norm{[z_{\pm}, c]_s} \leq \delta \norm{z_{\pm}} \norm{c}$ for all $c \in \caA \cup \caB$, then $\norm{u z u^* - z} \leq 40 \delta \norm{z}$.

	We finally show that if $z \in \caB(\caH)$ is such that $z \nin{\delta} \caA$ and $z \nin{\delta} \caB$, then $\norm{u z u^* - z} \leq 46 \norm{z}$. Take $a \in \caA$ with $\norm{z - a} \leq \delta \norm{x}$. Then
	\begin{equation}
		\norm{u z u^* - z} = \norm{u (x - a) u^* - (x - a) + u a u^* - a} \leq 2 \delta \norm{x} + \norm{\Phi(a) - a}.
	\end{equation}
	Now take $b \in \caB$ with $\norm{z - b} \leq \delta \norm{x}$, hence also $\norm{a-b} \leq 2 \delta \norm{x}$. Then
	\begin{align*}
		\norm{\Phi(a) - a} &\leq \norm{[\pi(a), w]} + 2 \norm{a - b} \\
				   &\leq \norm{[\pi(a), w]} + 4 \delta \norm{x} \\
				   &\leq 3 \norm{[\pi(a), y]} + 2 \norm{[\pi(a), y^*]} + 4 \delta \norm{x}
	\end{align*}
	where we used Lemma A.2 of \cite{RWW2020} in the last line. To continue, note that $p \in \pi(\caB)'$ and $q \in \pi(\caA)'$, so
	\begin{equation}
		\norm{[\pi(a), y]} = \norm{[\pi(a), qp + q^\perp p^\perp]} \leq 2 \norm{[\pi(a-b), p]} \leq 4 \norm{a-b} \leq 8 \delta \norm{x},
	\end{equation}
	and likewise for $[\pi(a), y^*]$. Then
	\begin{equation}
		\norm{u z u^* - z} \leq 46 \delta \norm{x}
	\end{equation}
	as desired.
\end{proofof}

The following lemma generalizes Lemma 2.7 of \cite{RWW2020}.

\begin{lemma} \label{lem:local errors control global errors}
	Consider two unital injective $G$-equivariant super $*$-homomorphisms $\al_1, \al_2 : \caA \rightarrow \caB$ between hyperfinite $G$-invariant super von Neumann algebras. Assume there are mutually supercommuting hyperfinite $G$-invariant super von Neumann subalgebras $\caA_1, \cdots, \caA_n \subset \caA$ that generate $\caA$ in the sense that $\caA = \big( \cup_{i = 1}^n \caA_i \big)''$. Define
	\begin{equation}
		\ep = \sum_{i = 1}^n \norm{(\al_1 - \al_2)|_{\caA_i}}.
	\end{equation}
	Then if $\ep < 1$,
	\begin{equation}
		\norm{\al_1 - \al_2} \leq 2 \sqrt{2} \ep \left( 1 + (1 - \ep^2)^{\frac{1}{2}} \right)^{-\frac{1}{2}} = 2 \sqrt{2} \ep + \caO(\ep^2). 
	\end{equation}
\end{lemma}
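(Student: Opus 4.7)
The plan is to reduce this lemma to Proposition \ref{prop:making homomorphisms inner}, mirroring the strategy of \cite{RWW2020}'s Lemma 2.7. Using injectivity of $\al_1$, define for each $i$ the map
\begin{equation}
\Phi_i := \al_2 \circ (\al_1|_{\caA_i})^{-1} : \al_1(\caA_i) \rightarrow \al_2(\caA_i) \subset \caB.
\end{equation}
Since $\al_1, \al_2$ are unital $G$-equivariant super $*$-homomorphisms, each $\Phi_i$ is a unital $G$-equivariant super $*$-homomorphism. Since $\al_1$ is isometric on $\caA_i$ (as an injective $*$-homomorphism), one has for $b = \al_1(a) \in \al_1(\caA_i)$ the estimate
\begin{equation}
\norm{\Phi_i(b) - b} = \norm{\al_2(a) - \al_1(a)} \leq \gamma_i \norm{a} = \gamma_i \norm{b},
\end{equation}
so $\norm{(\Phi_i - \id)|_{\al_1(\caA_i)}} \leq \gamma_i$.

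Next I would verify the hypotheses of Proposition \ref{prop:making homomorphisms inner} for the $C^*$-superalgebras $\al_1(\caA_i), \al_2(\caA_i) \subset \caB$. Since each super $*$-homomorphism $\al_j$ preserves both the grading and the product, the images $\al_j(\caA_i)$ are pairwise supercommuting, $G$-invariant, and $\theta$-invariant. For hyperfiniteness of the weak closures: each $\caA_i$ contains an ascending net $\{\caA_i^{(\lambda)}\}$ of $G$-invariant finite-dimensional super $*$-subalgebras whose weak closure is $\caA_i$; the images $\al_j(\caA_i^{(\lambda)})$ are then an ascending net of $G$-invariant finite-dimensional super $*$-subalgebras in $\al_j(\caA_i)$, so that $\al_j(\caA_i)''$ is a hyperfinite $G$-invariant super von Neumann algebra.

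Proposition \ref{prop:making homomorphisms inner} then supplies a $G$-invariant even unitary $u \in \bigl( \bigcup_i \al_1(\caA_i) \cup \bigcup_i \al_2(\caA_i) \bigr)''$ such that $u^* b u = \Phi_i(b)$ for every $b \in \al_1(\caA_i)$ and all $i$, together with the norm estimate
\begin{equation}
\norm{u - \I} \leq \sqrt{2}\, \ep \bigl( 1 + (1 - \ep^2)^{1/2} \bigr)^{-1/2}.
\end{equation}
Unwinding definitions, $u^* \al_1(a) u = \al_2(a)$ for $a \in \bigcup_i \caA_i$, hence by the super $*$-homomorphism property of both sides, for all $a$ in the $*$-algebra generated by $\bigcup_i \caA_i$. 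The map $a \mapsto u^* \al_1(a) u - \al_2(a)$ extends to all of $\caA$ by the generating hypothesis together with the Kaplansky-type density arguments already used in Proposition \ref{prop:G-equiv super homomorphisms extend to vN} (the unitary $u$ lies in a von Neumann algebra on which conjugation is normal, and both $\al_1,\al_2$ are determined on $\caA = (\cup_i \caA_i)''$ by their action on the generating subalgebras).

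The final estimate is then immediate: for any $a \in \caA$,
\begin{equation}
\norm{\al_1(a) - \al_2(a)} = \norm{\al_1(a) - u^* \al_1(a) u} = \norm{[u, \al_1(a)]} \leq 2 \norm{u - \I}\, \norm{a},
\end{equation}
which gives the claimed bound. The main obstacle I anticipate is the step that extends the intertwining relation $u^* \al_1(a) u = \al_2(a)$ from the generating $C^*$-subalgebra to the full von Neumann algebra $\caA$; this is the standard subtlety in comparing ungraded statements about $C^*$-algebras to statements about their von Neumann completions, and it is the place where the hyperfiniteness and $G$-hyperfiniteness hypotheses on $\caA_i$ are really being used (via AF approximation and normality of the conjugation by $u$).
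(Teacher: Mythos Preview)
Your proposal is correct and follows essentially the same approach as the paper's own proof: define $\Phi_i = \al_2 \circ (\al_1|_{\caA_i})^{-1}$, verify the smallness estimate $\norm{\Phi_i - \id} \leq \gamma_i$, apply Proposition~\ref{prop:making homomorphisms inner} to obtain the even $G$-invariant unitary $u$ with the stated norm bound, and conclude via $\norm{\al_1(a) - \al_2(a)} = \norm{[u,\al_1(a)]} \leq 2\norm{u - \I}\,\norm{a}$. The paper handles the extension step you flag as an obstacle with a single sentence (``But the whole of $\caA$ is generated by the $\caA_i$, so $\al_2(a) = u^* \al_1(a) u$ for all $a \in \caA$''), so your discussion of that point is, if anything, more careful than the original.
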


\begin{proof}
	Define $G$-equivariant super $*$-isomorphisms $\Phi_i$ between the images $\al_1(\caA_i)$ and $\al_{2}(\caA_i)$ by $\Phi_i(\al_1(a_i)) = \al_2(a_i)$ for all $a_i \in \caA_i$. Then
	\begin{equation}
		\norm{\Phi_i(\al_1(a_i)) - \al_1(a_i)} = \norm{\al_2(a_i) - \al_1{a_i}} \leq \norm{(\al_1 = \al_2)|_{\caA_i}} \norm{a_i}
	\end{equation}
	and $\sum_{i = 1}^n \norm{(\al_1 - \al_2)|_{\caA_i}} = \ep < 1$, so Proposition \ref{prop:making homomorphisms inner} applies to yield a $G$-invariant even unitary $u \in \big( \caA \cup \caB \big)''$ such that $\Phi_i = \Ad_u$ on all $\caA_i$ with
	\begin{equation}
		\norm{u - \I} \leq \sqrt{2} \ep \big( 1 + (1 - \ep^2)^{\frac{1}{2}} \big)^{-\frac{1}{2}}.
	\end{equation}
	But the whole of $\caA$ is generated by the $\caA_i$, so $\al_2(a) = u^* \al_1(a) u$ for all $a \in \caA$, hence
	\begin{align*}
		\norm{\al_1 - \al_2} &= \sup_{a \in \caA \, : \, \norm{a} = 1} \norm{\al_1(a) - \al_2(a)} = \sup_{a \in \caA \, : \, \norm{a} = 1} \norm{\al_1(a) - u^* \al_1(a) u} \\
	&= \sup_{a \in \caA \, : \, \norm{a} = 1} \norm{[u - \I, a]} \leq 2 \sqrt{2} \big( 1 + (1 - \ep^2)^{\frac{1}{2}} \big)^{-\frac{1}{2}},
	\end{align*}
	concluding te proof.
\end{proof}

\begin{lemma} \label{lem:G-invariance extends to vN}
	If $\caA \subset B(\caH)$ is a $G$-invariant set then $\caA''$ is also $G$-invariant.
\end{lemma}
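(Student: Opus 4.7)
The plan is to exploit the unitary implementation of the $G$-action combined with the elementary fact that conjugation by a unitary commutes with taking the bicommutant.

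First I would recall from Section \ref{sec:operator algebraic preliminaries} that each symmetry automorphism $\rho^g$ extends uniquely to an automorphism of $\caB(\caH)$ of the form $\Ad_{u_g}$ for some unitary $u_g \in \caB(\caH)$. Since $\Ad_{u_g}$ is a $*$-automorphism of $\caB(\caH)$, it sends commutants to commutants: for any subset $\caS \subset \caB(\caH)$ we have the identity $(u_g \caS u_g^*)' = u_g \caS' u_g^*$, and iterating gives $(u_g \caS u_g^*)'' = u_g \caS'' u_g^*$. Equivalently,
\begin{equation}
	\rho^g\bigl(\caS\bigr)'' = \rho^g\bigl(\caS''\bigr)
\end{equation}
for every $\caS \subset \caB(\caH)$ and every $g \in G$.

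Applying this to $\caS = \caA$ and using $G$-invariance of $\caA$ (\ie $\rho^g(\caA) = \caA$, or at least $\rho^g(\caA) \subset \caA$ for all $g \in G$, in which case applying the inclusion to $g^{-1}$ also gives the reverse inclusion), we obtain $\rho^g(\caA'') = \rho^g(\caA)'' = \caA''$ for every $g \in G$. Hence $\caA''$ is $G$-invariant. There is no essential obstacle here; the only thing one must be slightly careful about is that the action is implemented on the ambient $\caB(\caH)$, which is precisely what is guaranteed by the unitary implementability established in Section \ref{sec:operator algebraic preliminaries}.
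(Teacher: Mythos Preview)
Your argument is correct. The paper takes a different route: it picks $a \in \caA''$, chooses a net $\{a_i\}$ in $\caA$ converging weakly to $a$, and then checks directly that $\rho^g(a_i) \rightarrow \rho^g(a)$ weakly using the unitary implementation, concluding $\rho^g(a) \in \caA''$. Your approach is more algebraic and avoids nets entirely by using the identity $(u_g \caS u_g^*)'' = u_g \caS'' u_g^*$, which holds for arbitrary subsets $\caS$. This is cleaner and in fact slightly more robust: the paper's net argument tacitly uses that $\caA''$ is the weak closure of $\caA$, which requires $\caA$ to already be a $*$-algebra with identity, whereas your argument works for any $G$-invariant subset. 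One small remark: the unitary implementation in this appendix is justified by Wigner's theorem (as the paper notes in its proof) rather than by Section~\ref{sec:operator algebraic preliminaries}, which treats only the specific quantum-chain setting; but this does not affect the validity of your reasoning.
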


\begin{proof}
	Let the automorphism $\rho^g$ be represented by the unitary $u_g \in B(\caH)$ (such a unitary exists by Wigner's theorem), \ie $\rho^g(a) = u_g^* a u_g$ for all $a \in B(\caH)$. Let $a \in \caA''$, then there exists a net $\{a_i\}$ in $\caA$ converging weakly to $a$. By the $G$-invariance of $\caA$, the net $\{\rho^g(a_i)\}$ is also in $\caA$, and it converges weakly to $\rho^g(a)$. Indeed, the weak convergence $a_i \rightarrow a$ means that $\langle v, a_i w \rangle \rightarrow \langle v, a w \rangle$ for any pair of vectors $v, w \in \caH$, but then
	\begin{equation}
		\langle v, \rho^g(a_i) w \rangle = \langle (u_g v), a_i (u_g w) \rangle \rightarrow \langle (u_g v), a (u_g w) \rangle = \langle v, \rho^g(a) w \rangle.
	\end{equation}
	We conclude that $\rho^g(a) \in \caA''$ and hence $\caA''$ is $G$-invariant.
\end{proof}

We state Kaplansky's density theorem, see for example Theorem 2.4.16 of \cite{BratteliRobinsonVol1}
\begin{theorem}[Kaplansky's density theorem] \label{thm:Kaplansky's density theorem}
	If $\caA \subset B(\caH)$ is a $*$-algebra then the unit ball of $\caA$ is $\sigma$-strongly* dense in the unit ball of $\caA''$.
\end{theorem}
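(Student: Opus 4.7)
The plan is to follow the classical route via a continuous functional calculus trick, reducing first to the self-adjoint case and then bootstrapping to the general case by a $2 \times 2$ matrix dilation. The starting point is the von Neumann double commutant theorem, which guarantees that $\caA$ is already $\sigma$-strongly* dense in $\caA''$ (without norm control). The whole content of the theorem is thus to arrange the approximants to lie in the unit ball.

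First I would treat the self-adjoint case: let $\caA_{sa}$ denote the self-adjoint part. Given $b \in \caA''_{sa}$ with $\norm{b} \leq 1$, pick a net $\{a_i\} \subset \caA$ converging $\sigma$-strongly* to $b$; by replacing $a_i$ with $(a_i + a_i^*)/2$ we may assume $a_i \in \caA_{sa}$. The obstruction is that the $a_i$ need not be norm-bounded. The key tool is the real-valued continuous function $f(t) = 2t(1+t^2)^{-1}$, which maps $\R$ into $[-1,1]$, satisfies $f(t) = t$ on $[-1,1]$, and, crucially, is strong-operator continuous on bounded or unbounded nets of self-adjoint operators (this is a standard computation: use $f(x) - f(y) = 2(1+x^2)^{-1}\bigl[(x-y) + x(y-x)y\bigr](1+y^2)^{-1}$ together with boundedness of $x(1+x^2)^{-1}$ and $(1+x^2)^{-1}$). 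Applying $f$ elementwise: $f(a_i) \in \caA_{sa}$ lies in the unit ball, and $f(a_i) \to f(b) = b$ $\sigma$-strongly*. The $\sigma$-weak continuity then upgrades to $\sigma$-strong* convergence through the standard identity $\norm{(x-y)\xi}^2 = \langle (x-y)^*(x-y)\xi, \xi\rangle$ applied to self-adjoint operators.

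For the general case, consider the $2\times 2$ matrix algebra $M_2(\caA'') \subset B(\caH \oplus \caH)$; note $M_2(\caA'') = M_2(\caA)''$. For any $b \in \caA''$ with $\norm{b} \leq 1$, form
\begin{equation}
\tilde b = \begin{pmatrix} 0 & b \\ b^* & 0 \end{pmatrix} \in M_2(\caA'')_{sa}, \qquad \norm{\tilde b} = \norm{b} \leq 1.
\end{equation}
By the self-adjoint case applied inside $M_2(\caA)$, there is a net $\tilde a_i \in M_2(\caA)_{sa}$ of norm at most one converging $\sigma$-strongly* to $\tilde b$. Taking the $(1,2)$-entry $a_i$ of $\tilde a_i$ gives elements of $\caA$ of norm at most one converging $\sigma$-strongly* to $b$, since the projection onto a matrix entry is $\sigma$-strongly* continuous and norm-decreasing.

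The main obstacle, and the only genuinely nontrivial step, is the strong-operator continuity of the map $x \mapsto f(x)$ on $B(\caH)_{sa}$ without any bound on $\norm{x}$. This is exactly where the specific choice $f(t) = 2t/(1+t^2)$ is essential: it is bounded, and both resolvents $(1+x^2)^{-1}$ are uniformly bounded in $x$, which absorbs the unboundedness of the approximating net. Once this is in hand, the self-adjoint case is immediate, and the matrix trick mechanically gives the general statement.
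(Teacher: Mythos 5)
The paper does not actually prove this statement --- it is quoted from \cite{BratteliRobinsonVol1} (Theorem 2.4.16) --- so your proposal should be judged on its own. The architecture you describe (reduce to the self-adjoint case via a strong-operator-continuous bounded function, then dilate to $2\times 2$ matrices for the general case) is indeed the standard proof, and the matrix step and the algebraic identity you quote for $f(x)-f(y)$ are correct.

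There is, however, one genuine error at the heart of the self-adjoint step: the function $f(t) = 2t/(1+t^2)$ does \emph{not} satisfy $f(t)=t$ on $[-1,1]$ (for instance $f(1/2)=4/5$); it agrees with the identity only at $t=0,\pm 1$. Consequently $f(b)\neq b$ for a general $b$ in the unit ball of $\caA''_{sa}$, and your displayed conclusion ``$f(a_i)\to f(b)=b$'' does not prove what is needed. The standard repair is to observe that $f$ restricted to $[-1,1]$ is an increasing homeomorphism of $[-1,1]$ onto itself (since $f(\pm 1)=\pm 1$ and $f'(t)=2(1-t^2)/(1+t^2)^2\geq 0$ there), let $h$ be its inverse, and set $b_0 = h(b)$ by continuous functional calculus; then $b_0\in \caA''_{sa}$ with $\norm{b_0}\leq 1$ and $f(b_0)=b$. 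One then approximates $b_0$ (not $b$) $\sigma$-strongly by a net $a_i\in\caA_{sa}$ and concludes $f(a_i)\to f(b_0)=b$ with $\norm{f(a_i)}\leq 1$, using exactly the strong continuity of $f$ that you establish. With this single substitution your argument goes through; without it, the proof as written fails for any $b$ whose spectrum is not contained in $\{-1,0,1\}$. (A further cosmetic point: the sentence about ``$\sigma$-weak continuity upgrading to $\sigma$-strong* convergence'' is not needed --- for nets of self-adjoint operators the $\sigma$-strong and $\sigma$-strong* topologies coincide, which is all the self-adjoint case requires.)
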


In the statement, $\sigma$-strongly* can be replaced by any weaker topology, such as the strong or weak operator topology.

Of use above is the following equivariant version of part of Theorem 2.3 of \cite{RWW2020}:
\begin{proposition} \label{prop:equivariant conditional expectation}
	Let $\caH$ be a Hilbert space and $g \mapsto \rho^g \in U(\caH)$ a unitary representation of a finite group $G$. Let $\caA \subset B(\caH)$ be a $G$-invariant $G$-hyperfinite von Neumann algebra. Then there exists a $G$-equivariant conditional expectation $\E_{\caA'} : B(\caH) \rightarrow \caA'$ that is such that for any $x \in B(\caH)$, if $\norm{[x, a]} \leq \ep \norm{x} \norm{a}$ for all $a \in \caA$, then $\norm{\E_{\caA'}(x) - x} \leq \ep \norm{x}$.
\end{proposition}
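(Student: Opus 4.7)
The plan is to obtain the equivariant conditional expectation by group-averaging the non-equivariant one supplied by Theorem~\ref{thm:good CE on commutant}, and then to verify that both the range and the estimate are preserved by the averaging. Since $\caA$ is $G$-hyperfinite it is in particular hyperfinite, so Theorem~\ref{thm:good CE on commutant} gives a conditional expectation $\widetilde\E : \caB(\caH) \to \caA'$ with the property that $\norm{[x,a]} \leq \ep \norm{x}\norm{a}$ for all $a \in \caA$ implies $\norm{\widetilde\E(x) - x} \leq \ep \norm{x}$. Define
\begin{equation}
	\E_{\caA'}(x) := \frac{1}{\abs{G}} \sum_{g \in G} \rho^g \big( \widetilde\E (\rho^{g^{-1}}(x)) \big).
\end{equation}

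The first step is to check that this is a well-defined conditional expectation onto $\caA'$. Because $\caA$ is $G$-invariant, so is $\caA'$: for $b \in \caA'$ and $a \in \caA$ one has $[\rho^g(b), a] = \rho^g([b, \rho^{g^{-1}}(a)]) = 0$ since $\rho^{g^{-1}}(a) \in \caA$. Hence each summand $\rho^g \circ \widetilde\E \circ \rho^{g^{-1}}$ takes values in $\caA'$, and $\E_{\caA'}$ does too. By Lemma~\ref{lem:group averaged expectation is expectation} (applied to $\widetilde\E$), $\E_{\caA'}$ is a $G$-equivariant conditional expectation onto $\caA'$.

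The second step is to transfer the near-commutation estimate. Assume $\norm{[x,a]} \leq \ep \norm{x} \norm{a}$ for all $a \in \caA$. For each $g \in G$ the operator $\rho^{g^{-1}}(x)$ satisfies the same kind of bound: given any $a \in \caA$, since $\rho^g(a) \in \caA$ and $\rho^g$ is a $*$-automorphism,
\begin{equation}
	\norm{[\rho^{g^{-1}}(x), a]} = \norm{\rho^{g^{-1}}([x, \rho^g(a)])} = \norm{[x, \rho^g(a)]} \leq \ep \norm{x} \norm{a}.
\end{equation}
Applying the estimate from Theorem~\ref{thm:good CE on commutant} to each $\rho^{g^{-1}}(x)$ yields $\norm{\widetilde\E(\rho^{g^{-1}}(x)) - \rho^{g^{-1}}(x)} \leq \ep \norm{x}$, and then the triangle inequality and the fact that $\rho^g$ is an isometry give
\begin{equation}
	\norm{\E_{\caA'}(x) - x} \leq \frac{1}{\abs{G}} \sum_{g \in G} \norm{\rho^g(\widetilde\E(\rho^{g^{-1}}(x)) - \rho^{g^{-1}}(x))} \leq \ep \norm{x},
\end{equation}
which is the desired bound.

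I do not anticipate any genuine obstacle: the only point requiring attention is checking that the range of the averaged map still lies in $\caA'$, and this follows immediately from the $G$-invariance of $\caA$. Everything else is packaged by Theorem~\ref{thm:good CE on commutant} and Lemma~\ref{lem:group averaged expectation is expectation}.
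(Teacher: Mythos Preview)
Your proof is correct and in some respects more economical than the paper's. You take Theorem~\ref{thm:good CE on commutant} as a black box, then average over $G$ via Lemma~\ref{lem:group averaged expectation is expectation}, checking that the near-commutation bound survives because the hypothesis $\norm{[x,a]}\leq\ep\norm{x}\norm{a}$ is itself $G$-invariant (since $\rho^g(\caA)=\caA$). The paper instead rebuilds the conditional expectation from scratch: it uses $G$-hyperfiniteness to select an approximating net of \emph{$G$-invariant} finite matrix subalgebras $\caM$, defines $\E_{\caM'}(x)=\int_{\caU(\caM)}\dd u\, u^* x u$ (which is already $G$-equivariant because $\rho^g$ permutes $\caU(\caM)$), and passes to a weak-$*$ limit along a universal subnet.

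The trade-off is this: your argument only needs ordinary hyperfiniteness plus $G$-invariance of $\caA$, so the hypothesis of $G$-hyperfiniteness is not actually used. The paper's construction exploits $G$-hyperfiniteness to make each finite-dimensional approximant already equivariant, which avoids the post-hoc averaging step but at the cost of reproving what is essentially Theorem~\ref{thm:good CE on commutant}. Your route is shorter and more modular; the paper's is more self-contained.
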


\begin{proof}
	The von Neumann algebra $\caA$ is hyperfinite so it is the weak closure of the union of all its finite dimensional subalgebras $\{\tilde \caM \}$. These form a net, ordered by inclusion. We show that this net can be chosen $G$-invariant. Indeed, the algrebras $\tilde \caM^{(g)} = \rho^{g}(\tilde \caM)$ are a finite number of matrix subalgebras of $\caA$. Let them generate von Neumann algebras $\caM = \big( \cup_{g \in G} \tilde \caM^{(g)} \big)''$. Then all the $\caM$ are also finite dimensional matrix algebras, they are $G$-invariant (Lemma \ref{lem:G-invariance extends to vN}), and $\caA$ is the weak closure of their union.

	For each $\caM$, define a conditional expectation $\E_{\caM} : B(\caH) \rightarrow \caM'$ by
	\begin{equation}
		\E_{\caM'}(X) = \int_{U(\caM)} \; \dd u \; u^* X u.
	\end{equation}

	To see that $\E_{\caM'}(X) \in \caM'$, note that it is obvious from the definition that $\E_{\caM'}(X)$ commutes with any unitary in $\caM$. But the algebra generated by the unitaries in $\caM$ is weak dense in $\caM$, so $\E_{\caM'}(X)$ commutes with all elements of $\caM$.

	If $x$ satisfies the commutator bound $\norm{[x, y]} \leq \ep \norm{x} \norm{y}$ for all $y \in \caM$ then
	\begin{equation}
		\norm{ x - \E_{\caM'}(x) } \leq \int_{U(\caM)} \; \dd u \; \norm{ [u, x] } \leq \ep \norm{x}.
	\end{equation}

	If $y, z \in \caM'$ then obviously $\E_{\caM'}(y x z) = y \E_{\caM'}(x) z$.

	If $\caM_1 \subset \caM_2$ then $U(\caM_1) \subset U(\caM_2)$ so
	\begin{equation} \label{eq:commute with twirled unitaries II}
		[u, \E_{\caM_2}(x)] = 0 \; \text{for} \; u \in U(\caM_1) \; \text{and} \; \caM_1 \subset \caM_2.
	\end{equation}

	Let $(\caM(\al))_{\al \in I}$ be a univeral subnet. For any $x \in B(\caH)$ we have that $\norm{\E_{\caM'}(x)} \leq \norm{x}$ so the universal net $\{\E_{\caM(\al)}(x)\}_i$ is bounded and therefore weak-* convergent. Denote the limit by $\E_{\infty}(x) \in \caA'$. Clearly the map $\E_{\infty}$ is linear, completely positive, leaves every operator $x' \in caA'$ fixed, and if $y, z \in \caA'$ and $x \in B(\caH)$ then $\E_{\infty}(yxz) = y \E_{\infty}(x) z$, \ie $\E_{\infty}$ is a conditional expectation.

	The $G$-equivariance of $\E_{\infty}$ follows from that of the $\E_{\caM'}$ and the fact that the automorphisms $\rho^{g}$ are weak-* continuous.
\end{proof}

We have the following equivariant version of Stinespring's theorem
\begin{proposition}[Equivariant Stinespring theorem \cite{Paulsen1982}] \label{prop:equivariant Stinespring}
	Let $\caH$ carry a unitary representation $g \mapsto \rho^g \in U(\caH)$ of a finite group $G$ and let $\Phi : B(\caH) \rightarrow B(\caH)$ be a unital $G$-equivariant completely positive map. Then there are
	\begin{enumerate}[label=(\roman*)]
		\item a Hilbert space $\caK$,
		\item a representation $\pi$ of $B(\caH)$ on $\caK$,
		\item a unitary representation $\tilde \rho$ of $G$ on $\caK$,
		\item an isometry $V : \caH \rightarrow \caK$ such that
			\begin{enumerate}
				\item $\Phi(a) = V^* \pi(a) V$,
				\item $V(\caH)$ is an invariant subspace of $\tilde \rho$ and $\rho^g = V^* \tilde \rho^g V$,
				\item $\pi$ is equivariant in the sense that $\pi((\rho^g)^* a \rho^g) = \tilde (\rho^g)^* \pi(a) \tilde \rho^g$ for all $a \in B(\caH)$ and all $g \in G$.
			\end{enumerate}
	\end{enumerate}
	Moreover, if $\caH$ is separable, then $\caK$ can also be chosen separable.
\end{proposition}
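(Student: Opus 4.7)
The plan is to adapt the classical Stinespring dilation by building in the $G$-action at the level of the GNS-type quotient. The starting point is the algebraic tensor product $\caB(\caH) \otimes \caH$ equipped with the sesquilinear form $\langle a \otimes \xi, b \otimes \eta \rangle := \langle \xi, \Phi(a^* b) \eta \rangle$. Complete positivity of $\Phi$ makes this form positive semi-definite; dividing by its null space $N$ and completing produces a Hilbert space $\caK_0$. Setting $\pi(a)[b \otimes \xi] := [ab \otimes \xi]$ yields a $*$-representation of $\caB(\caH)$ on $\caK_0$, and $V \xi := [\I \otimes \xi]$ is an isometry (by unitality of $\Phi$) satisfying $V^* \pi(a) V = \Phi(a)$. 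This is the standard, non-equivariant construction.

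The crucial new input is the candidate $G$-action on $\caK_0$ defined by
\begin{equation}
    \tilde\rho^g [a \otimes \xi] := [\rho^g a (\rho^g)^{-1} \otimes \rho^g \xi].
\end{equation}
The first step is to verify that this descends to the quotient and extends to a unitary. Using the $G$-equivariance of $\Phi$, namely $\Phi(\rho^g a (\rho^g)^{-1}) = \rho^g \Phi(a) (\rho^g)^{-1}$, one computes
\begin{equation}
    \langle \tilde\rho^g[a\otimes\xi], \tilde\rho^g[b\otimes\eta]\rangle = \langle \rho^g\xi, \Phi(\rho^g a^* b (\rho^g)^{-1})\rho^g\eta\rangle = \langle \xi, \Phi(a^* b)\eta\rangle,
\end{equation}
so $\tilde\rho^g$ preserves $N$ and extends isometrically to $\caK_0$; multiplicativity $\tilde\rho^{gh} = \tilde\rho^g\tilde\rho^h$ is immediate from the definition. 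Next, the identity $\tilde\rho^g V \xi = [\I \otimes \rho^g\xi] = V \rho^g\xi$ simultaneously gives the $\tilde\rho$-invariance of $V(\caH)$ and the relation $V^* \tilde\rho^g V = \rho^g$, while the direct computation $\tilde\rho^g \pi(a)(\tilde\rho^g)^{-1}[b\otimes\xi] = [\rho^g a (\rho^g)^{-1} b \otimes \xi] = \pi(\rho^g a (\rho^g)^{-1})[b\otimes\xi]$ yields the equivariance of $\pi$ (after relabelling $g \mapsto g^{-1}$).

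For the separability claim, when $\caH$ is separable I would pass to the minimal dilation $\caK := \overline{\pi(\caB(\caH)) V(\caH)} \subset \caK_0$. This subspace is automatically invariant under $\pi$, and under $\tilde\rho^g$ by the equivariance of $\pi$ combined with the invariance of $V(\caH)$, so restricting $\pi$, $\tilde\rho$, and $V$ to $\caK$ preserves all the stated properties. Separability of the minimal Stinespring dilation over a separable $\caH$ is standard \cite{Paulsen1982}. The main obstacle in the whole argument is the first compatibility step above: checking that $\tilde\rho^g$ is well-defined on the Stinespring quotient. Once the equivariance of $\Phi$ is used correctly there, everything else --- unitarity of $\tilde\rho^g$, invariance of $V(\caH)$, the identity $V^*\tilde\rho^g V = \rho^g$, and the equivariance of $\pi$ --- follows by direct bookkeeping.
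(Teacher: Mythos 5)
The paper does not actually prove this proposition --- it is imported wholesale from Paulsen's covariant dilation theory --- so there is no internal proof to compare against; your argument is precisely the standard construction underlying that citation. Running GNS--Stinespring on $\caB(\caH)\otimes\caH$ and defining $\tilde\rho^g[a\otimes\xi]=[\rho^g a(\rho^g)^*\otimes\rho^g\xi]$ is the right move, and your verifications are sound: the equivariance $\Phi(\rho^g a(\rho^g)^*)=\rho^g\Phi(a)(\rho^g)^*$ is exactly what makes $\tilde\rho^g$ preserve the semi-inner product (hence descend to a well-defined isometry, which is unitary since $\tilde\rho^{g^{-1}}$ inverts it), multiplicativity uses that $\rho$ is a genuine (not merely projective) representation, and the identities $\tilde\rho^g V=V\rho^g$ and $\tilde\rho^g\pi(a)(\tilde\rho^g)^*=\pi(\rho^g a(\rho^g)^*)$ give (b) and (c) directly. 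This is where the content of the statement lies, and that part of your proof is correct.

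The one step that does not hold as written is the separability claim. The ``standard'' fact you invoke --- separability of the minimal Stinespring dilation over a separable $\caH$ --- is proved by taking a countable dense family of pairs $(a,\xi)$, which requires the \emph{domain algebra} to be norm-separable (or $\Phi$ to be normal, via a countable Kraus decomposition). Here the domain is $\caB(\caH)$, which is not norm-separable, and the claim can genuinely fail: take the trivial group and $\Phi=\omega(\cdot)\,\I$ for a state $\omega$ on $\caB(\caH)$ whose GNS space is non-separable. (Such states exist: $[0,1]^{I}$ for uncountable $I$ of cardinality at most continuum is a separable compact space with non-separable $L^2$, hence a continuous image of $\beta\N$; lifting the product measure gives a state on $\ell^\infty\simeq C(\beta\N)\subset\caB(\caH)$ with non-separable GNS space, which one extends to $\caB(\caH)$ by Hahn--Banach.) For this $\Phi$ the minimal dilation is $\caH_\omega\otimes\caH$, and since every dilation contains an isometric copy of the minimal one, no separable $\caK$ exists. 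So your cutting-down-to-the-minimal-dilation argument is fine as far as invariance goes, but the final appeal to separability needs an extra hypothesis (normality of $\Phi$, or restriction of the dilated map to a separable $G$-invariant $C^*$-subalgebra). Note that in the paper's only application of this proposition, in the proof of Theorem \ref{thm:near inclusions}, the separability of $\caK$ is never used, so the clause could simply be dropped.
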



\printbibliography

\end{document}